\documentclass[11pt]{article}

\usepackage{times}
\usepackage{amsmath,amssymb,amsthm,amsfonts,latexsym,bbm,xspace,graphicx,float,mathtools,dsfont}
\usepackage[backref,colorlinks,citecolor=blue,bookmarks=true]{hyperref}
\usepackage[dvips, paper=letterpaper, top=1in, bottom=.75in, left=0.9in, right=0.95in, nohead, includefoot, footskip=.25in]{geometry}
\usepackage{color}
\usepackage{comment}
\usepackage{algorithm}
\usepackage{algorithmic}
\newcommand{\ind}{\mathds{1}}
\usepackage{tweaklist}
\usepackage{appendix}
\usepackage{cleveref}

\usepackage{accents}
\newcommand{\ubar}[1]{\underaccent{\bar}{#1}}

\newtheorem{theorem}{Theorem}
\newtheorem{corollary}{Corollary}
\newtheorem{lemma}{Lemma}

\newtheorem{definition}{Definition}

\newtheorem{example}{Example}

\newcommand{\single}{\textsc{Single}}

\newcommand{\core}{\textsc{Core}}
\newcommand{\tail}{\textsc{Tail}}
\newcommand{\rev}{\textsc{Rev}}

\newcommand{\copies}{\textsc{OPT}^{\textsc{Copies-UD}}}

\newcommand{\srev}{\textsc{SRev}}
\newcommand{\brev}{\textsc{BRev}}

\newcommand{\prev}{\textsc{PostRev}}

\newcommand{\bbeta}{\boldsymbol{\beta}}
\newenvironment{prevproof}[2]{\noindent {\em {Proof of {#1}~\ref{#2}:}}}{$\Box$\vskip \belowdisplayskip}


\newcommand{\R}{\ensuremath{\mathbb{R}}} 

\newcommand{\poly}{{\rm poly}}
\newcommand{\opt}{\text{OPT}}

\newcommand{\yangnote}[1]{{\color{blue}{#1}}}

\newcommand{\notshow}[1]{{}}

\DeclareMathOperator{\E}{E}

\def \DD  {{\mathcal{D}}}
\def \KK  {{\mathcal{K}}}
\def \AA  {{\mathcal{A}}}
\def \CC {{\mathcal{C}}}
\def \HH {{\mathcal{H}}}
\def \XX {{\mathcal{X}}}

\def \LL  {{\mathcal{L}}}
\def \BB  {{\cal B}}

\def \EE  {{\mathcal{E}}}

\def \FF  {{\mathcal{F}}}
\def \SS  {{\mathcal{S}}}


\def \E  {{\mathbb{E}}}

\DeclareMathOperator{\argmax}{argmax}


\definecolor{MyGray}{rgb}{0.8,0.8,0.8}

\setlength{\parskip}{0in}

\title{Learning Multi-item Auctions with (or without) Samples}

\author{Yang Cai\\
McGill University, Canada\\
cai@cs.mcgill.ca \and Constantinos Daskalakis\\ EECS and CSAIL, MIT, USA\\ costis@csail.mit.edu}

\begin{document}
\maketitle

\begin{abstract}
We provide algorithms that learn simple auctions whose revenue is approximately optimal in multi-item multi-bidder settings, for a wide range of bidder valuations including unit-demand, additive, constrained additive, XOS, and subadditive. We obtain our learning results in two settings. The first is the commonly studied setting where sample access to the bidders' distributions  over valuations is given, for both regular distributions and arbitrary distributions with bounded support. Here, our algorithms require polynomially many samples in the number of items and bidders. The second is a more general max-min learning setting that we introduce, where we are given ``approximate distributions,'' and we seek to compute a mechanism whose revenue is approximately optimal simultaneously for all ``true distributions'' that are close to the ones we were given. These results are more general in that they imply the sample-based results, and are also applicable in settings where we have no sample access to the underlying distributions but  have estimated them indirectly via market research or by observation of bidder behavior in previously run, potentially non-truthful auctions. 

All our results hold for valuation distributions satisfying the standard (and necessary) independence-across-items property. They also generalize and improve upon recent works of Goldner and Karlin~\cite{GoldnerK16} and Morgenstern and Roughgarden~\cite{MorgensternR16}, which have provided algorithms that learn approximately optimal multi-item mechanisms in more restricted settings with additive, subadditive and unit-demand valuations using sample access to distributions. We generalize these results to the complete unit-demand, additive, and XOS setting, to i.i.d. subadditive bidders, and to the max-min setting. 

Our results are enabled by new uniform convergence bounds for hypotheses classes under product measures. Our bounds result in exponential savings in sample complexity compared to bounds derived by bounding the VC dimension, and are of independent interest.
\end{abstract}
\thispagestyle{empty}
\addtocounter{page}{-1}
\newpage

\section{Introduction} \label{sec:intro}

The design of revenue-optimal auctions is a central problem in Economics and Computer Science, which has found myriad applications in online and offline settings, ranging from sponsored search and online advertising to selling artwork by auction houses, and public goods such as drilling rights and radio spectrum by governments. The problem involves a seller who wants to sell one or several items to one or multiple strategic bidders with private valuation functions, mapping each bundle of items they may receive to how much value they derive from the bundle. As no meaningful revenue guarantee can possibly be achieved without any information about the valuations of the bidders, the problem has been classically studied under {\em Bayesian assumptions,} where a joint distribution from which all bidders' valuations are drawn is common knowledge, and the goal is to maximize  revenue in expectation with respect to this distribution. 

In the {\em single-item setting}, Bayesian assumptions have enabled beautiful and influential developments in auction theory. Already 36 years ago, a breakthrough result by Myerson identified the optimal single-item auction when bidder values are independent~\cite{Myerson81}, and the ensuing decades saw a great deal of further understanding and practical applications of single-item auctions, importantly in online settings. 

However, the quest for optimal {\em multi-item auctions} has been quite more challenging. It has been recognized that revenue-optimal multi-item auctions can be really complex, may exhibit counter-intuitive properties, and be fragile to changes in the underlying distributions; for a discussion and examples see survey~\cite{Daskalakis15}. As such, it is doubtful that there is a crisp characterization of the structure of optimal multi-item auctions, at least not beyond single-bidder settings~\cite{DaskalakisDT17}. On the other hand, there has been significant recent progress in efficient computation of revenue-optimal auctions~\cite{ChawlaHK07,ChawlaHMS10,Alaei11,CaiD11b,AlaeiFHHM12,CaiDW12a,CaiDW12b,CaiH13,CaiDW13b,AlaeiFHHM13,BhalgatGM13,DaskalakisDW15}. Importantly, this progress has enabled identifying {\em simple auctions} (mostly variations of sequential posted pricing mechanisms) that achieve constant factor approximations to the revenue of the optimum~\cite{BabaioffILW14,Yao15,CaiDW16,ChawlaM16, CaiZ17}, under the {\em item-independence} assumption of Definition~\ref{def:subadditive independent} and Example~\ref{eg:valuation}. These auctions are {\em way simpler} than the optimum, and have {\em strong incentive properties}: they are dominant strategy truthful, while still competing against the optimal Bayesian truthful mechanism. The current state-of-the-art is given as Theorem~\ref{thm:simple XOS}, which applies to bidders with valuation functions from the broad class of fractionally subbaditive (a.k.a.~XOS) valuations, which contains submodular.

\medskip As our discussion illustrates, studying auctions assuming Bayesian priors has been quite fruitful, enabling us to identify guiding principles for how to structure auctions to achieve optimal (in single-item settings) or approximately optimal (in multi-item settings) revenue. To apply this theory to practice, however, one needs knowledge of the underlying distributions. Typically, one would estimate these distributions via market research or by observations of bidder behavior in prior auctions, then use the estimated distributions to design a good auction. However, estimation  involves approximation, and the performance of mechanisms can be quite fragile to errors in the distributions. This motivates studying whether optimal or approximately optimal auctions can be identified when one has imperfect knowledge of the true distributions.

With this motivation, recent work in Computer Science has studied whether approximately optimal mechanisms can be ``learned'' given sample access to the underlying distributions. This work has lead to an almost complete picture for the single-item (and the more general single-parameter) setting where Myerson's theory applies, showing how near-optimal mechanisms can be learned from polynomially many (in the approximation and the number of bidders) samples~\cite{Elkind07,ColeR14,MohriM14,HuangMR15,MorgensternR15,DevanurHP16,RoughgardenS16,GonczarowskiN16}. 

On the multi-item front, however, where the analogue of Myerson's theory is elusive, and unlikely, our understanding is much sparser. Recent work of Morgenstern and Roughgarden~\cite{MorgensternR16} has taken a computational learning theory approach to identify the sample complexity required to optimize over classes of simple auctions. Combined with the afore-described results on the revenue guarantees of simple auctions, their work leads to algorithms that learn approximately optimal auctions in multi-item settings with multiple unit-demand bidders, or a single subadditive bidder, from polynomially many samples in the number of items and bidders. These results apply to distributions satisfying the {\em item-independence} assumption of Definition~\ref{def:subadditive independent} and Example~\ref{eg:valuation}, under which the approximate optimality of simple auctions has been established.
 
While well-suited for identifying the sample complexity required to optimize over a class of simple mechanisms, which is a perfectly reasonable goal to have but not the one in this paper, the approach taken in~\cite{MorgensternR16} is arguably imperfect towards proving polynomial sample bounds for learning approximately optimal auctions in the settings where simple mechanisms are known to perform well in the first place. This is due to the following discordance: (i) On the one hand, simple and approximately optimal mechanisms in multi-item settings are mostly only known under item-independence. (ii) On the other hand, the computational learning techniques employed in~\cite{MorgensternR16}, and in particular bounding the {\em pseudo-dimension} of a class of auctions, are not fine enough to discern the difference in sample complexity required to optimize under item-independence and without item-independence. As such, this technique can only obtain polynomial sample bounds for approximate revenue optimization if it so happens that a class of mechanisms is both learnable from polynomially-many samples under arbitrary distributions, and it guarantees approximately optimal revenue under item-independence, or for some other interesting class of distributions.\footnote{It is known that some restriction {\em needs to be made} on the distribution to gain polynomial sample complexity, as otherwise exponential lower bounds are known for learning approximately optimal auctions even for a single unit-demand bidder~\cite{DughmiHN14}.} 

In particular, bounding the pseudo-dimension of classes of auctions as a means to prove polynomial-sample bounds for approximate revenue optimization hits a barrier even for multiple additive bidders with independent values for items. In this setting, the approximately optimal auctions that are known are the best of selling the items separately or running a VCG mechanism with entry fees~\cite{Yao15,CaiDW16}, as described in Section~\ref{sec:additive}. Unfortunately, the latter can easily be seen to have pseudo-dimension that is exponential in the number of bidders, thus only implying a sufficient exponentially large sample size to optimize over these mechanisms. Is this exponential sample size really necessary or an artifact of the approach? Recent work of Goldner and Karlin~\cite{GoldnerK16} gives us hope that it is the latter. They show how to learn approximately optimal auctions in the multi-item multi-bidder setting with additive bidders using only one sample from each bidder's distribution, assuming that it is {\em regular} and independent across items. 

\vspace{-10pt}\paragraph{Our results.} We show that simple and approximately optimal mechanisms are learnable from polynomially-many samples for multi-item multi-bidder settings, whenever:
\begin{itemize} 
\item the bidder valuations are fractionally subadditive (XOS), i.e. we can accommodate additive, unit-demand, constrained additive, and submodular valuations;
\item the distributions over valuations satisfy the standard item-independence assumption of Definition~\ref{def:subadditive independent} and Example~\ref{eg:valuation}, and their single-item marginals are arbitrary and bounded, or (have arbitrary supports but are) regular.\footnote{We note again that without the standard item-independence (or some other) restriction on the distributions, we cannot hope to learn approximately optimal auctions from sub-exponentially many samples, even for a single unit-demand bidder~\cite{DughmiHN14}.}
\end{itemize}
In particular, our results constitute vast extensions of known results on the polynomial learnability of approximately optimal auctions in multi-item settings~\cite{MorgensternR16,GoldnerK16}. Additionally we show that:
\begin{itemize}
\item whenever the valuations are additive and unit-demand, or whenever the bidders are symmetric and have XOS valuations, our approximately optimal mechanisms can be identified from polynomially many samples and in polynomial time;
\item whenever the bidders are symmetric (i.e.~their valuations are independent and identically distributed)  and have {\em subadditive valuations}, we can compute from polynomially many samples and in polynomial-time a simple mechanism whose revenue is a $\Omega\left({n \over \max\{m,n\}}\right)$-fraction of the optimum, where $m$ and $n$ are respectively the number of items and bidders. In particular, if the number of bidders is at least a constant fraction of the number of items, the mechanism is a constant factor approximation; and

\item in the setting of the previous bullet, if the item marginals are regular, our mechanism is {\em prior-independent}, i.e. there is a single mechanism, identifiable without any samples from the distributions, providing the afore-described revenue guarantee.
\end{itemize}
Finally, the mechanisms learned by our algorithms for XOS bidders are either {\em rationed sequential posted price mechanisms} (RSPMs) or {\em anonymous sequential posted price mechanisms with entry fees} (ASPEs) as defined in Section~\ref{sec:constrained additive}. The mechanisms learned for symmetric subadditive bidders are RSPMs. RSPMs maintain a price $p_{ij}$ for every bidder and item pair and, in some order over bidders $i=1,\ldots,n$, they give one opportunity to bidder $i$ to purchase {\em one} item $j$ that has not been purchased yet at price $p_{ij}$. ASPEs maintain one price $p_j$ for every item and, in some order over bidders $i=1,\ldots,n$, they give one opportunity to bidder $i$ to purchase {\em any subset} $S'$ of the items $S$ that have not been purchased yet as long as he also pays an ``entry fee'' that depends on $S$ and the identity of the bidder. See Algorithm~\ref{alg:aspe-mech}. 

\vspace{-10pt}\paragraph{Learning without Samples.} Thus far, our algorithms used {\em samples} from the valuation distributions to identify an approximately optimal and simple mechanism under item-independence. However, having sample access to the distributions may be impractical. Often we can observe the actions used by bidders in non-truthful auctions that were previously run, and use these observations to estimate the distributions over valuations using econometric methods~\cite{GuerrePV00,PaarschH06,AtheyH2007nonparametric}. In fact, it may likely be the case we have never sold all the items together in the past, and only have observations of bidder behavior in non-truthful auctions selling each item separately. Econometric methods would achieve better approximations in this case, but only for the item marginals. Finally, we may want to combine multiple sources of information about the distributions, combining past bidder behavior in several different auctions and with market research data. 

With this motivation in mind, we would like to extend our learnability results beyond the setting where sample access to the valuation distributions is provided. We propose ``learning'' approximately optimal multi-item auctions given distributions that are close to the true distributions under some distribution distance $d(\cdot,\cdot)$. In particular, given approximate distributions $\hat{D}_1,\ldots,\hat{D}_n$ over bidder valuations, we are looking to identify a mechanism $\cal M$ satisfying the following {\em max-min style objective}:
\begin{align}\forall {D_1,\ldots,D_n~\text{s.t.}~d(D_i,\hat{D}_i) \le \epsilon, \forall i}: {\rm Rev}_{\cal M}(D_1,\ldots,D_n) \ge \Omega({\rm OPT}(D_1,\ldots,D_n)) -{\rm poly}({\epsilon},m,n). \label{eq:max min goal}
\end{align}
That is, we want to find a mechanism $\cal M$ whose revenue is within a constant multiplicative and a ${\rm poly}({\epsilon},m,n)$ additive error from optimum, simultaneously in all possible worlds $D_1,\ldots,D_n$, where $d(D_i,\hat{D}_i) \le \epsilon, \forall i$. It is not a priori clear that such a ``one-fits-all'' mechanism actually exists. 

There are several notions of distance $d(\cdot,\cdot)$ between distributions that we could study in the formulation of Goal~\eqref{eq:max min goal}, but we opt for an easy one to satisfy. We only require that we know every bidder's marginal distributions over single-item values to within~$\epsilon$ in Kolmogorov distance;\footnote{{Indeed, Goal~\eqref{eq:max min goal} is achievable only for bounded distributions even in the single-item single-bidder setting. Given any bounded distribution $\hat{D}$, create $D$ by moving $\epsilon$ probability mass in $\hat{D}$ to $+\infty$. It is not hard to see that $D$ and $\hat{D}$ are within $\epsilon$ in Kolmogorov distance, but no single mechanism can satisfy the approximation guarantee for both $D$ and $\hat{D}$ simultaneously. Using a similar argument, we can argue that the additive error has to depend on $H$ which is the upper bound on any bidder's value for a single item. See Section~\ref{sec:prelim} for our formal model.}} see Definition~\ref{def:Kolm and TV}. All that this requires is that the cumulative density functions of the approximating distributions over single-item values is within $\epsilon$ in infinity norm from the corresponding cumulative density functions of the corresponding true distributions. As such, it is an easy property to satisfy. For example, given sample access to any single-item marginal, the DKW inequality~\cite{DvoretzkyKW56} implies that $O(\log(1/\delta)/\epsilon^2)$ samples suffice to learn it to within $\epsilon$ in Kolmogorov distance, with probability at least $1-\delta$. So achieving Goal~\eqref{eq:max min goal} directly also implies polynomial sample learnability of approximately optimal auctions. But a Kolmogorov approximation can also be arrived at by combining different sources of information about the single-item marginals such as the ones described above. Regardless of how the approximations were obtained, the max-min goal outlined above guarantees robustness of the revenue of the identified mechanism $\cal M$ with respect to all sources of error that came into the estimation of the single-item marginal distributions. 

While Goal~\eqref{eq:max min goal} is not a priori feasible, we show how to achieve it in multi-item multi-bidder settings with constrained additive bidders, or symmetric bidders with subadditive valuations, under the standard assumption of item-independence. Our results are polynomial-time in the same cases as our sample-based results discussed above.


\vspace{-10pt}\paragraph{Roadmap and Technical Ideas.} In Section~\ref{sec:uniform convergence under product measure}, we present a new approach for obtaining uniform convergence bounds for hypotheses classes under product distributions; see Theorem~\ref{thm:uniform convergence for product measure PARTITION} and Corollary~\ref{cor:VC for product measure}. We show that our approach can significantly improve the sample complexity bound obtained via traditional methods such as VC theory. In particular, Table~\ref{tab:productVC} compares the sample complexity bounds obtained via our approach to those obtained by VC theory for different classes of hypotheses.

Our results for mechanisms make use of recent work on the revenue guarantees of simple mechanisms, which are mainly variants of sequential posted pricing mechanisms~\cite{CaiDW16,CaiZ17}. Using our results from Section~\ref{sec:uniform convergence under product measure}, in Section~\ref{sec:uniform convergence of SPEM}, we derive uniform convergence bounds for the revenue of a class of mechanisms shown to achieve a constant fraction of optimal revenue when all bidders have valuations that are constrained additive over independent items. These mechanisms are called Sequential Posted Price with Entry Fee Mechanisms, a.k.a.~SPEMs,\footnote{Note that any RSPM or ASPE is an SPEM.}. As a corollary of the uniform convergence of SPEMs, we obtain our sample based results for constrained additive bidders. In fact, we obtain a slightly stronger statement than uniform convergence of the revenue of SPEMs, which also implies our max-min results for constrained-additive bidders; see Theorems~\ref{thm:revenue stability under K-distance} and~\ref{thm:constrained additive Kolmogorov}. In particular, Theorem~\ref{thm:constrained additive Kolmogorov} and the DKW inequality imply the polynomial-sample learnability of approximately revenue-optimal auctions for constrained additive bidders.

Technically speaking, our sample based and max-min approximation results for constrained additive bidders provide a crisp illustration of how we leverage item-independence and our new uniform convergence bounds for product measures to sidestep the exponential pseudo-dimension of the class of mechanisms that we are optimizing over. Let us discuss our max-min results which are stronger. Suppose $D_i=\times_j D_{ij}$ is the true distribution over bidder $i$'s valuation and $\hat{D}_i=\times_j \hat{D}_{ij}$ is the approximating distribution, where $D_{ij}$ and $\hat{D}_{ij}$ are respectively the item $j$ marginals. To argue that the revenue of some anonymous sequential posted price with entry fees (ASPE) mechanism is similar under $D=\times_i D_i$ and $\hat{D}=\times_i \hat{D}_i$, we need to couple in total variation distance the decisions of what sets all bidders buy in the execution of the mechanism under $D$ and $\hat{D}$. The issue that we encounter is that there are exponentially many subsets each bidder may buy, hence the naive use of the Kolmogorov bound $||D_{ij}-\hat{D}_{ij}||_K \le \epsilon$, on each single-item marginal results in an exponential blow-up in the total variation distance of what subset of items bidder $i$ buys, invalidating our desired coupling. To circumvent this challenge, we argue in Lemma~\ref{lem:stable demand set} that the events corresponding to which subset of items each buyer will buy are {\em single-intersecting}, according to Definition~\ref{def:single-intersecting}, when seen as events on the buyer's single-item values. Single-intersecting events may be non-convex and have infinite VC dimension. Nevertheless, because single-item values are independent, our new uniform convergence bounds for product measures (Lemma~\ref{lem:Kolmogorov stable for sc}) imply that the difference in probabilities of any such event under $D$ and $\hat{D}$ is only a factor of $m$, the number of items, larger than the bound $\epsilon$ on the Kolmogorov distance between single-item marginals.

We specialize our results to unit-demand bidders in Section~\ref{sec:unit-demand} to obtain computationally efficient solutions for both max-min and sample-based models. Similarly, Section~\ref{sec:additive} contains our results for additive bidders. 
 We also generalize our sample-based results for constrained additive bidders to XOS bidders in Section~\ref{sec:constrained additive}.  
 Finally, we provide computationally efficient solutions for symmetric XOS and even symmetric subadditive bidders in Section~\ref{sec:symmetric bidders}.
 These results are based on showing that (i) the right parameters of SPEMs can be efficiently and approximately identified with sample  or max-min access to the distributions; and (ii) that the revenue guarantees of simple mechanisms can be robustified to accommodate error in the setting of the parameters. In particular, our sample-based result for unit-demand bidders robustifies the ex-ante relaxation of the revenue maximization problem from~\cite{Alaei11} and its conversion to a sequential posted pricing mechanism from~\cite{ChawlaHMS10}, and makes use of the extreme-value theorem for regular distributions from~\cite{CaiD11b}. Our sample-based result for additive bidders shows how to use samples to design mechanisms that approximate the revenue of Yao's VCG with entry fees mechanism~\cite{Yao15}. Our sample-based results for  XOS bidders show how to use samples to approximate the parameters of the RSPM and ASPE mechanisms of~\cite{CaiZ17}, and argue, by re-doing their duality proofs, that their revenue guarantees are robust to errors in the approximation. Finally, our sample based result for symmetric subadditive bidders is based on a new, duality based, approximation, showing how to eliminate the use of ASPEs from the result of~\cite{CaiZ17}. This even allows us to obtain prior-independent mechanisms when the item marginals are regular.

\section{Preliminaries}\label{sec:prelim}

We focus on revenue maximization in the combinatorial auction with $n$ independent bidders and $m$ heterogenous items. Each bidder has a valuation that is \textbf{subadditive over independent items} (see Definition~\ref{def:subadditive independent}). We denote bidder $i$'s type $t_i$ as $\langle t_{ij}\rangle_{j=1}^m$, where $t_{ij}$ is bidder $i$'s private information about item $j$. For each $i$, $j$, we assume $t_{ij}$ is drawn independently from the distribution $D_{ij}$. Let $D_i=\times_{j=1}^m D_{ij}$ be the distribution of bidder $i$'s type and $D=\times_{i=1}^n D_i$ be the distribution of the type profile. We use $T_{ij}$ (or $T_i, T$) and $f_{ij}$ (or $f_i, f$) to denote the support and density function of $D_{ij}$ (or $D_i, D$). For notational convenience, we let  $t_{-i}$ to be the types of all bidders except $i$. 
Similarly, we define $D_{-i}$, $T_{-i}$ 
  and $f_{-i}$ for the corresponding  distributions, support sets and density functions. When bidder $i$'s type is $t_i$, her valuation for a set of items $S$ is denoted by $v_i(t_i,S)$. Throughout the paper we use $\opt$ to denote the optimal revenue obtainable by any randomized and Bayesian truthful mechanism.
  
    \begin{definition}~\cite{RubinsteinW15}\label{def:subadditive independent}
For every bidder $i$, whose type $t_i$ is drawn from a product distribution $F_i=\times_j F_{ij}$, her distribution, $\mathcal{V}_i$, over valuation functions $v_i(t_i,\cdot)$ is \textbf{subadditive over independent items} if:

\vspace{.05in}
\noindent \textbf{- $v_i(\cdot,\cdot)$ has no externalities}, i.e., for each $t_i\in T_i$ and $S\subseteq [m]$, $v_i(t_i,S)$ only depends on $\langle t_{ij}\rangle_{j\in S}$, formally, for any $t_i'\in T_i$ such that $t_{ij}'=t_{ij}$ for all $j\in S$, $v_i(t_i',S)=v_i(t_i,S)$.

\vspace{.05in}
\noindent \textbf{- $v_i(\cdot,\cdot)$ is monotone}, i.e., for all $t_i\in T_i$ and $U\subseteq V\subseteq [m]$, $v_i(t_i,U)\leq v_i(t_i,V)$.

\vspace{.05in}
\noindent \textbf{- $v_i(\cdot,\cdot)$ is subadditive}, i.e., for all $t_i\in T_i$ and $U$, $V\subseteq [m]$, $v_i(t_i,U\cup V)\leq v_i(t_i,U)+ v_i(t_i,V)$.

\vspace{.05in}
 We use $V_i(t_{ij})$ to denote $v_i(t_i,\{j\})$, as it only depends on $t_{ij}$. When $v_i(t_i,\cdot)$ is XOS (or constrained additive) for all $i$ and $t_i\in T_i$, we say $\mathcal{V}_i$ is XOS (or constrained additive) over independent items.
\end{definition}

  \begin{example}\label{eg:valuation}\cite{RubinsteinW15}
We may instantiate Definition~\ref{def:subadditive independent} to define restricted families of subadditive valuations as follows. In all cases, suppose $t=\{t_j\}_{j\in[m]}$ is drawn from $\times_j D_j$. To define a valuation function that is:
\vspace{.05in}

\noindent\textbf{- unit-demand}, we can take $t_j$ to be the value of item $j$, and set $v(t,S)=\max_{j\in S}t_j$.

\vspace{.05in}

\noindent \textbf{- additive}, we can take $t_j$ to be the value of item $j$, and set $v(t,S)=\sum_{j\in S}t_j$.

\vspace{.05in}

\noindent \textbf{- constrained additive}, we can take $t_j$ to be the value of item $j$, and set $v(t,S)=\max_{R\subseteq S, R\in \mathcal{I}}\sum_{j\in R} t_j$, for some downward closed set system ${\cal I} \subseteq 2^{[m]}$.

\vspace{.05in}

\noindent \textbf{- XOS (a.k.a. fractionally subadditive),} we can take $t_j=\{t_{j}^{(k)}\}_{k\in[K]}$ to encode all possible values associated with item $j$, and take $v(t,S)=\max_{k\in[K]}\sum_{j\in S}t_{j}^{(k)}$.

\vspace{.05in}
Note that constrained additive valuations contain additive and unit-demand valuations as special cases, and are contained in XOS valuations.
\end{example}

\noindent\textbf{Distribution Access Models}
\vspace{.05in}

\noindent We consider the following three different models to access the distributions.

\notshow{\vspace{.05in}

\noindent\textbf{Sample access to bounded distributions.} We assume that for any buyer $i$ and any type $t_i\in T_i$, her value $V_i(t_{ij})$ for any single item $j$ lies in $[0,H]$. 

\vspace{.05in}

	\noindent \textbf{Sample access to regular distributions.} We assume that for any buyer $i$ and any type $t_i\in T_i$, the distribution of her value $V_i(t_{ij})$ for any item $j$ is regular.
	
	\vspace{.05in}

	\noindent\textbf{Direct access to approximate distributions.} We assume that we have direct access to a distribution $\hat{D}=\times_{i\in[n], j\in[m]} \hat{D}_{ij}$, that is, we can query the pdf, cdf of $\hat{D}$ and take samples from $\hat{D}$. Moreover, for any buyer $i$ and any type $t_i\in T_i$, the distributions of the random variable $V_i(t_{ij})$ when $t_{ij}$ is sampled from $\hat{D}_{ij}$ or  $D_{ij}$ are within $\epsilon$ in Kolmogorov distance, and both distributions  are supported on $[0,H]$.}
{\begin{itemize}
	\item \textbf{Sample access to bounded distributions.} We assume that for any buyer $i$ and any type $t_i\in T_i$, her value $V_i(t_{ij})$ for any single item $j$ lies in $[0,H]$. 
	\item \textbf{Sample access to regular distributions.} We assume that for any buyer $i$ and any type $t_i\in T_i$, the distribution of her value $V_i(t_{ij})$ for any item $j$ is regular.
	\item \textbf{Direct access to approximate distributions.} We assume that we have direct access to a distribution $\hat{D}=\times_{i\in[n], j\in[m]} \hat{D}_{ij}$, for example we can query the pdf, cdf of $\hat{D}$ and take samples from $\hat{D}$. Moreover, for any buyer $i$ and any type $t_i\in T_i$, the distributions of the random variable $V_i(t_{ij})$ when $t_{ij}$ is sampled from $\hat{D}_{ij}$ or  $D_{ij}$ are within $\epsilon$ in Kolmogorov distance, and both distributions  are supported on $[0,H]$. 
\end{itemize}}

\begin{definition} \label{def:Kolm and TV}
The {\em Kolmogorov distance} between two distributions $P$ and $Q$ over $\mathbb{R}$, denoted $|| P-Q||_K$, is defined as $\sup_{x \in \mathbb{R}}|\Pr_{X \sim P}[X \le x]-\Pr_{X \sim Q}[X \le x]|$. The {\em total variation distance} between two probability measures $P$ and $Q$ on a sigma-algebra $\cal F$ of subsets of some sample space $\Omega$, denoted $|| P-Q||_{TV}$, is defined as $\sup_{E \in {\cal F}}|P({E})-Q({E})|$.
\end{definition}

\section{Summary of Our Results}

We summarize our results in the following two tables. Table~\ref{tab:sample-based results} contains all sample-based results and Table~\ref{tab:max-min learning results} contains all results under the max-min learning model. 
\begin{table}[h]
	\centering
		\begin{tabular}{c || c | c| c| c}
			
			\hline\hline
			Valuations & \# bidders &   Distributions  & Approximation  & Sample Complexity \\
						\hline
												&&&&\\

									 additive~\cite{GoldnerK16} &  $n$  & regular & $\Omega(\opt)$ & $1$ \\	
			  additive &  $n$  & arbitrary $[0,H]$ & $\Omega(\opt)-\epsilon\cdot H$ & $\poly(n,m,1/\epsilon)$	\\					\hline
						&&&&\\

			unit-demand~\cite{MorgensternR16} &  $n$ & arbitrary $[0,H]$ & $\Omega(\opt)-\epsilon\cdot H$ & $\poly(n,m,1/\epsilon)$\\
			unit-demand &  $n$ & regular & $\Omega(\opt)$ & $\poly(n,m)$ \\
				\hline
										&&&&\\

			constrained additive &  $n$ &	 arbitrary $[0,H]$ & $\Omega(\opt)-\epsilon\cdot H$ & $\poly(n,m,1/\epsilon)$ \\

			constrained additive &  $n$ &	regular & $\Omega(\opt)$ & $\poly(n,m)$\\
			\hline
										&&&&\\
										XOS &  $n$ &	 arbitrary $[0,H]$ & $\Omega(\opt)-\epsilon\cdot H$ & $\poly(n,m,1/\epsilon)$\\

			XOS &  $n$ &	regular & $\Omega(\opt)$ & $\poly(n,m)$\\
			\hline
										&&&&\\

			subadditive~\cite{MorgensternR16} & $1$ & arbitrary $[0,H]$  &  $\Omega(\opt)-\epsilon\cdot H$ &$\poly(m,1/\epsilon)$\\
						subadditive & $n$ i.i.d. & arbitrary $[0,H]$  &  $\Omega\left(\frac{n}{\max\{n,m\}}\right)\cdot \opt-\epsilon\cdot H$ &$\poly(n,m,1/\epsilon)$\\
						subadditive & $n$ i.i.d. & regular & $\Omega\left(\frac{n}{\max\{n,m\}}\right)\cdot \opt$ & \emph{prior-independent}\\
\hline
		\end{tabular}
		
	\caption{Summary of Our Sample-based Results.}
	\label{tab:sample-based results}
\end{table}

\begin{table}[h]
	\centering
		\begin{tabular}{c || c | c| c}
			
			\hline\hline
			Valuations & \# bidders &   Distributions  & Approximation \\
						\hline
												&&&\\
			  additive &  $n$  & arbitrary $[0,H]$ & $\Omega(\opt)-O(\epsilon\cdot n\cdot m\cdot H)$ 	\\					\hline
						&&&\\

			unit-demand&  $n$ & arbitrary $[0,H]$ & $\Omega(\opt)-O(\epsilon\cdot n\cdot m\cdot H)$  \\
				\hline
										&&&\\

			constrained additive &  $n$ &	 arbitrary $[0,H]$ & $\Omega(\opt)-O(\epsilon\cdot n\cdot m^2\cdot H)$\\

			\hline
										&&&\\

						subadditive & $n$ i.i.d. & arbitrary $[0,H]$  & $\Omega\left(\frac{n}{\max\{n,m\}}\right)\cdot \opt-O(\epsilon\cdot n \cdot m\cdot H)$\\
\hline
		\end{tabular}

	\caption{Summary of Our Max-min Learning Results.}
	\label{tab:max-min learning results}
\end{table}
\section{Uniform Convergence under Product Measures}\label{sec:uniform convergence under product measure}

In this section, we develop machinery for obtaining uniform convergence bounds for hypotheses over product measures. Our goal is to save on the sample complexity implied by VC dimension bounds, as summarized in Table~\ref{tab:productVC}. Indeed, we obtain low sample complexity  bounds for indicators over \emph{single-intersecting} sets (see Definition~\ref{def:single-intersecting}), which play a key role in proving our results for learning approximately revenue-optimal auctions. Our main results of this section are Theorem~\ref{thm:uniform convergence for product measure PARTITION} for general functions, and Corollary~\ref{cor:VC for product measure} for sets.

We first define what type of uniform convergence bounds we seek to prove.

\begin{definition}[$(\epsilon,\delta)$-uniform convergence with respect to proxy measure]
	A hypothesis class $\HH$ of functions mapping domain set $\XX$ to $\mathbb{R}$ has {\em $(\epsilon,\delta)$-uniform convergence with sample complexity $s(\epsilon,\delta)$} iff, for all $\epsilon,\delta > 0$, there exists a {\em processing} ${\cal P}:{\cal X}^{s(\epsilon,\delta)} \rightarrow \Delta(\XX)$ such that for any distribution $\DD \in \Delta(\XX)$ when $k=s(\epsilon,\delta)$: 
	\begin{align}
	\Pr_{z_1,\cdots, z_k \sim \DD}\left[\sup_{g\in \HH}\left|\E_{z\sim {\cal P}(z_1,\cdots, z_k)}[g(z)] - \E_{z\sim \DD}[g(z)]\right|\leq \epsilon\right]\geq 1-\delta. \notag
	\end{align}

When ${\cal X}$ is the Cartesian product of a collection of sets ${\cal X}_1,\ldots,{\cal X}_k$, i.e.~${\cal X} =\times_i {\cal X}_i$, we say that a hypothesis class $\HH$ as above has {\em $(\epsilon,\delta)$-p.m.~uniform convergence with sample complexity $s(\epsilon,\delta)$} if the above holds for all ${\cal D}$ that are product measures over ${\cal X}$.	
\end{definition}

Next we provide a simple lemma, which leads to a simple version of our main result stated as Theorem~\ref{thm:uniform convergence for product measure}. Our main result, stated as Theorem~\ref{thm:uniform convergence for product measure PARTITION}, follows.

%

\begin{lemma} \label{lem:approx product dist}
	Let $\XX_1,\ldots, \XX_d$ be $d$ domain sets and $\HH$ be a hypothesis class with functions mapping from the product space $\times_{i=1}^{d} \XX_i$ to $\mathbb{R}$. For all $i\in[d]$, let $\HH_i$ be the projected hypothesis class of $\HH$ on $\XX_i$, that is, $\HH_i=\left\{g\ |\ \exists f\in \HH\ \exists\ a_{-i}\in \times_{j\neq i} \XX_j \ \forall\ x_i\in \XX_i, \ g(x_i) = f(x_i,a_{-i})\right\}$. For every $i\in[d]$, let $\DD_i$ and $\hat{\DD}_i$ be two distributions supported on $\XX_i$. Suppose for all $i\in[d]$, $$\sup_{g\in\HH_i}\left|\E_{x\sim \DD_i}[g(x)]-\E_{x\sim \hat{\DD}_i}[g(x)]\right|\leq \epsilon,$$ then $$\sup_{f\in\HH} \left|\E_{\boldsymbol{x}\sim \times_{i=1}^d \DD_i}\left[f(\boldsymbol{x})\right]-\E_{\boldsymbol{x}\sim \times_{i=1}^d \hat{\DD}_i}\left[f(\boldsymbol{x})\right]\right|\leq d\cdot \epsilon.$$
\end{lemma}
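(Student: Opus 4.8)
The plan is to prove the bound by a hybrid (telescoping) argument, swapping one coordinate's distribution at a time and controlling each swap by the hypothesis class projected onto that coordinate. Define the sequence of product measures $\mu_0 = \times_{i=1}^d \DD_i$, and for each $k \in \{0,1,\ldots,d\}$ let $\mu_k = \left(\times_{i=1}^k \hat{\DD}_i\right) \times \left(\times_{i=k+1}^d \DD_i\right)$, so that $\mu_0 = \times_i \DD_i$ and $\mu_d = \times_i \hat{\DD}_i$. By the triangle inequality, for any fixed $f \in \HH$,
\begin{align}
\left|\E_{\boldsymbol{x}\sim \mu_0}[f(\boldsymbol{x})] - \E_{\boldsymbol{x}\sim \mu_d}[f(\boldsymbol{x})]\right| \leq \sum_{k=1}^{d} \left|\E_{\boldsymbol{x}\sim \mu_{k-1}}[f(\boldsymbol{x})] - \E_{\boldsymbol{x}\sim \mu_{k}}[f(\boldsymbol{x})]\right|. \notag
\end{align}
So it suffices to show that each term on the right is at most $\epsilon$, uniformly over $f \in \HH$, and then take the supremum over $f$ on the left (noting the per-$f$ bound $\sum_k \epsilon = d\epsilon$ is uniform).

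For the $k$-th term, the measures $\mu_{k-1}$ and $\mu_k$ differ only in coordinate $k$: $\mu_{k-1}$ uses $\DD_k$ there and $\mu_k$ uses $\hat{\DD}_k$, while both use $\hat{\DD}_1,\ldots,\hat{\DD}_{k-1}$ on the first $k-1$ coordinates and $\DD_{k+1},\ldots,\DD_d$ on the last $d-k$. I would condition on the values of all coordinates except $k$. Writing $\boldsymbol{a}_{-k}$ for such a tuple in $\times_{j\neq k}\XX_j$ and letting $\nu$ be the (common) product law of $\boldsymbol{a}_{-k}$ under both $\mu_{k-1}$ and $\mu_k$, we have by Fubini
\begin{align}
\E_{\boldsymbol{x}\sim \mu_{k-1}}[f(\boldsymbol{x})] - \E_{\boldsymbol{x}\sim \mu_{k}}[f(\boldsymbol{x})] = \E_{\boldsymbol{a}_{-k}\sim \nu}\left[\E_{x_k\sim \DD_k}[f(x_k, \boldsymbol{a}_{-k})] - \E_{x_k\sim \hat{\DD}_k}[f(x_k, \boldsymbol{a}_{-k})]\right]. \notag
\end{align}
Now for each fixed $\boldsymbol{a}_{-k}$, the single-variable function $g_{\boldsymbol{a}_{-k}}(x_k) := f(x_k, \boldsymbol{a}_{-k})$ lies in the projected class $\HH_k$ by definition, so the inner difference is at most $\sup_{g\in\HH_k}\left|\E_{x\sim\DD_k}[g(x)] - \E_{x\sim\hat{\DD}_k}[g(x)]\right| \leq \epsilon$ by hypothesis. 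Taking absolute values inside the outer expectation (triangle inequality for integrals) gives that the $k$-th term is at most $\epsilon$, and this bound is independent of $f$.

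There is no real obstacle here; the only point requiring a little care is the measure-theoretic bookkeeping — verifying that $\boldsymbol{a}_{-k}$ genuinely has the same marginal law under $\mu_{k-1}$ and $\mu_k$ (immediate since these measures are products and agree on all coordinates $\neq k$), and that Fubini applies (fine if one assumes the $g \in \HH$ are measurable and, say, bounded or integrable, which is implicit in the statements of this section since expectations are taken throughout). Everything else is the telescoping sum plus the triangle inequality, and summing the $d$ bounds of $\epsilon$ each yields $d\cdot\epsilon$, as claimed; taking $\sup_{f\in\HH}$ on the left-hand side preserves the bound since the right-hand side $d\epsilon$ does not depend on $f$.
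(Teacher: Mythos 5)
Your proof is correct and follows essentially the same route as the paper: a hybrid argument swapping one coordinate's distribution at a time, with each single-swap difference bounded by $\epsilon$ after conditioning on the remaining coordinates and observing that the resulting slice function lies in the projected class $\HH_k$. The paper writes the conditioning step out as explicit iterated integrals, but the underlying argument is identical.
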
 

\begin{proof}
	Let $\FF_i$ and $\hat{\FF}_i$ be the probability measure function for $\DD_i$ and $\hat{\DD}_i$ respectively. We will prove the statement using a hybrid argument. We create a sequence of product distributions $\{\DD^{(j)}\}_{j\leq d}$, where $\DD^{(j)}=\hat{\DD}_1\times\cdots\times\hat{\DD}_j\times \DD_{j+1}\times\cdots\times \DD_{d},$ and $\DD^{(0)}=\DD$, $\DD^{(d)}=\hat{\DD}$. To prove our claim, it suffices to show that for any integer $j\in[d]$, $$\left|\E_{\boldsymbol{x}\sim \DD^{(j-1)}}\left[f(\boldsymbol{x})\right]-\E_{\boldsymbol{x}\sim \DD^{(j)}}\left[f(\boldsymbol{x})\right]\right|\leq \epsilon.$$
	
	Next, we show how to derive this inequality.
	\begin{align*}
		&\left|\E_{\boldsymbol{x}\sim \DD^{(j-1)}}\left[f(\boldsymbol{x})\right]-\E_{\boldsymbol{x}\sim \DD^{(j)}}\left[f(\boldsymbol{x})\right]\right|\\
		=&\Bigg{|}\int_{\times_{i\neq j} \XX_i}\left(\int_{\XX_j}f(x_{j},x_{-j}) d \FF_j(x_j)\right) d\hat{\FF}_1(x_1)\cdots d\hat{\FF}_{j-1}(x_{j-1}) d\FF_{j+1}(x_{j+1})\cdots d\FF_{d}(x_{d})\\
		 & ~~~~~~~~~~~~~~~~~~~~ - \int_{\times_{i\neq j} \XX_i}\left(\int_{\XX_j}f(x_{j},x_{-j}) d \hat{\FF}_j(x_j)\right) d\hat{\FF}_1(x_1)\cdots d\hat{\FF}_{j-1}(x_{j-1}) d\FF_{j+1}(x_{j+1})\cdots d\FF_{d}(x_{d})\Bigg{|}\\
		 = & \left|\int_{\times_{i\neq j} \XX_i} \left(\E_{x_j\sim \DD_j}\left[f(x_j,x_{-j})\right]-\E_{x_j\sim \hat{\DD}_j}[f(x_j,x_{-j})]\right)
		  d\hat{\FF}_1(x_1)\cdots d\hat{\FF}_{j-1}(x_{j-1}) d\FF_{j+1}(x_{j+1})\cdots d\FF_{d}(x_{d})\right|\\
		  \leq & \epsilon \cdot\int_{\times_{i\neq j} \XX_i} d\hat{\FF}_1(x_1)\cdots d\hat{\FF}_{j-1}(x_{j-1}) d\FF_{j+1}(x_{j+1})\cdots d\FF_{d}(x_{d})\\
		   = & \epsilon
	\end{align*}
\end{proof}

\begin{theorem}\label{thm:uniform convergence for product measure}
Let $\XX_1,\ldots, \XX_d$ be $d$ domain sets and $\HH$  a hypothesis class of functions mapping from the product space $\times_{i=1}^{d} \XX_i$ to $\mathbb{R}$. For all $i\in[d]$, let $\HH_i$ be the projected hypothesis class of $\HH$ on $\XX_i$, that is $\HH_i=\left\{g\ |\ \exists f\in \HH\ \exists\ a_{-i}\in \times_{j\neq i} \XX_j \ \forall\ x_i\in \XX_i, \ g(x_i) = f(x_i,a_{-i})\right\}$. 

Suppose that, for all $i\in [d]$, $\HH_i$ has $(\epsilon,\delta)$-uniform convergence with sample complexity $s_i(\epsilon,\delta)$.  Then $\HH$ has $(\epsilon,\delta)$-p.m.~uniform convergence with sample complexity $s(\epsilon,\delta) = \max_{i\in[d]} s_i(\epsilon/d,\delta/d)$.

In particular, let $\boldsymbol{z}^{(1)},\ldots, \boldsymbol{z}^{(\ell)}$ be a sample of size $\ell = s(\epsilon,\delta)$ from a product measure $\times_{i\in[d]} \DD_i$. Define $\hat{\DD}_i = {\cal P}_i({z}^{(1)}_i,\ldots,z^{(\ell)}_i)$, for all $i\in[d]$, where $z^{(j)}_i$ is the $i$-th entry of sample $\boldsymbol{z}^{(j)}$ and ${\cal P}_i$ is the processing corresponding to $\HH_i$'s uniform convergence. Then 
$$\Pr_{\boldsymbol{z}^{(1)},\ldots, \boldsymbol{z}^{(\ell)}}\left[\sup_{f\in \HH}\left|\E_{\boldsymbol{z}\sim \times_{i\in[d]}\hat{\DD}_i}\left[f(\boldsymbol{z})\right] - \E_{\boldsymbol{z}\sim \times_{i\in[d]}{\DD}_i}\left[f(\boldsymbol{z})\right]\right|\leq \epsilon\right]\geq 1-\delta.$$
\end{theorem}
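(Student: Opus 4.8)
The plan is to combine the single-coordinate uniform convergence hypotheses with the hybrid/triangle-inequality estimate of Lemma~\ref{lem:approx product dist}, and to control the failure probability by a union bound over the $d$ coordinates. First I would set $\ell = s(\epsilon,\delta) = \max_{i\in[d]} s_i(\epsilon/d,\delta/d)$ and draw the sample $\boldsymbol{z}^{(1)},\ldots,\boldsymbol{z}^{(\ell)}$ from the product measure $\times_{i\in[d]}\DD_i$. The crucial structural observation is that, since the measure is a product, the $i$-th coordinates $z^{(1)}_i,\ldots,z^{(\ell)}_i$ of the sample are themselves an i.i.d.~sample of size $\ell$ from $\DD_i$ alone, and these $d$ coordinate-samples are mutually independent. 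Hence for each fixed $i$ I can apply the $(\epsilon/d,\delta/d)$-uniform convergence of $\HH_i$ — note $\ell \ge s_i(\epsilon/d,\delta/d)$, and uniform convergence with a given sample size is inherited by any larger sample size, by having the processing ignore the extra coordinates — to the processed distribution $\hat{\DD}_i = {\cal P}_i(z^{(1)}_i,\ldots,z^{(\ell)}_i)$. This yields, for each $i$, an event $E_i$ of probability at least $1-\delta/d$ on which $\sup_{g\in\HH_i}\left|\E_{x\sim\hat{\DD}_i}[g(x)] - \E_{x\sim\DD_i}[g(x)]\right| \le \epsilon/d$.

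Next I would take the intersection $E = \bigcap_{i\in[d]} E_i$; by the union bound $\Pr[E] \ge 1-\delta$. On the event $E$, the hypotheses of Lemma~\ref{lem:approx product dist} are satisfied with the error parameter $\epsilon/d$ in place of $\epsilon$ (for every $i$, $\DD_i$ and $\hat{\DD}_i$ are $\epsilon/d$-close with respect to the projected class $\HH_i$). Therefore Lemma~\ref{lem:approx product dist} gives $\sup_{f\in\HH}\left|\E_{\boldsymbol{x}\sim\times_i\DD_i}[f(\boldsymbol{x})] - \E_{\boldsymbol{x}\sim\times_i\hat{\DD}_i}[f(\boldsymbol{x})]\right| \le d\cdot(\epsilon/d) = \epsilon$, which is exactly the conclusion of the theorem (the displayed probabilistic statement), and also establishes the abstract claim that $\HH$ has $(\epsilon,\delta)$-p.m.~uniform convergence with the stated sample complexity, with the processing on $\times_i\XX_i$ being $(z^{(1)},\ldots,z^{(\ell)}) \mapsto \times_i {\cal P}_i(z^{(1)}_i,\ldots,z^{(\ell)}_i)$.

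The one genuinely delicate point — and the step I expect to need the most care — is the interface between the coordinatewise $\HH_i$ and the global $\HH$: I must check that the error bound $\sup_{g\in\HH_i}|\cdots| \le \epsilon/d$ coming from $\HH_i$'s uniform convergence is precisely the quantity that Lemma~\ref{lem:approx product dist} consumes. This works because $\HH_i$ is defined in the theorem statement as the projection of $\HH$ onto $\XX_i$ (over all choices of the other coordinates $a_{-i}$), so in the hybrid argument of Lemma~\ref{lem:approx product dist}, the slice function $x_j \mapsto f(x_j, x_{-j})$ for any fixed $x_{-j}$ indeed lies in $\HH_j$, and the per-coordinate bound applies to it. The remaining items are bookkeeping: confirming the coordinate subsamples are i.i.d.~from the marginals (immediate from the product structure), monotonicity of uniform convergence in the sample size, and the union bound.
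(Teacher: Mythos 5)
Your proof is correct and follows essentially the same approach as the paper's: apply each $\HH_i$'s $(\epsilon/d,\delta/d)$-uniform convergence to the $i$-th coordinate projection of the sample, take a union bound over $i\in[d]$, and invoke Lemma~\ref{lem:approx product dist} on the resulting event. The extra bookkeeping you flag (monotonicity in sample size, the independence of coordinate subsamples, the interface between $\HH_i$ and the slice functions in Lemma~\ref{lem:approx product dist}) is correct and worth having explicit, but the paper's proof implicitly relies on the same facts.
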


\begin{prevproof}{Theorem}{thm:uniform convergence for product measure}
	Since $\ell\geq s_i(\epsilon/d,\delta/d)$, $\Pr\left[\sup_{g\in \HH_i}\left|\E_{z\sim \hat{\DD}_i}[g(z)]- \E_{z\sim \DD_i}[g(z)]\right|\leq \epsilon/d\right]\geq 1-\delta/d$ for all $i\in[d]$. By the union bound, with probability at least $1- \delta$, $\sup_{g\in \HH_i}\left|\E_{z\sim \hat{\DD}_i}[g(z)]- \E_{z\sim \DD_i}[g(z)]\right|\leq \epsilon/d$ for all $i\in[d]$. According to Lemma~\ref{lem:approx product dist}, $\sup_{f\in \HH}\left|\E_{\boldsymbol{z}\sim \times_{i\in[d]}\hat{\DD}_i}\left[f(\boldsymbol{z})\right] - \E_{\boldsymbol{z}\sim \times_{i\in[d]}{\DD}_i}\left[f(\boldsymbol{z})\right]\right|\leq \epsilon$ with probability at least $1- \delta$.
\end{prevproof}

\begin{theorem}\label{thm:uniform convergence for product measure PARTITION}
Let $\XX_1,\ldots, \XX_d$ be $d$ domain sets and $\HH$ a hypothesis class of functions mapping from the product space $\times_{i=1}^{d} \XX_i$ to $\mathbb{R}$. For all $T\subseteq [d]$, let $\HH_T$ be the projected hypothesis class of $\HH$ on $\XX_T \equiv \times_{i \in T}\XX_i$, that is, $\HH_T=\left\{g\ |\ \exists f\in \HH\ \exists\ a_{-T}\in \times_{j\notin T} \XX_j \ \forall\ x_T\in \XX_T, \ g(x_T) = f(x_T,a_{-T})\right\}$. Suppose that, for all $T \subseteq [d]$, $\HH_T$ has $(\epsilon,\delta)$-p.m.~uniform convergence with sample complexity $s_T(\epsilon,\delta)$, and define 
\begin{align}s(\epsilon,\delta) = \min_{\begin{minipage}[h]{3.5cm}\centering $k$,~partitions\\$T_1 \sqcup T_2 \sqcup \ldots \sqcup T_k = [d]$\end{minipage}} \max_{i=1,\ldots,k} s_{T_i}(\epsilon/k,\delta/k). \label{eq:yang}
\end{align}
Then $\HH$ has $(\epsilon,\delta)$-p.m.~uniform convergence with sample complexity $s(\epsilon,\delta)$.

In particular, let $\boldsymbol{z}^{(1)},\ldots, \boldsymbol{z}^{(\ell)}$ be a sample of size $\ell = s(\epsilon,\delta)$ from a product measure $\times_{i\in[d]} \DD_i$. Suppose that the optimum of~\eqref{eq:yang} is attained at $k=\tilde{k}$ for partition $\tilde{T}_1 \sqcup \tilde{T}_2 \sqcup \ldots \sqcup \tilde{T}_{\tilde{k}} = [d]$. Define $\hat{\DD}_{\tilde{T}_i} = {\cal P}_{\tilde{T}_i}({\boldsymbol{z}}^{(1)}_{\tilde{T}_i},\ldots,{\boldsymbol{z}}^{(\ell)}_{\tilde{T}_i})$, for all $i\in[d]$,  where ${\boldsymbol{z}}^{(j)}_{\tilde{T}_i}$ contains the entries of sample $\boldsymbol{z}^{(j)}$ in coordinates $\tilde{T}_i$ and ${\cal P}_{\tilde{T}_i}$ is the processing corresponding to $\HH_{\tilde{T}_i}$'s uniform convergence. Then 
$$\Pr_{\boldsymbol{z}^{(1)},\ldots, \boldsymbol{z}^{(\ell)}}\left[\sup_{f\in \HH}\left|\E_{\boldsymbol{z}\sim \times_{i\in[\tilde{k}]}\hat{\DD}_{\tilde{T}_i}}\left[f(\boldsymbol{z})\right] - \E_{\boldsymbol{z}\sim \times_{i\in[d]}{\DD}_i}\left[f(\boldsymbol{z})\right]\right|\leq \epsilon\right]\geq 1-\delta.$$

\end{theorem}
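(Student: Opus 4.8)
The plan is to mimic the proof of Theorem~\ref{thm:uniform convergence for product measure}, but to coarsen its coordinate-by-coordinate hybrid argument into a block-by-block one along the optimal partition. Fix a product measure $\DD = \times_{i\in[d]}\DD_i$ over $\XX$, and let $\tilde T_1 \sqcup \cdots \sqcup \tilde T_{\tilde k} = [d]$ be the partition attaining the minimum in~\eqref{eq:yang}. The first observation is that $\DD$ can equivalently be viewed as a product measure $\times_{j\in[\tilde k]} \DD_{\tilde T_j}$ over the ``super-coordinates'' $\XX_{\tilde T_1},\ldots,\XX_{\tilde T_{\tilde k}}$, where $\DD_{\tilde T_j} = \times_{i\in \tilde T_j}\DD_i$; and, crucially, each $\DD_{\tilde T_j}$ is itself a product measure over $\XX_{\tilde T_j}$, so the \emph{p.m.}~uniform convergence guarantee assumed for $\HH_{\tilde T_j}$ is applicable to it (this is exactly why the hypothesis is phrased in terms of p.m.~uniform convergence of $\HH_T$ rather than general uniform convergence).

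Next I would exhibit the processing and invoke uniform convergence block by block. Given the sample $\boldsymbol z^{(1)},\ldots,\boldsymbol z^{(\ell)}$ from $\DD$ with $\ell = s(\epsilon,\delta)$, the projections ${\boldsymbol z}^{(1)}_{\tilde T_j},\ldots,{\boldsymbol z}^{(\ell)}_{\tilde T_j}$ onto the coordinates of $\tilde T_j$ are $\ell$ i.i.d.~draws from the product measure $\DD_{\tilde T_j}$. Since $\ell \ge s_{\tilde T_j}(\epsilon/\tilde k,\, \delta/\tilde k)$ for every $j$, the p.m.~uniform convergence of $\HH_{\tilde T_j}$ furnishes a processing ${\cal P}_{\tilde T_j}$ such that, setting $\hat \DD_{\tilde T_j} = {\cal P}_{\tilde T_j}({\boldsymbol z}^{(1)}_{\tilde T_j},\ldots,{\boldsymbol z}^{(\ell)}_{\tilde T_j})$, we have $\sup_{g\in\HH_{\tilde T_j}}\bigl|\E_{z\sim \hat\DD_{\tilde T_j}}[g(z)] - \E_{z\sim \DD_{\tilde T_j}}[g(z)]\bigr| \le \epsilon/\tilde k$ with probability at least $1 - \delta/\tilde k$ (the handling of the case $\ell>s_{\tilde T_j}(\epsilon/\tilde k,\delta/\tilde k)$ is the same implicit monotonicity-of-sample-complexity convention used in Theorem~\ref{thm:uniform convergence for product measure}). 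A union bound over $j\in[\tilde k]$ then gives that all $\tilde k$ of these inequalities hold simultaneously with probability at least $1-\delta$; note that independence across blocks is not needed, only the union bound, even though all blocks reuse the same sample.

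Finally, conditioned on this event, I would apply Lemma~\ref{lem:approx product dist}, treating the blocks $\tilde T_1,\ldots,\tilde T_{\tilde k}$ as the role played by the $d$ domain sets in the lemma: formally, one instantiates the lemma with domain sets $\XX_{\tilde T_1},\ldots,\XX_{\tilde T_{\tilde k}}$, hypothesis class $\HH$, per-block distributions $\DD_{\tilde T_j}$ and $\hat\DD_{\tilde T_j}$, and error parameter $\epsilon/\tilde k$. This yields $\sup_{f\in\HH}\bigl|\E_{\boldsymbol z\sim \times_{j}\DD_{\tilde T_j}}[f(\boldsymbol z)] - \E_{\boldsymbol z\sim \times_{j}\hat\DD_{\tilde T_j}}[f(\boldsymbol z)]\bigr| \le \tilde k\cdot (\epsilon/\tilde k) = \epsilon$, which is the desired conclusion since $\times_j \DD_{\tilde T_j} = \DD$. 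The overall processing ${\cal P}:\XX^\ell\to\Delta(\XX)$ is ``project the sample onto each block $\tilde T_j$, apply ${\cal P}_{\tilde T_j}$, and output the product $\times_{j\in[\tilde k]}\hat\DD_{\tilde T_j}$ of the resulting block distributions.''

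I expect the only real subtlety — and hence the step to state carefully — is the bookkeeping that makes Lemma~\ref{lem:approx product dist} directly applicable at the block level: one must verify that the projection of $\HH$ onto a block $\XX_{\tilde T_j}$ in the sense of the lemma coincides precisely with $\HH_{\tilde T_j}$ as defined in the theorem, and that the block marginal $\DD_{\tilde T_j}$ is still a product measure, so that the p.m.~(rather than general) uniform convergence hypothesis for $\HH_{\tilde T_j}$ is what is invoked. Everything else is a near-verbatim repetition of the argument in Theorem~\ref{thm:uniform convergence for product measure}, with ``singletons'' replaced by the blocks $\tilde T_1,\ldots,\tilde T_{\tilde k}$ and the factor $d$ replaced by $\tilde k$.
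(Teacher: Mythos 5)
Your proof is correct and follows the paper's intended approach (the paper's own proof is the one-liner ``for every possible partition use Theorem~\ref{thm:uniform convergence for product measure}''). Your elaboration actually sharpens a small imprecision in that one-liner: Theorem~\ref{thm:uniform convergence for product measure} assumes \emph{general} (not p.m.) uniform convergence of the coordinate projections, so it cannot be cited verbatim with blocks in place of coordinates; you correctly fix this by rerunning the hybrid argument at the block level and using that each block marginal $\DD_{\tilde T_j}$ of a product measure is itself a product measure, so the weaker p.m.\ hypothesis on $\HH_{\tilde T_j}$ suffices before invoking Lemma~\ref{lem:approx product dist} with the blocks playing the role of coordinates.
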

\begin{prevproof}{Theorem}{thm:uniform convergence for product measure PARTITION}
For every possible partition use Theorem~\ref{thm:uniform convergence for product measure}.
\end{prevproof}

Nest, we specialize Theorem~\ref{thm:uniform convergence for product measure PARTITION} to indicator functions over sets.

\begin{corollary}\label{cor:VC for product measure}
	We use the same notation as in Theorem~\ref{thm:uniform convergence for product measure PARTITION}. Suppose that all functions in $\HH$ map $\times_{i=1}^{d} \XX_i$ to $\{0,1\}$, i.e.~they are indicators over sets. Suppose also that the VC dimension of $\HH_T$ (viewed as a collection of sets) is $V_T$. Define 
	\begin{align}V_{\max}=\min_{\begin{minipage}[h]{3.5cm}\centering $k$,~partitions\\$T_1 \sqcup T_2 \sqcup \ldots \sqcup T_k = [d]$\end{minipage}} \left\{ k^2 \cdot\max_{i=1,\ldots,k} V_{T_i}\right\}. \label{eq:costas}
	\end{align}
	Assume that the optimum of~\eqref{eq:costas} is attained at $k=\tilde{k}$ for partition $\tilde{T}_1 \sqcup \tilde{T}_2 \sqcup \ldots \sqcup \tilde{T}_{\tilde{k}} = [d]$. 
	
	Then
	%
	$\ell = O\left(\frac{V_{\max}}{\epsilon^2}\cdot \ln \frac{\tilde{k}}{\epsilon}+\frac{\tilde{k}^2}{\epsilon^2}\cdot \ln \frac{\tilde{k}}{\delta} \right)$ samples from $\times_{i\in[d]}\DD_i$ suffice to obtain $(\epsilon,\delta)$-p.m. uniform convergence for $\HH$. 
	Formally,
	$$\Pr_{\boldsymbol{z}^{(1)},\ldots, \boldsymbol{z}^{(\ell)}}\left[\sup_{f\in \HH}\left|\E_{\boldsymbol{z}\sim \times_{i\in[\tilde{k}]}\hat{\DD}_{\tilde{T}_i}}\left[f(\boldsymbol{z})\right] - \E_{\boldsymbol{z}\sim \times_{i\in[d]}{\DD}_i}\left[f(\boldsymbol{z})\right]\right|\leq \epsilon\right]\geq 1-\delta,$$
	where for a given sample $\boldsymbol{z}^{(1)},\ldots, \boldsymbol{z}^{(\ell)}$ from a product distribution $\times_{i\in[d]} \DD_i$ the distributions $\hat{\DD}_{\tilde{T}_i}$ are defined to be uniform over ${\boldsymbol{z}}^{(1)}_{\tilde{T}_i},\ldots,{\boldsymbol{z}}^{(\ell)}_{\tilde{T}_i}$, where ${\boldsymbol{z}}^{(j)}_{\tilde{T}_i}$ contains the entries of sample $\boldsymbol{z}^{(j)}$ in coordinates $\tilde{T}_i$.
\end{corollary}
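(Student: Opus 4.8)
The plan is to instantiate Theorem~\ref{thm:uniform convergence for product measure PARTITION} with the partition $\tilde{T}_1 \sqcup \cdots \sqcup \tilde{T}_{\tilde k}$ that attains the optimum in~\eqref{eq:costas}, and to control each block's sample complexity $s_{\tilde{T}_i}(\epsilon/\tilde k,\delta/\tilde k)$ using the classical VC-dimension-based uniform convergence bound. Concretely, for a class of $\{0,1\}$-valued functions on $\XX_{\tilde{T}_i}$ of VC dimension $V_{\tilde{T}_i}$, the standard uniform convergence theorem (Vapnik--Chervonenkis / the $\epsilon$-sample bound) says that $O\!\left(\frac{V_{\tilde{T}_i}}{\epsilon'^2}\ln\frac{1}{\epsilon'} + \frac{1}{\epsilon'^2}\ln\frac{1}{\delta'}\right)$ i.i.d.\ samples from any distribution on $\XX_{\tilde{T}_i}$ make the empirical measure an $\epsilon'$-approximation of the true one with probability $1-\delta'$; here the processing ${\cal P}_{\tilde{T}_i}$ is simply ``output the empirical distribution of the samples.'' This shows each $\HH_{\tilde{T}_i}$ has $(\epsilon',\delta')$-uniform convergence (in particular p.m.\ uniform convergence, since a product measure is a special case) with this sample complexity, so the hypotheses of Theorem~\ref{thm:uniform convergence for product measure PARTITION} are met.

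Next I would substitute $\epsilon' = \epsilon/\tilde k$ and $\delta' = \delta/\tilde k$ into the per-block bound, obtaining
\[
s_{\tilde{T}_i}(\epsilon/\tilde k,\delta/\tilde k) = O\!\left(\frac{\tilde k^2 V_{\tilde{T}_i}}{\epsilon^2}\ln\frac{\tilde k}{\epsilon} + \frac{\tilde k^2}{\epsilon^2}\ln\frac{\tilde k}{\delta}\right).
\]
Taking the maximum over $i=1,\ldots,\tilde k$ replaces $V_{\tilde{T}_i}$ by $\max_i V_{\tilde{T}_i}$, and since $\tilde k$ and the partition were chosen to minimize $k^2\max_i V_{T_i}$, we have $\tilde k^2 \max_i V_{\tilde{T}_i} = V_{\max}$ and also $\tilde k^2 \le V_{\max}$ (bound $\max_i V_{T_i} \ge 1$, or handle the degenerate all-zero/all-one case separately). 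Hence $\max_i s_{\tilde{T}_i}(\epsilon/\tilde k,\delta/\tilde k) = O\!\left(\frac{V_{\max}}{\epsilon^2}\ln\frac{\tilde k}{\epsilon} + \frac{\tilde k^2}{\epsilon^2}\ln\frac{\tilde k}{\delta}\right)$, which is exactly the claimed $\ell$; the factor in front of $\ln\frac{\tilde k}{\delta}$ stays $\tilde k^2/\epsilon^2$ because that term did not carry a $V_{\tilde T_i}$. Plugging this $\ell$ into the ``in particular'' conclusion of Theorem~\ref{thm:uniform convergence for product measure PARTITION}, with $\hat{\DD}_{\tilde{T}_i}$ being the empirical (uniform-over-samples) distribution on coordinates $\tilde{T}_i$, yields precisely the displayed probabilistic guarantee.

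The only genuine subtlety — and the main obstacle — is bookkeeping the constants and logarithms so the two terms in $\ell$ come out in the stated separated form rather than as a single $\frac{V_{\max}+\tilde k^2}{\epsilon^2}\ln\frac{\tilde k}{\epsilon\delta}$ type expression; this is handled by using the sharper version of the VC uniform convergence bound in which the $\ln(1/\delta)$ dependence is additive (not multiplied by $V$), e.g.\ the bound of the form $O(\frac{V}{\epsilon^2}\ln\frac{1}{\epsilon} + \frac{1}{\epsilon^2}\ln\frac{1}{\delta})$ rather than $O(\frac{V}{\epsilon^2}\ln\frac{1}{\epsilon\delta})$. A secondary point is the edge case where some $V_{T_i}=0$ (so $\HH_{T_i}$ is a single set): the uniform convergence bound still holds trivially, and one should just note $V_{\max}\ge \tilde k^2$ in that case so the first term of $\ell$ dominates and the formula is still valid. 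Everything else is a direct appeal to results already in the excerpt.
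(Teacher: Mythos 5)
Your proof is correct and takes the route the paper intends: instantiate Theorem~\ref{thm:uniform convergence for product measure PARTITION} with the partition optimizing~\eqref{eq:costas}, plug in the classical VC $\epsilon$-sample bound of the form $O\!\left(\frac{V}{\epsilon'^2}\ln\frac{1}{\epsilon'}+\frac{1}{\epsilon'^2}\ln\frac{1}{\delta'}\right)$ with empirical distribution as processing, substitute $\epsilon'=\epsilon/\tilde k$, $\delta'=\delta/\tilde k$, and simplify. The paper gives no explicit proof of the corollary, so your derivation fills it in exactly as expected. Two small remarks: the observations that $\tilde k^2\le V_{\max}$ (when all $V_{T_i}\ge 1$) and the $V_{T_i}=0$ edge case are unnecessary for the argument, since the $\frac{\tilde k^2}{\epsilon^2}\ln\frac{\tilde k}{\delta}$ term is carried separately in the statement of $\ell$ and is supplied for free by Hoeffding even when the block's class is trivial; and your phrase ``$V_{\max}\ge\tilde k^2$ in that case'' is not actually true if every $V_{T_i}=0$, but that scenario is handled by the surviving second term, so nothing breaks.
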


Table~\ref{tab:productVC} compares the sample complexity for uniform convergence implied by Theorem~\ref{thm:uniform convergence for product measure PARTITION} and Corollary~\ref{cor:VC for product measure} to that implied by VC theory, when the underlying measures are product. Suppose $\HH$ contains the indicator functions of all convex sets in $\mathbb{R}^d$. VC theory does not provide any finite sample  bound for  uniform convergence, as the VC dimension of $\HH$ is $\infty$. Do our results provide a finite bound? Notice that, for all $i$, $\HH_i$  simply contains all intervals in $\mathbb{R}$. Hence, $V_i=2$ and Corollary~\ref{cor:VC for product measure} implies that $\ell = O(\frac{d^2}{\epsilon^2}\cdot \left(\log \frac{d}{\delta}+\log \frac{d}{\epsilon}\right) )$ samples  suffice to obtain $(\epsilon,\delta)$-p.m.~uniform convergence for $\HH$ . In fact, our sample complexity bound can be improved to $O\left(\frac{d^2}{\epsilon^2}\cdot \log \frac{d}{\delta}\right)$, as $O\left(\frac{\log \frac{1}{\delta}}{\epsilon^2}\right)$ samples suffice to guarantee $(\epsilon,\delta)$-uniform convergence for all intervals in $\mathbb{R}$ due to the DKW inequality~\cite{DvoretzkyKW56}. 


In the next a few sections, we apply our uniform convergence results to learn a mechanism with approximately optimal revenue. A type of events called \emph{single-intersecting} (see Definition~\ref{def:single-intersecting})   plays a key role in our analysis. These events are defined based on the geometric shape of the corresponding sets. For example, balls, rectangles and all convex sets are single-intersecting, but this definition includes some non-convex sets as well, for example, ``cross-shaped'' sets. It turns out that being able to handle these non-convex sets is crucial for our results, as many events we care about are not convex but nonetheless are single-intersecting. 

\begin{definition}[Single-intersecting Events]\label{def:single-intersecting}
For any event $\EE$ in $\mathbb{R}^{\ell}$, $\EE$ is \textbf{single-intersecting} if the intersection of $\EE$ and any line that is parallel to one of the axes is an interval. More formally, for any $i\in[\ell]$ and any line $L_i=\left\{x \in \mathbb{R}^{\ell}\ |\ x_{-i}=a_{-i} \right \}$, where $a_{-i} \in \mathbb{R}^{\ell-1}$, the intersection of $L_i$ and $\EE$ is of the form $\left\{x \in \mathbb{R}^{\ell}\ |\ x_{-i}=a_{-i}, x_i\in[\ubar{a},\bar{a}] \right\}$ where $\ubar{a}\leq \bar{a}$. In particular, we allow $\ubar{a}$ to be $-\infty$ and $\bar{a}$ to be $+\infty$.
\end{definition}

\noindent We establish a uniform convergence bound for single-intersecting events by combing the DKW inequality and Theorem~\ref{thm:uniform convergence for product measure}.

\begin{lemma}\label{lem:uniform convergence for single-intersecting}
	For any integer $\ell$, let $\HH$ be the hypothesis class that contains all indicator functions for single-intersecting events in $\mathbb{R}^\ell$. Then $\HH$ has $(\epsilon,\delta)$-p.m. uniform convergence with sample complexity $O\left(\frac{\ell^2}{\epsilon^2}\cdot \log \frac{\ell}{\delta}\right)$.
\end{lemma}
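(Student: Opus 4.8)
The plan is to reduce to the one-dimensional case and then invoke Theorem~\ref{thm:uniform convergence for product measure} with $d = \ell$ and $\XX_i = \R$ for every $i \in [\ell]$. The key structural fact is that the single-intersecting property is precisely what is needed to make each one-dimensional projection $\HH_i$ tame: if $f = \ind_\EE$ for a single-intersecting event $\EE \subseteq \R^\ell$ and we fix $a_{-i} \in \R^{\ell-1}$, then the function $x_i \mapsto f(x_i, a_{-i})$ is the indicator of the set $\{x_i \in \R : (x_i, a_{-i}) \in \EE\}$, which by Definition~\ref{def:single-intersecting} is an interval (allowing it to be empty, half-infinite, or all of $\R$). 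Conversely, every interval $I \subseteq \R$ is realized this way, e.g.\ by the single-intersecting event $I \times \R^{\ell-1}$. Hence, for each $i$, $\HH_i$ is exactly the class of indicators of intervals on the real line.

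Next I would bound the ordinary (non-product-measure) uniform convergence sample complexity of the interval class on $\R$ via the DKW inequality. Take the processing ${\cal P}_i$ that maps a sample to its empirical distribution $\hat{\DD}_i$. For any interval $I$ we have $|\hat{\DD}_i(I) - \DD_i(I)| \le 2 \sup_{x \in \R} |\hat{F}_i(x) - F_i(x)|$, where $F_i$ and $\hat{F}_i$ are the true and empirical cdfs (the right endpoint contributes $|F_i - \hat{F}_i|$ and the left endpoint its left-limit analogue, which the same supremum controls by a limiting argument). The DKW inequality states that this supremum exceeds $t$ with probability at most $2e^{-2kt^2}$ over a sample of size $k$; taking $t = \epsilon/2$ and $k = O(\log(1/\delta)/\epsilon^2)$ makes this at most $\delta$. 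Thus the interval class on $\R$ has $(\epsilon,\delta)$-uniform convergence with sample complexity $s_i(\epsilon,\delta) = O(\log(1/\delta)/\epsilon^2)$.

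Plugging this into Theorem~\ref{thm:uniform convergence for product measure} then yields that $\HH$ has $(\epsilon,\delta)$-p.m.\ uniform convergence with sample complexity $s(\epsilon,\delta) = \max_{i \in [\ell]} s_i(\epsilon/\ell, \delta/\ell) = O\!\left(\frac{\ell^2}{\epsilon^2}\log\frac{\ell}{\delta}\right)$, which is the claimed bound; the processing for $\HH$ is the product of the per-coordinate empirical distributions, matching the ``in particular'' clause of that theorem. There is essentially no obstacle here beyond setting up the reduction correctly: the one point that needs care is the first step, where both inclusions must be checked (projections of single-intersecting events are intervals, and all intervals arise), so that $\HH_i$ really equals the interval class and the sharp DKW bound is legitimately available. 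One could instead route through Corollary~\ref{cor:VC for product measure} with $V_i = 2$, but that gives the slightly weaker bound $O\!\left(\frac{\ell^2}{\epsilon^2}\bigl(\log\frac{\ell}{\delta} + \log\frac{\ell}{\epsilon}\bigr)\right)$, so applying DKW directly is preferable.
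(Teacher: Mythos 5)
Your proof is correct and follows essentially the same route as the paper's: observe that the coordinate-$i$ projection of the single-intersecting class consists of indicators of intervals, invoke DKW to get $s_i(\epsilon,\delta) = O(\epsilon^{-2}\log(1/\delta))$ for that projected class, and plug into Theorem~\ref{thm:uniform convergence for product measure}. (One small nit: only the inclusion ``projections are intervals'' is actually needed to make the DKW bound available, since an $(\epsilon,\delta)$ guarantee over all intervals automatically covers any subclass; the reverse inclusion you flag as the point needing care is not required.)
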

\begin{proof}
	As the projected hypothesis class for the $i$-th coordinate simply contains all intervals in $\R$, the sample complexity for $(\epsilon,\delta)$-uniform convergence is $O({1 \over \epsilon^2}\cdot \log \frac{1}{\delta})$ due to the DKW inequality. The claim follows from Theorem~\ref{thm:uniform convergence for product measure}.
\end{proof}

Next, we show a slightly stronger statement, which is a type of uniform convergence bound when access to approximate distributions is given. More specifically, we argue that for any single-intersecting event, the difference in the probability of this event under two product distributions $\DD=\times_{i\in[\ell]}\DD_i$ and $\hat{\DD}=\times_{i\in[\ell]}\hat{\DD}_{i}$ is at most $2\xi\cdot \ell$, if $||\DD_i-\hat{\DD}_i||_K\leq \xi$ for all $i$. It is not hard to see that Lemma~\ref{lem:Kolmogorov stable for sc} and the DKW inequality imply Lemma~\ref{lem:uniform convergence for single-intersecting}.

\begin{lemma}\label{lem:Kolmogorov stable for sc}
For any integer $\ell$, let $\DD=\times_{i=1}^\ell \DD_i$ and $\hat{\DD}=\times_{i=1}^\ell \hat{\DD}_i$, where $\DD_i$ and $\hat{\DD}_i$ are both supported on $\mathbb{R}$ for any $i\in[\ell]$. If $||\DD_i-\hat{\DD}_i||_K\leq \xi$, $\left|\Pr_\DD[\EE]-\Pr_{\hat{\DD}}[\EE]\right|\leq 2\xi\cdot \ell$ for any single-intersecting event $\EE$.
\end{lemma}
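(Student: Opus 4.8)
The plan is to prove the statement by a hybrid argument identical in spirit to the proof of Lemma~\ref{lem:approx product dist}, reducing the multi-coordinate claim to a one-coordinate Kolmogorov bound. Define the sequence of product distributions $\DD^{(j)} = \hat{\DD}_1 \times \cdots \times \hat{\DD}_j \times \DD_{j+1} \times \cdots \times \DD_\ell$ for $j=0,\ldots,\ell$, so that $\DD^{(0)} = \DD$ and $\DD^{(\ell)} = \hat{\DD}$. By the triangle inequality it suffices to show that for each $j \in [\ell]$, $\left|\Pr_{\DD^{(j-1)}}[\EE] - \Pr_{\DD^{(j)}}[\EE]\right| \le 2\xi$. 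Fixing all coordinates other than $j$ to some value $a_{-j}$, the conditional event $\EE \cap \{x_{-j} = a_{-j}\}$ is, by the single-intersecting hypothesis, an interval $[\ubar{a}, \bar{a}]$ in the $j$-th coordinate (possibly with infinite endpoints). Hence $\Pr_{\DD^{(j-1)}}[\EE \mid x_{-j} = a_{-j}] = \Pr_{x_j \sim \DD_j}[x_j \in [\ubar{a},\bar{a}]]$ and $\Pr_{\DD^{(j)}}[\EE \mid x_{-j} = a_{-j}] = \Pr_{x_j \sim \hat{\DD}_j}[x_j \in [\ubar{a},\bar{a}]]$, since $\DD^{(j-1)}$ and $\DD^{(j)}$ differ only in the $j$-th marginal.

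The key step is then the observation that the difference in probability mass that two distributions on $\mathbb{R}$ assign to any single interval is at most twice their Kolmogorov distance: for any $u \le v$,
\[
\left|\Pr_{\DD_j}[x_j \in [u,v]] - \Pr_{\hat{\DD}_j}[x_j \in [u,v]]\right| = \left|(F_j(v) - F_j(u^-)) - (\hat F_j(v) - \hat F_j(u^-))\right| \le 2\|\DD_j - \hat{\DD}_j\|_K \le 2\xi,
\]
where $F_j, \hat F_j$ are the respective CDFs, and the two boundary terms contribute at most $\xi$ each (with the infinite-endpoint cases only dropping one of the two terms). Integrating this pointwise bound over $a_{-j}$ against the common marginal on the remaining coordinates — which is exactly the product $\hat{\DD}_1 \times \cdots \times \hat{\DD}_{j-1} \times \DD_{j+1} \times \cdots \times \DD_\ell$ under both $\DD^{(j-1)}$ and $\DD^{(j)}$ — yields $\left|\Pr_{\DD^{(j-1)}}[\EE] - \Pr_{\DD^{(j)}}[\EE]\right| \le 2\xi$. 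Summing over $j = 1,\ldots,\ell$ gives the claimed bound $2\xi \cdot \ell$.

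The only subtle point — and the one worth stating carefully rather than the main obstacle — is the measurability bookkeeping: one must check that $a_{-j} \mapsto (\ubar{a}, \bar{a})$ is measurable so that Fubini applies when integrating the conditional probabilities, which follows because $\ubar{a} = \inf\{x_j : (x_j, a_{-j}) \in \EE\}$ and $\bar{a} = \sup\{\cdots\}$ are measurable functions of $a_{-j}$ for a measurable set $\EE$. There is no real difficulty beyond this; the argument is genuinely just the hybrid decomposition plus the elementary interval-vs-Kolmogorov inequality, with the single-intersecting property doing exactly the work of guaranteeing that each one-dimensional slice is an interval so that the factor-of-$2$ (rather than an uncontrolled) bound applies.
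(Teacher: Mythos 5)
Your proof is correct and takes essentially the same approach as the paper: the paper simply observes that the projected hypothesis class of single-intersecting events onto any single coordinate is the class of intervals, bounds the one-dimensional discrepancy by $2\xi$ via the Kolmogorov assumption, and invokes Lemma~\ref{lem:approx product dist}; you have reproduced the hybrid argument underlying Lemma~\ref{lem:approx product dist} inline rather than citing it, but the decomposition, the $\DD^{(j)}$ hybrids, and the per-coordinate interval bound are identical.
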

\begin{proof}
	Let $\HH=\left\{\ind_{x\in \EE} :\EE\text{ is \emph{single-intersecting}}\right\}$. By the definition of single-intersecting events, $\HH_i$ is the set of the indicator functions of all intervals in $\mathbb{R}$ for any $i\in [\ell]$. Since $||\DD_i-\hat{\DD}_i||_K\leq \xi$, $$\sup_{g\in\HH_i}\left|\E_{x\sim \DD_i}[g(x)]-\E_{x\sim \hat{\DD}_i}[g(x)]\right|\leq 2\xi.$$ By Lemma~\ref{lem:approx product dist}, $$\sup_{f\in\HH} \left|\E_{\boldsymbol{x}\sim \DD}\left[f(\boldsymbol{x})\right]-\E_{\boldsymbol{x}\sim \hat{\DD}}\left[f(\boldsymbol{x})\right]\right|\leq 2\xi\cdot \ell.$$
\end{proof}

The following table (Table~\ref{tab:productVC}) summarizes some uniform convergence bounds implied by our results in this section.

\begin{table}[h]
	\centering
		\begin{tabular}{c || c | c}
			
			\hline\hline
			Hypotheses Class & \begin{minipage}[h]{4cm}\centering VC Bound\end{minipage} & \begin{minipage}[h]{7cm}\centering  Bounds from Theorem~\ref{thm:uniform convergence for product measure PARTITION} and Corollary~\ref{cor:VC for product measure}\end{minipage}\\
			\hline
			& &\\
			 axis-aligned rectangles in $\mathbb{R}^d$ &  $\tilde{O}(d /\epsilon^2)$ & $\tilde{O}(d /\epsilon^2)$	\\		
			polytopes with $k$ facets in $\mathbb{R}^d$ &  $\tilde{O}(d k /\epsilon^2)$ & $\tilde{O}(d \cdot \min\{d,k\} /\epsilon^2)$\\
			arbitrary convex sets in $\mathbb{R}^d$ & $\infty$ & $\tilde{O}(d^2 /\epsilon^2)$\\
			single-intersecting sets in $\mathbb{R}^d$ & $\infty$ & $\tilde{O}(d^2 /\epsilon^2)$
		\end{tabular}
	\caption{Number of samples required for $(\epsilon,\Theta(1))$-p.m. uniform convergence for different ${\cal H}$'s.}
	\label{tab:productVC}
\end{table}
\section{Constrained Additive Bidders: Uniform Convergence of the Revenue of Sequential Posted Price with Entry Fee Mechanisms}\label{sec:uniform convergence of SPEM}

We consider a specific class of mechanisms, namely Sequential Posted Price with Entry fee Mechanisms, a.k.a. \textbf{SPEM}s; see Algorithm~\ref{alg:spem-mech} for details. Cai and Zhao~\cite{CaiZ17} recently showed that if the bidders' valuations are XOS over independent items, the best SPEM achieves a constant fraction of the optimal revenue. \footnote{Cai and Zhao~\cite{CaiZ17} showed that the best ASPE or RSPM achieves a constant fraction of the optimal revenue. Clearly, any ASPE is also a SPEM, and any RSPM is simply a SPEM if we force the bidders to be unit-demand by only allowing each of them to purchase at most one item.} This section has two goals. The first is to show that, when bidders have constrained additive valuations over independent items, polynomially many samples suffice to guarantee uniform convergence for the revenue of all SPEMs, and hence our ability to select a near-optimal SPEM from polynomially many samples. This can be proven by applying our uniform convergence result for single-intersecting events (Lemma~\ref{lem:uniform convergence for single-intersecting}). The second (and stronger goal) is to show that we can learn a near-optimal SPEM under the max-min learning model (Theorem~\ref{thm:constrained additive Kolmogorov}). We show that the revenue of any SPEM changes no more than $O(\epsilon \cdot m^2\cdot n \cdot H)$ under the true and approximate valuation distributions (Theorem~\ref{thm:revenue stability under K-distance}), where $\epsilon$ is an upper bound of the Kolmogorov distance between the true and approximate distributions for every item marginal of every bidder. It is, of course, not hard to see that Theorem~\ref{thm:revenue stability under K-distance} and the DKW inequality imply uniform convergence of the revenue of all SPEMS. To establish Theorem~\ref{thm:revenue stability under K-distance}, we need to apply Lemma~\ref{lem:Kolmogorov stable for sc} instead of Lemma~\ref{lem:uniform convergence for single-intersecting}.

\begin{algorithm}[ht]
\begin{algorithmic}[1]
\REQUIRE A collection of prices $\{p_{ij}\}_{i\in[n], j\in[m]}$ and a collection of entry fee functions $\{\delta_i(\cdot)\}_{i\in[n]}$ where $\delta_i: 2^{[m]}\mapsto \mathbb{R}$ is bidder $i$'s entry fee function.
\STATE $S\gets [m]$
\FOR{$i \in [n]$}
	\STATE Show bidder $i$ {the} set of available items $S$ and set the entry fee for bidder $i$ to be ${\delta_i}(S)$.
    \IF{Bidder $i$ pays the entry fee ${\delta_i}(S)$}
        \STATE $i$ receives her favorite bundle $S_i^{*}$ and pays $\sum_{j\in S_i^{*}}p_{ij}$.
        \STATE $S\gets S\backslash S_i^{*}$.
    \ELSE
        \STATE $i$ gets nothing and pays $0$.
    \ENDIF
\ENDFOR
\end{algorithmic}
\caption{{\sf Sequential Posted Price with Entry Fee Mechanism (SPEM)}}
\label{alg:spem-mech}
\end{algorithm}

 We first establish a technical lemma, which states that, for any set of items $S$, any set of prices $\{p_j\}_{j\in [m]}$ and entry fee $\delta$, the distribution over the {\em set of items} purchased by a constrained additive bidder whose valuation is drawn from  $\DD=\times_{j\in[m]} \DD_j$ and $\hat{\DD}=\times_{j\in[m]} \hat{\DD}_j$ has total variation distance at most $2m\xi$, if $||\DD_j-\hat{\DD}_j||_K\leq \xi$ for every item $j\in [m]$. This is quite surprising. Given that, for each set of items $S' \subseteq S$, the difference in the probability that the buyer will purchase this particular set $S'$ under $\DD$ and $\hat{\DD}$ could already be as large as {$\Theta(m\xi)$}, and the distribution has an exponentially large support size,  a trivial argument would give a bound of {$2^m\cdot \Theta(m\xi)$}. To overcome this analytical difficulty, we argue instead that for any collection of sets of items, the event that the buyer's favorite set lies in this collection is single-intersecting. Then our result follows from Lemma~\ref{lem:Kolmogorov stable for sc}. Notice that it is crucial that Lemma~\ref{lem:Kolmogorov stable for sc} holds for all events that are single-intersecting, as the event we consider here is clearly non-convex in general.

\begin{lemma}\label{lem:stable demand set}
	For any set $S\subseteq [m]$, any prices $\{p_j\}_{j\in[m]}$ and entry fee $\delta(S)$, let $\LL$ and $\hat{\LL}$ be the distributions over the set of items purchased from $S$ by a constrained additive bidder under prices $\{p_j\}_{j\in[m]}$ and entry fee $\delta$ when her type is drawn from $\DD=\times_{j\in[m]} \DD_j$ and $\hat{\DD}=\times_{j\in[m]} \hat{\DD}_j$ respectively. If $||\DD_{j}-\hat{\DD}_{j}||_K\leq \xi$ for all item $j$, $||\LL-\hat{\LL}||_{TV}\leq 2m \xi.$
\end{lemma}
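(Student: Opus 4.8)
The key idea, as the paper's preamble to the lemma already hints, is to reduce the statement to Lemma~\ref{lem:Kolmogorov stable for sc} by showing that every event of the form ``the bidder's purchased set lies in some collection $\mathcal{C}$ of subsets of $S$'' is single-intersecting as a subset of $\mathbb{R}^{|S|}$ (viewing the bidder's type $t = \langle t_j \rangle_{j \in S}$ as a point whose coordinates are the single-item values $V(t_j)$). Since $||\LL - \hat{\LL}||_{TV} = \sup_{\mathcal{C} \subseteq 2^S} |\Pr_{\DD}[\text{purchased set} \in \mathcal{C}] - \Pr_{\hat{\DD}}[\text{purchased set} \in \mathcal{C}]|$, once we know each such event is single-intersecting in $\mathbb{R}^{|S|}$, Lemma~\ref{lem:Kolmogorov stable for sc} (with $\ell = |S| \le m$) gives the bound $2m\xi$ immediately. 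So the whole proof comes down to the geometric claim.

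First I would set up the purchasing rule precisely. A constrained additive bidder facing prices $\{p_j\}_{j \in S}$ and entry fee $\delta(S)$ computes her utility-maximizing bundle: she considers, over all feasible $R \subseteq S$ with $R \in \mathcal{I}$, the quantity $\sum_{j \in R}(V(t_j) - p_j)$; her favorite bundle $R^*$ is the $R$ maximizing this (with a fixed tie-breaking rule), and she buys $R^*$ if $\sum_{j \in R^*}(V(t_j) - p_j) \ge \delta(S)$ and buys nothing otherwise. Write $u_j = V(t_j) - p_j$ for the ``profit'' coordinate of item $j$; since $V$ is monotone in $t_j$, fixing all coordinates but one and increasing $t_j$ increases $u_j$ and leaves all other $u_{j'}$ fixed. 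I want to show: for any fixed $t_{-j}$ and any target collection $\mathcal{C}$, the set of values of $t_j$ for which the purchased set lands in $\mathcal{C}$ is an interval.

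The heart of the argument is a monotonicity-type structural fact about how the purchased set changes as we slide $t_j$ (equivalently $u_j$) from $-\infty$ to $+\infty$ with everything else frozen. The purchased set is a function of $u_j$ alone (given the frozen coordinates). I would argue that as $u_j$ increases, the trajectory of purchased sets is ``monotone'' in a sense strong enough to make preimages intervals: roughly, there are breakpoints $-\infty = a_0 < a_1 < \dots < a_r \le +\infty$, on the first piece the bidder buys nothing or buys a best bundle not containing $j$; once $u_j$ is large enough, item $j$ enters the favorite bundle and stays in it, and moreover once $j$ is ``in,'' the favorite bundle is obtained by solving a fixed combinatorial problem on $S \setminus \{j\}$ to which $j$ is appended (the residual problem no longer depends on $u_j$), so the purchased set is eventually constant. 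The delicate middle part is handling the window where the favorite bundle switches between a $j$-free optimum and a $j$-containing optimum, and the window near the entry-fee threshold. I would handle this by the standard single-crossing observation: $\max_R \sum_{j \in R}u_j$, as a function of $u_j$, is the upper envelope of lines of slope $0$ (bundles not containing $j$) and slope $1$ (bundles containing $j$), hence convex and piecewise linear with nondecreasing slope; so there is a single threshold above which the optimal bundle contains $j$ and below which it does not, and likewise a single threshold for clearing the entry fee. Combining these, the purchased-set trajectory takes at most a few distinct values, each achieved on an interval of $u_j$, and for any $\mathcal{C}$ the set of $u_j$ with purchased set in $\mathcal{C}$ is a union of at most a bounded number of intervals — but in fact, using that ``$j$ is in the bundle'' is an up-set and ``entry fee cleared'' is an up-set in $u_j$, and that the identity of the bundle is locally constant between the two thresholds, one checks the preimage of any single $\mathcal{C}$ is in fact a single interval. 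This yields single-intersectingness along the $j$-axis; by symmetry it holds along every axis, so the event is single-intersecting in $\mathbb{R}^{|S|}$.

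The main obstacle I anticipate is precisely nailing down the claim that the preimage of an arbitrary collection $\mathcal{C}$ is a \emph{single} interval rather than a union of a constant number of intervals — this requires being careful that, across the (at most two) thresholds in $u_j$, the sequence of purchased sets doesn't ``revisit'' $\mathcal{C}$ after leaving it. The clean way to see it: as $u_j$ increases, the purchased set changes at most twice (once when the entry fee starts being cleared, once when $j$ joins the bundle), and these two transition points can be ordered, so the trajectory of purchased sets, read left to right, is a sequence of length at most three with distinct entries; any subset $\mathcal{C}$ of a three-element sequence with distinct entries has connected preimage only if the elements of $\mathcal{C}$ appear consecutively — which need not hold in general. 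So in fact one must be slightly more careful and observe that when $j$ is \emph{not} in the bundle the purchased set is the same whether or not the entry fee is cleared is false; rather, ``nothing'' is also a possible value. The correct accounting is: the trajectory is $\varnothing$ (too-low $u_j$, fee not cleared) then possibly $R_0$ (a fixed $j$-free optimal bundle, fee cleared but $j$ not yet worth including) then $R_1$ (the fixed $j$-containing optimal bundle) — three values in this fixed order, and $\varnothing$ can only be the leftmost, so any $\mathcal{C}$ containing a nonempty set has preimage equal to a suffix of intervals, hence an interval; and if $\mathcal{C} = \{\varnothing\}$ the preimage is a prefix, also an interval. Making that ordering argument airtight (including degenerate cases where thresholds coincide, or $R_0 = \varnothing$, etc.) is the real work; once it is done, the rest is a one-line invocation of Lemma~\ref{lem:Kolmogorov stable for sc}.
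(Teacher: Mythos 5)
Your overall plan is the same as the paper's: view the purchase event as a subset of the type space, show that for every collection $\mathcal{C}$ of possible purchased sets the event ``purchased set lies in $\mathcal{C}$'' is single-intersecting, and then invoke Lemma~\ref{lem:Kolmogorov stable for sc}. The trajectory analysis along an axis-parallel line is also the right tool. But the final accounting step, which is where you yourself flag the ``real work,'' contains a genuine gap.

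You assert that as $u_j$ increases the purchased set may pass through \emph{three} distinct values in the order $\varnothing \to R_0 \to R_1$, with $R_0$ a $j$-free bundle and $R_1$ a $j$-containing one, and then claim that the preimage of any $\mathcal{C}$ containing a nonempty set is ``a suffix of intervals.'' That claim is false for a three-element trajectory: if $\mathcal{C} = \{\varnothing, R_1\}$ the preimage would be the leftmost and rightmost pieces with the middle piece removed, which is not an interval, so single-intersectingness would actually fail. Your proposed resolution therefore does not close the argument; if the trajectory really could visit three sets, the lemma would be in trouble.

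The correct observation, which the paper makes, is that the trajectory visits at most \emph{two} distinct purchased sets: if at the left endpoint the bidder buys $\varnothing$ (fee not cleared), then the utility of every $j$-free feasible bundle is a constant in $u_j$ and remains below $\delta(S)$ forever, so the only way to ever clear the fee is for a $j$-containing bundle to become the favorite; hence $\varnothing$ can transition only directly to a $j$-containing $R_1$, never to a $j$-free $R_0$. Symmetrically, if the bidder starts by buying a nonempty $U$, the only possible switch is to a $j$-containing $V$, and once $j$ is in the bundle the best $j$-containing bundle is stable (all their utilities shift by the same amount). With only two possible purchased sets along the line, and each $\EE_R$ being an intersection of halfspaces (so its slice is an interval), the two intervals partition the line and lie back to back, so the slice of $\bigcup_\ell \EE_{R_\ell}$ is an interval no matter which of the two sets are in the collection. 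You should replace the three-element bookkeeping (and the prefix/suffix claim) with this two-element case analysis; the rest of your proof then goes through exactly as the paper's does.
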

\begin{proof}

For any set $R\subseteq S$, let $\EE_R$ be the event that the bidder purchases set $R$. Proving that the total variation distance between $\LL$ and $\hat{\LL}$ is no more than $2m\cdot \xi$ is the same as proving that for any $K\leq 2^{|S|}$, $\left|\ \Pr_{\DD}\left[t\in \bigcup_{\ell=1}^K \EE_{R_\ell}\right]-\Pr_{\hat{\DD}}\left[t\in \bigcup_{\ell=1}^K \EE_{R_\ell}\right]\right|\leq 2m\cdot \xi$ where $R_1,\cdots R_K$ are arbitrary distinct subsets of $S$. Since the dimension of the bidder's type space is $m$, if we can prove that $\bigcup_{\ell=1}^K \EE_{R_\ell}$ is always single-intersecting, our claim follows from Lemma~\ref{lem:Kolmogorov stable for sc}. 

For any $j\in[m]$ and $a_{-j}\in \mathbb{R}_{\geq 0}^{{m}-1}$, let $L_j(a_{-j})=\left\{ (t_{j},a_{-j}) \ | t_{j}\in \mathbb{R}_{\geq 0} \right\}$. We claim that $L_j(a_{-j})$ intersects with at most two different $\EE_{U}$ and $\EE_{V}$ where $U$ and $V$ are subsets of $S$. 
	WLOG, we assume that  $(0,a_{-j})\in \EE_U$. 
	\begin{itemize}
		\item If $U=\emptyset$, that means the utility of the favorite set for type $(0,a_{-j})$ is smaller than the entry fee $\delta(S)$. If we increase the value of $t_{j}$, two cases could happen: (1) the utility of the favorite set is still lower than the entry fee; (2) the utility of the favorite set is higher than the entry fee. In case (1), $(t_{j},a_{-j})\in \EE_{\emptyset}$. In case (2), the bidder pays the entry fee and purchases her favorite set $V$. Then item $j$ must be in $V$, because otherwise the utility for set $V$ does not change from type $(0,a_{-j})$ to type $(t_{j},a_{-j})$. If we keep increasing $t_{j}$, bidder $i$'s favorite set remains to be $V$ and she keeps accepting the entry fee and purchasing $V$. Hence, $L_j(a_{-j})$ can intersect with at most one event $\EE_R$ where $R$ is non-empty.
		\item If $U\neq \emptyset$, that means $U$ is the favorite set of type $(0,a_{-j})$ and the utility for winning set $U$ is higher than the entry fee.  If we increase the value of $t_{j}$, two cases could happen: (1) $U$ remains the favorite set; (2) a different set $V$ becomes the new favorite set. In case (1), $(t_{j},a_{-j})\in \EE_U$. In case (2), item $j$ must lie in $V$ but not in $U$, otherwise how could $U$ be better than $V$ for type $(0,a_{-j})$ but worse for type $(t_{j},a_{-j})$.  If we keep increasing $t_{j}$, the bidder's favorite set remains to be $V$ and she keeps accepting the entry fee and purchasing $V$. Hence, $L_j(a_{-j})$ can intersect at most two different events.
	\end{itemize}
	
It is not hard to see that any event $\EE_R$ is an intersection of halfspaces, so the intersection of $L_j(a_{-j})$ with any event $\EE_R$ is an interval. {Also, notice that any type $t\in\mathbb{R}_{\geq 0}^{m}$ must lie in an event $\EE_R$ for some set $R\subseteq S$.} If $L_j(a_{-j})$ intersects with two different events $\EE_U$ and $\EE_V$, the two intersected intervals must lie back to back on $L_j(a_{-j})$. Otherwise, $L_j(a_{-j})$ intersects with at least three different events. Contradiction. Since $L_j(a_{-j})$ intersects with at most two different events, no matter which of these events are in $\{\EE_{R_\ell}\}_{\ell\in[K]}$, the intersection of $L_j(a_{-j})$ and $\bigcup_{\ell=1}^K \EE_{R_\ell}$ is either empty or an interval meaning $\bigcup_{\ell=1}^K \EE_{R_\ell}$ is single-intersecting. Now our claim simply follows from Lemma~\ref{lem:Kolmogorov stable for sc}.
\end{proof}

\begin{theorem}\label{thm:revenue stability under K-distance}
	Suppose all bidders' valuations are constrained additive over independent items. For any SPEM, let $\rev$ and $\widehat{\rev}$ be its expected revenue under $D$ and $\hat{D}$ respectively. If $D_{ij}$ and $\hat{D}_{ij}$ are both supported on $[0,H]$, and $||D_{ij}-\hat{D}_{ij}||_K\leq \xi$ for all $i\in[n]$ and $j\in[m]$, $$\left|\rev-\widehat{\rev}\right|\leq 2nm\xi\cdot \left(mH+\opt \right).$$	
	\end{theorem}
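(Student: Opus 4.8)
The plan is a hybrid argument over bidders. Define the hybrids $D^{(k)}=\hat D_1\times\cdots\times\hat D_k\times D_{k+1}\times\cdots\times D_n$, so that $D^{(0)}=D$ and $D^{(n)}=\hat D$, and write $\rev-\widehat{\rev}=\sum_{k=1}^n\big(\rev(D^{(k-1)})-\rev(D^{(k)})\big)$, where $\rev(\cdot)$ denotes the SPEM's expected revenue under the indicated product distribution. For each $k$ I will bound $|\rev(D^{(k-1)})-\rev(D^{(k)})|$ by $2m^2H\xi+2m\xi\cdot\opt$, which on summing yields the claimed $2nm\xi(mH+\opt)$.

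Fix $k$ and split the SPEM's revenue into the revenue from bidders $1,\dots,k-1$, from bidder $k$, and from bidders $k+1,\dots,n$. Since in a SPEM the outcome (entry-fee decision and purchased bundle) of each bidder depends only on her own type and the set of items still available at her turn, the first part is a function of $(t_1,\dots,t_{k-1})$ only, whose law — a product of $\hat D_1,\dots,\hat D_{k-1}$ — is the same under $D^{(k-1)}$ and $D^{(k)}$; hence this part contributes the same expected value to both hybrids. For the revenue from bidder $k$, condition on the set $S$ of available items at her turn, whose law is again common to both hybrids. Given $S$, bidder $k$'s contribution is a deterministic function of the set $R$ she buys — it is $\delta_k(S)+\sum_{j\in R}p_{kj}$ if $R\ne\emptyset$ and $0$ otherwise (the only delicate case, a paid entry fee with empty favorite bundle, forces $\delta_k(S)=0$ and revenue $0$) — and individual rationality gives $\delta_k(S)+\sum_{j\in R}p_{kj}\le v_k(t_k,S)\le\sum_{j\in S}V_k(t_{kj})\le mH$ on every realization. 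By Lemma~\ref{lem:stable demand set} the law of $R$ moves by at most $2m\xi$ in total variation when $t_k$ is changed from $D_k$ to $\hat D_k$, so this part changes by at most $mH\cdot 2m\xi=2m^2H\xi$.

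The crux is the revenue from bidders $k+1,\dots,n$. Write $S_{k+1}$ for the set of items available when bidder $k+1$ is reached; it is a deterministic function of $(S_k,R_k)$, and since $S_k$ has a common law while the conditional law of $R_k$ moves by at most $2m\xi$ in total variation (Lemma~\ref{lem:stable demand set} again), the law of $S_{k+1}$ moves by at most $2m\xi$ in total variation between the two hybrids — here I will invoke that pushforward under a deterministic map does not increase total variation distance and that the total variation of a mixture is at most the mixture of total variations. Moreover, since $t_{k+1},\dots,t_n$ are drawn from the unchanged product $D_{k+1}\times\cdots\times D_n$ and are independent of $S_{k+1}$, the expected revenue from these bidders equals $\E[\psi_k(S_{k+1})]$, where $\psi_k(S)$ is the expected revenue of the ``tail'' SPEM run on bidders $k+1,\dots,n$ with item set $S$ and types $\sim D_{k+1}\times\cdots\times D_n$. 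The key observation is that $\psi_k(S)\le\opt$: the tail SPEM is dominant-strategy truthful, and extending it to all $n$ bidders by giving bidders $1,\dots,k$ the null outcome produces a feasible truthful mechanism for the original instance with revenue exactly $\psi_k(S)$. Therefore this part changes by at most $\|\psi_k\|_\infty\cdot 2m\xi\le 2m\xi\cdot\opt$, completing the per-$k$ bound.

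I expect the main obstacle to be precisely this last step: a naive bound $\psi_k(S)\le nmH$ would introduce an extra factor of $n$ (giving only $O(n^2m^2H\xi)$), and the improvement hinges on recognizing the tail-revenue functional as the expected revenue of a truthful mechanism, hence at most $\opt$. Two further points require care but are routine: verifying that each bidder's revenue is genuinely a function of her purchased set, so that Lemma~\ref{lem:stable demand set} (stated for the distribution over purchased sets) applies — including the degenerate cases around whether the entry fee was paid and around ties among favorite bundles — and the elementary facts about total variation distance (its behavior under pushforwards and mixtures) used to propagate the $2m\xi$ bound from $R_k$ to $S_{k+1}$.
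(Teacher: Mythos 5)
Your proposal is correct and follows essentially the same hybrid argument as the paper's proof, with the same three ingredients: Lemma~\ref{lem:stable demand set} giving the $2m\xi$ total-variation bound on the law of bidder $k$'s purchased set, the $mH$ cap on bidder $k$'s payment, and the $\opt$ cap on the tail revenue (justified exactly as you say, by extending the truncated SPEM to a truthful mechanism on all $n$ bidders). You are somewhat more explicit than the paper about propagating the TV bound from the purchased set to the remaining-items set via pushforward/mixture arguments and about the empty-favorite-bundle edge case, but these are refinements of the same argument rather than a different route.
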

\begin{proof}
	We use a hybrid argument. Consider a sequence of distributions $\{D^{(i)}\}_{i\leq n}$, where $D^{(i)}=\hat{D}_1\times\cdots\times\hat{D}_i\times D_{i+1}\times\cdots\times D_{n},$ and $D^{(0)}=D$, $D^{(n)}=\hat{D}$.
	 We use $\rev^{(i)}$ to denote the expected revenue of the SPEM under $D^{(i)}$. To prove our claim, it suffices to argue that $\left|\rev^{(i-1)} -\rev^{(i)}\right|\leq 2\xi m\cdot \left(m\cdot H+\opt\right).$
	  We denote by $\SS_k$ and $\SS'_k$ the random set of items that remain available after visiting the first $k$ bidders under $D^{(i-1)}$ and $D^{(i)}$. Clearly, for $k\leq i-1$, $||\SS_k-\SS'_k||_{TV}=0$, so the expected revenue collected from the first $i-1$ bidders under $D^{(i-1)}$ and $D^{(i)}$ is the same. According to Lemma~\ref{lem:stable demand set}, $||\SS_i-\SS'_i||_{TV}\leq 2m\cdot \xi$. The total amount of money bidder $i$ spends can never be higher than her value for receiving all the items which is at most $m\cdot H$. So the difference in the expected revenue collected from bidder $i$ under  $D^{(i-1)}$ and $D^{(i)}$ is at most $2\xi\cdot m^2H$. Suppose $R$ is the set of remaining items after visiting the first $i$ bidders, then the expected revenue collected from the last $n-i$ bidders is the same under  $D^{(i-1)}$ and $D^{(i)}$, as these bidders have the same distributions. Moreover, this expected revenue is no more than $\opt$, since the optimal mechanism can simply just sell $R$ to the last $n-i$ bidders using the same prices and entry fee as in the SPEM we consider. Of course, for any fixed $R$, the probabilities that $\SS_i=R$ and $\SS'_i=R$ are different, but since for any $R$ the expected revenue from the last $n-i$ bidders is at most $\opt$, the difference in the expected revenue from the last $n-i$ bidders under  $D^{(i-1)}$ and $D^{(i)}$ is at most $||\SS_i-\SS'_i||_{TV} \cdot \opt \leq 2\xi\cdot m\opt$. Hence, the total difference between $\rev^{(i-1)}$ and  $\rev^{(i)}$ is at most $2\xi m\cdot \left(m H+\opt\right)$. 
\end{proof}

\begin{theorem}\label{thm:constrained additive Kolmogorov}(Max-min Learning for Constrained Additive Bidders)
	When all bidders' valuations are constrained additive over independent items and for any bidder $i$ and any item $j$, $D_{ij}$ and $\hat{D}_{ij}$ are supported on $[0,H]$ and $||D_{ij}-\hat{D}_{ij}||_K\leq \epsilon$  for some $\epsilon=O(\frac{1}{nm})$, then with only access to $\hat{D}=\times_{i,j} \hat{D}_{ij}$, our algorithm can learn an RSPM or ASPE whose revenue is at least $\frac{\opt}{c}-{ \epsilon\cdot O(m^2n H)}$, where $\opt$ is the optimal revenue by any BIC mechanism under $D=\times_{i,j} D_{ij}$. $c>1$ is an absolute constant.
	\end{theorem}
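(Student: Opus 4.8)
The plan is to combine the revenue-stability bound of Theorem~\ref{thm:revenue stability under K-distance} with the known constant-factor guarantee of the best RSPM/ASPE from Cai and Zhao~\cite{CaiZ17}, and then argue that optimizing the (observable) revenue under $\hat D$ yields a mechanism that is near-optimal under the (unknown) true $D$. First I would invoke Theorem~\ref{thm:simple XOS}/the result of~\cite{CaiZ17}: since constrained additive is a special case of XOS over independent items, there exists an RSPM or ASPE $\mathcal{M}^*$ — which is in particular an SPEM — whose revenue under $D$ is at least $\opt/c$ for an absolute constant $c>1$. Let $\mathcal{F}$ denote the (finite, after appropriate discretization of prices/entry fees to multiples of $\epsilon H$ or so) class of SPEMs that are RSPMs or ASPEs that our algorithm searches over; I would note that discretizing prices changes the revenue of any such mechanism by only $O(\epsilon\, n m\, H)$ (each of the $nm$ prices moves by $O(\epsilon H)$ and each affects the revenue additively by at most that much), so $\mathcal{F}$ still contains a mechanism achieving $\opt/c - O(\epsilon n m H)$ under $D$.

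Next I would apply Theorem~\ref{thm:revenue stability under K-distance} uniformly over $\mathcal{F}$: for every $\mathcal{M}\in\mathcal{F}$, writing $\rev_{\mathcal M}$ and $\widehat{\rev}_{\mathcal M}$ for its expected revenue under $D$ and $\hat D$, we have $|\rev_{\mathcal M}-\widehat{\rev}_{\mathcal M}|\le 2nm\epsilon(mH+\opt)$. Here is where the hypothesis $\epsilon = O(1/(nm))$ is used: it lets me bound the term $2nm\epsilon\cdot\opt$ by $O(\opt)$ — in fact by $\opt/(2c)$ if the constant in the $O(1/(nm))$ is chosen small enough relative to $c$ — so that this multiplicative slack can be absorbed into the constant-factor approximation rather than blowing it up, while the remaining term $2nm\epsilon\cdot mH = O(\epsilon m^2 n H)$ is exactly the claimed additive error. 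The algorithm then outputs $\hat{\mathcal M} \in \arg\max_{\mathcal M\in\mathcal F}\widehat{\rev}_{\mathcal M}$, which it can compute since it has query/sample access to $\hat D$ (and $\mathcal F$ is finite; for the computational-efficiency claims one appeals to the later sections, but for existence of the learning guarantee a brute-force search over $\mathcal F$ suffices).

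Finally I would chain the inequalities: letting $\mathcal{M}^*_{\mathcal F}\in\mathcal F$ be the discretized near-optimal mechanism,
$$\rev_{\hat{\mathcal M}} \;\ge\; \widehat{\rev}_{\hat{\mathcal M}} - 2nm\epsilon(mH+\opt) \;\ge\; \widehat{\rev}_{\mathcal M^*_{\mathcal F}} - 2nm\epsilon(mH+\opt) \;\ge\; \rev_{\mathcal M^*_{\mathcal F}} - 4nm\epsilon(mH+\opt) \;\ge\; \frac{\opt}{c} - O(\epsilon\, m^2 n H) - 4nm\epsilon\cdot\opt,$$
and then use $\epsilon=O(1/(nm))$ to replace $\opt/c - 4nm\epsilon\cdot\opt$ by $\opt/c'$ for a slightly larger absolute constant $c'$ (or keep $c$ and shrink the hidden constant in the hypothesis on $\epsilon$). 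This gives revenue at least $\opt/c - \epsilon\cdot O(m^2 n H)$, as claimed. I expect the main obstacle to be the bookkeeping around the discretization of $\mathcal F$: one must verify both that a discretized mechanism still attains the $\opt/c$ benchmark up to the allowed additive error, and that the entry-fee functions $\delta_i(\cdot)$ (which have domain $2^{[m]}$) can be represented and searched over with only polynomial overhead — for ASPEs coming from~\cite{CaiZ17} the entry fees have a succinct closed form in terms of the per-item prices and the distributions, so this reduces to discretizing those prices, but this needs to be stated carefully. The revenue-stability ingredient (Theorem~\ref{thm:revenue stability under K-distance}), which is the technical heart, is already in hand, so the remaining work is genuinely the reduction and the constant-wrangling enabled by the $\epsilon=O(1/(nm))$ assumption.
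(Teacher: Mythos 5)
The main gap is the claim, made in passing, that ``discretizing prices changes the revenue of any such mechanism by only $O(\epsilon\, n m\, H)$ (each of the $nm$ prices moves by $O(\epsilon H)$ and each affects the revenue additively by at most that much).'' The revenue of a sequential posted-price mechanism is \emph{not} Lipschitz in the posted prices: moving a price across a point mass of the value distribution can swing a bidder's purchase decision with probability $\Theta(1)$, changing that bidder's payment by $\Theta(H)$, and in a sequential mechanism this further cascades to downstream bidders because the set of remaining items changes. For ASPEs the situation is worse still, since perturbing the item prices also shifts the utility distributions that define the entry fees, and if those entry fees are no longer balanced the revenue guarantee of~\cite{CaiZ17} simply does not apply. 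So your step ``$\mathcal F$ still contains a mechanism achieving $\opt/c - O(\epsilon n m H)$ under $D$'' is not established, and a generic Lipschitz/discretization argument cannot establish it.

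What the paper does instead is not to argue that the \emph{mechanism's} revenue is stable under price perturbation, but that the \emph{duality-based upper bound} on $\opt$ from~\cite{CaiZ17} is robust. This is the content of Lemma~\ref{lem:approx ASPE} and Corollary~\ref{cor:discretization of prices}: by re-running the Cai--Zhao duality argument with perturbed prices $p'$ (with $|p'_j-p^*_j|\le\epsilon$) and merely $\mu$-balanced entry fees, one shows that ASPE$(p',\delta')$ or the best RSPM is still within a constant of $\opt$ minus an $O((m+n)\epsilon)$ additive term. Combined with Lemma~\ref{lem:Kolmogorov learn entry fee} (medians of utility under $\hat D$ are $2m\epsilon$-balanced under $D$, via the single-intersecting structure you already leaned on in Theorem~\ref{thm:revenue stability under K-distance}) and Theorem~\ref{thm:UD Kolmogorov} for the RSPM side, the rest of your chaining goes through exactly as you wrote it; the revenue-stability step (Theorem~\ref{thm:revenue stability under K-distance}) and the absorption of $4nm\epsilon\cdot\opt$ into the constant using $\epsilon=O(1/(nm))$ are both correct. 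In short: you correctly identified the revenue-stability ingredient, but the existence of a good discretized mechanism needs the duality-robustness argument, not a Lipschitz bound; and your worry about discretizing entry fee functions is better resolved by the paper's ``compute medians under $\hat D$ and prove they're balanced under $D$'' route than by brute-force enumeration over $2^{[m]}\to\mathbb{R}$ functions.
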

	
Clearly, Theorem~\ref{thm:constrained additive Kolmogorov} also implies a polynomial sample complexity bound for learning an approximately revenue-optimal mechanism. A better sample complexity bound can be obtained directly, i.e.~without invoking the uniform convergence of the revenue of SPEMs, and  is stated as Theorem~\ref{thm:XOS sample} for the broader class of XOS valuations. Similarly, when bidders have simpler valuations, i.e., additive or unit-demand valuations, we can sharpen our results and achieve polynomial-time learnability of the approximately optimal mechanism using more specialized techniques. See Sections~\ref{sec:unit-demand} and~\ref{sec:additive} for details.
\subsection{Unit-demand Valuations: Polynomial-Time Learning} \label{sec:unit-demand}

In this section, we consider bidders with unit-demand valuations, sharpening our results to show how to learn approximately revenue-optimal mechanisms in polynomial time. It is shown in a sequence of works~\cite{ChawlaHMS10, KleinbergW12, CaiDW16} that there exists a sequential posted price mechanism (\textbf{SPM} see Algorithm~\ref{alg:seq-mech} for details) that achieves at least $\frac{1}{24}$ of the optimal revenue when bidders are unit-demand. We show that under all three distribution access models of Section~\ref{sec:prelim} there exists a polynomial-time algorithm that learns a sequential posted price mechanism whose revenue approximates the optimal revenue. We only sketch the proof here and postpone the details to Appendix~\ref{sec:unit-demand appx}.

\begin{theorem}\label{thm:unit-demand}
	When all bidders have unit-demand valuations and \begin{itemize}
		\item $D_{ij}$ is supported on $[0,H]$ for all bidder $i$ and item $j$, there exists a polynomial time algorithm that learns an SPM whose revenue is at least $\frac{\opt}{144}-\epsilon H$ with probability $1-\delta$ given  $O\left(\left(\frac{1}{\epsilon}\right)^2 \left(m^2 n\log \frac{n}{\epsilon} + \log \frac{1}{\delta}\right)\right)$ samples from $D$; or
		\item $D_{ij}$ is a regular distribution for all bidder $i$ and item $j$, there exists a polynomial time algorithm that learns a randomized SPM whose revenue is at least $\frac{\opt}{33}$ with probability $1-\delta$ given $O(\max\{m,n\}^2m^2 n^2\cdot \log \frac{nm}{\delta})$ samples from $D$; or
		\item we are only given access to $\hat{D}_{ij}$ where $||\hat{D}_{ij}-D_{ij}||_K\leq \epsilon$ for all bidder $i$ and item $j$, there is a polynomial time algorithm that constructs a randomized SPM whose revenue under $D$ is at least $\left(\frac{1}{4}-(n+m)\cdot \epsilon\right)\cdot\left(\frac{\opt}{8}-2\epsilon\cdot mnH\right)$\footnote{If we set $\epsilon$ to be $O(\frac{1}{m+n})$, this is the max-min guarantee we want to achieve.}.
	\end{itemize}
\end{theorem}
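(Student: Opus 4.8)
\emph{Overall template.} The plan is to route all three cases through the single-dimensional ``copies'' view of unit-demand revenue and to \emph{robustify} the two ingredients behind the $\tfrac{1}{24}$-guarantee: the ex-ante relaxation of the unit-demand revenue problem~\cite{Alaei11} and its conversion into a sequential posted-price mechanism~\cite{ChawlaHMS10}. For each bidder $i$ and item $j$ let $R_{ij}(q)$ be the single-item revenue curve of $D_{ij}$ in quantile space, and let $\text{EAR}(D)$ be the optimum of $\max\{\sum_{i,j} R_{ij}(q_{ij}) : \sum_j q_{ij}\le 1\,\forall i,\ \sum_i q_{ij}\le 1\,\forall j\}$. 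It is known (this is the content of~\cite{ChawlaHMS10,KleinbergW12,CaiDW16}) that $\opt \le O(1)\cdot \text{EAR}(D)$, and that from any feasible solution $\{q_{ij}\}$ one can extract posted prices $p_{ij}$ (the $D_{ij}$-price of quantile $q_{ij}$) whose induced SPM collects an $\Omega(1)$ fraction of $\sum_{ij}R_{ij}(q_{ij})$; together these give the best SPM $\ge \opt/24$. In each of our three models we will (a) estimate the curves $R_{ij}$ (equivalently the marginals $D_{ij}$) well enough to compute a near-optimal ex-ante solution and read off prices, and then (b) bound, under the \emph{true} $D$, the loss incurred along the relaxation--conversion pipeline.

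\emph{Bounded case.} Here boundedness does the heavy lifting on the single-dimensional side: if $\|D_{ij}-\hat D_{ij}\|_K\le \xi$ and both are supported on $[0,H]$ then $|p\Pr_{D_{ij}}[t\ge p]-p\Pr_{\hat D_{ij}}[t\ge p]|\le \xi H$ for $p\le H$, so $\text{EAR}$ changes by at most $O(nm\xi H)$ and the extracted prices are quantile-correct up to $\xi$. On the mechanism side, the key stability fact is that for any fixed SPM the item bought by unit-demand bidder $i$ is a single-intersecting event in her $m$-dimensional value vector (the region ``buy $j$'' is an intersection of halfspaces, and an axis-parallel line meets at most two such regions back-to-back); hence by the argument of Lemma~\ref{lem:stable demand set} and Lemma~\ref{lem:Kolmogorov stable for sc}, plus a hybrid over the $n$ bidders exactly as in Theorem~\ref{thm:revenue stability under K-distance} (using that a unit-demand bidder pays at most $H$), the revenue of any SPM changes by $O(nm\xi H)$ between $D$ and the empirical $\hat D$. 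Solving the ex-ante relaxation on $\hat D$, extracting prices, and outputting the resulting SPM, then combining these facts with the $\opt/24$ guarantee and choosing $\xi=\Theta(\epsilon)$, yields revenue $\ge \opt/144-\epsilon H$, the constants compounding through the pipeline. The sample bound $O(\epsilon^{-2}(m^2n\log\tfrac{n}{\epsilon}+\log\tfrac1\delta))$ comes from feeding the single-intersecting demand events of SPMs into the uniform-convergence-under-product-measures machinery of Section~\ref{sec:uniform convergence under product measure} (Corollary~\ref{cor:VC for product measure}) rather than paying the naive VC cost.

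\emph{Regular and approximate-distribution cases.} For regular marginals boundedness is unavailable, so I would substitute the extreme-value theorem for regular distributions of~\cite{CaiD11b}: with $\poly(n,m)$ samples one can, for each $(i,j)$, locate a quantile window on which $R_{ij}$ --- concave, since $D_{ij}$ is regular --- is within a constant of its maximum and whose complement contributes only a constant fraction of any ex-ante objective, and estimate $R_{ij}$ multiplicatively there. Solving the ex-ante relaxation on these estimates loses only a constant factor with \emph{no} additive term, and concavity of the $R_{ij}$'s is precisely what makes the~\cite{ChawlaHMS10} conversion --- now to a \emph{randomized} SPM, since the estimated ex-ante probabilities can only be matched approximately --- lose another constant, giving $\opt/33$; the sample budget $O(\max\{m,n\}^2 m^2 n^2\log\tfrac{nm}{\delta})$ reflects needing multiplicative accuracy at the up-to-$\max\{m,n\}$ relevant quantile levels of each of the $nm$ marginals. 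In the approximate-distribution model I would \emph{not} compare $\opt(D)$ to $\opt(\hat D)$ (which need not be close) but instead route through the stable quantity $\text{EAR}$: build the randomized SPM from $\hat D$ so that it collects $\gtrsim \text{EAR}(\hat D)$ under $\hat D$; then $\text{EAR}(\hat D)\ge \text{EAR}(D)-O(\epsilon nmH)\ge \Omega(\opt(D))-O(\epsilon nmH)$ by the bounded-support stability of $\text{EAR}$, and the SPM's revenue under $D$ is within $O(\epsilon nmH)$ of its revenue under $\hat D$ by the single-intersecting hybrid argument; tracking the constants, and the fact that matching a target ex-ante probability through a $\|\cdot\|_K$-perturbed marginal costs a $(n+m)\epsilon$ multiplicative slack in the conversion, produces $(\tfrac14-(n+m)\epsilon)(\tfrac{\opt}{8}-2\epsilon mnH)$.

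\emph{Main obstacle.} The delicate step in every case is (b), \emph{robustifying the conversion}: the~\cite{Alaei11,ChawlaHMS10} argument that posting the ex-ante prices recovers a constant fraction of $\text{EAR}$ uses the exact ex-ante solution and the true revenue curves, so one must re-run it with an approximate solution and prices read off from perturbed or estimated marginals, and show that (i) a bidder's probability of actually buying her posted item under the true $D_{ij}$ stays close to the target $q_{ij}$, and (ii) the activity-rule/union-bound accounting still closes, with degradation only multiplicative-plus-additive of the claimed order. The regular case adds the wrinkle that the extreme-value truncation must be made compatible with the ex-ante feasibility constraints. The uniform-convergence bookkeeping behind the sample complexities --- in particular obtaining $m^2n$ rather than a larger power via the per-bidder partition in Theorem~\ref{thm:uniform convergence for product measure PARTITION}/Corollary~\ref{cor:VC for product measure} --- is the second, more mechanical, hurdle.
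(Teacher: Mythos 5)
For the second and third bullets, your blueprint is essentially the paper's own proof: the copies/ex-ante upper bound (Lemmas~\ref{lem:UB for UD rev} and~\ref{lem:compare exact CP with opt}), a perturbed convex program under $\hat D$ whose value is within $\epsilon\cdot mnH$ of the true one (Lemma~\ref{lem:UD compare the two CP}), and a conversion of the perturbed solution into a randomized SPM (Lemma~\ref{lem:prices to SPM}); for regular marginals, the truncation at a quantile $\approx 1/\max\{m,n\}$ justified by Lemma~\ref{lem:regular revenue curve concave} and the additive-to-multiplicative conversion via $\opt\geq H_{ij}/(3CZ)$ are exactly the paper's Lemmas~\ref{lem:lowering high prices} and~\ref{lem:UD additive bound}. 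One small simplification the paper makes that you miss: in the approximate-distribution case it never couples the SPM's revenue under $D$ and $\hat D$ via the single-intersecting hybrid; Lemma~\ref{lem:prices to SPM} is applied directly under $D$, using only the per-item Kolmogorov bounds $\Pr_{D_{ij}}[t_{ij}\geq p_{ij}]\geq \Pr_{\hat D_{ij}}[t_{ij}\geq p_{ij}]-\epsilon$, which is precisely what produces the clean factor $\left(\frac14-(n+m)\epsilon\right)$; the heavier Theorem~\ref{thm:revenue stability under K-distance}-style coupling is not needed in this section.

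There is, however, a genuine gap in your first bullet. The paper does not re-derive the bounded case at all: it invokes Morgenstern and Roughgarden (Theorem~\ref{thm:UD bounded}), whose pseudo-dimension analysis of SPMs for unit-demand bidders directly gives both the stated sample complexity $O\!\left(\frac{1}{\epsilon^{2}}\left(m^{2}n\log\frac{n}{\epsilon}+\log\frac{1}{\delta}\right)\right)$ and the polynomial-time $\frac{\opt}{144}-\epsilon H$ guarantee. Your proposed substitute --- estimate all $nm$ marginals in Kolmogorov distance and push the error through EAR-stability and the single-intersecting/hybrid revenue-stability argument --- cannot hit that bound: those stability estimates degrade like $nm\xi H$ (and the full revenue-stability bound of Theorem~\ref{thm:revenue stability under K-distance} like $nm\xi(mH+\opt)$), so you are forced to take $\xi=O\!\left(\frac{\epsilon}{nm}\right)$ or smaller, and DKW then requires on the order of $\frac{n^{2}m^{2}}{\epsilon^{2}}\log\frac{nm}{\delta}$ samples, polynomially more than the theorem claims. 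Moreover, Corollary~\ref{cor:VC for product measure} cannot be ``fed the demand events of SPMs'' as you assert: it applies to classes of indicator functions, whereas the relevant class here is the real-valued map from type profiles to SPM revenue, so the corollary gives no uniform-convergence bound for it as stated. To close this bullet you should either cite the Morgenstern--Roughgarden result, as the paper does, or supply a genuinely new uniform-convergence argument for real-valued SPM revenue under product measures; the sketch as written only establishes a weaker sample bound.
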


\noindent\textbf{Sample Access to Bounded Distributions:} the result is due to Morgenstern and Roughgarden~\cite{MorgensternR16}. 

\vspace{.05in}
\noindent\textbf{Direct Access to Approximate Distributions:} we first consider a convex program based on $D$ (see Figure~\ref{fig:CP unit demand}) which is usually referred to as the ex-ante relaxation of the revenue maximization problem~\cite{Alaei11}, and use its optimum as a proxy for $\opt$. Next, we consider a similar convex program based on $\hat{D}$ (see Figure~\ref{fig:CP unit demand approximate dist}) and show that the optima of the two convex programs are close to each other. Finally, we use techniques developed by Chawla et al.~\cite{ChawlaHMS10} to convert the optimal solution of the second convex program into a randomized SPM. We can show that the constructed randomized SPM achieves a revenue that approximates the optimum of the second convex program under $D$, which implies that the mechanism's revenue  also approximates the $\opt$. As we are given $\hat{D}$, we can solve the second convex program and convert its optimal solution into a randomized SPM in polynomial time. See Theorem~\ref{thm:UD Kolmogorov} in Appendix~\ref{sec:unit-demand Kolmogorov} for further details.

\vspace{.05in}
\noindent\textbf{Sample Access to Regular Distributions:} we use a similar convex program relaxation based approach as in the previous case. The main difference is that regular distributions could be unbounded and thus ruin the approximation guarantee. We show how to use the Extreme Value theorem in~\cite{CaiD11b} to truncate the distributions without hurting the revenue by much. See Theorem~\ref{thm:UD regular} in Appendix~\ref{sec:unit-demand regular} for further details.
\subsection{Additive Valuations: Polynomial-Time Learning}\label{sec:additive}
In this section, we consider bidders with additive valuations, again sharpening our results to show polynomial-time learnability. It is known that the better of the following two mechanisms achieves at least $\frac{1}{8}$ of the optimal revenue when all bidders have additive valuations~\cite{Yao15,CaiDW16}:

\vspace{.05in}	
\noindent\textbf{Selling Separately}: the mechanism sells each item separately using Myerson's optimal auction.

\vspace{.05in}	
\noindent \textbf{VCG with Entry Fee}: the mechanism solicits bids $\bold{b}=(b_1,\cdots, b_n)$ from the bidders, then offers each bidder $i$ the option to participate for an entry fee $e_i(b_{-i},D_i)$, which is the median of the random variable $\sum_{j\in[m]}(t_{ij}-\max_{k\neq i} b_{kj})^+$, where $t_i\sim D_i$\footnote{The entry fee function defined in~\cite{Yao15,CaiDW16} is slightly different. They showed that there exists an entry fee $X_i$, such that bidder $i$ accepts the entry fee with probability at least $1/2$. Then they argued that extracting $X_i/2$ as the revenue in the VCG with entry fee mechanism is enough to obtain a factor $8$ approximation. It is not hard to observe that our entry fee is accepted with probability exactly $1/2$, thus our entry fee is at least as large as $X_i$. So our mechanism also suffices to provide a factor $8$ approximation.}. This random variable is exactly bidder $i$'s utility when her type is $t_i$ and the other bidders' are $b_{-i}$. If bidder $i$ chooses to participate, she pays the entry fee and can take any item $j$ at price $\max_{k\neq i} b_{kj}$. Notice that the mechanism never over allocate any item, as only the highest bidder for an item can afford it. 

Indeed, only counting the revenue from the entry fee in the second mechanism and the optimal revenue from selling the items separately already suffices to provide an $8$-approximation~\cite{Yao15, CaiDW16}. 

\begin{theorem}[\cite{CaiDW16}]\label{thm:UB additive}
	Let $\srev$ be the optimal revenue for selling the items separately and $\brev$ be the expected entry fee collected from the VCG with entry fee mechanism. Then $\opt\leq 6\cdot \srev+2\cdot\brev.$ 
\end{theorem}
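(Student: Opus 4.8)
The plan is to follow the duality-based framework of Cai, Devanur, and Weinberg~\cite{CaiDW16}. First I would write the revenue-optimal BIC mechanism as a linear program over interim allocation/payment rules and take its partial Lagrangian dual with respect to the BIC and IR constraints, obtaining for each bidder $i$ a ``virtual value function'' $\Phi_{ij}(t_i)$. Weak duality then gives $\opt \le \E_{t\sim D}\!\left[\max_{x\in F}\sum_{i,j}x_{ij}\,\Phi_{ij}(t_i)\right]$, where $F$ is the relaxed feasibility polytope that only requires each item to be allocated to at most one bidder; since $F$ decouples across items, the inner maximum equals $\sum_j \max_i \Phi_{ij}(t_i)^+$.

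Second, I would instantiate the ``canonical'' dual tailored to additive bidders. For each pair $(i,j)$ pick a threshold $r_{ij}$ equal to a suitable quantile of $D_{ij}$ (tuned so that bidder $i$ has two or more ``large'' coordinates $t_{ij}\ge r_{ij}$ only with small constant probability), designate a ``favorite'' coordinate $j^*_i(t_i)$ for every type (the unique large coordinate when there is one, else the coordinate of largest ironed virtual value, or none — in which case the bidder's entire value falls into the core), and set $\Phi_{ij}(t_i)$ to be Myerson's ironed virtual value $\tp_{ij}(t_{ij})$ when $j=j^*_i$, the raw value $t_{ij}$ when $j$ is a non-favorite large coordinate, and a nonpositive quantity otherwise. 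Bounding $\Phi_{ij}(t_i)^+$ coordinate-by-coordinate then splits the dual objective into (i) a $\nf$/$\tail$ part collecting the raw values of non-favorite large coordinates, (ii) a $\single$ part collecting $\tp_{ij}^+$ on the event a bidder has a large favorite coordinate, and (iii) a $\core$ part collecting the remaining (``small'') virtual-value mass, which for additive valuations is dominated by $\sum_j t_{ij}\ind[t_{ij}<r_{ij}]$.

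Third, I would charge (i) and (ii) to $\srev$. For $\single$, Myerson's lemma bounds $\E[\tp_{ij}(t_{ij})^+\ind[\cdot]]$ by the revenue of Myerson's optimal single-item auction on item $j$ restricted to the relevant event, so $\single \le O(1)\cdot\srev$. For $\nf/\tail$, the standard core--tail estimate $\E[\max_i t_{ij}\ind[t_{ij}\ge r_{ij}]]\le r_{ij}\Pr[\exists i:\, t_{ij}\ge r_{ij}]+\sum_i\E[(t_{ij}-r_{ij})^+]$, together with the quantile choice of $r_{ij}$ (so that posting price $r_{ij}$ on item $j$ alone earns roughly the first term, and $\E[(t_{ij}-r_{ij})^+]$ is charged to Myerson's revenue on item $j$), yields $\nf/\tail \le O(1)\cdot\srev$. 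Tracking constants gives $\single+\nf+\tail \le 6\,\srev$.

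Finally --- and I expect this to be the crux --- I would bound $\core \le 2\,\brev$ by identifying it with the entry-fee revenue of VCG-with-entry-fee. Since $\tp_{ij}\le t_{ij}$, the core contribution of bidder $i$ is at most a constant times $\E_{t_i}\big[\sum_j t_{ij}\ind[t_{ij}<r_{ij}]\big]$, which also lower-bounds her utility $\sum_j (t_{ij}-\max_{k\ne i}b_{kj})^+$ in the VCG allocation whenever the competing bids lie below the thresholds. That utility is a sum of independent summands each bounded by $r_{ij}$, and the thresholds were chosen precisely so this per-coordinate bound is small relative to the sum; hence the distribution of the utility concentrates, its median --- which is exactly the entry fee $e_i$, accepted with probability $1/2$ by construction --- is within a constant factor of its mean, and so bidder $i$'s expected entry-fee payment is $\Omega$ of her core contribution. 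Summing over $i$ gives $\core\le 2\,\brev$, and combining the three bounds yields $\opt\le 6\,\srev+2\,\brev$. The delicate points are choosing a single family of thresholds $\{r_{ij}\}$ that simultaneously makes the tail estimate and the core concentration go through, and handling the mismatch between the thresholds $r_{ij}$ and the actual VCG externality prices $\max_{k\ne i}b_{kj}$ in the entry-fee argument.
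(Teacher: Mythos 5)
Since the paper cites this result from~\cite{CaiDW16} without reproducing its proof, I'll judge your sketch against that source. The overall scaffolding is right: write the dual of the LP over interim rules, obtain a flow-induced virtual value $\Phi_{ij}$, relax feasibility to decouple across items, split the resulting benchmark into a favorite/$\single$ piece, a non-favorite/$\tail$ piece, and a $\core$ piece, and charge the first two to $\srev$ and the last to $\brev$ via the median entry fee. That is indeed the CDW approach.

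However, a few of the load-bearing claims in your $\core$-to-$\brev$ step are not correct as stated. First, the dual virtual value on a non-favorite \emph{small} coordinate is not ``a nonpositive quantity''; in the canonical CDW flow it is the raw value $t_{ij}$ for every non-favorite coordinate, and the Tail/Core split is performed afterward by inserting the indicators $\ind[t_{ij}\ge r_{ij}]$ and $\ind[t_{ij}< r_{ij}]$. If the small-coordinate virtual value really were nonpositive, $\core$ would vanish after truncating at zero and you would never need $\brev$ at all. Second, the quantity $\sum_j t_{ij}\ind[t_{ij}<r_{ij}]$ does \emph{not} lower-bound $\sum_j (t_{ij}-\max_{k\ne i}b_{kj})^+$, even conditionally on all competing bids being below the thresholds: for any item $j$ with $0 < \max_{k\ne i}b_{kj} \le t_{ij} < r_{ij}$ the truncated contribution $t_{ij}$ strictly exceeds the utility contribution $t_{ij}-\max_{k\ne i}b_{kj}$, and if $t_{ij}<\max_{k\ne i}b_{kj}$ the utility contribution is $0$ while the truncated value is positive. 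Third, the VCG utility $\sum_j(t_{ij}-\max_{k\ne i}b_{kj})^+$ is \emph{not} a sum of summands bounded by $r_{ij}$; whenever $t_{ij}$ is large the $j$-th summand can be arbitrarily large. The actual argument works with the truncated values (which \emph{are} bounded by $r_{ij}$), relates their sum to the VCG utility only in expectation through carefully chosen quantiles, and applies a Chebyshev-type bound only in the regime where the expected truncated sum dominates the per-coordinate cap, with a separate charge (to $\srev$) when it does not. You flag the ``mismatch between $r_{ij}$ and the externality prices'' as delicate, which is the right instinct, but the sketch as written asserts two false inequalities in exactly that spot, so the $\core\le 2\brev$ step does not yet go through.
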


Goldner and Karlin~\cite{GoldnerK16} showed that one sample suffices to learn a mechanism that achieves a constant fraction of the optimal revenue when $D_{ij}$ is regular for all $i\in[n]$ and $j\in[m]$. 
We show how to learn an approximately optimal mechanism in the other two models.
\begin{theorem}\label{thm:additive}
	When the bidders have additive valuations and\begin{itemize}
		\item $D_{ij}$ is supported on $[0,H]$ for all bidder $i$ and item $j$, we can learn in polynomial time a mechanism whose expected revenue is at least $\frac{\opt}{32}-{\epsilon}\cdot H$ with probability $1-\delta$ given $O\left(\left(\frac{m}{\epsilon}\right)^2 \cdot\left(n\log n\log \frac{1}{\epsilon}+\log\frac{1}{\delta} \right)\right)$ samples from $D$; or
		\item  we are only given access to distributions $\hat{D}_{ij}$ where $||\hat{D}_{ij}-D_{ij}||_K\leq \epsilon$ for all bidder $i$ and item $j$, there is a polynomial time algorithm that constructs a mechanism whose expected revenue under $D$ is at least $\frac{\opt}{266}-96\epsilon\cdot mnH$ when $\epsilon\leq \frac{1}{16\max\{m,n\}}$.
	\end{itemize}
\end{theorem}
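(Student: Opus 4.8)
\medskip\noindent\emph{Proof approach.}\ The plan is to reduce both parts, via Theorem~\ref{thm:UB additive}, to learning two sub-mechanisms and outputting whichever yields more revenue: (i)~a mechanism that sells the items separately, which should compete with $\srev$; and (ii)~a VCG-with-entry-fee mechanism, which should compete with $\brev$. Since $\opt\le 6\srev+2\brev\le 8\max\{\srev,\brev\}$, if the learned versions achieve revenue (under the true $D$) at least $\alpha\cdot\srev-\eta$ and at least $\beta\cdot\brev-\eta$ respectively, then the better of the two has revenue at least $\tfrac18\min\{\alpha,\beta\}\cdot\opt-\eta$, which is of the claimed form once constants and the additive error $\eta$ are tracked. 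The feature we exploit specifically for \emph{additive} bidders is that both sub-mechanisms decompose over items: the separate-selling revenue is a sum over items $j$ of a single-item $n$-bidder auction revenue that depends only on item $j$'s bid profile $(b_{ij})_{i\in[n]}$; and, following \cite{Yao15,CaiDW16}, the entry-fee mechanism can be run with a \emph{per-item} fee --- for item $j$, the highest bidder $i$ is offered item $j$ at the second-highest bid $c_j=\max_{k\ne i}b_{kj}$ plus a fee $f_{ij}$, which he takes iff $t_{ij}\ge c_j+f_{ij}$ --- while still retaining a constant fraction of $\brev$ (each $f_{ij}$ depends only on $t_{ij}$'s marginal). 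So it suffices to (a)~learn a near-optimal single-item auction per item, and (b)~learn a near-optimal per-item fee for each (bidder, item) pair.

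For (a), in the bounded model each item is an independent single-item revenue problem on $[0,H]^n$, so standard pseudo-dimension bounds for single-item auction classes (e.g.~\cite{MorgensternR15}) let us learn, simultaneously for all $m$ items and targeting accuracy $\epsilon H/m$ per item, an auction whose revenue on the true $D_{\cdot j}$ is within $\epsilon H/m$ of the single-item optimum; summing costs $O(\epsilon H)$ total, and bookkeeping the $\log$ factors and the pseudo-dimension $\tilde O(n)$ gives the stated $O\big((m/\epsilon)^2(n\log n\log(1/\epsilon)+\log(1/\delta))\big)$ sample bound. In the approximate-distribution model we instead use that the optimal single-item revenue is pointwise stable under Kolmogorov perturbations: for one bidder the revenue curve $p\mapsto p(1-F(p))$ changes by at most $p\cdot\|D_{ij}-\hat D_{ij}\|_K\le\epsilon H$, and the $n$-bidder optimum by $O(n\epsilon H)$, so the separate-selling mechanism optimal for $\hat D$ loses only $O(nm\epsilon H)$ when run on $D$, and it is computable from $\hat D$ in polynomial time.

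For (b), this is where the machinery of Section~\ref{sec:uniform convergence under product measure} is used. Fix a bidder $i$, item $j$, and competing price $c_j$: the acceptance event $\{t_{ij}\ge c_j+\theta\}$ is an interval in $t_{ij}$, hence a one-dimensional single-intersecting event (Definition~\ref{def:single-intersecting}). In the bounded model, the DKW inequality~\cite{DvoretzkyKW56} (equivalently Lemma~\ref{lem:uniform convergence for single-intersecting} with $\ell=1$) yields uniform convergence, over all thresholds $\theta$ and all $c_j$, of the empirical acceptance probabilities; we set $f_{ij}(c_j)$ to the revenue-maximizing threshold of $(t_{ij}-c_j)^+$ under the empirical marginal. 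In the approximate-distribution model, Lemma~\ref{lem:Kolmogorov stable for sc} with $\ell=1$ gives that, since $\|D_{ij}-\hat D_{ij}\|_K\le\epsilon$, the acceptance probability at any $\theta,c_j$ differs by $O(\epsilon)$ between $D_{ij}$ and $\hat D_{ij}$, so the revenue curve $\theta\mapsto\theta\cdot\Pr_{\hat D_{ij}}[t_{ij}\ge c_j+\theta]$ agrees pointwise with the one under $D_{ij}$ up to $\theta\cdot O(\epsilon)\le O(H\epsilon)$. Taking $f_{ij}(c_j)$ to be the revenue-maximizing threshold under $\hat D_{ij}$ then guarantees simultaneously that (1)~it is accepted under $D_{ij}$ with probability at least the $\hat D_{ij}$-optimal acceptance probability minus $O(\epsilon)$, and (2)~its revenue curve value under $D_{ij}$ is within $O(H\epsilon)$ of the $D_{ij}$-optimum --- in particular it dominates $\tfrac12\cdot(\text{median of }(t_{ij}-c_j)^+)$, which is what the exact mechanism of~\cite{Yao15,CaiDW16} charges. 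Summing the $O(H\epsilon)$ loss over the $nm$ pairs produces the additive term $O(\epsilon mnH)$.

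I expect step (b) to be the main obstacle, for two reasons. First, a bidder may purchase any of $2^m$ subsets, so naively coupling the ``what the bidder buys'' distributions under $D$ and $\hat D$ incurs an exponential blow-up; this is dodged by using additivity to reduce the global participate-or-not decision to $m$ one-dimensional acceptance events, which are single-intersecting, so their probabilities move by only $O(\epsilon)$ rather than $2^m\cdot\Theta(\epsilon)$. (The non-additive analogue is Lemma~\ref{lem:stable demand set}, which costs an extra factor of $m$; this is exactly why Theorem~\ref{thm:constrained additive Kolmogorov} carries an $m^2$ in its additive error whereas here we only lose $m$.) Second, a correctly-accepted entry fee is worthless unless it also \emph{extracts} revenue comparable to the true median fee; this is handled by replacing the median fee with the revenue-optimal threshold, which dominates it and whose value transfers from $\hat D_{ij}$ to $D_{ij}$ up to $O(H\epsilon)$ via the uniform $O(H\epsilon)$-closeness of the two revenue curves. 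The rest is bookkeeping: verifying via Theorem~\ref{thm:UB additive} and the core-tail split of~\cite{Yao15,CaiDW16} that the $m$ per-item VCG-with-entry-fee auctions together with separate selling capture a constant fraction of $\opt$, turning empirical estimates into realizable dominant-strategy-truthful mechanisms in polynomial time, invoking DKW to convert sample sizes into Kolmogorov accuracy, and collecting the absolute constants (which come out to $32$ and $266$).
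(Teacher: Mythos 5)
Your decomposition into a separate-selling component competing against $\srev$ and a VCG-with-entry-fee component competing against $\brev$ is the right high-level shape, and your treatment of the $\srev$ half (per-item reduction, DKW/pseudo-dimension for the sample model, Kolmogorov stability of the single-item revenue curve for the approximate-distribution model) is essentially what the paper does. The problem is step (b).

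Your central claim for the $\brev$ half --- that the VCG-with-entry-fee mechanism can be replaced by a \emph{per-item} fee $f_{ij}$ added on top of the second-highest bid, ``while still retaining a constant fraction of $\brev$'' --- is false, and it is not what~\cite{Yao15,CaiDW16} do. A per-item fee on top of the VCG price is just a second-price auction with a personalized reserve premium for item $j$; its revenue is captured by the separate-selling benchmark $\srev$, not by $\brev$. The whole reason $\brev$ appears in the benchmark $6\srev+2\brev$ is that a \emph{global} entry fee exploits concentration of the bidder's aggregate surplus $\sum_{j}(t_{ij}-\max_{k\neq i}b_{kj})^{+}$, which a per-item scheme cannot. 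Concretely: one bidder, $m$ i.i.d.\ items on $[1,H]$ with $\Pr[t_j\ge x]=1/x$. Every per-item reserve extracts revenue $1$ per item, so your mechanism earns $m$; but $\sum_j t_j$ concentrates around $m\ln H$, so the median global entry fee earns $\Theta(m\ln H)$. The ratio is unbounded as $H$ grows, so the per-item scheme is \emph{not} a constant-factor surrogate for $\brev$.

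This error propagates into your explanation of why the additive loss is $O(\epsilon mnH)$ rather than $O(\epsilon m^2nH)$. You attribute the savings to having reduced the participate-or-not decision to $m$ one-dimensional events. In fact the paper keeps the global, $m$-dimensional participation event $\{t_i:\sum_{j}(t_{ij}-\max_{k\ne i}b_{kj})^{+}\ge x\}$ intact and observes that it is single-intersecting in $\mathbb{R}^m$ (the map is monotone coordinatewise), so Lemma~\ref{lem:Kolmogorov stable for sc} gives a shift of only $O(m\epsilon)$ in its probability --- never $2^m\epsilon$. Since $2m\epsilon\le 1/8$ under the hypothesis $\epsilon\le 1/(16\max\{m,n\})$, that shift yields a purely \emph{multiplicative} loss on the entry-fee revenue (Lemma~\ref{lem:approximate entry fee for BREV} and Lemma~\ref{lem:learn approx entry fee from approx dist}), and the $O(\epsilon mnH)$ additive term comes entirely from the $\srev$ part. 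The factor-$m$ gap against Theorem~\ref{thm:constrained additive Kolmogorov} comes from this: for constrained-additive bidders the \emph{revenue} argument (Lemma~\ref{lem:stable demand set} fed into the hybrid of Theorem~\ref{thm:revenue stability under K-distance}) multiplies the $O(m\epsilon)$ TV-shift by $O(mH)$ per bidder, while for additive bidders the entry-fee side incurs no such multiplication. Similarly, in the bounded-sample case the paper does not fee per item at all: it uses Goldner and Karlin's one-sample entry fee $e_i(b_{-i},s_i)=\sum_j(s_{ij}-\max_{k\ne i}b_{kj})^{+}$, which already yields $\brev/4$ from a single sample, and spends all remaining samples on $\srev$. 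You would need to replace step (b) with one of these two arguments (single-intersecting global acceptance event, or the one-sample fee) for the proof to go through.
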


\noindent\textbf{Sample Access to Bounded Distributions:} Goldner and Karlin's proof~\cite{GoldnerK16} can be directly applied to the bounded distributions to show a single sample suffices to learn a mechanism whose expected revenue approximates the $\brev$. Then as $\srev$ is the revenue of $m$ separate single-item auctions, we can use the result in~\cite{MorgensternR16} to approximate it. See Theorem~\ref{thm:additive bounded} in Appendix~\ref{sec:additive bounded} for further details.

\vspace{.05in}
\noindent\textbf{Direct Access to Approximate Distributions:} for each single item, we apply Theorem~\ref{thm:unit-demand} to learn an individual auction, then run these learned auctions in parallel. Clearly, the combined auction's revenue approximates $\srev$. For $\brev$, we show that for every bidder $i$ and every bid profile $b_{-i}$ of the other bidders, the event that corresponds to bidder $i$ accepting any entry fee is \emph{single-intersecting} (see Definition~\ref{def:single-intersecting}). This implies that the probability for a bidder to accept an entry fee under $\hat{D}$ and $D$ is close (Lemma~\ref{lem:Kolmogorov stable for sc}). So we can essentially use the median of $\sum_{j\in[m]}(t_{ij}-\max_{k\neq i} b_{kj})^+$ with $t_i\sim\hat{D}_i$ as the entry fee. See Theorem~\ref{thm:additive Kolmogorov} in Appendix~\ref{sec:additive Kolmogorov} for further details.
\section{XOS Valuations} \label{sec:constrained additive}
In this section we go beyond constained additive valuations to show learnability of approximately revenue-optimal auctions from polynomially many samples. The better of the following two mechanisms is known to achieve a constant fraction of the optimal revenue, when bidders have valuations that are XOS over independent items~\cite{CaiZ17}.

\vspace{.05in}
\noindent\textbf{Rationed Sequential Posted Price Mechanism (RSPM)}: the mechanism is almost the same as SPM in Algorithm~\ref{alg:seq-mech}, except there is an extra constraint that every bidder can purchase at most one item.

\vspace{.05in}
\noindent\textbf{Anonymous Sequential Posted Price with Entry Fee Mechanism (ASPE)}: every buyer faces the same collection of item prices $\{p_j\}_{j\in[m]}$. The seller visits the bidders sequentially. For every bidder, the seller shows her all the available items (i.e. items that have not yet been purchased) and the associated price for each item, then asks her to pay a personalized entry fee which depends on her type distribution and the set of available items. If the bidder accepts the entry fee, she can proceed to purchase any available item at the given price; if she rejects the entry fee, she neither receives nor pays anything. See Algorithm~\ref{alg:aspe-mech} for details.

\begin{theorem}\cite{CaiZ17}\label{thm:simple XOS}
	There exists a collection of prices $\{p^*_j\}_{j\in[m]}$, such that if we set the entry fee function $\delta^*_i(S)$ to be the median of bidder $i$'s utility for set $S$, either the ASPE$(p^*,\delta^*)$ or the best RSPM achieves at least a constant fraction of the optimal revenue when bidders' valuations are XOS over independent items. More formally, let $u^*_i(t_i, S)=\max_{S^*\subseteq S} v_i(t_i, S^*)-\sum_{j\in S^*} p^*_j$ be bidder $i$'s utility for the set of items $S$ when her type is $t_i$. We define $\delta^*_i(S)$ to be the median of the random variable $u^*_i(t_i,S)$ (with $t_i\sim D_i$) for any set $S\subseteq [m]$.  Moreover, the price $p^*_j$ for any item $j$ is no larger than $2G$, where $G = \max_{i,j} G_{ij}$ and $G_{ij}:= \sup_x \left\{ \Pr_{t_{ij}\sim D_{ij}}\left[V_i(t_{ij})\geq x\right]\geq \frac{1}{5\max\{m,n\}}\right\}$.
\end{theorem}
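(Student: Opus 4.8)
\emph{Proof sketch of Theorem~\ref{thm:simple XOS} (reconstructing the argument of~\cite{CaiZ17}).}
The plan is to follow the Lagrangian--duality framework of Cai--Devanur--Weinberg, specialized to XOS valuations over independent items. First I would write down the LP characterizing $\opt$ over all BIC mechanisms and pass to its Lagrangian dual, choosing the dual variables so that the resulting ``virtual welfare'' objective localizes across (bidder, item) pairs. The central design choice is a family of thresholds $\bbeta=(\beta_{ij})$: for each bidder $i$ and item $j$, pick $\beta_{ij}$ so that $\Pr_{t_{ij}\sim D_{ij}}[V_i(t_{ij})\ge\beta_{ij}]$ is of order $\tfrac{1}{\max\{m,n\}}$ --- small enough that a bidder rarely exceeds any of her thresholds, yet at least $\tfrac{1}{5\max\{m,n\}}$, so that $\beta_{ij}\le G_{ij}\le G$. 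Feeding a $\bbeta$-adjusted virtual value function into the dual yields a decomposition $\opt\le \single+\nf+\core+\tail$, where $\single$ and $\nf$ collect the virtual welfare a bidder derives, respectively, from her single favorite item and from her non-favorite items, while $\core$ (resp.\ $\tail$) is the expected welfare contributed by (bidder, item) pairs whose XOS-representative value for that item is below (resp.\ above) its threshold $\beta_{ij}$.

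Next I would charge each term to the revenue of one of the two simple mechanisms. The terms $\single$ and $\nf$ are, after fixing an item, single-dimensional revenue quantities, so each is at most a constant times the revenue of selling the items separately with Myerson's auction, hence at most a constant times the best RSPM (which, in particular, restricts each bidder to win at most one item); this is the argument of~\cite{CaiDW16}. For $\tail$: a (bidder, item) pair is ``tail'' only with probability $\Theta(\tfrac{1}{\max\{m,n\}})$, so the RSPM that posts price $\beta_{ij}$ on item $j$ to bidder $i$ and sweeps through bidders in order loses only a constant factor relative to the expected tail welfare, via a multi-item prophet-inequality argument. Taking the anonymous item price $p^*_j$ to be (a small constant times) $\max_i\beta_{ij}$, the stated bound $p^*_j\le 2G$ is immediate from $\beta_{ij}\le G_{ij}\le G$.

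The heart of the proof is charging $\core$ to $\mathrm{ASPE}(p^*,\delta^*)$. When bidder $i$ is reached with available set $S$, she accepts the entry fee $\delta^*_i(S)$ with probability exactly $1/2$, since $\delta^*_i(S)$ is by definition the median of her utility $u^*_i(t_i,S)=\max_{S^*\subseteq S}v_i(t_i,S^*)-\sum_{j\in S^*}p^*_j$; hence the expected entry-fee revenue extracted from $i$ is $\tfrac12\sum_S\Pr[\text{$i$ reached with }S]\,\delta^*_i(S)$. It then suffices to show that, for \emph{every fixed} $S$, $\delta^*_i(S)$ is at least a constant fraction of the core welfare $i$ would derive from $S$. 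By the XOS additive-representative structure together with independence across items, the core portion of bidder $i$'s value for $S$ is a sum of independent, suitably bounded random variables, so it concentrates around its mean; consequently its median dominates a constant fraction of its mean, and subtracting the total price $\sum_j p^*_j$ --- which the threshold choice and a union bound keep small relative to the core welfare --- still leaves a constant fraction. Since this is proved pointwise in $S$, it is insensitive to the fact that $S$ is random and correlated with the lower-indexed bidders' types --- which is exactly why $\delta^*_i$ is a function of $S$. Summing over $i$ and $S$ gives $\core=O(1)\cdot(\text{revenue of }\mathrm{ASPE}(p^*,\delta^*))$, and combining the four bounds shows the better of the best RSPM and $\mathrm{ASPE}(p^*,\delta^*)$ earns at least $\opt/c$ for an absolute constant $c$.

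The step I expect to be the main obstacle is this $\core$-versus-ASPE bound, for two intertwined reasons: (i) one needs a concentration estimate guaranteeing that the core welfare is tight enough that its median is within a constant factor of its expectation --- this genuinely exploits XOS plus item-independence through a bounded-differences inequality and fails for general subadditive valuations; and (ii) one must control the loss from earlier bidders having already removed items from $S$, which is why it is essential that the anonymous prices $p^*_j$ used by the ASPE are exactly those appearing in the duality decomposition and are uniformly bounded by $2G$. The remaining ingredients --- constructing the dual, the prophet inequality for $\tail$, and the single-dimensional bounds for $\single$ and $\nf$ --- are by now fairly standard given~\cite{CaiDW16,CaiZ17}.
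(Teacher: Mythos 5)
The paper itself never proves Theorem~\ref{thm:simple XOS}: it is imported from \cite{CaiZ17}, and the only places its argument surfaces here are the adaptations in Lemma~\ref{lem:approx ASPE} and Lemma~\ref{lem:Q_j}. Measured against that construction, your outline gets the skeleton right (a duality benchmark with thresholds $\beta_{ij}$ at quantile $\Theta(1/\max\{m,n\})$, the $\single$ and $\tail$ terms charged to RSPMs, and $\core$ charged to an ASPE whose entry fee is the median utility), but the step you yourself identify as the heart of the proof is reconstructed incorrectly in two ways that matter. First, the anonymous prices $p^*_j$ are \emph{not} a small constant times $\max_i\beta_{ij}$: in \cite{CaiZ17} they are per-item shares of the core welfare, $p^*_j=\frac{1}{2}\sum_i\sum_{t_i}f_i(t_i)\sum_{S\ni j}\sigma_{iS}(t_i)\,\theta^{S\cap\mathcal{C}(t_i)}_j(t_i)$, built from XOS supporting prices under a benchmark allocation (exactly the $Q_{\eta,j}$ of Lemma~\ref{lem:Q_j}). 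This choice is what makes $\sum_j p^*_j$ comparable to $\core$, and it is also the true source of the bound $p^*_j\le 2G$ (supporting prices are capped by the adjusted thresholds); in your version the $2G$ bound holds by fiat while the core charging collapses. Second, your pointwise claim that for every fixed $S$ the median utility $\delta^*_i(S)$ is a constant fraction of the core welfare bidder $i$ derives from $S$ is false as stated: $\sum_{j\in S}p^*_j$ is comparable to half of the \emph{total} core welfare over all bidders, so it cannot be ``kept small relative to the core welfare'' of a single bidder by a union bound, and a single bidder's post-price utility can indeed be a vanishing fraction of her core value for $S$.

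What \cite{CaiZ17} actually does is a global accounting: the ASPE revenue is the collected item prices plus the entry fees, and the prices of items already sold when bidder $i$ arrives, together with $i$'s entry fee, cover the $p^*_j$-shares of $i$'s benchmark core bundle up to error terms charged to $\prev$; summing over bidders recovers $\frac{1}{2}\core$ up to $O(\prev)$. The median-versus-mean control there is not a sum-of-independent-variables bound: $u^*_i(t_i,\cdot)$ is a Lipschitz subadditive function of the independent coordinates $t_{ij}$ (Lipschitz constant tied to the truncation at the adjusted thresholds), and a Schechtman/Talagrand-type concentration inequality bounds the median-to-mean gap by a constant times that Lipschitz constant, which is then charged to $\prev$. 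Relatedly, your remark that the concentration step ``fails for general subadditive valuations'' is off: that inequality does hold for subadditive valuations over independent items; what is special to XOS is the existence of supporting prices used to define the $p^*_j$, which is where subadditive valuations lose a logarithmic factor. So while your sketch is a reasonable first pass, as written it would not recover Theorem~\ref{thm:simple XOS} without redoing the price definition and the core accounting along these lines.
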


{
Our goal next is to bound the sample complexity for learning a near-optimal RSPM and the ASPE described in Theorem~\ref{thm:simple XOS} under XOS valuations.} 

We consider first the task of learning a near-optimal RSPM. In a RSPM, all bidders are restricted to be unit-demand, so the revenue of the best RSPM is upper bounded by the optimal revenue in the corresponding unit-demand setting. In Section~\ref{sec:unit-demand}, we have shown how to learn an approximately optimal mechanism for unit-demand bidders, and those algorithms can be used to approximate the best RSPM. 

So, for the rest of this section, it suffices to focus on learning an ASPE whose revenue approximates the revenue of the ASPE described in Theorem~\ref{thm:simple XOS}.  We will do this in Section~\ref{sec:XOS sample}. Before that,
we need a robust version of Theorem~\ref{thm:simple XOS}. In the next Lemma, we argue that if we use a collection of prices $\{p'_j\}_{j\in[m]}$ sufficiently close to $\{p^*_j\}_{j\in[m]}$ and entry fee $\delta'_i(S)$ sufficiently close to the median of the utility for every bidder $i$ and subset $S$, the better of the corresponding ASPE and the best RSPM still approximates the optimal revenue. We postpone the proof to Appendix~\ref{sec:appx XOS}.

\begin{lemma}\label{lem:approx ASPE}
	For any $\epsilon>0$ and $\mu\in[0,\frac{1}{4}]$, let $\{p'_j\}_{j\in[m]}$ be a collection of prices such that $|p'_j-p^*_j|\leq \epsilon$ for all $j\in[m]$, where $\{p^*_j\}_{j\in[m]}$ is the collection of prices in Theorem~\ref{thm:simple XOS}. Let $\delta'_i(S)$ be bidder $i$'s entry fee function such that $\Pr_{t_i\sim D_i}\left [u'_i(t_i,S)\geq \delta'_i(S)\right]\in [1/2-\mu,1/2+\mu]$ for any set $S\subseteq [m]$, where $u'_i(t_i,S) = \max_{S*\subseteq S} v_i(t_i,S^*)-\sum_{j\in S^*} p'_j$. Then, either the ASPE$(p',\delta')$ or the best RSPM achieves revenue at least $\frac{\opt}{\CC_1(\mu)}-\CC_2(\mu)\cdot (m+n)\cdot \epsilon$ when bidders' valuations are XOS over independent items. Both $\CC_1(\cdot)$ and $\CC_2(\cdot)$ are  monotonically increasing functions that only depend on $\mu$. 
\end{lemma}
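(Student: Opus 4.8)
The plan is to revisit the duality-based proof of Theorem~\ref{thm:simple XOS} in~\cite{CaiZ17} and to check that each of its inequalities degrades by at most a multiplicative factor depending only on $\mu$ and an additive $O((m+n)\epsilon)$ term when (i) the anonymous prices $p^*$ are replaced by nearby prices $p'$ with $\|p'-p^*\|_\infty\le\epsilon$, and (ii) the exact-median entry fees are replaced by the approximate-median entry fees $\delta'$. Recall that~\cite{CaiZ17} uses the Lagrangian duality framework to bound $\opt$ by a constant times the sum of a $\single$, a $\nf$, a $\core$, and a $\tail$ term, each of which is bounded by a constant times the revenue of the best RSPM or of the ASPE with prices $p^*$ and entry fee equal to the median of each bidder's utility for the relevant set. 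The first step is to re-derive all of these bounds with $p'$, $u'$ and $\delta'$ in place of $p^*$, $u^*$ and $\delta^*$ throughout. Because the RSPM and ASPE appearing on the right-hand side of the target inequality also use $p'$ and $\delta'$, and because $\delta'_i(S)$ is an (approximate) median of $u'_i(\cdot,S)$ rather than of $u^*_i(\cdot,S)$, the two sides stay self-consistent and the only genuine losses are (a) where a step uses a property of $p^*$ not shared by $p'$, and (b) where a step uses that $\delta^*$ is the \emph{exact} median.

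\textbf{Price perturbation.} Since $\|p'-p^*\|_\infty\le\epsilon$, for every bidder $i$, type $t_i$ and set $S$ the inner maximization defining the utility moves by at most $\epsilon$ per item, so $|u'_i(t_i,S)-u^*_i(t_i,S)|\le m\epsilon$; in particular $p'_j\le 2G+\epsilon$, and the quantile bound $p^*_j\le 2G$ of Theorem~\ref{thm:simple XOS} is preserved up to an $O(\epsilon)$ slack. Each posted-price and entry-fee revenue quantity entering the analysis then shifts by $O(\epsilon)$ per item, and the number of items ever sold, by the ASPE or the RSPM, is at most $m$ or $\min\{m,n\}$ respectively; collecting these shifts across the (constantly many) terms of the decomposition produces the $O((m+n)\epsilon)$ additive loss in the statement. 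I expect this part to be routine bookkeeping.

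\textbf{Entry-fee perturbation: the crux.} The delicate step is the $\core$ bound. In~\cite{CaiZ17} it uses exactness of the median twice: to guarantee that bidder $i$ accepts the entry fee with probability exactly $1/2$, and to relate $\delta^*_i(S)$ to $\core$'s contribution to $\opt$. The first use is immediately robust --- with $\delta'$ the acceptance probability lies in $[1/2-\mu,1/2+\mu]$, a constant bounded away from $0$ and $1$, so the expected entry-fee revenue collected from $i$ is at least $(1/2-\mu)\,\delta'_i(S)$ and the ``reject'' branch still has probability at least $1/2-\mu$. The second use is where care is needed, because the hypothesis permits $\delta'_i(S)$ to be far below the true median when $u'_i(\cdot,S)$ is bunched just above $0$, in which case the entry fee alone extracts almost nothing. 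The idea is to handle this the way the revenue accounting of~\cite{CaiZ17} already does: keep the entry-fee revenue together with the posted-price revenue $\sum_{j\in S_i^*}p'_j=v_i(t_i,S_i^*)-u'_i(t_i,S)$ collected from an accepting bidder --- a small utility $u'_i(t_i,S)$ forces a large value-minus-utility $v_i(t_i,S_i^*)-u'_i(t_i,S)$ --- and exploit that the best RSPM also appears on the right-hand side of the target inequality (posting the value as an item price recaptures exactly the ``bunched'' mass). Re-running the chain of inequalities of the $\core$ analysis with $1/2\pm\mu$ in place of $1/2$ wherever the median quantile enters should then give $\core\le\CC_1(\mu)\cdot\max\{\rev(\text{ASPE}(p',\delta')),\rev(\text{best RSPM})\}+\CC_2(\mu)(m+n)\epsilon$ for functions $\CC_1,\CC_2$ that are monotone in $\mu$ and blow up as $\mu\to\frac{1}{4}$ (since the acceptance probability can then be as low as $\frac{1}{4}$). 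This is the main obstacle, and it is where I would spend the bulk of the effort.

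\textbf{Conclusion.} Combining the robustified $\single$, $\nf$ and $\tail$ bounds --- each dominated by a constant times $\rev(\text{best RSPM})$ or $\rev(\text{ASPE}(p',\delta'))$ plus $O((m+n)\epsilon)$ --- with the $\core$ bound above, and using $\max\{a,b\}\ge(a+b)/2$ to merge the two mechanisms appearing on the right-hand side, yields $\opt\le\CC_1(\mu)\cdot\max\{\rev(\text{ASPE}(p',\delta')),\rev(\text{best RSPM})\}+\CC_2(\mu)(m+n)\epsilon$ after adjusting constants, which rearranges to the claimed guarantee. The whole argument is a step-by-step re-derivation of the analysis of~\cite{CaiZ17}; the only new content is verifying that no step secretly relies on the entry fee being the exact median or on the prices being exactly $p^*$, and tracking the accumulated $O(\epsilon)$ slack --- which is precisely why the $\core$ step is the main difficulty.
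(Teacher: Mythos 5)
Your proposal follows essentially the same route as the paper's (itself only sketched) proof: re-run the duality decomposition of~\cite{CaiZ17} into $\single$, $\tail$, and $\core$ (with $\nf$ as an intermediate term), bound $\single$ and $\tail$ by $\prev$, bound $\core$ by the revenue of $\text{ASPE}(p',\delta')$ plus a $\prev$ term, and track the $O((m+n)\epsilon)$ slack from price perturbation and the $\mu$-dependent constants from the approximate-median acceptance probability lying in $[1/2-\mu,1/2+\mu]$. Your discussion of why the entry-fee perturbation is the delicate step (and how the posted-price revenue and the RSPM absorb the case where $\delta'_i(S)$ falls below the true median) is correct and adds useful detail to the sketch the paper gives.
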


\begin{definition}\label{def:eps mu ASPE}
	We say a collection of prices $\{p_j\}_{j\in[m]}$ is in the $B$-bounded $\epsilon$-net if $p_j$ is a multiple of $\epsilon$ and no larger than $B$ for any item $j$. For any collection of prices $\{p_j\}_{j\in[m]}$, we say the entry fee functions are $\mu$-balanced if for every bidder $i$ and every set $S\subseteq [m]$, her entry fee $\delta_i(S)$ satisfies  $\Pr_{t_i\sim D_i}[u_i(t_i,S)\geq \delta_i(S)]\in [1/2-\mu,1/2+\mu]$, where $u_i(t_i,S) = \max_{S*\subseteq S} v_i(t_i,S^*)-\sum_{j\in S^*} p_j$.
\end{definition}

\begin{corollary}\label{cor:discretization of prices}
	For  bidders with valuations that are XOS over independent items and any $\epsilon>0$, there exists a collection of prices $\{p_j\}_{j\in[m]}$ in the $2G$-bounded $\epsilon$-net such that for any $\mu$-balanced entry fee functions $\{\delta_i(\cdot)\}_{i\in[n]}$ with $\mu\in[0,\frac{1}{4}]$, either the ASPE$(p,\delta)$ or the best RSPM achieves revenue at least $\frac{\opt}{\CC_1(\mu)}-\CC_2(\mu)\cdot (m+n)\cdot \epsilon$. 
\end{corollary}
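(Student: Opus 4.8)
The plan is to derive the corollary directly from Lemma~\ref{lem:approx ASPE} by showing that some collection of prices in the $2G$-bounded $\epsilon$-net is coordinate-wise within $\epsilon$ of the prices $\{p^*_j\}_{j\in[m]}$ of Theorem~\ref{thm:simple XOS}, and then arguing that the $\mu$-balanced property transfers cleanly onto the rounded prices. First I would recall that Theorem~\ref{thm:simple XOS} guarantees $0 \le p^*_j \le 2G$ for every $j$. For each $j$, let $p_j$ be $p^*_j$ rounded \emph{down} to the nearest multiple of $\epsilon$; then $p_j$ is a non-negative multiple of $\epsilon$, $p_j \le p^*_j \le 2G$, and $|p_j - p^*_j| \le \epsilon$, so $\{p_j\}_{j\in[m]}$ lies in the $2G$-bounded $\epsilon$-net and satisfies the hypothesis of Lemma~\ref{lem:approx ASPE}. (One should double-check the edge case where rounding down could in principle be replaced by rounding to the nearest grid point to get $|p_j-p^*_j|\le \epsilon/2$; rounding down is cleaner because it preserves non-negativity and the $\le 2G$ bound without fuss, and an $\epsilon$ bound is all the lemma needs.)

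Next I would handle the entry fees. Fix any $\mu$-balanced collection of entry fee functions $\{\delta_i(\cdot)\}_{i\in[n]}$ for the rounded prices $\{p_j\}_{j\in[m]}$, in the sense of Definition~\ref{def:eps mu ASPE}: for every bidder $i$ and set $S$, $\Pr_{t_i\sim D_i}[u_i(t_i,S)\ge \delta_i(S)]\in[1/2-\mu,1/2+\mu]$, where $u_i(t_i,S)=\max_{S^*\subseteq S} v_i(t_i,S^*)-\sum_{j\in S^*}p_j$. This is \emph{exactly} the hypothesis on $\delta'_i$ in Lemma~\ref{lem:approx ASPE} with $\{p'_j\}=\{p_j\}$ and $u'_i=u_i$, since the lemma's requirement is stated in terms of $u'_i$ defined from the same prices $\{p'_j\}$. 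Hence with $\{p'_j\}=\{p_j\}$ and $\{\delta'_i\}=\{\delta_i\}$, all hypotheses of Lemma~\ref{lem:approx ASPE} are met, and its conclusion gives that either $\mathrm{ASPE}(p,\delta)$ or the best RSPM achieves revenue at least $\frac{\opt}{\CC_1(\mu)} - \CC_2(\mu)\cdot(m+n)\cdot\epsilon$, which is precisely the claimed bound. Since this holds for the single fixed collection $\{p_j\}$ simultaneously for every choice of $\mu$-balanced $\{\delta_i(\cdot)\}$ (the prices were chosen independently of the entry fees), the "there exists a collection of prices'' quantification in the corollary is satisfied.

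This argument is essentially a one-line reduction, so there is no real obstacle here — the only thing to be careful about is the logical order of quantifiers (the prices are chosen first, once, close to $p^*$; the entry fees are quantified afterwards and only ever enter through Lemma~\ref{lem:approx ASPE}'s hypothesis, which is already phrased relative to whatever prices are used), and the trivial check that down-rounding keeps the prices inside $[0,2G]$ and on the $\epsilon$-grid. All the substantive work — the robustness of the ASPE/RSPM revenue guarantee to perturbed prices and approximately-balanced entry fees — has already been discharged in Lemma~\ref{lem:approx ASPE}, whose proof is deferred to Appendix~\ref{sec:appx XOS}.
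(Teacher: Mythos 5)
Your reduction is correct and is the intended (unwritten) proof: round each $p^*_j$ from Theorem~\ref{thm:simple XOS} down to the $\epsilon$-grid, note it stays in $[0,2G]$, and then Lemma~\ref{lem:approx ASPE} applies verbatim since a $\mu$-balanced entry-fee collection for the rounded prices is exactly the hypothesis on $\delta'$ with $p'=p$. Your care about quantifier order and the down-rounding edge cases is appropriate but, as you note, poses no real obstacle.
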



\notshow{
\subsection{Constrained Additive Valuations: direct access to approximate distributions}\label{sec:constrained additive kolomogorov}

In this section, we consider the model where we only have access to an approximate distribution $\hat{D}$ and bidders have constrained additive valuations. 
Our learning algorithm is a two-step procedure. In the first step, we argue that the approximate distribution $\hat{D}$ can be used to define a balanced entry fee function for each bidder under any prices. In particular, we propose to use the median of bidder $i$'s utility under $\hat{D}_i$ to compute the entry fees. The challenge is to show that, whenever $||D_{ij}-\hat{D}_{ij}||_{K}\leq \xi$ for every bidder $i$ and item $j$, the median of the utility of any bidder $i$ for any subset $S$ under any collection of prices $\{p_j\}_{j\in[m]}$ is not too different when the bidder's type is drawn from $D_i$ or $\hat{D}_i$; in particular, the entry fee functions thus defined from $\hat{D}$ are balanced. This is shown in Lemma~\ref{lem:Kolmogorov learn entry fee}. Given this lemma, we are able to construct an ASPE for every collection of prices $\{p_j\}_{j\in[m]}$ in the $\epsilon$-net. In the second step, we need to identify a mechanism  with high revenue among all ASPEs we constructed. If we have sample access to the  actual distribution, this is a simple task {as we can take a polynomial number of samples from $D$ and argue that with probability almost $1$, the empirical revenue of ASPE$(p,\delta)$ is close to its true expected revenue for every price vector $p$ in the $\epsilon$-net. So the ASPE with the highest empirical revenue also has high expected revenue.} 
 But we only have access to an approximate distribution $\hat{D}$; can we argue that the expected revenue of any ASPE does not change much under $D$ or $\hat{D}$? Notice that this is certainly not true for arbitrary mechanisms, as it is easy to construct a DSIC single item auction with two bidders where the expected revenue under two distributions that are close in Kolmogorov distance is far away
 . Surprisingly, we can argue that the expected revenue of any ASPE thus defined is within $\poly(n,m)\cdot H\cdot \xi$ when the bidders' types are drawn from $D$ or $\hat{D}$.

 To argue these results, we need to understand how the probability of events changes from distribution $D$ to $\hat{D}$. Clearly, if we consider arbitrary events, the difference in their probabilities under $D$ and $\hat{D}$ can be arbitrarily far. We need to identify structure in the events that are of interest to us, which allows us to argue that computations under $D$ and $\hat{D}$ are close. Since our distributions are supported on a subset of the Euclidean space, any event is simply a set in the Euclidean space. In Definition~\ref{def:single-intersecting}, we define a type of events called \emph{single-intersecting} events based on the geometric shape of the corresponding sets. For example, balls, rectangles and all convex sets are single-intersecting, but this definition includes some non-convex sets as well, for example, cross-shaped sets. It turns out that being able to include these non-convex sets is crucial for our result, as many events we care about are not convex but nonetheless are single-intersecting. Next we argue that for any single-intersecting event, the difference of the probability under $D$ and $\hat{D}$ only grows linearly in the dimension of the support space (Lemma~\ref{lem:Kolmogorov stable for sc}).
\notshow{\begin{definition}[Single-intersecting Events]\label{def:single-intersecting}
For any event $\EE$ in $\mathbb{R}^{\ell}$, $\EE$ is \textbf{single-intersecting} if the intersection of $\EE$ and any line that is parallel to one of the axes is an interval. More formally, for any $i\in[\ell]$ and any line $L_i=\left\{x \in \mathbb{R}^{\ell}\ |\ x_{-i}=a_{-i} \right \}$, where $a_{-i}$ is some $\ell-1$ dimensional vector in $\mathbb{R}^{\ell-1}$, the intersection of $L_i$ and $\EE$ is $\left\{x \in \mathbb{R}^{\ell}\ |\ x_{-i}=a_{-i}, x_i\in[\ubar{a},\bar{a}] \right\}$ for some real numbers $\ubar{a}$ and $\bar{a}$.
\end{definition}

\begin{lemma}\label{lem:Kolmogorov stable for sc}
For any integer $\ell$, let $\DD=\times_{i=1}^\ell \DD_i$ and $\hat{\DD}=\times_{i=1}^\ell \hat{\DD}_i$, where $\DD_i$ and $\hat{\DD}_i$ are both supported on $[0,H]$ for any $i\in[\ell]$. If $||\DD_i-\hat{\DD}_i||_K\leq \xi$, $\left|\Pr_\DD[\EE]-\Pr_{\hat{\DD}}[\EE]\right|\leq 2\xi\cdot \ell$ for any single-intersecting event $\EE$.
\end{lemma}
\begin{prevproof}{Lemma}{lem:Kolmogorov stable for sc} Let $\FF_i$ and $\hat{\FF}_i$ be the cdf for $\DD_i$ and $\hat{\DD}_i$ respectively. We will prove the statement using a hybrid argument. We create a sequence of product distributions $\{\DD^{(j)}\}_{j\leq \ell}$, where $\DD^{(j)}=\hat{\DD}_1\times\cdots\times\hat{\DD}_j\times \DD_{j+1}\times\cdots\times \DD_{\ell},$ and $\DD^{(0)}=\DD$, $\DD^{(\ell)}=\hat{\DD}$. To prove our claim, it suffices to show that for any integer $j\in[\ell]$, $\left|\Pr_{\DD^{(j-1)}}[\EE]-\Pr_{{\DD}^{(j)}}[\EE]\right|\leq 2\xi.$
	Let us first fix some notations. Let $\EE_{-j}=\left\{x_{-j}\ |\ \exists x_j\in \mathbb{R},\ (x_j,x_{-j})\in \EE\right\}$, $\EE_j(a_{-j})= \left\{x_{j} \in \mathbb{R} \ |\ (x_j,a_{-j})\in \EE\right\}$ for all $j\in[\ell]$. As $\EE$ is single-intersecting, $\EE_j(x_{-j})$ is an interval for all $j$ and $x_{-j}$. Moreover, since $||\DD_j-\hat{\DD}_j||_K\leq \xi$, we have $|\Pr_{\DD_j}[x_j\in \EE_j(x_{-j})]-\Pr_{\hat{\DD}_j}[x_j\in \EE_j(x_{-j})]|\leq 2\xi$ for all $j$ and $x_{-j}$. Next, we bound the difference for the probability of event $\EE$ under $\DD^{(j-1)}$ and $\DD^{(j)}$. 
	\begin{align*}
		&\left|\Pr_{\DD^{(j-1)}}[\EE]-\Pr_{{\DD}^{(j)}}[\EE]\right|\\
		=&\Big{|}\int_{\mathbb{R}^{\ell-1}}\ind{\left [x_{-j}\in \EE_{-j}\right]}\left(\int_{\mathbb{R}}\ind{\left[x_j\in \EE_j(x_{-j})\right]}d \FF_j(x_j)\right) d\hat{\FF}_1(x_1)\cdots d\hat{\FF}_{j-1}(x_{j-1}) d\FF_{j+1}(x_{j+1})\cdots d\FF_{\ell}(x_{\ell})\\
		 & - \int_{\mathbb{R}^{\ell-1}}\ind{\left [x_{-j}\in \EE_{-j}\right]}\left(\int_{\mathbb{R}}\ind{\left[x_j\in \EE_j(x_{-j})\right]}d \hat{\FF}_j(x_j)\right) d\hat{\FF}_1(x_1)\cdots d\hat{\FF}_{j-1}(x_{j-1}) d\FF_{j+1}(x_{j+1})\cdots d\FF_{\ell}(x_{\ell})\Big{|}\\
		 = & \left|\int_{\mathbb{R}^{\ell-1}}\ind{\left [x_{-j}\in \EE_{-j}\right]} \left(\Pr_{\DD_j}[x_j\in \EE_j(x_{-j})]-\Pr_{\hat{\DD}_j}[x_j\in \EE_j(x_{-j})]\right)
		  d\hat{\FF}_1(x_1)\cdots d\hat{\FF}_{j-1}(x_{j-1}) d\FF_{j+1}(x_{j+1})\cdots d\FF_{\ell}(x_{\ell})\right|\\
		  \leq & 2\xi\cdot\int_{\mathbb{R}^{\ell-1}}\ind{\left [x_{-j}\in \EE_{-j}\right]} d\hat{\FF}_1(x_1)\cdots d\hat{\FF}_{j-1}(x_{j-1}) d\FF_{j+1}(x_{j+1})\cdots d\FF_{\ell}(x_{\ell}) = 2\xi
	\end{align*}
	\end{prevproof}}
	It is quite easy to see that the single-intersecting condition can be relaxed to $k$-intersecting where the intersection of the event $\EE$ with any line parallel to an axis is the union of at most $k$ intervals. Indeed, using a proof almost identical to the proof of Lemma~\ref{lem:Kolmogorov stable for sc}, we can argue that the difference in the probability of an event $\EE$ under $\DD$ and $\hat{\DD}$ only grows linearly in the dimension of the support space and $k$. Next, we formally state the first step of our learning algorithm. The proof can be found in Appendix~\ref{sec:appx constrained additive}.
\begin{lemma}\label{lem:Kolmogorov learn entry fee}
Suppose $||D_{ij}-\hat{D}_{ij}||_K\leq \xi$ for any bidder $i$ and any item $j$. For any collection of prices $\{p_j\}_{j\in[m]}$, let $u^{(p)}_i(t_i,S)=\max_{S*\subseteq S} v_i(t_i,S^*)-\sum_{j\in S^*} p_j$. Define the entry fee $\delta_i^{(p)}(S)$
 of bidder $i$ for set $S$ under $\{p_j\}_{j\in[m]}$ to be the median of $u^{(p)}_i(t_i,S)$ when bidder $i$'s type $t_i$ is drawn from $\hat{D}_i$. Then $\left\{\delta_i^{(p)}(\cdot)\right\}_{i\in[n]}$ is a collection of $2m\xi$-balanced entry fee functions for any collection of prices $\{p_j\}_{j\in[m]}$.\end{lemma}

So far, for any collection of prices $\{p_j\}_{j\in[m]}$, we have defined (using $\hat{D}$) a collection of $2m\xi$-balanced entry fee functions $\{\delta_i^{(p)}(\cdot)\}_{i\in[n]}$. Next, we need to identify prices $\{p_j\}_{j\in[m]}$ such that the corresponding ASPE mechanism achieves high revenue under the actual distribution $D$. Our goal is to show that the expected revenue of simultaneously all ASPE mechanisms, defined for all prices using $\hat{D}$ as above, is not much different under $D$ and $\hat{D}$. We first establish a technical lemma, which states that, for any set of available items $S$ and entry fee $\delta_i(S)$, the distribution over the {\em set of items} purchased by bidder $i$ under $D_i$ and $\hat{D}_i$ has total variation distance at most $2m\xi$. This is quite surprising. Given that, for each set of items $S' \subseteq S$, the difference in the probability that the buyer will purchase this particular set $S'$ under $D_i$ and $\hat{D}_i$ could already be as large as {$\Theta(m\xi)$}, and the distribution has an exponentially large support size,  a trivial argument would give a bound of {$2^m\cdot \Theta(m\xi)$}. To overcome this analytical difficulty, we argue instead that for any collection of sets of items, the event that the buyer's favorite set lies in this collection is single-intersecting. Then our result follows from Lemma~\ref{lem:Kolmogorov stable for sc}. Notice that it is crucial that Lemma~\ref{lem:Kolmogorov stable for sc} holds for all events that are single-intersecting, as the event we consider here is clearly non-convex in general. 

\notshow{
\begin{lemma}\label{lem:stable favorite set}
	For any bidder $i$, any set $S\subseteq [m]$, any prices $\{p_j\}_{j\in[m]}$ and entry fee $\delta_i(S)$, let $\LL$ and $\hat{\LL}$ be the distributions over the set of items purchased from $S$ by bidder $i$ under prices $\{p_j\}_{j\in[m]}$ and entry fee $\delta_i(S)$ when her type is drawn from $D_i$ and $\hat{D}_i$ respectively. If $||D_{ij}-\hat{D}_{ij}||_K\leq \xi$ for all item $j$, $||\LL-\hat{\LL}||_{TV}\leq 2m \xi.$
\end{lemma}
\begin{proof}
	For any set $R\subseteq S$, let $\EE_R$ be the event that bidder $i$ purchases set $R$. Proving that the total variation distance between $\LL$ and $\hat{\LL}$ is no more than $2m\cdot \xi$ is the same as proving that for any $K\leq 2^{|S|}$, $\left|\ \Pr_{D_i}\left[t_i\in \bigcup_{\ell=1}^K \EE_{R_\ell}\right]-\Pr_{\hat{D}_i}\left[t_i\in \bigcup_{\ell=1}^K \EE_{R_\ell}\right]\right|\leq 2m\cdot \xi$ where $R_1,\cdots R_K$ are arbitrary distinct subsets of $S$. Since the dimension of the bidder's type space is $m$, if we can prove that $\bigcup_{\ell=1}^K \EE_{R_\ell}$ is always single-intersecting, our claim follows from Lemma~\ref{lem:Kolmogorov stable for sc}. 
	
	For any $j\in[m]$ and $a_{i,-j}\in [0,H]^{{m}-1}$, let $L_j(a_{i,-j})=\left\{ (t_{ij},a_{i,-j}) \ | t_{ij}\in [0,H] \right\}$. We claim that $L_j(a_{i,-j})$ intersects with at most two different $\EE_{U}$ and $\EE_{V}$ where $U$ and $V$ are subsets of $S$. 
	WLOG, we assume that  $(0,a_{i,-j})\in \EE_U$. 
	\begin{itemize}
		\item If $U=\emptyset$, that means the utility of the favorite set for type $(0,a_{i,-j})$ is smaller than the entry fee $\delta_i(S)$. If we increase the value of $t_{ij}$, two cases could happen: (1) the utility of the favorite set is still lower than the entry fee; (2) the utility of the favorite set is higher than the entry fee. In case (1), $(t_{ij},a_{i,-j})\in \EE_{\emptyset}$. In case (2), bidder $i$ pays the entry fee and purchases her favorite set $V$. Then item $j$ must be in $V$, because otherwise the utility for set $V$ does not change from type $(0,a_{i,-j})$ to type $(t_{ij},a_{i,-j})$. If we keep increasing $t_{ij}$, bidder $i$'s favorite set remains $V$ and she keeps paying the entry fee and purchasing $V$. Hence, $L_j(a_{i,-j})$ can intersect with at most one event $\EE_R$ where $R$ is non-empty.
		\item If $U\neq \emptyset$, that means $U$ is the favorite set of type $(0,a_{i,-j})$ and the utility for winning set $U$ is higher than the entry fee.  If we increase the value of $t_{ij}$, two cases could happen: (1) $U$ remains the favorite set; (2) a different set $V$ becomes the new favorite set. In case (1), $(t_{ij},a_{i,-j})\in \EE_U$. In case (2), item $j$ must be in $V$ and not in $U$, otherwise how could $U$ be better than $V$ for type $(0,a_{i,-j})$ but worse for type $(t_{ij},a_{i,-j})$.  If we keep increasing $t_{ij}$, bidder $i$'s favorite set remains to $V$ and she keeps paying the entry fee and purchasing $V$. Hence, $L_j(a_{i,-j})$ can intersect at most two different events.
	\end{itemize}
	
It is not hard to see that any event $\EE_R$ is an intersection of halfspaces, so the intersection of $L_j(a_{i,-j})$ with any event $\EE_R$ is an interval. {Also, notice that any type $t_i$ must lie in an event $\EE_R$ for some set $R\subseteq S$.} If $L_j(a_{i,-j})$ intersects with two different events $\EE_U$ and $\EE_V$, the two intersected intervals must lie back to back on $L_j(a_{i,-j})$. Otherwise, $L_j(a_{i,-j})$ intersects with at least three different events. Contradiction. Since $L_j(a_{i,-j})$ intersects with at most two different events, no matter which of these events are in $\{\EE_{R_\ell}\}_{\ell\in[K]}$, the intersection of $L_j(a_{i,-j})$ and $\bigcup_{\ell=1}^K \EE_{R_\ell}$ is either empty or an interval,
  which means $\bigcup_{\ell=1}^K \EE_{R_\ell}$ is single-intersecting. Now our claim simply follows from Lemma~\ref{lem:Kolmogorov stable for sc}.
\end{proof}
}

With Lemma~\ref{lem:stable favorite set}, we are ready to show that the revenue of any ASPE does not change much under $D$ and $\hat{D}$.\begin{lemma}\label{lem:difference in revenue Kolmogorov}
	For any ASPE$(p,\delta)$, let $\rev(p,\delta)$ and $\widehat{\rev}(p,\delta)$ be its expected revenue under $D$ and $\hat{D}$ respectively. If $||D_{ij}-\hat{D}_{ij}||_K\leq \xi$ for all $i\in[n]$ and $j\in[m]$, then $|\rev(p,\delta)-\widehat{\rev}(p,\delta)|\leq 2nm\xi\cdot \left(mH+\opt_{ASPE}\right)$,
	{where $\opt_{ASPE}$ is the optimal revenue obtainable by an ASPE mechanism.}
	\end{lemma}
\begin{prevproof}{Lemma}{lem:difference in revenue Kolmogorov}
	We use a hybrid argument. Consider a sequence of distributions $\{D^{(i)}\}_{i\leq n}$, where $D^{(i)}=\hat{D}_1\times\cdots\times\hat{D}_i\times D_{i+1}\times\cdots\times D_{n},$ and $D^{(0)}=D$, $D^{(n)}=\hat{D}$.
	 We use $\rev^{(i)}(p,\delta)$ to denote the expected revenue of ASPE$(p,\delta)$ under $D^{(i)}$. To prove our claim, it suffices to argue that $\left|\rev^{(i-1)}(p,\delta) -\rev^{(i)}(p,\delta)\right|\leq 2\xi m\cdot \left(m\cdot H+\opt\right).$
	  We denote by $\SS_k$ and $\SS'_k$ the random set of items that remain available after visiting the first $k$ bidders under $D^{(i-1)}$ and $D^{(i)}$. Clearly, for $k\leq i-1$, $||\SS_k-\SS'_k||_{TV}=0$, so the expected revenue collected from the first $i-1$ bidders under $D^{(i-1)}$ and $D^{(i)}$ is the same. According to Lemma~\ref{lem:stable favorite set}, $||\SS_i-\SS'_i||_{TV}\leq 2m\cdot \xi$. The total amount of money bidder $i$ spends can never be higher than her value for receiving all the items which is at most $m\cdot H$. So the difference in the expected revenue collected from bidder $i$ under  $D^{(i-1)}$ and $D^{(i)}$ is at most $2\xi\cdot m^2H$. Suppose $R$ is the set of remaining items after visiting the first $i$ bidders, then the expected revenue collected from the last $n-i$ bidders is the same under  $D^{(i-1)}$ and $D^{(i)}$, as these bidders have the same distributions. Moreover, this expected revenue is no more than $\opt_{ASPE}$, since the optimal ASPE can simply just sell $R$ to the last $n-i$ bidders using the same prices and entry fee as in ASPE$(p,\delta)$. Of course, for any fixed $R$, the probabilities that $\SS_i=R$ and $\SS'_i=R$ are different, but since for any $R$ the expected revenue from the last $n-i$ bidders is at most $\opt_{ASPE}$, the difference in the expected revenue from the last $n-i$ bidders under  $D^{(i-1)}$ and $D^{(i)}$ is at most $||\SS_i-\SS'_i||_{TV} \cdot \opt \leq 2\xi\cdot m\opt_{ASPE}$. Hence, the total difference between $\rev^{(i-1)}(p,\delta)$ and  $\rev^{(i)}(p,\delta)$ is at most $2\xi m\cdot \left(m H+\opt_{ASPE}\right)$. 
	  Furthermore, $\left|\rev(p,\delta)-\widehat{\rev}(p,\delta)\right|\leq \sum_{i=1}^{n}\left|\rev^{(i-1)}(p,\delta) -\rev^{(i)}(p,\delta)\right|\leq 2 nm\xi \left(mH+\opt_{ASPE}\right).$
\end{prevproof}

Using  Lemma~\ref{lem:difference in revenue Kolmogorov}, we can prove the main Theorem of this section. The proof is postponed to Appendix~\ref{sec:appx constrained additive}.

\begin{theorem}\label{thm:constrained additive Kolmogorov}
	When all bidders' valuations are drawn from distributions that are constrained additive, and for any bidder $i$ and any item $j$, $D_{ij}$ and $\hat{D}_{ij}$ are supported on $[0,H]$ and $||D_{ij}-\hat{D}_{ij}||_K\leq \xi$  for some $\xi=O(\frac{1}{nm})$, then with only access to $\hat{D}=\times_{i,j} \hat{D}_{ij}$, our algorithm can learn an RSPM and an ASPE such that the better of the two mechanisms has revenue at least $\frac{\opt}{c}-{\xi\cdot O(m^2n H)}$, where $\opt$ is the optimal revenue by any BIC mechanism under $D=\times_{i,j} D_{ij}$. $c>1$ is an absolute constant.
\end{theorem}

}

\subsection{XOS Valuations: sample access to bounded and regular distributions}\label{sec:XOS sample}
In this section, we consider how to learn an ASPE with high revenue given sample access to $D$. Our learning algorithm is a two-step procedure. In the first step, we take a few samples from $D$ and use these samples to set the entry fee for every collection of prices $\{p_j\}_{j\in[m]}$ in the $\epsilon$-net. More specifically, to decide $\delta_i(S)$ we compute the utility of bidder $i$ for set $S$ under $\{p_j\}_{j\in[m]}$ over all the samples and take the empirical median among all these utilities to be $\delta_i(S)$. With a polynomial number of samples, we can guarantee that for any $\{p_j\}_{j\in[m]}$ in the $\epsilon$-net the computed entry fee functions $\{\delta_i(\cdot)\}_{i\in[n]}$ are $\mu$-balanced. Now, we have created an ASPE for every $\{p_j\}_{j\in[m]}$ in the $\epsilon$-net. In the second step, we take some fresh samples from $D$ and use them to estimate the revenue for each of the ASPEs we created in the first step,  then pick the one that has the highest empirical revenue. It is not hard to argue that with a polynomial number of samples the mechanism we pick has high revenue with probability almost $1$. Combining our algorithm with Theorem~\ref{thm:unit-demand}, we obtain the following theorem.
\begin{theorem}\label{thm:XOS sample}
	When all bidders' valuations are XOS over independent items and
	\begin{itemize}
		\item the random variable $V_i(t_{ij})$ is supported on $[0,H]$ for each bidder $i$ and item $j$, we can learn an RSPM and an ASPE such that with probability at least $1-\delta$ the better of the two mechanisms has revenue at least $\frac{\opt}{c_1}-\xi\cdot H$ for some absolute constant $c_1>1$ given $O\left((\frac{mn}{\xi})^2 \cdot (m\cdot\log \frac{m+n}{\xi} + \log \frac{1}{\delta})\right)$ samples from $D$; 
		\item  the random variable $V_i(t_{ij})$ is regular for each bidder $i$ and item $j$, we can learn an RSPM and an ASPE such that with probability at least $1-\delta$ the better of the two mechanisms has revenue at least $\frac{\opt}{c_2}$ for some absolute constant $c_2>1$ given $O\left(\max\{m,n\}^2m^2n^2 \left(m\log ({m+n}) + \log \frac{1}{\delta}\right)\right)$  samples from $D$.
	\end{itemize}
\end{theorem}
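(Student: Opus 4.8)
The plan is to learn the two candidate mechanisms separately --- an RSPM that tracks the best RSPM, and an ASPE that tracks the ASPE$(p^*,\delta^*)$ of Theorem~\ref{thm:simple XOS} --- and then output both, so that ``the better of the two'' inherits a constant-fraction guarantee. By Theorem~\ref{thm:simple XOS}, one of ASPE$(p^*,\delta^*)$ and the best RSPM already gets a constant fraction of $\opt$; by Corollary~\ref{cor:discretization of prices} it even suffices to find, in a $2G$-bounded $\epsilon$-net of price vectors, a vector together with {\em any} $\mu$-balanced entry fees. So the whole task reduces to (i) learning a near-optimal unit-demand mechanism and (ii) searching the $\epsilon$-net while constructing balanced entry fees from samples and estimating revenues.

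\textbf{The RSPM.} A RSPM treats every bidder as unit-demand with single-item value $V_i(t_{ij})$, so its revenue on the XOS instance equals its revenue on the associated unit-demand instance, and the best RSPM's revenue is at most the optimal revenue $\opt_{\mathrm{UD}}$ of that unit-demand instance. I would therefore just invoke Theorem~\ref{thm:unit-demand}: with $O(\xi^{-2}(m^2 n\log\tfrac{n}{\xi}+\log\tfrac1\delta))$ samples (bounded case), resp.\ $O(\max\{m,n\}^2 m^2 n^2\log\tfrac{nm}{\delta})$ samples (regular case), it outputs a (randomized) SPM --- which is an RSPM on the XOS instance --- with revenue at least $\tfrac{\opt_{\mathrm{UD}}}{144}-\xi H$, resp.\ $\tfrac{\opt_{\mathrm{UD}}}{33}$, hence at least $\tfrac{1}{144}\mathrm{Rev}(\text{best RSPM})-\xi H$.

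\textbf{The ASPE (two steps).} Fix $\mu=\tfrac14$ and $\epsilon=\Theta\big(\tfrac{\xi H}{\CC_2(\mu)(m+n)}\big)$, and consider the $2G$-bounded $\epsilon$-net; since $G\le H$, this net has size $2^{O(m\log((m+n)/\xi))}$. Step~1: draw $N_1$ samples from $D$ and, for every net price $p$, bidder $i$, and set $S\subseteq[m]$, compute the real statistic $u^{(p)}_i(\cdot,S)$ on each sample and set $\delta^{(p)}_i(S)$ to the empirical median. For a fixed triple $(p,i,S)$ the DKW inequality gives, from $O(\mu^{-2}\log\tfrac1{\delta'})$ samples, an empirical c.d.f.\ within $\tfrac\mu2$ of the true one in Kolmogorov distance; a union bound over the (exponentially many but log-cost) triples shows $N_1=O(\mu^{-2}(m\log\tfrac{m+n}{\xi}+\log\tfrac1\delta))$ samples make all $\delta^{(p)}$ simultaneously $\mu$-balanced w.p.\ $\ge 1-\tfrac\delta2$. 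Step~2: draw a fresh batch of $N_2$ samples and, for each net price $p$, estimate the revenue of ASPE$(p,\delta^{(p)})$ under $D$ by its empirical average. Each bidder's payment in any such ASPE is at most her value for all items plus the prices she pays, i.e.\ $O(mH)$, so revenue is $O(mnH)$-bounded; Hoeffding plus a union bound over the net gives that $N_2=O((\tfrac{mn}{\xi})^2(m\log\tfrac{m+n}{\xi}+\log\tfrac1\delta))$ samples make every estimate accurate to within $\Theta(\xi H)$ w.p.\ $\ge 1-\tfrac\delta2$.

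\textbf{Wrap-up, the regular case, and the obstacle.} By Corollary~\ref{cor:discretization of prices}, some net price $p^\sharp$ has the property that, since $\delta^{(p^\sharp)}$ is $\mu$-balanced, the better of ASPE$(p^\sharp,\delta^{(p^\sharp)})$ and the best RSPM has revenue $\ge \tfrac{\opt}{\CC_1(\mu)}-\CC_2(\mu)(m+n)\epsilon = \tfrac{\opt}{\CC_1(\mu)}-O(\xi H)$; with $\mu=\tfrac14$ the constants $\CC_1,\CC_2$ are absolute. Since our learned ASPE (maximizing the Step~2 estimate) loses at most $\Theta(\xi H)$ relative to ASPE$(p^\sharp,\delta^{(p^\sharp)})$, and our learned RSPM loses at most $\xi H$ relative to $\tfrac1{144}\mathrm{Rev}(\text{best RSPM})$, the maximum of the two has revenue $\ge \tfrac{\opt}{c_1}-\xi H$ for an absolute constant $c_1$; summing $N_1+N_2$ with the RSPM sample bound (which they dominate) gives the first bullet. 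For the regular case (a purely multiplicative target $\tfrac{\opt}{c_2}$, and $H=\infty$) I would first truncate each $D_{ij}$ at a level $H'$ chosen via the extreme-value theorem of~\cite{CaiD11b}, exactly as in the regular case of Theorem~\ref{thm:unit-demand}, so that truncation costs only a constant factor in revenue while $H'$ stays polynomial in $\max\{m,n\}$ times a lower bound on $\opt$ (e.g.\ $\Omega(G)$, obtained by selling one item optimally); re-running the above with $H'$ for $H$ and with $\epsilon$ and estimation error set to $\Theta(\opt/c_2)$-scale yields the $O(\max\{m,n\}^2 m^2 n^2(m\log(m+n)+\log\tfrac1\delta))$ bound. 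The delicate step is Step~1: certifying $\mu$-balancedness simultaneously for all exponentially many sets $S$ --- this works only because for each fixed $(p,i,S)$ the relevant object is a one-dimensional c.d.f., so DKW applies per-triple and the union bound over triples is only logarithmically expensive; a secondary subtlety is that the revenue-estimation union bound is affordable precisely because the net has size $2^{O(m\log((m+n)/\xi))}$ and each ASPE's revenue is $O(mnH)$-bounded, and in the regular case that the truncation level must be picked to preserve revenue while keeping the effective value bound polynomial.
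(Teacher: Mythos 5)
Your treatment of the bounded case matches the paper's proof (Lemmas~\ref{lem:learn median} and~\ref{lem:learn best ASPE}): the same two-step net-search, the same $\mu$-balancedness certification by concentration over all $(p,i,S)$ triples with a log-cost union bound, and the same $O(mnH)$ deterministic per-sample revenue bound feeding a Hoeffding plus net union bound. Using DKW per triple instead of a Chernoff bound for the empirical median is a cosmetic change.

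The regular case is where you diverge, and where there is a real gap. You propose to truncate the single-item marginals $D_{ij}$ at a level $H'$ picked via the extreme-value theorem of~\cite{CaiD11b} and then ``re-run the bounded argument with $H'$ for $H$,'' asserting that this loses only a constant factor in $\opt$. But that assertion is established only for unit-demand bidders: the proof in Theorem~\ref{thm:UD regular} passes through the single-dimensional copies-setting upper bound and revenue curves (Lemmas~\ref{lem:regular revenue curve concave},~\ref{lem:lowering high prices}), a reduction that is simply not available for the full XOS multi-item optimum. Nothing in the paper shows that capping the item marginals of XOS bidders preserves $\opt$ up to a constant, and you do not supply such an argument. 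Moreover, even granting that, the benchmark in Corollary~\ref{cor:discretization of prices} and the revenue of the learned ASPE are both evaluated under the true $D$, not the truncated $D'$, so you would also need a transfer step relating the two.

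The paper avoids truncation entirely in the ASPE step (Theorem~\ref{thm:XOS regular}). It estimates per-pair thresholds $W_{ij}$ with $\Pr[V_i(t_{ij})\ge W_{ij}]\in[\tfrac{1}{6Z},\tfrac{1}{5Z}]$, sets the net bound $B=2\max_{i,j}W_{ij}\ge 2G$, and then uses a structural fact that needs no boundedness: for $\mu$-balanced entry fees, each $\delta_i^{(p)}(S)\le mG$, because a subadditive bidder would accept a fee above $mG$ with probability below $1/4$, contradicting balancedness. Hence each sample's ASPE revenue is deterministically $O(mnB)$ even under unbounded regular marginals, so Lemma~\ref{lem:learn best ASPE} applies verbatim. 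The additive error is then absorbed multiplicatively via $\opt\ge B/(12Z)$ (sell the item attaining $B/2$ at that threshold). Truncation is used only for the RSPM half, where it is legitimate since RSPM revenue equals a unit-demand $\opt$ and Theorem~\ref{thm:UD regular} applies directly. I would recommend replacing your truncation step by this threshold-estimation-plus-entry-fee-bound argument.
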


 The bounded case is proved as Theorem~\ref{thm:XOS bounded} in Appendix~\ref{sec:XOS bounded}.  The regular case is proved as Theorem~\ref{thm:XOS regular} in Appendix~\ref{sec:XOS bounded}.

\section{Symmetric Bidders}\label{sec:symmetric bidders}
In this section, we consider symmetric bidders ($D_i=D_{i'}$ for all $i$ and $i'\in[n]$) with XOS and subadditive valuations. For XOS valuations, our goal is to improve our algorithms from Section~\ref{sec:constrained additive} to be computationally efficient under bidder symmetry. For subadditive valuations, our goal is to establish the learnability of approximately optimal mechanisms whose revenue improves as the number of bidders becomes comparable to the number of items. We only describe the results here and postpone the formal statements and proofs to Appendix~\ref{sec:symmetric appx}.
 \begin{itemize}
 	\item  \textbf{XOS valuations:} we can learn in polynomial time an approximately optimal mechanism with a polynomial number of samples when the valuations are XOS over independent items. {Our algorithm essentially estimates all the parameters needed to run the RSPM and ASPE used in~\cite{CaiZ17}. In general, it is not clear how to estimate these parameters efficiently. But when the bidders are symmetric, one only needs to consider ``symmetric parameters'' which greatly simplifies the search space and allows us to estimate all the parameters in polynomial time. }See Appendix~\ref{sec:symmetric XOS} for details.
 	\item \textbf{subadditive valuations:} when the valuations are subadditive over independent items,  the optimal revenue is at most $O\left(\frac{n}{\max\{m,n\}}\right)$ times larger than the highest revenue obtainable by an RSPM. In other words, if the number of items is within a constant times the number of bidders, an RSPM suffices to extract a constant fraction of the optimal revenue.  Applying our results for unit-demand bidders in Section~\ref{sec:unit-demand}, we can learn a nearly-optimal RSPM, which is also a good approximation to $\opt$. In fact, when the distribution for random variable $V_i(t_{ij})$ is regular for every bidder $i$ and item $j$, we can design a prior-independent mechanism that achieves a constant fraction of the optimal revenue. See Appendix~\ref{sec:symmetric subadditive} for details.
 \end{itemize}

\appendix
\section*{\huge{Appendix}}
\section{Our Mechanisms}

Here are the detailed description of the two major mechanisms we use: Sequential Posted Price Mechanism (SPM) and Anonymous Sequential Posted Price with Entry Fee Mechanism (ASPE). We also use the Rationed Sequential Posted Price Mechanism (RSPM) when bidders are not unit-demand. RSPM is almost identical to SPM except that there is an extra constraint saying that no bidder can purchase more than one item.
\begin{algorithm}[ht]
\begin{algorithmic}[1]
\REQUIRE $P_{ij}$ is the price for bidder $i$ to purchase item $j$.
\STATE $S\gets [m]$
\FOR{$i \in [n]$}
	\STATE Show bidder $i$ {the} set of available items $S$.
	 \STATE $i$ purchases her favorite bundle $S_i^{*}\in \max_{S'\subseteq S} v_i(t_i, S') - \sum_{j\in S'} P_{ij}$ and pays $\sum_{j\in S_i^{*}}P_{ij}$.
        \STATE $S\gets S\backslash S_i^{*}$.
\ENDFOR
\end{algorithmic}
\caption{{\sf Sequential Posted Price Mechanism (SPM)}}
\label{alg:seq-mech}
\end{algorithm} 

\begin{algorithm}[ht]
\begin{algorithmic}[1]
\REQUIRE A collection of prices $\{p_{j}\}_{j\in[m]}$ and a collection of entry fee functions $\{\delta_i(\cdot)\}_{i\in[n]}$ where $\delta_i: 2^{[m]}\mapsto \mathbb{R}$ is bidder $i$'s entry fee function.
\STATE $S\gets [m]$
\FOR{$i \in [n]$}
	\STATE Show bidder $i$ {the} set of available items $S$ and set the entry fee for bidder $i$ to be ${\delta_i}(S)$.
    \IF{Bidder $i$ pays the entry fee ${\delta_i}(S)$}
        \STATE $i$ receives her favorite bundle $S_i^{*}$ and pays $\sum_{j\in S_i^{*}}p_{j}$.
        \STATE $S\gets S\backslash S_i^{*}$.
    \ELSE
        \STATE $i$ gets nothing and pays $0$.
    \ENDIF
\ENDFOR
\end{algorithmic}
\caption{{\sf Anonymous Sequential Posted Price with Entry Fee Mechanism (ASPE)}}
\label{alg:aspe-mech}
\end{algorithm}

\section{Missing Details from Section~\ref{sec:unit-demand}}\label{sec:unit-demand appx}
\subsection{Unit-demand Valuations: direct access to approximate distributions}\label{sec:unit-demand Kolmogorov}
We first consider the model where we only have access to an approximate distribution $\hat{D}$. The following definition is crucial for proving our result. 

\begin{definition}
For any single dimensional distribution $\DD$ with cdf $F$, we define its revenue curve $R_{\DD}: [0,1]\mapsto \mathbb{R}_{\geq 0}$ as
	\begin{align*}
	{R}_{\DD} (q) = &\max  x\cdot \ubar{q}\cdot {F}^{-1}(1-\ubar{q}) +(1-x)\cdot \bar{q}\cdot {F}^{-1}(1-\bar{q})\\
	& \qquad\textbf{s.t. } x\cdot \ubar{q}+(1-x)\cdot \bar{q}=q\\
	&  \qquad\qquad x, \ubar{q}, \bar{q} \in [0,1]
\end{align*}
where $F^{-1}(1-p) = \sup\{x\in R: \Pr_{v\sim \DD}[v\geq x]\geq p\}$.
\end{definition}


\begin{lemma}[Folklore]
	Let ${\varphi}_{ij}(\cdot)$ and $\hat{\varphi}_{ij}(\cdot)$ be the ironed virtual value function for distribution $D_{ij}$ and $\hat{D}_{ij}$ respectively, then for any $q\in [0,1]$, ${R}_{D_{ij}}(q) = \int_{{F}^{-1}_{ij}(1-q)}^H {\varphi}(x) dF(x)$ and $R_{\hat{D}_{ij}}(q) = \int_{\hat{F}^{-1}_{ij}(1-q)}^H \hat{\varphi}(x) dF(x)$. Since the ironed virtual value function is monotonically non-decreasing, ${R}_{D_{ij}}(\cdot)$ and $R_{\hat{D}_{ij}}(\cdot)$ are concave functions. 
\end{lemma}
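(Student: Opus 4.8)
The plan is to recognize $R_\DD$ as the upper concave envelope of the ``raw'' revenue–in–quantile curve and then quote Myerson's ironing procedure. Fix a single–dimensional distribution $\DD$ supported on $[0,H]$ with cdf $F$, and set $r_\DD(q):=q\cdot F^{-1}(1-q)$, the revenue from posting the price $F^{-1}(1-q)$ that sells with ex-ante probability $q$ (with the convention $r_\DD(0)=0$; note $F^{-1}(1-q)\le H$ for $q>0$). The constraint ``$x\cdot \ubar{q}+(1-x)\cdot\bar{q}=q$, $x,\ubar{q},\bar{q}\in[0,1]$'' in the definition of $R_\DD$ ranges over exactly the ways of writing $q$ as a convex combination of two quantiles (degenerate case $\ubar q=\bar q=q$ included), and the maximized objective is the matching convex combination of values of $r_\DD$. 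By Carath\'{e}odory's theorem on the line, the pointwise supremum of all such two–point convex combinations is precisely the upper concave envelope of $r_\DD$ on $[0,1]$; denote it $\bar r_\DD$. Hence $R_\DD=\bar r_\DD$, which is concave by construction --- already yielding the final sentence of the lemma. It remains to identify $\bar r_\DD(q)$ with the stated integral.

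First I would relate the slope of $r_\DD$ to the Myerson virtual value. Where $F$ has a positive density $f$, differentiating $r_\DD(q)=q\,F^{-1}(1-q)$ via $q=1-F(p)$, $dq=-f(p)\,dp$ with $p=F^{-1}(1-q)$ gives $r_\DD'(q)=p-\tfrac{1-F(p)}{f(p)}=\varphi^{\mathrm{raw}}_\DD(p)$, the un-ironed virtual value at $p=F^{-1}(1-q)$. Myerson's ironing is, by definition, the operation that replaces $r_\DD$ by its concave envelope $\bar r_\DD$; the \emph{ironed} virtual value (the lemma's $\varphi_{ij}$ when $\DD=D_{ij}$) is defined so that, as a function of the quantile, $\bar\varphi_\DD(F^{-1}(1-q))=\bar r_\DD'(q)$ --- equivalently, $\bar\varphi_\DD$ averages $\varphi^{\mathrm{raw}}_\DD$ over the quantile intervals on which $\bar r_\DD$ is affine. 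Two standard facts drop out of this construction and will be used: $\bar r_\DD$ is concave, and $\bar\varphi_\DD$ is non-decreasing in the value $v$ (this monotonicity is the whole point of ironing).

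Next I would obtain the integral formula by a change of variables. For $q\in[0,1]$ substitute $u=1-F(x)$, so $x=F^{-1}(1-u)$, $dF(x)=-du$, with $x=F^{-1}(1-q)$ corresponding to $u=q$ and $x=H$ to $u=0$:
\[
\int_{F^{-1}(1-q)}^{H}\bar\varphi_\DD(x)\,dF(x)
=\int_0^q \bar\varphi_\DD\!\big(F^{-1}(1-u)\big)\,du
=\int_0^q \bar r_\DD'(u)\,du
=\bar r_\DD(q)-\bar r_\DD(0)
=\bar r_\DD(q)=R_\DD(q),
\]
using $\bar r_\DD(0)=r_\DD(0)=0$. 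Running the identical computation with $\hat F$ (and its ironed virtual value) in place of $F$ gives the second claimed identity, $R_{\hat\DD}(q)=\int_{\hat F^{-1}(1-q)}^{H}\hat\varphi_\DD(x)\,d\hat F(x)$. Finally, to recover concavity in the form the lemma phrases it --- ``from monotonicity of the ironed virtual value'' --- observe that the display also shows $R_\DD'(q)=\bar\varphi_\DD(F^{-1}(1-q))$, a composition of the non-decreasing map $\bar\varphi_\DD$ with the non-increasing map $q\mapsto F^{-1}(1-q)$, hence non-increasing; so $R_\DD$ is concave, and likewise $R_{\hat\DD}$.

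The only genuine care-point --- and the sole obstacle worth flagging --- is distributions that are not absolutely continuous (atoms of $F$, or flats where $f=0$): there one should run the entire argument in quantile space, working with the Lipschitz primitive $u\mapsto\int_0^u\bar\varphi_\DD(F^{-1}(1-t))\,dt$ and its almost-everywhere derivative, and with Lebesgue--Stieltjes integrals rather than with densities; all the displayed identities then hold verbatim, and $\bar\varphi_\DD$ is still the derivative (in quantile) of the concave envelope. Everything else is a direct unpacking of the definitions of the revenue curve and of Myerson's ironing, which is why the statement is labelled folklore.
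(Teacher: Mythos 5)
Your proof is correct, and it is the canonical argument for this statement. Note, however, that the paper labels the lemma ``Folklore'' and provides no proof of its own, so there is no in-paper argument to compare against. Your route --- recognizing $R_\DD$ as the upper concave envelope of the raw revenue curve $r_\DD(q)=q\cdot F^{-1}(1-q)$ (the optimization over $x,\ubar q,\bar q$ being exactly the two-point convex-combination characterization of the envelope), then identifying the envelope's derivative in quantile space with the ironed virtual value at $F^{-1}(1-q)$, and finally changing variables $u=1-F(x)$ to produce the integral --- is precisely the standard Myerson-ironing argument one would expect for this folklore identity. You correctly caught the paper's typo in the second integral (it should be against $d\hat F$, not $dF$), and you appropriately flagged the technical care needed for non--absolutely-continuous $F$, where the argument should be phrased in quantile space with the Lipschitz primitive rather than with densities. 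One minor stylistic point: the reduction to two-point combinations is better attributed to the fact that a boundary point of a convex hull in $\mathbb{R}^2$ lies in the convex hull of at most two extreme points (Carath\'eodory for the boundary), rather than ``Carath\'eodory on the line,'' but this does not affect correctness.
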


We provide an upper bound of the optimal revenue using $R_{D_{ij}}$ in the next Lemma. To do that, we first need the definition of the \emph{Single-Dimensional Copies Setting}.

\vspace{.1in}

\noindent\textbf{Single-Dimensional Copies Setting:} In the analysis for unit-demand bidders in~\cite{ChawlaHMS10, CaiDW16}, the optimal revenue is upper bounded by the optimal revenue in the single-dimensional copies setting defined in~\cite{ChawlaHMS10}. We use the same technique. We construct $nm$ agents, where agent $(i,j)$ has value $V_i(t_{ij})$ of being served with $t_{ij}\sim D_{ij}$, and we are only allow to use matchings, that is, for each $i$ at most one agent $(i,k)$ is served and for each $j$ at most one agent $(k,j)$ is served\footnote{This is exactly the copies setting used in~\cite{ChawlaHMS10}, if every bidder $i$ is unit-demand and has value $V_i(t_{ij})$ with type $t_i$. Notice that this unit-demand multi-dimensional setting is equivalent as adding an extra constraint, each buyer can purchase at most one item, to the original setting with subadditive bidders.}. 
Notice that this is a single-dimensional setting, as each agent's type is specified by a single number. Let $\copies$ be the optimal BIC revenue in this copies setting.

\begin{lemma}\label{lem:UB for UD rev}
For unit-demand bidders, there exists a collection of non-negative numbers $\{q_{ij}\}_{i\in[n], j\in[m]}$ satisfying $\sum_i q_{ij}\leq 1$ for all $j\in [m]$ and $\sum_j q_{ij}\leq 1$ for all $i\in [n]$, such that the optimal revenue $$\opt\leq 4\cdot \sum_{i,j} R_{D_{ij}}(q_{ij}).$$
\end{lemma}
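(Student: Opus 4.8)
The plan is to bound $\opt$ by passing to the single-dimensional copies setting and then applying an ex-ante relaxation there. Concretely, the first step is to quote the reduction of Chawla et al.~\cite{ChawlaHMS10} (see also~\cite{CaiDW16}): for unit-demand bidders whose valuations are independent across items, $\opt$ is at most $c_0\cdot\copies$ for an absolute constant $c_0$, where $\copies$ is the optimal BIC revenue of the copies setting described above — agent $(i,j)$ has value $V_i(t_{ij})$ with $t_{ij}\sim D_{ij}$, and the feasible allocations are the partial bipartite matchings between the $n$ bidder-slots and the $m$ items. This step uses only that each $D_i$ is a product distribution over items, which is our standing assumption, and I would treat it as a black box.

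The second step is an ex-ante relaxation of the copies setting. Since it is single-parameter, Myerson's theory gives that $\copies$ equals the largest expected ironed-virtual-welfare over monotone allocation rules supported on partial matchings; fix an optimal such rule and let $q_{ij}$ be the ex-ante probability it serves agent $(i,j)$. Because every realized allocation is a partial matching, $\sum_i q_{ij}\le 1$ for each $j$ and $\sum_j q_{ij}\le 1$ for each $i$, so $\{q_{ij}\}$ already satisfies the feasibility constraints in the statement. Moreover, the expected revenue the rule collects from agent $(i,j)$ is $\E\left[\varphi_{ij}(V_i(t_{ij}))\cdot\ind[(i,j)\text{ is served}]\right]$, where $\varphi_{ij}$ is the ironed virtual value function of $D_{ij}$; over all monotone serving rules for that agent with ex-ante service probability $q_{ij}$ this expectation is maximized by a single threshold rule, whose value is $\int_{F_{ij}^{-1}(1-q_{ij})}^{H}\varphi_{ij}(x)\,dF_{ij}(x)$, which by the folklore lemma preceding the statement equals $R_{D_{ij}}(q_{ij})$. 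Here concavity of $R_{D_{ij}}$, also from that lemma, is what lets us pass from ``service probability at most $q_{ij}$'' to ``service probability exactly $q_{ij}$''. Summing over all $nm$ agents gives $\copies\le\sum_{i,j}R_{D_{ij}}(q_{ij})$.

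Chaining the two steps yields $\opt\le c_0\cdot\sum_{i,j}R_{D_{ij}}(q_{ij})$ with $\{q_{ij}\}$ in the desired polytope, and tracking the constant through~\cite{ChawlaHMS10,CaiDW16} gives $c_0\le 4$, which is the claimed bound. I expect the only delicate point to be the second step — converting Myerson's ironed-virtual-value expression for the per-agent revenue into the two-price/concave-closure quantity that defines the revenue curve $R_{D_{ij}}$, and using concavity of $R_{D_{ij}}$ to identify ``served with probability $\le q_{ij}$'' with ``served with probability $=q_{ij}$''. Everything else is either a direct quotation (the reduction to the copies setting) or routine (the matching-polytope membership of $\{q_{ij}\}$, together with continuity/compactness if one instead phrases the conclusion via the maximizer of $\sum_{i,j}R_{D_{ij}}(q_{ij})$ over that polytope).
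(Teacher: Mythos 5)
Your proposal is correct and takes essentially the same route as the paper: pass to the single-dimensional copies setting (quoting $\opt \le 4\,\copies$ from~\cite{CaiDW16}), then bound $\copies$ by the sum of revenue curves evaluated at the ex-ante service probabilities of the optimal copies mechanism. The only difference is that the paper simply cites $\copies \le \sum_{i,j} R_{D_{ij}}(q_{ij})$ from Chawla et al.~\cite{ChawlaHMS10}, whereas you re-derive it via Myerson's ironed-virtual-welfare characterization and concavity of the revenue curve — a correct unpacking of the same step, not a different argument.
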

\begin{proof}
As shown in~\cite{CaiDW16}, $\opt\leq 4 \copies$. Let $q_{ij}$ be the ex-ante probability that agent $(i,j)$ is served in the optimal mechanism for the copies setting. Chawla et al.~\cite{ChawlaHMS10} showed that $\copies\leq \sum_{i,j} R_{D_{ij}}(q_{ij})$. Our statement follows from the two inequalities above.\end{proof}

Next, we consider a convex program (Figure~\ref{fig:CP unit demand}) and argue that the value of the optimal solution of this program is at least $\frac{1}{8}$ of the optimal revenue.
\begin{figure}[ht]
\begin{minipage}{\textwidth} 
\begin{align*}\label{prog:convex ud}
&\max \sum_{i,j} R_{D_{ij}}(q_{ij})\\
\textbf{s.t. }& \sum_i q_{ij}\leq \frac{1}{2}\qquad \text{ for all $j\in[m]$}\\
& \sum_j q_{ij}\leq \frac{1}{2}\qquad \text{ for all $i\in[n]$}\\
& q_{ij}\geq 0\qquad \text{ for all $i\in[n]$ and $j\in[m]$}
\end{align*}
\end{minipage}
\caption{A Convex Program for Unit-demand Bidders with Exact Distributions.}
\label{fig:CP unit demand}
\end{figure}

\begin{lemma}\label{lem:compare exact CP with opt}
	The optimal solution of convex program in Figure~\ref{fig:CP unit demand} is at least $\frac{\opt}{8}$.
\end{lemma}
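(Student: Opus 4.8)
The plan is to start from the bound of Lemma~\ref{lem:UB for UD rev}, which gives a collection of nonnegative numbers $\{q_{ij}\}$ with $\sum_i q_{ij} \le 1$ for every $j$ and $\sum_j q_{ij} \le 1$ for every $i$, such that $\opt \le 4 \sum_{i,j} R_{D_{ij}}(q_{ij})$. These $q_{ij}$ need not be feasible for the convex program in Figure~\ref{fig:CP unit demand}, since that program tightens the marginal constraints from $1$ to $\frac{1}{2}$. The natural fix is to feed in the halved point $q'_{ij} := q_{ij}/2$, which trivially satisfies $\sum_i q'_{ij} \le \frac{1}{2}$ and $\sum_j q'_{ij} \le \frac{1}{2}$ and is therefore feasible.

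It remains to argue that halving the arguments costs at most a factor of $2$ in the objective. For this I would invoke two properties of the revenue curve $R_{D_{ij}}$: it is concave (this is exactly the content of the Folklore Lemma, via the monotonicity of the ironed virtual value function), and it satisfies $R_{D_{ij}}(0) = 0$ (selling with ex-ante probability $0$ yields zero revenue; this is immediate from the definition of $R_{\DD}$ by taking $x = 1$, $\ubar q = 0$). Combining these, for any $q \in [0,1]$,
\[
R_{D_{ij}}(q/2) = R_{D_{ij}}\!\left(\tfrac{1}{2} q + \tfrac{1}{2}\cdot 0\right) \ \ge\ \tfrac{1}{2} R_{D_{ij}}(q) + \tfrac{1}{2} R_{D_{ij}}(0) \ =\ \tfrac{1}{2} R_{D_{ij}}(q).
\]
Summing over all $i,j$ gives $\sum_{i,j} R_{D_{ij}}(q'_{ij}) \ge \frac{1}{2}\sum_{i,j} R_{D_{ij}}(q_{ij}) \ge \frac{1}{2}\cdot \frac{\opt}{4} = \frac{\opt}{8}$. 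Since $\{q'_{ij}\}$ is feasible, the optimal value of the convex program is at least $\frac{\opt}{8}$, as claimed.

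There is no real obstacle here; the only points requiring a line of justification are the two structural facts about $R_{D_{ij}}$ (concavity, already established earlier, and $R_{D_{ij}}(0)=0$), after which the argument is just the concavity inequality above applied coordinatewise and the feasibility check for the scaled point. If one wants to be careful, one should also note that $q'_{ij} \in [0,1]$ so that $R_{D_{ij}}$ is defined there, which is immediate since $q_{ij} \le 1$.
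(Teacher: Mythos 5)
Your proof is correct and follows essentially the same approach as the paper: halve the $q_{ij}$ from Lemma~\ref{lem:UB for UD rev} to obtain a feasible point, then use concavity of $R_{D_{ij}}$ together with $R_{D_{ij}}(0)=0$ to lose at most a factor of $2$. Nothing to add.
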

\begin{proof}
	Let $\{q'_{ij}\}$ be the collection of nonnegative numbers in Lemma~\ref{lem:UB for UD rev}. Clearly, $\left\{\frac{q_{ij}'}{2}\right\}$ is a set of feasible solution for the convex program. Since $R_{D_{ij}}(\cdot)$ is concave, $R_{D_{ij}}\left(\frac{q_{ij}'}{2}\right)\geq \frac{R_{D_{ij}}(q'_{ij})}{2} + \frac{R_{D_{ij}}(0)}{2}=\frac{R_{D_{ij}}(q'_{ij})}{2}$. Therefore, $$\sum_{i,j} R_{D_{ij}}\left(\frac{q_{ij}'}{2}\right)\geq 
	\frac{1}{2}\cdot \sum_{i,j} R_{D_{ij}}(q'_{ij})\geq \frac{\opt}{8}.$$
\end{proof}

If we know all $F_{ij}$ exactly, we can solve the convex program (Figure~\ref{fig:CP unit demand}) and use the optimal solution to construct an SPM via an approach provided in~\cite{ChawlaHMS10,CaiDW16}. The constructed sequential posted mechanism has revenue at least $\frac{1}{4}$ of the optimal value of the convex program, which is at least $\frac{\opt}{32}$. Next, we show that with only access to $\hat{F}_{ij}$, we can essentially carry out the same approach. Consider a different convex program (Figure~\ref{fig:CP unit demand approximate dist}).
\begin{figure}[ht]
\begin{minipage}{\textwidth} 
\begin{align*}
&\max \sum_{i,j} R_{\hat{D}_{ij}}(q_{ij})\\
\textbf{s.t. }& \sum_i q_{ij}\leq \frac{1}{2} + n\cdot\epsilon\qquad \text{ for all $j\in[m]$}\\
& \sum_j q_{ij}\leq \frac{1}{2}+m\cdot\epsilon \qquad \text{ for all $i\in[n]$}\\
& q_{ij}\geq 0\qquad \text{ for all $i\in[n]$ and $j\in[m]$}
\end{align*}
\end{minipage}
\caption{A Convex Program for Unit-demand Bidders with Approximate Distributions.}
\label{fig:CP unit demand approximate dist}
\end{figure}

 Not that if the support size for all $\hat{D}_{ij}$ is upper bounded by some finite number $s$, the convex program above can be rewritten as a linear program with size $\poly(n,m,s)$.
In the following Lemma, we prove that the optimal values of the two convex programs above are close. 
\begin{lemma}\label{lem:UD compare the two CP}
	Let $\{{q}^*_{ij}\}_{i\in[n],j\in[m]}$ and $\{\hat{q}_{ij}\}_{i\in[n],j\in[m]}$ be the optimal solution of the convex program in Figure~\ref{fig:CP unit demand} and~\ref{fig:CP unit demand approximate dist} respectively.	$$\sum_{i,j}R_{\hat{D}_{ij}}(\hat{q}_{ij})\geq \sum_{i,j}{R}_{D_{ij}}({q}^*_{ij})-\epsilon\cdot mn H.$$
	\end{lemma}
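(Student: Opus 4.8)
The plan is to show that the optimal solution $\{q^*_{ij}\}$ of the convex program in Figure~\ref{fig:CP unit demand} can be perturbed into a feasible solution of the convex program in Figure~\ref{fig:CP unit demand approximate dist} that loses only $\epsilon\cdot mnH$ in objective value. The key observation is that $\{q^*_{ij}\}$ is already feasible for the relaxed constraints of the second program, since $\frac{1}{2}\le \frac{1}{2}+n\epsilon$ and $\frac{1}{2}\le\frac{1}{2}+m\epsilon$; hence it suffices to bound $\sum_{i,j}R_{D_{ij}}(q^*_{ij})-\sum_{i,j}R_{\hat D_{ij}}(q^*_{ij})$ from above by $\epsilon\cdot mnH$, because then
$$\sum_{i,j}R_{\hat D_{ij}}(\hat q_{ij})\ge \sum_{i,j}R_{\hat D_{ij}}(q^*_{ij})\ge \sum_{i,j}R_{D_{ij}}(q^*_{ij})-\epsilon\cdot mnH=\sum_{i,j}R_{D_{ij}}(q^*_{ij})-\epsilon\cdot mnH,$$
where the first inequality uses optimality of $\hat q$ for the second program. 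So the whole lemma reduces to a per-coordinate claim: for each $i,j$ and each $q\in[0,1]$, $\big|R_{D_{ij}}(q)-R_{\hat D_{ij}}(q)\big|\le \epsilon\cdot H$.

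The main step is therefore to prove this pointwise closeness of revenue curves under Kolmogorov distance. First I would unpack the definition of the revenue curve: $R_{\DD}(q)$ is the concave hull of the ``quantile revenue'' function $q\mapsto q\cdot F^{-1}(1-q)$, where $F^{-1}(1-q)=\sup\{x: \Pr_{v\sim\DD}[v\ge x]\ge q\}$. Since taking concave hulls is monotone and $1$-Lipschitz-preserving in the sup-norm sense (if two functions are within $\eta$ pointwise, so are their concave hulls), it suffices to show that the raw quantile revenue functions $q\mapsto q\cdot F_{ij}^{-1}(1-q)$ and $q\mapsto q\cdot \hat F_{ij}^{-1}(1-q)$ are within $\epsilon H$ pointwise. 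For a fixed quantile $q$, write $p=1-q$; the claim becomes $q\cdot|F_{ij}^{-1}(p)-\hat F_{ij}^{-1}(p)|$-type bound, but that is not quite right because the generalized inverses can be far apart even when the cdfs are close (flat regions). The correct route is to bound $|q\cdot F^{-1}(1-q)-q\cdot \hat F^{-1}(1-q)|$ directly: let $x=F_{ij}^{-1}(1-q)$, so $\Pr_{D_{ij}}[v\ge x]\ge q$ and hence $\Pr_{\hat D_{ij}}[v\ge x]\ge q-\epsilon$, which gives $\hat F_{ij}^{-1}(1-(q-\epsilon))\ge x$, i.e. $(q-\epsilon)\hat F_{ij}^{-1}(1-(q-\epsilon))\ge (q-\epsilon)x\ge qx-\epsilon H$ (using $x\le H$). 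Together with the symmetric argument and the fact that the quantile-revenue function is already what defines the revenue curve, one gets $R_{\hat D_{ij}}(q-\epsilon)\ge R_{D_{ij}}(q)-\epsilon H$; combined with monotonicity/concavity of $R_{\hat D_{ij}}$ and $R_{\hat D_{ij}}(q)\ge R_{\hat D_{ij}}(q-\epsilon)\cdot \tfrac{?}{}$—here one must be careful—actually the cleanest finish is: since $R_{\hat D_{ij}}$ is concave with $R_{\hat D_{ij}}(0)\ge 0$, it is non-decreasing on an initial segment, or more robustly, $R_{\hat D_{ij}}(q)\ge \frac{q}{q}\cdot\big(\text{something}\big)$; I would instead simply carry the $\epsilon$ shift through and absorb it, noting that the second convex program's constraints were deliberately relaxed by exactly $n\epsilon$ and $m\epsilon$ precisely to accommodate this shift in the quantile argument.

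The main obstacle I anticipate is exactly this last point: handling the generalized inverse and the shift in the quantile argument cleanly. A naive claim ``$R_{D_{ij}}$ and $R_{\hat D_{ij}}$ are within $\epsilon H$ at the \emph{same} quantile $q$'' may be slightly false on flat pieces of the cdf, which is presumably why the relaxed program in Figure~\ref{fig:CP unit demand approximate dist} has the slack $n\epsilon$ and $m\epsilon$ — so that one can evaluate $R_{\hat D_{ij}}$ at a shifted quantile $q^*_{ij}+(\text{small})$ while staying feasible. I would make the bookkeeping precise by taking, for the feasible point of the second program, $\hat q_{ij}$ to be a suitably shifted version of $q^*_{ij}$ (shifted by at most $\epsilon$ per coordinate, which is why the row/column sums grow by at most $m\epsilon$ and $n\epsilon$ respectively), and then the objective comparison follows coordinate-wise from the quantile-revenue argument above, summing $mn$ terms each losing at most $\epsilon H$. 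I expect the rest — monotonicity of concave hulls, the union of $mn$ per-coordinate bounds — to be routine.
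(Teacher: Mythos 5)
The pointwise bound $\bigl|R_{D_{ij}}(q)-R_{\hat D_{ij}}(q)\bigr|\le\epsilon H$ that you use for your first reduction is false. Take $D_{ij}$ to put mass $1/2$ at $H$ and $1/2$ at $0$, and $\hat D_{ij}$ to put mass $1/2-\epsilon$ at $H$ and $1/2+\epsilon$ at $0$; then $\|D_{ij}-\hat D_{ij}\|_K=\epsilon$, $R_{D_{ij}}(1/2)=H/2$, while a direct computation (optimal decomposition $\ubar q=1/2-\epsilon$, $\bar q=1$) gives $R_{\hat D_{ij}}(1/2)=\frac{(1/2)(1/2-\epsilon)}{1/2+\epsilon}\,H$, so the gap at $q=1/2$ is $\frac{\epsilon}{1/2+\epsilon}H\approx 2\epsilon H$. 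Since $q^*_{ij}=1/2$ is attainable in the first CP, this route cannot yield the stated constant, only a weaker $2\epsilon m n H$. You sense this and pivot to the right idea — perturb $q^*_{ij}$ coordinate-wise into a feasible point of the relaxed CP, losing at most $\epsilon H$ per coordinate — but the pivot as written has a sign confusion: your quantile inequality $(q-\epsilon)\hat F^{-1}(1-(q-\epsilon))\ge q F^{-1}(1-q)-\epsilon H$ shifts the quantile \emph{down}, which makes row/column sums shrink (so no CP slack is needed for that version), yet you then justify the $n\epsilon,m\epsilon$ slack by saying the sums grow.

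The paper's proof goes through prices rather than quantiles: it takes the decomposition $R_{D_{ij}}(q^*_{ij})=x_{ij}\ubar q^*_{ij}F_{ij}^{-1}(1-\ubar q^*_{ij})+(1-x_{ij})\bar q^*_{ij}F_{ij}^{-1}(1-\bar q^*_{ij})$, fixes the prices $\ubar p_{ij}=F_{ij}^{-1}(1-\ubar q^*_{ij})$ and $\bar p_{ij}=F_{ij}^{-1}(1-\bar q^*_{ij})$, and defines $q'_{ij}=x_{ij}\bigl(1-\hat F_{ij}(\ubar p_{ij})\bigr)+(1-x_{ij})\bigl(1-\hat F_{ij}(\bar p_{ij})\bigr)$; Kolmogorov closeness gives $|q'_{ij}-q^*_{ij}|\le\epsilon$ (it can go either way, which is exactly why the relaxed CP has the additive slack) and $R_{\hat D_{ij}}(q'_{ij})\ge R_{D_{ij}}(q^*_{ij})-\epsilon H$, from which the lemma follows by summing. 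Your down-shift variant can also be made to close — set $q'_{ij}=x_{ij}(\ubar q^*_{ij}-\epsilon)^++(1-x_{ij})(\bar q^*_{ij}-\epsilon)^+\le q^*_{ij}$ and apply your quantile inequality to $\ubar q^*_{ij}$ and $\bar q^*_{ij}$ separately before taking the convex combination — but as written the reduction via the pointwise bound is wrong and the corrected argument is not carried through.
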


\begin{proof}
	We first fix some notations. For any bidder $i$ and item $j$, let $\ubar{q}^*_{ij}, \bar{q}^*_{ij}$ and $x_{ij}$ $\in[0,1]$ be the numbers satisfy that $x_{ij}\cdot \ubar{q}^*_{ij}\cdot {F_{ij}}^{-1}(1-\ubar{q}^*_{ij}) +(1-x_{ij})\cdot \bar{q}^*_{ij}\cdot {F_{ij}}^{-1}(1-\bar{q}^*_{ij})=R_{D_{ij}}(q^*_{ij})$ and $x_{ij}\cdot \ubar{q}^*_{ij} +(1-x_{ij})\cdot \bar{q}^*_{ij}=q^*_{ij}$. Let $\ubar{p}_{ij} = {F_{ij}}^{-1}(1-\ubar{q}^*_{ij})$, $\bar{p}_{ij} = {F_{ij}}^{-1}(1-\bar{q}^*_{ij})$, and  $q'_{ij} = x_{ij}\cdot \left(1-\hat{F}_{ij}(\ubar{p}_{ij})\right)+(1-x_{ij})\cdot \left(1-\hat{F}_{ij}(\bar{p}_{ij})\right)$. By the definition of $R_{\hat{D}_{ij}}(\cdot)$, \begin{equation}\label{eq:compare revenue curve}
		R_{\hat{D}_{ij}}(q_{ij}')\geq x_{ij}\cdot \left(1-\hat{F}_{ij}(\ubar{p}_{ij})\right)\cdot \ubar{p}_{ij}+(1-x_{ij})\cdot \left(1-\hat{F}_{ij}(\bar{p}_{ij})\right)\cdot \bar{p}_{ij}
	\end{equation} Since $||\hat{D}_{ij}-D_{ij}||_K\leq \epsilon$, $\hat{F}_{ij}(\ubar{p}_{ij})\in [1-\ubar{q}^*_{ij}-\epsilon, 1-\ubar{q}^*_{ij}+\epsilon]$ and $\hat{F}_{ij}(\bar{p}_{ij}) \in [1-\bar{q}^*_{ij}-\epsilon, 1-\bar{q}^*_{ij}+\epsilon]$. Hence, the RHS of inequality~(\ref{eq:compare revenue curve}) is greater than $R_{D_{ij}}(q_{ij}^*)-\epsilon\cdot H$. Therefore, $R_{\hat{D}_{ij}}(q_{ij}')\geq R_{D_{ij}}(q_{ij}^*)-\epsilon\cdot H$.
	
	Next, we argue that $\{q'_{ij}\}_{i\in[n],j\in[m]}$ is a feasible solution for the convex program in Figure~\ref{fig:CP unit demand approximate dist}. Since $1-\hat{F}_{ij}(\ubar{p}_{ij})\leq \ubar{q}^*_{ij}+\epsilon$ and $1-\hat{F}_{ij}(\bar{p}_{ij})\leq \bar{q}^*_{ij}+\epsilon$, $q'_{ij}\leq q^*_{ij}+\epsilon$. Thus, $\sum_i q'_{ij} \leq \sum_i q^*_{ij} + n\cdot \epsilon \leq \frac{1}{2}+n\cdot \epsilon$ for all $j\in[m]$. Similarly, we can prove $\sum_j q'_{ij}\leq \frac{1}{2}+m\cdot \epsilon$ for all $i\in[n]$. As $\{\hat{q}_{ij}\}_{i\in[n],j\in[m]}$ is the optimal solution for the second convex program, $\sum_{i,j}R_{\hat{D}_{ij}}(\hat{q}_{ij})\geq \sum_{i,j}R_{\hat{D}_{ij}}({q}'_{ij})\geq \sum_{i,j}{R}_{D_{ij}}({q}^*_{ij})-\epsilon\cdot mn H$.
 \end{proof}

Finally, we show how to use the optimal solution of the convex program in Figure~\ref{fig:CP unit demand approximate dist} to construct an SPM that approximates the optimal revenue well. We first provide a general transformation that turns any approximately feasible solution of convex program in Figure~\ref{fig:CP unit demand} to an SPM mechanism.

\begin{lemma}\label{lem:prices to SPM}
	 For any distribution $\DD=\times_{i\in[n], j\in[m]}\DD_{ij}$, given a collection of independent random variables $\{p_{ij}\}_{i\in[n],j\in[m]}$ such that $$\sum_{i\in[n]} \Pr_{p_{ij}, t_{ij}\sim \DD_{ij}}\left[t_{ij}\geq p_{ij}\right]\leq 1-\eta_1 \text{,\quad for all $j\in[m]$}$$ and $$\sum_{j\in[m]} \Pr_{p_{ij}, t_{ij}\sim \DD_{ij}}\left[t_{ij}\geq p_{ij}\right]\leq 1-\eta_2 \text{,\quad for all $i\in[n]$},$$ we can construct in polynomial time a randomized SPM such that the revenue under $\DD$ is at least $$\eta_1 \eta_2\cdot \sum_{i, j} \E_{p_{ij}}\left[p_{ij}\cdot \Pr_{t_{ij}\sim \DD_{ij}}\left[t_{ij}\geq p_{ij}\right]\right].$$
	 \end{lemma}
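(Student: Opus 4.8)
The plan is to exhibit an explicit randomized SPM and to bound its revenue pair by pair. The mechanism I would use visits the bidders in the fixed order $1,2,\dots,n$; when bidder $i$ arrives, she is shown the current set $S$ of still-available items, each item $j\in S$ is posted at price $p_{ij}$ (a fresh independent draw of the given random variable), and the (unit-demand) bidder takes her favorite available item, i.e.\ the $j\in S$ maximizing $t_{ij}-p_{ij}$ provided this is nonnegative. Write $q_{ij}:=\Pr_{p_{ij},\,t_{ij}\sim\DD_{ij}}[t_{ij}\ge p_{ij}]$, so that the hypotheses read $\sum_{i}q_{ij}\le 1-\eta_1$ for every $j$ and $\sum_{j}q_{ij}\le 1-\eta_2$ for every $i$, and write $r_{ij}:=\E_{p_{ij}}\!\big[p_{ij}\cdot\Pr_{t_{ij}\sim\DD_{ij}}[t_{ij}\ge p_{ij}]\big]$ for the quantity whose $\eta_1\eta_2$-fraction we must collect. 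Since the mechanism only samples from the given price distributions, it is clearly constructible in polynomial time.

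First I would establish an item-availability bound: a bidder $i'$ can purchase item $j$ only if $t_{i'j}\ge p_{i'j}$, an event of probability exactly $q_{i'j}$ because $p_{i'j}$ is an independent draw, so a union bound over $i'<i$ shows that item $j$ is still available when bidder $i$ arrives with probability at least $1-\sum_{i'}q_{i'j}\ge\eta_1$.

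Next I would estimate the expected revenue extracted from a fixed pair $(i,j)$. Let $A_i\subseteq[m]$ be the random set of items available when bidder $i$ arrives; $A_i$ is determined by the first $i-1$ bidders and is hence independent of bidder $i$'s price vector $(p_{i1},\dots,p_{im})$ and type vector $(t_{i1},\dots,t_{im})$, which are themselves mutually independent across items. If $j\in A_i$, if $t_{ij}\ge p_{ij}$, and if every other available item $k$ has $t_{ik}<p_{ik}$, then $j$ is the unique available item giving bidder $i$ nonnegative utility, so she buys $j$ and pays $p_{ij}$. Conditioning on $A_i$ and factoring the resulting expectation over items, this yields
$$
\E\!\left[p_{ij}\cdot\ind[\,i\text{ buys }j\,]\right]\;\ge\;\E_{A_i}\!\Big[\ind[\,j\in A_i\,]\cdot r_{ij}\cdot\!\!\!\prod_{k\in A_i,\,k\ne j}\!\!\!(1-q_{ik})\Big]\;\ge\;r_{ij}\cdot\Pr[\,j\in A_i\,]\cdot\!\!\prod_{k\ne j}(1-q_{ik}),
$$
where the last step drops the restriction $k\in A_i$ — legitimate since each factor $1-q_{ik}\le 1$ — and pulls the constant $r_{ij}$ out of the expectation.

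Finally I would bound the two surviving factors: $\Pr[j\in A_i]\ge\eta_1$ by the first step, and $\prod_{k\ne j}(1-q_{ik})\ge 1-\sum_{k\ne j}q_{ik}\ge 1-\sum_{k}q_{ik}\ge\eta_2$; hence the expected revenue from $(i,j)$ is at least $\eta_1\eta_2\,r_{ij}$, and summing over all $i$ and $j$ gives total revenue at least $\eta_1\eta_2\sum_{i,j}r_{ij}$. The delicate point — essentially the whole content of the argument — is the independence bookkeeping in the third paragraph: one must condition on $A_i$ first so that the events ``item $j$ is available'', ``bidder $i$ wants $j$'', and ``bidder $i$ wants no other available item'' decouple into a product, and one must check that replacing $\prod_{k\in A_i}$ by $\prod_{k\in[m]}$ goes in the lower-bounding direction. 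Everything else is two union bounds.
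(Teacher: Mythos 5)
Your proof is correct and follows essentially the same route as the paper: restrict to a sub-event where bidder $i$ can afford item $j$, cannot afford any other item, and item $j$ is still available, use independence across disjoint sets of random variables to factor the expected payment, and control the availability and "wants only $j$" probabilities by union bounds. The only cosmetic difference is that the paper directly posits the order-independent sub-event "no bidder $\ell\neq i$ affords $j$ and $i$ affords no $k\neq j$," whereas you condition on the random available set $A_i$ and then drop the restriction $k\in A_i$ — the same union bounds and the same $\eta_1\eta_2 r_{ij}$ per-pair bound come out either way.
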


\begin{proof}
	Consider a randomized SPM that sells item $j$ to bidder $i$ at price $p_{ij}$. Notice that bidder $i$ purchases exactly item $j$ if all of the following three conditions hold: (i) for all bidders $\ell\neq i$, $t_{\ell j}$ is smaller than the corresponding price $p_{\ell j}$,  (ii) for all items $k\neq j$, $t_{ik}$ is smaller than the corresponding price $p_{ik}$, and (iii) $t_{ij}$ is greater than the corresponding price $p_{ij}$. These three conditions are independent from each other. The first condition holds with probability at least $1-\sum_{\ell\neq i} \Pr_{p_{\ell j}, t_{\ell j}\sim \DD_{\ell j}}\left[t_{\ell j}\geq p_{\ell j}\right]\geq \eta_1$. The second condition holds with probability at least $1-\sum_{k \neq j} \Pr_{p_{ik}, t_{ik}\sim \DD_{ik}}\left[t_{ik}\geq p_{ik}\right]\geq \eta_2$. When the first two conditions hold, bidder $i$ purchases item $j$ whenever she can afford it. Her expected payment is $\E_{p_{ij}}\left[p_{ij}\cdot \Pr_{t_{ij}\sim \DD_{ij}}\left[t_{ij}\geq p_{ij}\right]\right]$. Hence, the expected revenue for selling item $j$ to bidder $i$ is at least $\eta_1\eta_2\cdot \E_{p_{ij}}\left[p_{ij}\cdot \Pr_{t_{ij}\sim \DD_{ij}}\left[t_{ij}\geq p_{ij}\right]\right]$ and the total expected revenue is at least $\eta_1 \eta_2\cdot \sum_{i, j} \E_{p_{ij}}\left[p_{ij}\cdot \Pr_{t_{ij}\sim \DD_{ij}}\left[t_{ij}\geq p_{ij}\right]\right]$.
	
	\end{proof}

\begin{lemma}\label{lem:convert approx CP to SPM}
	Given any feasible solution $\{{q}_{ij}\}_{i\in[n],j\in[m]}$ of the convex program in Figure~\ref{fig:CP unit demand approximate dist}, we can construct a (randomized) SPM in polynomial time such that its revenue under $D$ is at least $\left(\frac{1}{4}-(n+m)\cdot \epsilon\right)\cdot \left( \sum_{i,j}R_{\hat{D}_{ij}}(q_{ij})-\epsilon\cdot nmH\right)$. 
\end{lemma}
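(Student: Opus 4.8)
The plan is to turn the feasible solution $\{q_{ij}\}$ of the convex program in Figure~\ref{fig:CP unit demand approximate dist} into a collection of (randomized) item prices and then feed these prices into Lemma~\ref{lem:prices to SPM}. The one subtlety is that the convex program is written in terms of $\hat D$, whereas the revenue guarantee we want is under the true distribution $D$; this gap is closed using the hypothesis $\|D_{ij}-\hat D_{ij}\|_K\le\epsilon$ together with the fact that all prices we construct lie in $[0,H]$.

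First, for each bidder $i$ and item $j$ I would unfold the definition of $R_{\hat D_{ij}}(q_{ij})$: let $x_{ij},\ubar q_{ij},\bar q_{ij}\in[0,1]$ with $x_{ij}\ubar q_{ij}+(1-x_{ij})\bar q_{ij}=q_{ij}$ attain the maximum in the definition of $R_{\hat D_{ij}}(\cdot)$, and let $p_{ij}$ be the random price equal to $\hat F_{ij}^{-1}(1-\ubar q_{ij})$ with probability $x_{ij}$ and to $\hat F_{ij}^{-1}(1-\bar q_{ij})$ with probability $1-x_{ij}$. This is the same reduction used in the proof of Lemma~\ref{lem:UD compare the two CP} and in~\cite{ChawlaHMS10,CaiDW16}; it yields a two-point random variable supported on $[0,H]$, computable in polynomial time from $\hat D$, with $\E_{p_{ij}}\big[\Pr_{t_{ij}\sim\hat D_{ij}}[t_{ij}\ge p_{ij}]\big]\le q_{ij}$ and $\E_{p_{ij}}\big[p_{ij}\cdot\Pr_{t_{ij}\sim\hat D_{ij}}[t_{ij}\ge p_{ij}]\big]=R_{\hat D_{ij}}(q_{ij})$.

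Next I would transfer these bounds to $D$. Since $\|D_{ij}-\hat D_{ij}\|_K\le\epsilon$, for every fixed threshold $p$ we have $|\Pr_{t_{ij}\sim D_{ij}}[t_{ij}\ge p]-\Pr_{t_{ij}\sim\hat D_{ij}}[t_{ij}\ge p]|\le\epsilon$, so $\E_{p_{ij}}\big[\Pr_{t_{ij}\sim D_{ij}}[t_{ij}\ge p_{ij}]\big]\le q_{ij}+\epsilon$. Summing over $i$ and invoking the first block of constraints of the convex program gives $\sum_i\E_{p_{ij}}[\Pr_{t_{ij}\sim D_{ij}}[t_{ij}\ge p_{ij}]]\le\tfrac12+2n\epsilon$ for each item $j$, and symmetrically $\le\tfrac12+2m\epsilon$ for each bidder $i$. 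Hence Lemma~\ref{lem:prices to SPM} applies with $\eta_1=\tfrac12-2n\epsilon$ and $\eta_2=\tfrac12-2m\epsilon$, producing in polynomial time a randomized SPM whose revenue under $D$ is at least $\eta_1\eta_2\sum_{i,j}\E_{p_{ij}}\big[p_{ij}\cdot\Pr_{t_{ij}\sim D_{ij}}[t_{ij}\ge p_{ij}]\big]$. Applying the Kolmogorov bound and $p_{ij}\le H$ once more, $\E_{p_{ij}}[p_{ij}\Pr_{D_{ij}}[t_{ij}\ge p_{ij}]]\ge R_{\hat D_{ij}}(q_{ij})-\epsilon H$, so the revenue is at least $\eta_1\eta_2\big(\sum_{i,j}R_{\hat D_{ij}}(q_{ij})-\epsilon nmH\big)$; since $\eta_1\eta_2=\tfrac14-(n+m)\epsilon+4nm\epsilon^2\ge\tfrac14-(n+m)\epsilon$, this is at least the claimed quantity (in the relevant regime $\epsilon=O(1/(m+n))$ all of these quantities are nonnegative; outside it the claimed bound is nonpositive and the statement is trivial since revenue is nonnegative).

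The only genuinely delicate point is the first step: realizing the value $R_{\hat D_{ij}}(q_{ij})$ by an explicit two-point price distribution while handling point masses/ties in $\hat F_{ij}$ correctly, so that the induced probability of sale is at most $q_{ij}$ (which is what makes the resulting prices feasible for the hypotheses of Lemma~\ref{lem:prices to SPM}) and the induced revenue is exactly $R_{\hat D_{ij}}(q_{ij})$. I would borrow this verbatim from the Chawla et al.\ / Cai et al.\ analysis rather than re-derive it; everything afterwards is just propagating the $\epsilon$-errors through the Kolmogorov distance.
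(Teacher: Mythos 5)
Your proposal is correct and follows essentially the same route as the paper's proof: unfold $R_{\hat D_{ij}}(q_{ij})$ into the two-point random price $p_{ij}$, verify the hypotheses of Lemma~\ref{lem:prices to SPM} with $\eta_1=\tfrac12-2n\epsilon$ and $\eta_2=\tfrac12-2m\epsilon$ via the Kolmogorov bound and the convex-program constraints, and then transfer the per-pair revenue from $\hat D_{ij}$ to $D_{ij}$ losing $\epsilon H$ per term. The only difference is presentational (your explicit remark on tie-breaking at point masses and on the trivial regime of large $\epsilon$), not substantive.
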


\begin{proof}
	We first fix some notations. For any bidder $i$ and item $j$, let $\ubar{q}_{ij}, \bar{q}_{ij}$ and $x_{ij}$ $\in[0,1]$ be the numbers satisfying $x_{ij}\cdot \ubar{q}_{ij}\cdot \hat{F}_{ij}^{-1}(1-\ubar{q}_{ij}) +(1-x_{ij})\cdot \bar{q}_{ij}\cdot \hat{F}_{ij}^{-1}(1-\bar{q}_{ij})=R_{\hat{D}_{ij}}(q_{ij})$ and $x_{ij}\cdot \ubar{q}_{ij} +(1-x_{ij})\cdot \bar{q}_{ij}=q_{ij}$. We use $p_{ij}$ to denote a random variable that is $\ubar{p}_{ij} = \hat{F}_{ij}^{-1}(1-\ubar{q}_{ij})$ with probability $x_{ij}$ and $\bar{p}_{ij} = \hat{F}_{ij}^{-1}(1-\bar{q}_{ij})$ with probability $1-x_{ij}$.  
	
	Next, we construct a randomized SPM based on $\{p_{ij}\}_{i\in[n],j\in[m]}$ according to Lemma~\ref{lem:prices to SPM}. Note that $$\sum_{i\in[n]} \Pr_{p_{ij}, t_{ij}\sim D_{ij}}\left[t_{ij}\geq p_{ij}\right]\leq \sum_{i\in[n]} \left(\Pr_{p_{ij}, t_{ij}\sim \hat{D}_{ij}}\left[t_{ij}\geq p_{ij}\right]+\epsilon\right)=\sum_{i\in[n]} q_{ij}+n\epsilon\leq \frac{1}{2}+2n\epsilon$$ for all item $j$, and $$\sum_{j\in[m]} \Pr_{p_{ij}, t_{ij}\sim D_{ij}}\left[t_{ij}\geq p_{ij}\right]\leq \sum_{j\in[m]} \left(\Pr_{p_{ij}, t_{ij}\sim \hat{D}_{ij}}\left[t_{ij}\geq p_{ij}\right]+\epsilon\right)=\sum_{i\in[m]} q_{ij}+m\epsilon\leq \frac{1}{2}+2m\epsilon$$ for all bidder $i$. Hence, we can construct in polynomial time a randomized SPM with revenue at least \begin{align*} &\left(\frac{1}{2}-2n\epsilon\right)\left(\frac{1}{2}-2m\epsilon\right)\cdot \sum_{i, j} \E_{p_{ij}}\left[p_{ij}\cdot \Pr_{t_{ij}\sim D_{ij}}\left[t_{ij}\geq p_{ij}\right]\right]\\
 	\geq &  \left(\frac{1}{4}-(n+m)\epsilon\right)\sum_{i, j} \E_{p_{ij}}\left[p_{ij}\cdot \left(\Pr_{t_{ij}\sim \hat{D}_{ij}}\left[t_{ij}\geq p_{ij}\right]-\epsilon\right)\right]\\
 	\geq & \left(\frac{1}{4}-(n+m)\epsilon\right)\sum_{i, j} \left(R_{\hat{D}_{ij}}(q_{ij})-\epsilon\cdot nmH\right)
 \end{align*}
	The first inequality is because $\left|\left|D_{ij}-\hat{D}_{ij}\right|\right|_K\leq \epsilon$, and the second inequality is because $p_{ij}$ is upper bounded by $H$ and $\E_{p_{ij}}\left[p_{ij}\cdot \Pr_{t_{ij}\sim \hat{D}_{ij}}\left[t_{ij}\geq p_{ij}\right]\right]=R_{\hat{D}_{ij}}(q_{ij})$ by the definition of $p_{ij}$.

\notshow{	 \begin{align*}
	&x_{ij}\cdot \ubar{p}_{ij}\cdot \left(1-F_{ij}(\ubar{p}_{ij})\right)+(1-x_{ij})\cdot\bar{p}_{ij}\cdot \left(1-F_{ij}(\bar{p}_{ij})\right)\\
	\geq &x_{ij}\cdot \ubar{p}_{ij}\cdot (\ubar{q}_{ij}-\epsilon)+(1-x_{ij})\cdot\bar{p}_{ij}\cdot (\bar{q}_{ij}-\epsilon)\geq R_{\hat{D}_{ij}}(q_{ij})-\epsilon\cdot H.\end{align*} 
	Therefore, the expected revenue of the randomized SPM is at least $$\left(\frac{1}{2}-2n\cdot \epsilon\right)\cdot \left(\frac{1}{2}-2m\cdot \epsilon\right)\cdot \left( \sum_{i,j}R_{\hat{D}_{ij}}(q_{ij})-\epsilon\cdot nmH\right)\geq \left(\frac{1}{4}-(n+m)\cdot \epsilon\right)\cdot \left( \sum_{i,j}R_{\hat{D}_{ij}}(q_{ij})-\epsilon\cdot nmH\right).$$} 
	\end{proof}

\begin{theorem}\label{thm:UD Kolmogorov}
For unit-demand bidders, given distributions $\hat{D}_{ij}$ where $\left|\left|\hat{D}_{ij}-D_{ij}\right|\right|_K\leq \epsilon$ for all $i\in[n]$ and $j\in[m]$, there is a polynomial time algorithm that constructs a randomized SPM whose revenue under $D$ is at least $\left(\frac{1}{4}-(n+m)\cdot \epsilon\right)\cdot\left(\frac{\opt}{8}-2\epsilon\cdot mnH\right)$.
\end{theorem}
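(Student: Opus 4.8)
The plan is simply to chain the four lemmas developed above. The algorithm is: using direct access to $\hat D=\times_{i,j}\hat D_{ij}$, evaluate the (concave) revenue curves $R_{\hat D_{ij}}$ and solve the convex program in Figure~\ref{fig:CP unit demand approximate dist} to obtain an optimal solution $\{\hat q_{ij}\}_{i,j}$; then feed $\{\hat q_{ij}\}$ into the construction of Lemma~\ref{lem:convert approx CP to SPM} to produce the randomized SPM. Because $R_{\hat D_{ij}}$ is concave and computable from $\hat D_{ij}$ (and, when the supports are finite of size $s$, the program is literally a linear program of size $\poly(n,m,s)$), the optimization step runs in polynomial time, and the conversion in Lemma~\ref{lem:convert approx CP to SPM} is polynomial time as well, so the whole algorithm is polynomial time.

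For the revenue guarantee I would combine the inequalities in the natural order. First, since $\{\hat q_{ij}\}$ is a feasible solution of the approximate convex program, Lemma~\ref{lem:convert approx CP to SPM} gives that the constructed SPM has revenue under $D$ at least $\left(\frac14-(n+m)\epsilon\right)\left(\sum_{i,j} R_{\hat D_{ij}}(\hat q_{ij})-\epsilon\cdot nmH\right)$. Second, because $\{\hat q_{ij}\}$ is the \emph{optimal} solution of the program in Figure~\ref{fig:CP unit demand approximate dist}, Lemma~\ref{lem:UD compare the two CP} yields $\sum_{i,j} R_{\hat D_{ij}}(\hat q_{ij})\geq \sum_{i,j} R_{D_{ij}}(q^*_{ij})-\epsilon\cdot mnH$, where $\{q^*_{ij}\}$ is the optimal solution of the exact program in Figure~\ref{fig:CP unit demand}. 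Third, Lemma~\ref{lem:compare exact CP with opt} gives $\sum_{i,j} R_{D_{ij}}(q^*_{ij})\geq \opt/8$. Substituting back, the revenue is at least $\left(\frac14-(n+m)\epsilon\right)\left(\frac{\opt}{8}-2\epsilon\cdot mnH\right)$, which is exactly the claimed bound.

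The only points requiring care — and these are already encapsulated in the statements of the lemmas being invoked — are the bookkeeping of the constraint relaxations forced by the Kolmogorov bound. Since $||\hat D_{ij}-D_{ij}||_K\leq\epsilon$, each ex-ante service probability shifts by at most $\epsilon$ per coordinate when moving between $D$ and $\hat D$; this is why the feasible region of the approximate program must be inflated to $\frac12+n\epsilon$ and $\frac12+m\epsilon$ (so that the solution $\{q'_{ij}\}$ derived from $\{q^*_{ij}\}$ in the proof of Lemma~\ref{lem:UD compare the two CP} remains feasible), and why a further $\epsilon$ slack appears in Lemma~\ref{lem:convert approx CP to SPM} when the prices are re-evaluated against $D$, producing column/row sums at most $\frac12+2n\epsilon$ and $\frac12+2m\epsilon$ and hence the factor $\eta_1\eta_2=\left(\frac12-2n\epsilon\right)\left(\frac12-2m\epsilon\right)\geq\frac14-(n+m)\epsilon$. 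I expect no genuine obstacle in this theorem per se: all the substance is in Lemmas~\ref{lem:UD compare the two CP} and~\ref{lem:convert approx CP to SPM} together with the $\opt\leq 8\cdot(\text{value of Figure~\ref{fig:CP unit demand}})$ bound, and the present statement is just their composition.
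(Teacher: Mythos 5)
Your proof is correct and follows exactly the same route as the paper: solve the approximate convex program of Figure~\ref{fig:CP unit demand approximate dist}, convert its optimum to a randomized SPM via Lemma~\ref{lem:convert approx CP to SPM}, and chain Lemmas~\ref{lem:compare exact CP with opt}, \ref{lem:UD compare the two CP}, and \ref{lem:convert approx CP to SPM} to obtain the stated bound. Your bookkeeping of the $\epsilon$-relaxations matches the paper's lemma statements, so this is essentially a somewhat more detailed write-up of the paper's own argument.
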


\begin{proof}
	Our algorithm first computes the optimal solution $\{\hat{q}_{ij}\}_{i\in[n],j\in[m]}$ for the convex program in Figure~\ref{fig:CP unit demand approximate dist}, then  constructs a randomized SPM based on $\{\hat{q}_{ij}\}_{i\in[n],j\in[m]}$ using Lemma~\ref{lem:convert approx CP to SPM}. It is not hard to see that our algorithm runs in polynomial time. By chaining the inequalities in Lemma~\ref{lem:compare exact CP with opt},~\ref{lem:UD compare the two CP} and~\ref{lem:convert approx CP to SPM}, we can argue that the revenue of our mechanism is at least $\left(\frac{1}{4}-(n+m)\cdot \epsilon\right)\cdot\left(\frac{\opt}{8}-2\epsilon\cdot mnH\right)$.
\end{proof}

\subsection{Unit-demand Valuations: sample access to bounded distributions} 

When the distributions $D_{ij}$ are all bounded, the following theorem provides the sample complexity.
\begin{theorem}\cite{MorgensternR16}\label{thm:UD bounded}
	When $D_{ij}$ is supported on $[0,H]$ for all bidder $i$ and item $j$, the sample complexity for $(\epsilon,\delta)$-uniformly learning the revenue of SPMs for unit-demand bidders is $O\left(\left(\frac{1}{\epsilon}\right)^2 \left(m^2 n\log n\log \frac{1}{\epsilon} + \log \frac{1}{\delta}\right)\right)$. That is, with probability $1-\delta$, the empirical revenue based on the samples for any SPM is within $\epsilon\cdot H$ of its true expected revenue. 
	{Moreover, with the same number of samples, there is a polynomial time algorithm that learns an SPM whose revenue is at least $\frac{\opt}{144}-\epsilon H$ with probability $1-\delta$. }\end{theorem}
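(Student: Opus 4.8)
The statement is a pseudo-dimension / uniform-convergence result in the style of Morgenstern and Roughgarden~\cite{MorgensternR16}, so I would follow that template. Let $\HH$ denote the class of revenue functions of SPMs: for a price vector $\vec p=\{p_{ij}\}_{i\in[n],j\in[m]}$ (and a fixed visiting order of the bidders) let $r_{\vec p}\colon[0,H]^{nm}\to\R_{\ge0}$ send a type profile $\vec t=\{t_{ij}\}$ to the revenue the SPM of Algorithm~\ref{alg:seq-mech} with prices $\vec p$ collects on $\vec t$. Since a price exceeding $H$ is never accepted and each unit-demand bidder buys at most one item, every $r_{\vec p}$ is bounded, so it suffices to bound the \emph{pseudo-dimension} of $\HH$: by the classical pseudo-dimension generalization bound for bounded real-valued classes, if $\mathrm{Pdim}(\HH)=d$ then $O\big(\epsilon^{-2}(d\log(1/\epsilon)+\log(1/\delta))\big)$ i.i.d.\ samples from $D$ guarantee that, with probability $1-\delta$, the empirical average of $r_{\vec p}$ over the samples is within $\epsilon H$ of the true revenue $\rev=\E_{\vec t\sim D}[r_{\vec p}(\vec t)]$ for \emph{every} SPM simultaneously (working at scale $\epsilon H$, i.e.\ after the routine normalization of revenues to $[0,1]$). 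Hence the first sentence of the theorem reduces to showing $d=O(m^2n\log n)$.

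The core step, and the main obstacle, is this pseudo-dimension bound, which I would obtain by a hyperplane-arrangement count over the price space. Fix candidate shattered samples $\vec t^{(1)},\dots,\vec t^{(k)}$ and witness thresholds $r_1,\dots,r_k$, and treat the price vector $\vec p\in\R^{nm}$ (for a fixed order) as the free variable. For a fixed sample $\vec t^{(\ell)}$, when a bidder with currently available set $S$ is visited she buys $\argmax_{j\in S}(t^{(\ell)}_{ij}-p_{ij})$ if this quantity is positive and nothing otherwise; thus the only way the \emph{combinatorial outcome} of the whole SPM on $\vec t^{(\ell)}$ (who buys which item) can change as $\vec p$ varies is that some bidder crosses one of the hyperplanes $\{p_{ij}-p_{ik}=t^{(\ell)}_{ij}-t^{(\ell)}_{ik}\}$ or $\{p_{ij}=t^{(\ell)}_{ij}\}$ --- at most $O(m^2)$ per bidder per sample, and crucially these do not depend on the purchase history, because every single-bidder ``switch'' (even one caused by a change in availability upstream) occurs on one of them. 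So the arrangement of the $N=O(nm^2k)$ such hyperplanes refines $\R^{nm}$ into cells on which every allocation (across all $k$ samples) is constant, hence on which each $r_{\vec p}(\vec t^{(\ell)})$ is an affine function of $\vec p$; the sign of $r_{\vec p}(\vec t^{(\ell)})-r_\ell$ is then cut out by one further hyperplane per $\ell$. An arrangement of $N+k$ hyperplanes in $\R^{nm}$ has at most $\sum_{i\le nm}\binom{N+k}{i}\le\big(\poly(k,n,m)\big)^{nm}$ cells, so at most that many sign patterns $\big(\mathrm{sgn}(r_{\vec p}(\vec t^{(\ell)})-r_\ell)\big)_{\ell\le k}$ are realizable; allowing the $n!$ visiting orders multiplies this by only $2^{O(n\log n)}$. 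For the samples to be shattered we need $2^k\le\big(\poly(k,n,m)\big)^{nm}2^{O(n\log n)}$, forcing $k=O(nm\log(nm))$, which is in particular $O(m^2n\log n)$. Plugging $d=O(m^2n\log n)$ into the generalization bound gives the stated sample complexity.

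For the \emph{moreover} part I would combine uniform convergence with the existence of a good SPM. By~\cite{ChawlaHMS10,KleinbergW12,CaiDW16} there is an SPM for unit-demand bidders whose revenue is a constant fraction of $\opt$; moreover one such SPM is produced constructively --- solve the balanced-matching (ex-ante) program (Figure~\ref{fig:CP unit demand}) and convert its optimal solution to an SPM as in Lemma~\ref{lem:prices to SPM}. The algorithm therefore uses the samples to estimate the single-item revenue curves $R_{D_{ij}}$ to within $\epsilon H$ (for which the stated sample budget more than suffices, e.g.\ via the empirical c.d.f.s and the DKW inequality~\cite{DvoretzkyKW56}, since each $R_{D_{ij}}$ is determined by $D_{ij}$ and is well-behaved under Kolmogorov perturbations as in Lemma~\ref{lem:UD compare the two CP}), solves the resulting (polynomial-size) program, and outputs the corresponding SPM; by the first part, and because the conversion's revenue guarantee degrades only linearly in the estimation error, the output SPM's true revenue is within an additive $\epsilon H$ (after rescaling $\epsilon$ by a constant) of a constant fraction of $\opt$. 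Chaining the constant-factor losses --- the $O(1)$ approximation of the benchmark SPM, the factor lost in the ex-ante relaxation, and the empirical-to-true slack --- yields revenue at least $\opt/144-\epsilon H$ with probability $1-\delta$. Everything here except the pseudo-dimension count in the second paragraph is routine bookkeeping.
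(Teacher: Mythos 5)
The paper offers no proof of this theorem: it cites~\cite{MorgensternR16}, and the body text confirms ``the result is due to Morgenstern and Roughgarden.'' Your blind sketch is, as far as one can tell, the right reconstruction of the~\cite{MorgensternR16} argument: bound the pseudo-dimension of the class of SPM revenue functions by a hyperplane-arrangement count over price space, then invoke the standard pseudo-dimension generalization bound. The one step worth tightening is your claim that every allocation change ``(even one caused by a change in availability upstream) occurs on one of them'': a downstream bidder whose purchase changes because a higher-priority bidder vacated or grabbed an item has not crossed one of \emph{her own} hyperplanes. The correct (and only slightly more careful) justification is the induction on visiting order you implicitly invoke: once all signs of $p_{ij}-p_{ik}-(t^{(\ell)}_{ij}-t^{(\ell)}_{ik})$ and $p_{ij}-t^{(\ell)}_{ij}$ are fixed, bidder~$1$'s choice is fixed, so bidder~$2$'s available set is fixed, so (using that the arrangement determines all pairwise orderings among items, in particular those remaining) bidder~$2$'s choice is fixed, and so on; hence the allocation for all $k$ samples is constant on each cell and the per-sample revenue is affine in $\vec p$ on that cell. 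With that repair your cell count and shattering bound go through, giving $\mathrm{Pdim}=O(nm\log(nm))\subseteq O(m^2n\log n)$ and the stated sample complexity. Your plan for the ``moreover'' part --- estimate the revenue curves $R_{D_{ij}}$, solve the ex-ante relaxation of Figure~\ref{fig:CP unit demand}, round to an SPM via Lemma~\ref{lem:prices to SPM}, charge the residual slack to the uniform-convergence guarantee from the first part --- is exactly the route this paper itself takes in the adjacent distribution-access models (Lemmas~\ref{lem:compare exact CP with opt} and~\ref{lem:convert approx CP to SPM}, Theorems~\ref{thm:UD Kolmogorov} and~\ref{thm:UD regular}), and is consistent with how the constant $144$ accumulates.
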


\subsection{Unit-demand Valuations: sample access to regular distributions}\label{sec:unit-demand regular}
In this section, we show there exists a polynomial time algorithm that learns an SPM whose revenue is at least a constant fraction of the optimal revenue with polynomial in $n$ and $m$ samples. Note that unlike in the previous two models, the error of our learning algorithm is only multiplicative when the distributions are regular. First, we present a Lemma regarding the revenue curve function for regular distributions.

\begin{lemma}\cite{CaiD11b}\label{lem:regular revenue curve concave}
	For any regular distribution $F$, let $R_F(\cdot)$ be the corresponding revenue curve. For any $0<q'\leq q\leq  p < 1$, $$(1-p)\cdot R_F(q')\leq  R_F(q).$$
	\end{lemma}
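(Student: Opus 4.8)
The plan is to deduce the inequality from only two elementary properties of the revenue curve $R_F$: (i) $R_F$ is concave on $[0,1]$ — for regular $F$ this is exactly the content of the Folklore lemma stated above (equivalently, it is automatic from the definition of $R_F$ as an upper concave envelope of $q\mapsto q\cdot F^{-1}(1-q)$); and (ii) $R_F$ is non-negative on $[0,1]$, in particular $R_F(1)\ge 0$, which is immediate since $R_F(q)$ is defined as a maximum of sums of products of quantities in $[0,1]$ and values $F^{-1}(\cdot)\ge 0$. No other feature of regularity is needed.

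First I would reduce to the case $p=q$. Since $q\le p<1$ we have $0\le 1-p\le 1-q$, and since $R_F(q')\ge 0$ it suffices to prove the stronger-looking statement $(1-q)\cdot R_F(q')\le R_F(q)$ for all $0<q'\le q<1$. If $q'=1$ then $q=1$ and both sides vanish, so we may assume $q'<1$.

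The key observation is that $q$ lies between $q'$ and $1$, hence is a convex combination of those two points. Writing $\alpha=\frac{1-q}{1-q'}$, we have $\alpha\in(0,1]$ (because $q'\le q<1$) and $\alpha\, q'+(1-\alpha)\cdot 1 = 1-\alpha(1-q') = q$. Concavity of $R_F$ then gives
\[
R_F(q)\ \ge\ \alpha\, R_F(q') + (1-\alpha)\,R_F(1)\ \ge\ \alpha\, R_F(q')\ =\ \frac{1-q}{1-q'}\,R_F(q')\ \ge\ (1-q)\,R_F(q'),
\]
where the second inequality uses $R_F(1)\ge 0$ and $1-\alpha\ge 0$, and the last uses $1-q'\le 1$ together with $R_F(q')\ge 0$. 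Chaining with $1-p\le 1-q$ from the first step yields $(1-p)\,R_F(q')\le R_F(q)$, as desired.

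I do not expect a genuine obstacle here; the only points requiring care are (a) setting up the convex combination with endpoints $q'$ and $1$ (rather than $q'$ and $p$) — this is precisely what makes the surviving factor equal $1-q$ after discarding the non-negative term $(1-\alpha)R_F(1)$ — and (b) recording the two structural facts, concavity and non-negativity of $R_F$, which are either cited from the Folklore lemma or read directly off the definition of the revenue curve.
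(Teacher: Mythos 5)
Your proof is correct, and there is no in-paper argument to compare against: the lemma is stated with a citation to \cite{CaiD11b} and not re-proved here. Your derivation — reduce to $p=q$ via $1-p\le 1-q$ and $R_F(q')\ge 0$, write $q=\alpha q'+(1-\alpha)\cdot 1$ with $\alpha=\tfrac{1-q}{1-q'}\in(0,1]$, apply concavity, drop the non-negative $(1-\alpha)R_F(1)$ term, and relax $\tfrac{1}{1-q'}\ge 1$ — is the clean and standard route, and the algebra checks out. (The aside ``if $q'=1$'' never actually arises, since $q'\le p<1$ forces $q'<1$.) One remark worth making: under the paper's definition, $R_F$ is the upper concave envelope of $q\mapsto qF^{-1}(1-q)$ and hence concave for \emph{any} distribution, as the preceding ``Folklore'' lemma records; your argument therefore never uses regularity and proves the inequality for the ironed revenue curve of an arbitrary distribution. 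Regularity is genuinely needed only for the separate fact, used downstream in Lemma~\ref{lem:lowering high prices}, that for regular $F$ the ironed curve coincides pointwise with $qF^{-1}(1-q)$ (so that $R_{D_{ij}}(q'_{ij})=p'_{ij}q'_{ij}$); keeping that distinction explicit would make the proof self-contained about exactly where the regularity hypothesis earns its keep.
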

 Throughout this section, we use $Z$ to denote $\max\{m,n\}$ and $C$ to be a constant that will be specified later. Using Lemma~\ref{lem:regular revenue curve concave}, we show in the next Lemma that restricting $q_{ij}$ to be at least $\frac{1}{ CZ}$ does not affect the objective value of the convex program in Figure~\ref{fig:CP unit demand} by too much.

\begin{lemma}\label{lem:lowering high prices}
	Suppose $\{q_{ij}^*\}_{i\in[n], j\in[m]}$ is the optimal solution of the convex program in Figure~\ref{fig:CP unit demand}. Let $q'_{ij}=\max\{\frac{1}{CZ}, q^*_{ij}\}$, then $\sum_{i,j} R_{D_{ij}}(q'_{ij})\geq \left(1-\frac{1}{ CZ}\right)\cdot \sum_{i,j} R_{D_{ij}}(q^*_{ij})\geq \left(1-\frac{1}{ CZ}\right)\cdot \frac{\opt}{8}$.
\end{lemma}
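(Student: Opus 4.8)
The plan is to exploit concavity and the regular-revenue-curve inequality of Lemma~\ref{lem:regular revenue curve concave} term by term. For each pair $(i,j)$ there are two cases. If $q^*_{ij} \geq \frac{1}{CZ}$, then $q'_{ij} = q^*_{ij}$ and the term is unchanged, so trivially $R_{D_{ij}}(q'_{ij}) \geq \left(1-\frac{1}{CZ}\right)\cdot R_{D_{ij}}(q^*_{ij})$. If $q^*_{ij} < \frac{1}{CZ}$, then $q'_{ij} = \frac{1}{CZ} > q^*_{ij}$, and I would like to conclude $R_{D_{ij}}(q'_{ij}) \geq R_{D_{ij}}(q^*_{ij})$, which is even stronger than the claimed factor; this should follow because raising the ex-ante probability toward the revenue-optimal quantile can only increase the revenue curve up to its peak — more carefully, by concavity of $R_{D_{ij}}$ together with $R_{D_{ij}}(0)=0$, the function is nondecreasing on $[0, q^\dagger]$ where $q^\dagger$ is the argmax, and one argues $q^*_{ij}, \frac{1}{CZ}$ both lie in the increasing region (or, if $\frac{1}{CZ}$ overshoots the peak, monotone-plus-concave still gives $R(\tfrac{1}{CZ}) \ge (1-\tfrac{1}{CZ}) R(q^*_{ij})$ via Lemma~\ref{lem:regular revenue curve concave} applied with $q' = q^*_{ij}$, $q = \tfrac{1}{CZ}$ — wait, that lemma needs $q' \le q \le p$, so set $p$ appropriately).

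Summing the per-coordinate inequalities over all $i\in[n], j\in[m]$ gives
$$\sum_{i,j} R_{D_{ij}}(q'_{ij}) \geq \left(1-\frac{1}{CZ}\right)\cdot \sum_{i,j} R_{D_{ij}}(q^*_{ij}),$$
and then chaining with Lemma~\ref{lem:compare exact CP with opt}, which states $\sum_{i,j} R_{D_{ij}}(q^*_{ij}) \geq \frac{\opt}{8}$, yields the second inequality $\left(1-\frac{1}{CZ}\right)\cdot \frac{\opt}{8}$. I should also double-check that $\{q'_{ij}\}$ is still feasible for (a slightly relaxed version of) the convex program, or at least note that feasibility is not actually required for the statement as written — the lemma only asserts a lower bound on the objective value at the point $\{q'_{ij}\}$, not that it is feasible; feasibility with relaxed constraints $\sum_i q'_{ij} \le \frac12 + \frac{n}{CZ} \le \frac12 + \frac1C$ and similarly in the other coordinate will presumably be used in the subsequent lemma that converts $\{q'_{ij}\}$ into an SPM.

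The main obstacle is the case $q^*_{ij} < \frac{1}{CZ}$ when $\frac{1}{CZ}$ might exceed the revenue-maximizing quantile $q^\dagger_{ij}$: here $R_{D_{ij}}$ is decreasing between $q^\dagger_{ij}$ and $\frac{1}{CZ}$, so $R_{D_{ij}}(\tfrac{1}{CZ})$ could be strictly below $R_{D_{ij}}(q^*_{ij})$, and I cannot simply claim monotonicity. This is exactly where Lemma~\ref{lem:regular revenue curve concave} is needed: apply it with $q' = q^*_{ij}$, $q = \tfrac{1}{CZ}$, and a threshold $p$ with $q \le p < 1$; since $\tfrac{1}{CZ} \le \tfrac{1}{CZ}$ we may take $p = \tfrac{1}{CZ}$ (assuming $CZ > 1$, which holds since $C$ is a sufficiently large constant and $Z = \max\{m,n\}\ge 1$), obtaining $(1-\tfrac{1}{CZ})\, R_{D_{ij}}(q^*_{ij}) \le R_{D_{ij}}(\tfrac{1}{CZ}) = R_{D_{ij}}(q'_{ij})$. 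That is precisely the per-coordinate bound with the claimed factor, uniformly over both cases, so the sum and the appeal to Lemma~\ref{lem:compare exact CP with opt} finish the proof.
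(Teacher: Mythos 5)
Your final argument is exactly the paper's: for each $(i,j)$, when $q^*_{ij} < \tfrac{1}{CZ}$ apply Lemma~\ref{lem:regular revenue curve concave} with $q' = q^*_{ij}$, $q = p = \tfrac{1}{CZ}$ to get $R_{D_{ij}}\bigl(\tfrac{1}{CZ}\bigr) \ge \bigl(1-\tfrac{1}{CZ}\bigr) R_{D_{ij}}(q^*_{ij})$, and then sum and invoke Lemma~\ref{lem:compare exact CP with opt}. The preliminary detour through monotonicity of the revenue curve is unnecessary (and you correctly identify it fails past the monopoly quantile), but the corrected per-coordinate application matches the paper's proof.
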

\begin{proof}
	According to Lemma~\ref{lem:compare exact CP with opt}, $\sum_{i,j} R_{D_{ij}}(q^*_{ij})\geq \frac{\opt}{8}$. So to prove the statement, it suffices to argue that for any $i$ and $j$, $R_{D_{ij}}(q'_{ij})\geq \left(1-\frac{1}{CZ}\right)\cdot R_{D_{ij}}(q^*_{ij})$. If $q^*_{ij} = q'_{ij}$, this inequality clearly holds. If $q^*_{ij} \neq q'_{ij}$, $q^*_{ij}\leq q'_{ij}=\frac{1}{CZ}$. Since $F_{ij}$ is regular, we can apply Lemma~\ref{lem:regular revenue curve concave} to $q'_{ij}$ and $q^*_{ij}$ and obtain inequality $R_{D_{ij}}(q'_{ij})\geq \left(1-\frac{1}{C Z}\right)\cdot R_{D_{ij}}(q^*_{ij})$.
\end{proof}

Using Lemma~\ref{lem:lowering high prices}, we argue how to compute in polynomial time an approximately optimal SPM. Suppose $D'_{ij}$ is the distribution that we obtain after truncating $D_{ij}$ at a threshold $H_{ij}$\footnote{Let $t_{ij}\sim D_{ij}$, then $\min\{t_{ij},H_{ij}\}$ is the corresponding truncated random variable drawn from $D'_{ij}$.}, and we have direct access to a discrete distribution $\hat{D}'_{ij}$ such that $\left|\left|\hat{D}'_{ij}-D'_{ij}\right|\right|_K\leq \epsilon$ for all $i$ and $j$. We show in the following Lemma that the optimal solution of a convex program similar to the one in Figure~\ref{fig:CP unit demand approximate dist} but for $\{\hat{D}'_{ij}\}_{i\in[n],j\in[m]}$ can guide us to design an approximately optimal SPM under $D$ in polynomial time. As we have sample access to $D$, we will argue later that a polynomial number of samples suffices to generate $\{\hat{D}'_{ij}\}_{i\in[n],j\in[m]}$.
\notshow{\begin{lemma}\label{lem:UD low prices suffice}
	There exists a collection of deterministic prices $\{p'_{ij}\}_{i\in[n],j\in[m]}$ satisfying $F_{ij}(p'_{ij})\leq 1-\frac{1}{C\cdot Z}$ for all $i\in[n]$ and $j\in[m]$, such that the corresponding SPM achieves revenue of $\left(\frac{1}{2}-\frac{1}{C}\right)^2\cdot \left(1-\frac{1}{C\cdot Z}\right)\cdot \frac{\opt}{8}$.
\end{lemma} 
\begin{proof}
Let $\{q_{ij}^*\}_{i\in[n], j\in[m]}$ be the optimal solution of the convex program in Figure~\ref{fig:CP unit demand} and $q'_{ij}=\max\{\frac{1}{C\cdot Z}, q^*_{ij}\}$. For every $i$ and $j$, let $p'_{ij}$ to be $F^{-1}_{ij}(1-q'_{ij})$. Clearly, $F_{ij}(p'_{ij})\leq 1-\frac{1}{C\cdot Z}$. 

As $D_{ij}$ is regular, $R_{D_{ij}}(q'_{ij}) = p'_{ij} q'_{ij}$. Therefore, $\sum_{i,j} p'_{ij}q'_{ij}=\sum_{i,j} R_{D_{ij}}(q'_{ij})\geq \left(1-\frac{1}{CZ} \right)\cdot \frac{\opt}{8}$ according to Lemma~\ref{lem:UD low prices suffice}. Also, $\sum_i q'_{ij}\leq \frac{1}{2}+\frac{n}{CZ}\leq \frac{1}{2}+\frac{1}{C}$ for all item $j$, and $\sum_j q'_{ij}\leq \frac{1}{2}+\frac{m}{CZ}\leq \frac{1}{2}+\frac{1}{C}$ for all bidder $i$. According to Lemma~\ref{lem:prices to SPM}, the SPM with prices $\{p'_{ij}\}_{i\in[n],j\in[m]}$ has revenue at least $\left(\frac{1}{2}-\frac{1}{C}\right)^2\cdot \sum_{i,j} p'_{ij}q'_{ij}\geq \left(\frac{1}{2}-\frac{1}{C}\right)^2\cdot  \left(1-\frac{1}{CZ} \right)\cdot \frac{\opt}{8}$.

\end{proof}

\begin{lemma}\label{lem:UD additive bound}
	Let $\{H_{ij}\}_{i\in[n],j\in[m]}$ be a collection of positive numbers satisfying $F_{ij}(H_{ij})\in [1-\frac{1}{C\cdot Z}, 1-\frac{1}{3C\cdot Z}]$ for all $i\in[n]$ and $j\in[m]$. With $O\left(\left(\frac{1}{\epsilon}\right)^2 \left(m^2 n\log n\log \frac{1}{\epsilon} + \log \frac{1}{\delta}\right)\right)$ samples, we can learn an SPM with probability $1-\delta$ that achieves revenue at least $\left(\frac{1}{2}-\frac{1}{C}\right)^2\cdot \left(1-\frac{1}{C\cdot Z}\right)\cdot \frac{\opt}{8}-\epsilon\cdot\max_{i,j} H_{ij}$. \yangnote{With the same number of samples, we can learn in polynomial time an SPM with probability $1-\delta$ that achieves revenue at least $\left(\frac{1}{2}-\frac{1}{C}\right)^2\cdot \left(1-\frac{1}{C\cdot Z}\right)\cdot \frac{\opt}{48}-\epsilon\cdot\max_{i,j} H_{ij}$.}
\end{lemma}
\begin{proof}
	First, we truncate each $D_{ij}$ at $H_{ij}$ to create a bounded distribution $D'_{ij}$. This is straightforward, as we only need to cap the value for any sample from $D_{ij}$ at $H_{ij}$. According to Lemma~\ref{lem:UD low prices suffice}, there exists an SPM that achieves revenue at least $\left(\frac{1}{2}-\frac{1}{C}\right)^2\cdot \left(1-\frac{1}{C\cdot Z}\right)\cdot \frac{\opt}{8}$ under $D'=\times_{i,j} D'_{ij}$, as the price $p_{ij}$ used in the SPM is less than $H_{ij}$ for all $i,j$. Combining this observation with Theorem~\ref{thm:UD bounded}, we know that if we take $O\left(\left(\frac{1}{\epsilon}\right)^2 \left(m^2 n\log n\log \frac{1}{\epsilon} + \log \frac{1}{\delta}\right)\right)$ samples we can learn with probability $1-\delta$ an SPM whose revenue is at least $\left(\frac{1}{2}-\frac{1}{C}\right)^2\cdot \left(1-\frac{1}{C\cdot Z}\right)\cdot \frac{\opt}{8}-\epsilon\cdot\max_{i,j} H_{ij}$. Similarly, we can prove the computational friendly version of this claim. \end{proof}}
	
\begin{lemma}\label{lem:UD additive bound}
	Let $\{H_{ij}\}_{i\in[n],j\in[m]}$ be a collection of positive numbers satisfying $F_{ij}(H_{ij})\in [1-\frac{1}{C\cdot Z}, 1-\frac{1}{3C\cdot Z}]$ for all $i\in[n]$ and $j\in[m]$. Let $D'_{ij}$ be the distribution of the random variable $\min\{t_{ij}, H_{ij}\}$ where $t_{ij}\sim D_{ij}$, and $\hat{D}_{ij}'$ be a discrete distribution such that $\left|\left|\hat{D}'_{ij}-D'_{ij}\right|\right|_K\leq \epsilon$ for all $i\in[n]$ and $j\in[m]$. Suppose $s$ is an upper bound of the support size for any distribution $\hat{D}'_{ij}$, then given direct access to $\hat{D}_{ij}'$, we can compute in time polynomial in $n$, $m$ and $s$ a randomized SPM that achieves revenue at least $\left(\frac{1}{2}-\frac{1}{C}-2n\epsilon\right)\cdot\left(\frac{1}{2}-\frac{1}{C}-2m\epsilon\right)\cdot \left(\left(1-\frac{1}{ CZ}\right)\cdot \frac{\opt}{8}-2\epsilon\cdot nmH\right)$ under $D$, where $H= \max_{i,j} H_{ij}$. 
\end{lemma}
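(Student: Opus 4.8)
The plan is to compose three ingredients already available: Lemma~\ref{lem:lowering high prices} (which turns the optimal fractional solution into a ``bounded-quantile'' one at a multiplicative loss of $1-\frac{1}{CZ}$), the revenue-curve perturbation argument behind Lemma~\ref{lem:UD compare the two CP} (which transfers objective value from the true single-item marginals to the accessible ones at an additive loss of $\epsilon H$ per coordinate), and the rounding procedure of Lemmas~\ref{lem:prices to SPM}--\ref{lem:convert approx CP to SPM} (which converts a feasible fractional solution into a randomized SPM). The only twist relative to Theorem~\ref{thm:UD Kolmogorov} is that here we must pass through the truncated marginals $D'_{ij}$, so we need to check (a) that restricting to quantiles $\geq \frac{1}{CZ}$ corresponds to posting prices at most $H_{ij}$ --- whence truncation does not destroy revenue and all prices stay below $H$ --- and (b) that the relaxed constraint ``$\leq \frac{1}{2}$'' becomes ``$\leq \frac{1}{2}+\frac{1}{C}$'', which is exactly what introduces the $\frac{1}{C}$ terms in the final bound.

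First I would invoke Lemma~\ref{lem:lowering high prices}: taking $\{q^*_{ij}\}$ optimal for the convex program of Figure~\ref{fig:CP unit demand} and $q'_{ij}=\max\{\frac{1}{CZ},q^*_{ij}\}$ gives $\sum_i q'_{ij}\leq \frac{1}{2}+\frac{n}{CZ}\leq \frac{1}{2}+\frac{1}{C}$ and $\sum_j q'_{ij}\leq \frac{1}{2}+\frac{1}{C}$ for all $i,j$, while $\sum_{i,j}R_{D_{ij}}(q'_{ij})\geq (1-\frac{1}{CZ})\frac{\opt}{8}$ by Lemma~\ref{lem:compare exact CP with opt}. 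Since each $D_{ij}$ is regular, $R_{D_{ij}}$ is concave and $R_{D_{ij}}(q'_{ij})=q'_{ij}\cdot F_{ij}^{-1}(1-q'_{ij})$; moreover $q'_{ij}\geq \frac{1}{CZ}\geq 1-F_{ij}(H_{ij})$, so the posting price $F_{ij}^{-1}(1-q'_{ij})$ is at most $H_{ij}$. Because capping a value at $H_{ij}$ leaves the probability of any event $\{v\geq p\}$ with $p\leq H_{ij}$ unchanged, posting the same price $F_{ij}^{-1}(1-q'_{ij})$ under the truncated marginal $D'_{ij}$ yields the same revenue, hence $R_{D'_{ij}}(q'_{ij})\geq R_{D_{ij}}(q'_{ij})$ and $\sum_{i,j}R_{D'_{ij}}(q'_{ij})\geq (1-\frac{1}{CZ})\frac{\opt}{8}$.

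Next, I would introduce the convex program $\max\sum_{i,j}R_{\hat{D}'_{ij}}(q_{ij})$ subject to $\sum_i q_{ij}\leq \frac{1}{2}+\frac{1}{C}+n\epsilon$ for all $j$, $\sum_j q_{ij}\leq \frac{1}{2}+\frac{1}{C}+m\epsilon$ for all $i$, and $q_{ij}\geq 0$ --- the Figure~\ref{fig:CP unit demand approximate dist} program with ``$\frac{1}{2}$'' replaced by ``$\frac{1}{2}+\frac{1}{C}$'' and with $\hat{D}'$ in place of $\hat{D}$. We may assume each $\hat{D}'_{ij}$ is supported on $[0,H_{ij}]$: capping it there does not increase its Kolmogorov distance to $D'_{ij}$ (which is already supported on $[0,H_{ij}]$). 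Running the revenue-curve decomposition/perturbation argument from the proof of Lemma~\ref{lem:UD compare the two CP} on the solution $\{q'_{ij}\}$ (now using $||\hat{D}'_{ij}-D'_{ij}||_K\leq\epsilon$) produces a feasible solution of this new program with objective at least $\sum_{i,j}R_{D'_{ij}}(q'_{ij})-\epsilon\cdot mnH\geq (1-\frac{1}{CZ})\frac{\opt}{8}-\epsilon\cdot mnH$; the shift of the right-hand sides by $n\epsilon$ and $m\epsilon$ is precisely what makes that perturbed solution feasible. Since each $\hat{D}'_{ij}$ has at most $s$ atoms, $R_{\hat{D}'_{ij}}$ is a piecewise-linear concave function with $O(s)$ breakpoints, so this program is a linear program of size $\poly(n,m,s)$ and can be solved exactly in polynomial time. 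Finally, applying the rounding of Lemma~\ref{lem:convert approx CP to SPM} (verbatim, but with the constraint ``$\frac{1}{2}$'' replaced by ``$\frac{1}{2}+\frac{1}{C}$'', so that the two efficiency factors coming out of Lemma~\ref{lem:prices to SPM} become $\frac{1}{2}-\frac{1}{C}-2n\epsilon$ and $\frac{1}{2}-\frac{1}{C}-2m\epsilon$) to the optimal solution $\{\hat{q}_{ij}\}$ of this program yields, in polynomial time, a randomized SPM whose revenue under $D$ is at least $(\frac{1}{2}-\frac{1}{C}-2n\epsilon)(\frac{1}{2}-\frac{1}{C}-2m\epsilon)\big(\sum_{i,j}R_{\hat{D}'_{ij}}(\hat{q}_{ij})-\epsilon\cdot nmH\big)\geq (\frac{1}{2}-\frac{1}{C}-2n\epsilon)(\frac{1}{2}-\frac{1}{C}-2m\epsilon)\big((1-\frac{1}{CZ})\frac{\opt}{8}-2\epsilon\cdot nmH\big)$, as claimed.

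The step I expect to be the main obstacle --- or at least the one requiring the most care --- is the truncation argument of the second paragraph: one must verify that the lower-bounded quantiles $q'_{ij}\geq\frac{1}{CZ}$ really do translate into posted prices $\leq H_{ij}$, since this single fact is responsible both for truncation being revenue-lossless on the relevant solution and for every Kolmogorov error term in the subsequent two reductions scaling with $H$ rather than with an a priori unbounded quantity. This is exactly where the hypothesis $F_{ij}(H_{ij})\geq 1-\frac{1}{CZ}$ and the regularity of $D_{ij}$ (already exploited inside Lemma~\ref{lem:lowering high prices} via Lemma~\ref{lem:regular revenue curve concave}) enter. Everything else is bookkeeping of the multiplicative factor $(1-\frac{1}{CZ})$ and the additive terms $\epsilon n$, $\epsilon m$, $\epsilon nmH$ through the composition of the three reductions.
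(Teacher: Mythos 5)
Your proposal is correct and takes essentially the same route as the paper's proof: lower-bound the quantiles of the Figure~\ref{fig:CP unit demand} optimum via Lemma~\ref{lem:lowering high prices}, observe the induced prices $p'_{ij}=F_{ij}^{-1}(1-q'_{ij})$ drop below $H_{ij}$ so that truncation is lossless, set up and solve an LP over the revenue curves of $\hat{D}'_{ij}$ with the right-hand sides relaxed to $\tfrac12+\tfrac1C$ plus the $\epsilon$ slack, and round with Lemma~\ref{lem:prices to SPM}. The one cosmetic difference is that you route the comparison through $R_{D'_{ij}}(q'_{ij})\geq R_{D_{ij}}(q'_{ij})$, whereas the paper works directly with $p'_{ij}\tilde q_{ij}\geq p'_{ij}q'_{ij}-\epsilon H_{ij}=R_{D_{ij}}(q'_{ij})-\epsilon H_{ij}$; both are valid, yours requiring only the small extra observation that on quantiles $\geq 1/(CZ)$ the truncated revenue curve dominates the original (which holds since the un-ironed curves agree there and ironing can only help). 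Your explicit remark that $\hat{D}'_{ij}$ may be assumed supported on $[0,H_{ij}]$ — since capping never increases Kolmogorov distance to $D'_{ij}$ — is a welcome clarification of a step the paper uses implicitly when asserting $\hat p_{ij}\leq H_{ij}$.
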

\begin{proof}
Consider the following convex program:

\begin{minipage}{\textwidth} 
\begin{align*}
&\max \sum_{i,j} R_{\hat{D}_{ij}'}(q_{ij})\\
\textbf{s.t. }& \sum_i q_{ij}\leq \frac{1}{2} + \frac{1}{C} +n\cdot\epsilon\qquad \text{ for all $j\in[m]$}\\
& \sum_j q_{ij}\leq \frac{1}{2}+\frac{1}{C} +m\cdot\epsilon \qquad \text{ for all $i\in[n]$}\\
& q_{ij}\geq 0\qquad \text{ for all $i\in[n]$ and $j\in[m]$}
\end{align*}
\end{minipage}
\vspace{.1in}

Let $\{q_{ij}^*\}_{i\in[n], j\in[m]}$ be the optimal solution of the convex program in Figure~\ref{fig:CP unit demand} and $q'_{ij}=\max\{\frac{1}{C\cdot Z}, q^*_{ij}\}$. For every $i$ and $j$, let $p'_{ij}=F^{-1}_{ij}(1-q'_{ij})$ and $\tilde{q}_{ij} = \Pr_{t_{ij}\sim \hat{D}'_{ij}}\left[t_{ij}\geq p'_{ij}\right]$. By the definition of $H_{ij}$, $p'_{ij}\leq H_{ij}$, so \begin{align}\label{ineq:LB inequality}
p'_{ij}\tilde{q}_{ij}\geq p'_{ij}\left(\Pr_{t_{ij}\sim {D}'_{ij}}\left[t_{ij}\geq p'_{ij}\right]-\epsilon\right)\geq p'_{ij} q'_{ij}-\epsilon\cdot H_{ij}=R_{D_{ij}}(q'_{ij})-\epsilon\cdot H_{ij}.
 \end{align}
$p'_{ij} q'_{ij}$ equals to $R_{D_{ij}}(q'_{ij})$ because $D_{ij}$ is a regular distribution.

Next, we argue that $\{\tilde{q}_{ij}\}_{i\in[n],j\in[m]}$ is a feasible solution of the convex program above. Observe that $$\sum_{i} \tilde{q}_{ij} \leq \sum_i q'_{ij}+n\epsilon \leq \sum_i \left(q^*_{ij}+\frac{1}{CZ}\right)+n\epsilon\leq \frac{1}{2} + \frac{1}{C} +n\epsilon$$ for all item $j\in[m]$ and $$\sum_{j} \tilde{q}_{ij} \leq \sum_j q'_{ij}+n\epsilon \leq \sum_j \left(q^*_{ij}+\frac{1}{CZ}\right)+m\epsilon\leq \frac{1}{2}+\frac{1}{C} +m\epsilon$$ for all bidder $i\in[n]$. 

 Let $\widehat{\opt}$ be the optimal solution of the convex program above. As $\{\tilde{q}_{ij}\}_{i\in[n],j\in[m]}$ is a feasible solution,
$$\widehat{\opt}\geq \sum_{i,j} R_{\hat{D}_{ij}'}(\tilde{q}_{ij})\geq \sum_{i,j} p'_{ij}\tilde{q}_{ij}\geq \sum_{i,j} R_{D_{ij}}(q'_{ij})-\epsilon\cdot nmH\geq \left(1-\frac{1}{ CZ}\right)\cdot \frac{\opt}{8}-\epsilon\cdot nmH.$$
The second last inequality is due to inequality~(\ref{ineq:LB inequality}) and the last inequality is due to Lemma~\ref{lem:lowering high prices}.

So far, we have argued that the optimal solution of our convex program has value close to the $\opt$. We will show in the second part of the proof that using the optimal solution of our convex program, we can construct an SPM whose revenue under $D$ is close to $\widehat{\opt}$. Let $\hat{q}_{ij}$ be the optimal solution of the convex program above and $\hat{p}_{ij}$ be the corresponding random price, that is, $\Pr_{\hat{p}_{ij}, t_{ij}\sim \hat{D}_{ij}'}\left[t_{ij}\geq \hat{p}_{ij}\right]=\hat{q}_{ij}$ and $R_{\hat{D}_{ij}'}(\hat{q}_{ij}) = \E_{\hat{p}_{ij}}\left[\hat{p}_{ij}\cdot \Pr_{t_{ij}\sim \hat{D}_{ij}'}\left[t_{ij}\geq \hat{p}_{ij}\right]\right]$. As $\hat{p}_{ij}\leq H_{ij}$, $$\Pr_{\hat{p}_{ij}, t_{ij}\sim {D}_{ij}}\left[t_{ij}\geq \hat{p}_{ij}\right]=\Pr_{\hat{p}_{ij}, t_{ij}\sim {D}'_{ij}}\left[t_{ij}\geq \hat{p}_{ij}\right]\in  [\hat{q}_{ij}-\epsilon,\hat{q}_{ij}+\epsilon].$$ Therefore, for all item $j$ $$\sum_{i} \Pr_{\hat{p}_{ij}, t_{ij}\sim {D}_{ij}}\left[t_{ij}\geq \hat{p}_{ij}\right]\leq \sum_{i} \Pr_{\hat{p}_{ij}, t_{ij}\sim \hat{D}_{ij}'}\left[t_{ij}\geq \hat{p}_{ij}\right] + n\epsilon = \sum_i \hat{q}_{ij} +n\epsilon \leq \frac{1}{2} + \frac{1}{C} +2n\epsilon$$ and for all bidder $i$	$$\sum_{j} \Pr_{\hat{p}_{ij}, t_{ij}\sim {D}_{ij}}\left[t_{ij}\geq \hat{p}_{ij}\right]\leq \sum_{j} \Pr_{\hat{p}_{ij}, t_{ij}\sim \hat{D}_{ij}'}\left[t_{ij}\geq \hat{p}_{ij}\right] + m\epsilon = \sum_j \hat{q}_{ij} +m\epsilon \leq \frac{1}{2} + \frac{1}{C} +2m\epsilon.$$ According to Lemma~\ref{lem:prices to SPM}, we can construct a randomized SPM with $\{\hat{p}_{ij}\}_{i\in[n],j\in[m]}$ whose revenue is at least $\left(\frac{1}{2}-\frac{1}{C}-2n\epsilon\right)\cdot\left(\frac{1}{2}-\frac{1}{C}-2m\epsilon\right)\cdot \sum_{i,j} \E_{\hat{p}_{ij}}\left[\hat{p}_{ij}\cdot \Pr_{t_{ij}\sim {D}_{ij}}\left[t_{ij}\geq \hat{p}_{ij}\right]\right]$ under $D$. 
Clearly, $$\E_{\hat{p}_{ij}}\left[\hat{p}_{ij}\cdot \Pr_{t_{ij}\sim {D}_{ij}}\left[t_{ij}\geq \hat{p}_{ij}\right]\right]\geq \E_{\hat{p}_{ij}}\left[\hat{p}_{ij}\cdot \left(\Pr_{t_{ij}\sim \hat{D}'_{ij}}\left[t_{ij}\geq \hat{p}_{ij}\right]-\epsilon\right)\right] \geq R_{\hat{D}'_{ij}}(\hat{q}_{ij})-\epsilon\cdot H_{ij}.$$
 Therefore, the revenue of the constructed randomized SPM under $D$ is at least \begin{align*} &\left(\frac{1}{2}-\frac{1}{C}-2n\epsilon\right)\cdot\left(\frac{1}{2}-\frac{1}{C}-2m\epsilon\right)\cdot \left(\widehat{\opt}-\epsilon\cdot nmH\right)\\
 	\geq & \left(\frac{1}{2}-\frac{1}{C}-2n\epsilon\right)\cdot\left(\frac{1}{2}-\frac{1}{C}-2m\epsilon\right)\cdot \left(\left(1-\frac{1}{ CZ}\right)\cdot \frac{\opt}{8}-2\epsilon\cdot nmH\right).
 \end{align*}
It is not hard to see that both $\{\hat{q}_{ij}\}_{i\in[n],j\in[m]}$ and $\{\hat{p}_{ij}\}_{i\in[n],j\in[m]}$ can be computed in time polynomial in $n$, $m$ and $s$.
	\end{proof}

	When $\epsilon$ is small enough, the additive error in Lemma~\ref{lem:UD additive bound} can be converted into a multiplicative error. Next, we argue that with a polynomial number of samples, we can learn $\{H_{ij}\}_{i\in[n],j\in[m]}$ and $\{\hat{D}'_{ij}\}_{i\in[n],j\in[m]}$ with enough accuracy.

\begin{theorem}\label{thm:UD regular}
If for all bidder $i$ and item $j$, $D_{ij}$ is a regular distribution, we can learn in polynomial time with probability $1-\delta$ a randomized SPM whose revenue is at least $\frac{\opt}{33}$ with $O\left(Z^2m^2 n^2\cdot \log \frac{nm}{\delta}\right)$ ($Z=\max\{m,n\}$) samples. 
\end{theorem}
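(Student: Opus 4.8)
The plan is to reduce the theorem to Lemma~\ref{lem:UD additive bound}, whose hypotheses ask for two things: (i) thresholds $\{H_{ij}\}$ with $F_{ij}(H_{ij})\in[1-\tfrac{1}{CZ},1-\tfrac{1}{3CZ}]$, and (ii) explicit discrete distributions $\hat{D}'_{ij}$ with $\|\hat{D}'_{ij}-D'_{ij}\|_K\le\epsilon$, where $D'_{ij}$ is the truncation of $D_{ij}$ at $H_{ij}$. Both objects will be produced from samples of $D$, which (since $D=\times_{i,j}D_{ij}$) yield independent samples of each marginal $D_{ij}$. For (i): for each pair $(i,j)$ take one batch of samples and set $H_{ij}$ to be the empirical quantile of $D_{ij}$ at level $\approx 1-\tfrac{2}{3CZ}$; a multiplicative Chernoff bound (or the DKW inequality applied to $D_{ij}$ at accuracy $\tfrac{1}{3CZ}$) shows that $O\!\big(Z^2\log(nm/\delta)\big)$ samples suffice so that, after a union bound over all $nm$ pairs, every $H_{ij}$ lands in the required window with probability $\ge 1-\delta/2$. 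For (ii): take a fresh batch from $D$, cap the $(i,j)$-coordinate of each sampled type at $H_{ij}$ to obtain i.i.d.\ samples of $D'_{ij}$, and let $\hat{D}'_{ij}$ be their empirical distribution; DKW gives $\|\hat{D}'_{ij}-D'_{ij}\|_K\le\epsilon$ for all $(i,j)$ with probability $\ge1-\delta/2$ using $O\!\big(\epsilon^{-2}\log(nm/\delta)\big)$ samples, and the support size $s$ of each $\hat{D}'_{ij}$ is at most the sample size, hence polynomial. On the intersection of these two events Lemma~\ref{lem:UD additive bound} produces, in $\poly(n,m,s)$ time, a randomized SPM whose revenue under $D$ is at least
$$\Big(\tfrac12-\tfrac1C-2n\epsilon\Big)\Big(\tfrac12-\tfrac1C-2m\epsilon\Big)\Big(\big(1-\tfrac1{CZ}\big)\tfrac{\opt}{8}-2\epsilon\cdot nmH\Big),\qquad H=\max_{i,j}H_{ij}.$$

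It remains to choose $\epsilon$ (and $C$) so the error terms become negligible, and the key to this is a bound on $H$. Since each $D_{ij}$ is regular and $F_{ij}(H_{ij})\le 1-\tfrac{1}{3CZ}$, posting price $H_{ij}$ for item $j$ to bidder $i$ alone is a (trivially unit-demand, DSIC) mechanism with revenue at least $H_{ij}\cdot\tfrac{1}{3CZ}$, so $\opt\ge H_{ij}/(3CZ)$ and therefore $H\le 3CZ\cdot\opt$. Hence $2\epsilon\cdot nmH\le 6C\,nmZ\,\epsilon\cdot\opt$. Taking $\epsilon=\Theta\!\big(\tfrac{1}{C\,nmZ}\big)$ with a sufficiently small hidden constant makes each of $2\epsilon nmH$ (a small multiple of $\opt$), $2n\epsilon$ and $2m\epsilon$ (small multiples of a constant, since $2n\epsilon\le 2\epsilon nmZ$) as small as we like, and taking $C$ a sufficiently large absolute constant makes $\tfrac{1}{CZ}\le\tfrac1C$ small. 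The displayed revenue is then at least $\big(\tfrac12-o(1)\big)^2\big(\tfrac{\opt}{8}-o(\opt)\big)$, which, since $\tfrac{1}{4}\cdot\tfrac18=\tfrac{1}{32}$, exceeds $\tfrac{\opt}{33}$ once $C$ and the constant in $\epsilon$ are chosen large enough. (Here regularity is also used, through Lemma~\ref{lem:lowering high prices} and Lemma~\ref{lem:regular revenue curve concave}, to guarantee that truncating at $H_{ij}$ costs only the $(1-\tfrac1{CZ})$ factor already baked into Lemma~\ref{lem:UD additive bound}.)

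For the sample complexity: with $\epsilon=\Theta(1/(nmZ))$ the dominant term is the DKW step for (ii), namely $O\!\big(\epsilon^{-2}\log(nm/\delta)\big)=O\!\big(Z^2m^2n^2\log(nm/\delta)\big)$, which subsumes the $O(Z^2\log(nm/\delta))$ samples for (i); everything else — computing the empirical quantiles, forming the $\hat{D}'_{ij}$, and the convex-program-based construction inside Lemma~\ref{lem:UD additive bound} — runs in polynomial time. The step I expect to be the crux, and the one that must be set up carefully, is precisely the control of the additive error $2\epsilon nmH$: because $H$ can genuinely be $\Theta(Z\cdot\opt)$, one is forced to take $\epsilon$ as small as $\Theta(1/(nmZ))$, and this is exactly what produces the $Z^2m^2n^2$ factor; the companion bookkeeping is then to chase the three factors in Lemma~\ref{lem:UD additive bound} through with the chosen constants and verify the bound lands below $\tfrac{1}{33}$.
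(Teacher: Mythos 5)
Your proposal is correct and follows essentially the same route as the paper's proof: find each $H_{ij}$ as an empirical quantile, truncate and apply DKW to get $\hat{D}'_{ij}$, invoke Lemma~\ref{lem:UD additive bound}, and use the key observation $\opt\ge H_{ij}/(3CZ)$ (which, incidentally, does not require regularity — it follows from posting $H_{ij}$ as a single-item price; regularity is used only inside Lemma~\ref{lem:UD additive bound} via the revenue-curve argument, as you note) to control the additive error, forcing $\epsilon=\Theta(1/(nmZ))$ and hence the stated $O(Z^2m^2n^2\log(nm/\delta))$ sample complexity.
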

\begin{proof}
	First, if we take $O\left(C^2\cdot Z^2\cdot \log \frac{nm}{\delta}\right)$ samples from each $D_{ij}$, we can find an $H_{ij}$ such that $F_{ij}(H_{ij})$ lies in$[1-\frac{1}{CZ}, 1-\frac{1}{3CZ}]$ with probability $1-\frac{\delta}{2nm}$. By the union bound, the probability that all $H_{ij}$ satisfy the requirement is at least $1-\frac{\delta}{2}$. From now on, we assume $F_{ij}(H_{ij})\in[1-\frac{1}{C\cdot Z}, 1-\frac{1}{3C\cdot Z}]$ for all $i$ and $j$. 
	Observe that $\opt \geq \max_{i,j} H_{ij}\cdot \frac{1}{3C\cdot Z}$, as the expected revenue for selling item $j$ to bidder $i$ at price $H_{ij}$ is at least $ \frac{H_{ij}}{3C\cdot Z}$.
	 Therefore, there exists sufficiently large constant $d$ and $C$, if $\epsilon=\frac{1}{d\cdot Z nm}$ the randomized SPM learned in Lemma~\ref{lem:UD additive bound} has revenue at least $\frac{\opt}{33}$.
	  According to the Dvoretzky-Kiefer-Wolfowitz (DKW) inequality~\cite{DvoretzkyKW56}, if we take $O\left(d^2 Z^2n^2m^2\cdot \log \frac{nm}{\delta} \right)$ samples from $D'_{ij}$ (we can take samples from $D_{ij}$ then cap the samples at $H_{ij}$) and let $\hat{D}'_{ij}$ be the uniform distribution over the samples, $\left|\left|D'_{ij}-\hat{D}'_{ij}\right|\right|_K\leq \frac{1}{d\cdot Z nm}$ with probability $1-\frac{\delta}{2nm}$. 
	  By the union bound,  $\left|\left|D'_{ij}-\hat{D}'_{ij}\right|\right|_K\leq \frac{1}{d\cdot Z nm}$ for all $i\in[n]$ and $j\in[m]$ with probability at least $1-\delta/2$.
	   Finally, by another union bound,  the $H_{ij}$ and $\hat{D}'_{ij}$ we learned from $O\left( Z^2n^2m^2\cdot \log \frac{nm}{\delta} \right)$ samples satisfy $F_{ij}(H_{ij})\in[1-\frac{1}{C\cdot Z}, 1-\frac{1}{3C\cdot Z}]$ and $\left|\left|D'_{ij}-\hat{D}'_{ij}\right|\right|_K\leq \frac{1}{d\cdot Z nm}$ for all $i$ and $j$ with probability at least $1-\delta$.
	   In other words, we can learn a randomized SPM whose revenue is at least $\frac{\opt}{33}$ with probability at least $1-\delta$ using $O\left( Z^2n^2m^2\cdot \log \frac{nm}{\delta} \right)$ samples.
	    Furthermore, the support size of any $\hat{D}'_{ij}$ is at most $O\left( Z^2n^2m^2\cdot \log \frac{nm}{\delta} \right)$ samples, so our learning algorithm runs in time polynomial in $n$ and $m$.
	
	
\end{proof}

\section{Missing Details from Section~\ref{sec:additive}}
\subsection{Additive Valuations: sample access to bounded distributions}\label{sec:additive bounded}
As shown by Goldner and Karlin~\cite{GoldnerK16}, one sample suffices to design a mechanism that approximates $\brev$. The idea is to use the VCG with entry fee mechanism but replace the entry fee $e_i(b_{-i},D_i)$ for bidder $i$ with $e_i(b_{-i},s_i)=\sum_{j\in[m]}(s_{ij}-\max_{k\neq i} b_{kj})^+$, where $s_i$ is a sample drawn from $D_i$. It is easy to argue that for any $b_{-i}$, over the randomness of the sample $s_i$ and bidder $i$'s real type $t_i$, the event that $e_i(b_{-i},s_i)\geq e_i(b_{-i},D_i)$ and bidder $i$ accepts the entry fee $e_i(b_{-i},s_i)$ happens with probability at least $\frac{1}{8}$. As $\frac{1}{2}\cdot \sum_{i\in[n]} \E_{t}[e_i(t_{-i},D_i)]=\brev$, the expected revenue (over the randomness of the types and the samples) from their mechanism is at least $\frac{1}{8}\cdot \sum_{i\in[n]} \E_{t}[e_i(t_{-i},D_i)]=\frac{\brev}{4}$. Next, we show how to learn a mechanism that approximates $\srev$.

\begin{lemma}\label{lem:additive srev}
	When $D_{ij}$ is supported on $[0,H]$ for all bidder $i$ and item $j$, the sample complexity for $(\epsilon,\delta)$-uniformly learning the revenue of SPMs for additive bidders is $O\left(\left(\frac{1}{\epsilon}\right)^2 \left(m^2 n\log n\log \frac{1}{\epsilon} + \log \frac{1}{\delta}\right)\right)$. Moreover, we can learn in polynomial time an SPM whose revenue is at least $\frac{\srev}{4}- \frac{3\epsilon}{2}\cdot H$ with probability $1-\delta$ given the same number of samples. \end{lemma}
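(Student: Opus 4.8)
There are two claims to establish: a sample-complexity bound for uniformly learning SPM revenue, and a polynomial-time algorithm producing an SPM with revenue at least $\tfrac{\srev}{4}-\tfrac{3\epsilon}{2}H$. The plan for both starts from the observation that, for additive bidders, an SPM (Algorithm~\ref{alg:seq-mech}) decouples across items: under prices $\{P_{ij}\}$, bidder $i$'s favorite available bundle is exactly $\{j\text{ available}: t_{ij}\ge P_{ij}\}$, so the SPM revenue equals $\sum_{j=1}^m R_j$, where $R_j\in[0,H]$ is the revenue of the single-item sequential posted-price mechanism on item $j$ with thresholds $(P_{1j},\ldots,P_{nj})$ and $R_j$ depends only on $(t_{1j},\ldots,t_{nj})$. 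For the first claim I would invoke the fact that the class of single-item sequential posted-price revenue functions has pseudo-dimension polynomial in $n$ (the single-item specialization of the pseudo-dimension bounds of~\cite{MorgensternR16} behind Theorem~\ref{thm:UD bounded}), and then combine the $m$ items — either by a union bound with the standard per-item uniform-convergence estimate, or by instantiating Theorem~\ref{thm:uniform convergence for product measure PARTITION} with the partition of the $nm$ coordinates into the $m$ per-item blocks — to conclude that $O\!\big((1/\epsilon)^2(m^2 n\log n\,\log(1/\epsilon)+\log(1/\delta))\big)$ samples make the empirical revenue of every SPM lie within $\epsilon H$ of its expectation with probability $1-\delta$.

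For the algorithmic claim I would first exhibit a good SPM. For each item $j$, single-item Myerson revenue is at most the ex-ante bound $\max\{\sum_i R_{D_{ij}}(q_{ij}): q_{ij}\ge 0,\ \sum_i q_{ij}\le 1\}$, with $R_{D_{ij}}$ the concave revenue curve of $D_{ij}$ (the Folklore Lemma in Section~\ref{sec:unit-demand}); summing over items gives $\srev\le\sum_{i,j}R_{D_{ij}}(q^*_{ij})$ for a maximizer $\{q^*_{ij}\}$. Then $\{q^*_{ij}/2\}$ satisfies the per-item constraint $\sum_i q_{ij}\le 1/2$, and concavity with $R_{D_{ij}}(0)=0$ gives $R_{D_{ij}}(q^*_{ij}/2)\ge\tfrac12 R_{D_{ij}}(q^*_{ij})$. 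I would then apply the additive-bidder analogue of Lemma~\ref{lem:prices to SPM} — which, since additive bidders face no across-items demand constraint, needs only the per-item condition $\sum_i\Pr_{p_{ij},t_{ij}\sim D_{ij}}[t_{ij}\ge p_{ij}]\le 1/2$ — to obtain a randomized SPM with revenue at least $\tfrac12\sum_{i,j}R_{D_{ij}}(q^*_{ij}/2)\ge\tfrac14\sum_{i,j}R_{D_{ij}}(q^*_{ij})\ge\tfrac{\srev}{4}$.

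To make this constructive from samples, I would follow the convex-program route of Section~\ref{sec:unit-demand}: form the empirical distributions $\hat D_{ij}$ (so $\|\hat D_{ij}-D_{ij}\|_K\le\epsilon'$ for all $i,j$ with high probability by the DKW inequality), solve the polynomial-size program that maximizes $\sum_{i,j}R_{\hat D_{ij}}(q_{ij})$ subject to $q_{ij}\ge 0$ and $\sum_i q_{ij}\le 1/2+n\epsilon'$ for all $j$, and convert its optimum to a randomized SPM via the additive analogue of Lemma~\ref{lem:prices to SPM}. Arguing exactly as in Lemmas~\ref{lem:UD compare the two CP} and~\ref{lem:convert approx CP to SPM} (simplified by dropping the per-bidder constraint), the program's optimum is at least $\tfrac{\srev}{4}-O(\epsilon' nmH)$ and the constructed SPM's true revenue is within $O(\epsilon' nmH)$ of it, so choosing $\epsilon'=\Theta(\epsilon/(nm))$ caps the total additive loss at $\tfrac{3\epsilon}{2}H$ with a sample size inside the stated bound. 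I expect the genuine difficulty to be purely the error bookkeeping — tracking the $1/2$-shrinkage of the ex-ante probabilities, the $n\epsilon'$ slack in the constraints, the two $O(\epsilon' nmH)$ losses, and the normalization relating the $(\epsilon,\delta)$-uniform-learning scale to $H$ — so that the constants come out exactly as $\tfrac{3\epsilon}{2}H$ and $m^2 n\log n$; the conceptual steps are direct transcriptions of the single-item ex-ante and posted-price toolkit, made easier than the unit-demand case by the absence of the across-items demand constraint.
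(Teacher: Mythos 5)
Your treatment of the first claim (uniform convergence of SPM revenue) essentially matches the paper's: the paper simply cites Morgenstern--Roughgarden, and your per-item decomposition via Theorem~\ref{thm:uniform convergence for product measure PARTITION} or a union bound is a reasonable route to the same conclusion. The divergence is in the second, algorithmic claim. The paper does not solve a convex program at all. It uses the prophet inequality of~\cite{Samuel-cahn84}: for each item $j$ there is a per-item posted-price mechanism whose revenue is at least half the Myerson revenue $\opt_j$ for that item, so there exists an SPM with true expected revenue at least $\srev/2$. The algorithm then runs the prophet inequality \emph{on the empirical distribution} to get an SPM whose \emph{empirical} revenue $ER$ satisfies $ER \ge ER_{\mathrm{opt}}/2$, where $ER_{\mathrm{opt}}$ is the best empirical SPM revenue. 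Uniform convergence (from the first half of the lemma, at the stated sample size) gives $ER_{\mathrm{opt}} \ge \srev/2 - \epsilon H$ and that the chosen SPM's true revenue is at least $ER - \epsilon H$, whence the $\srev/4 - \tfrac{3\epsilon}{2}H$ bound. The crucial point is that the paper's argument \emph{only} needs the SPM-revenue uniform-convergence bound, which is the first claim at the stated sample complexity.

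Your convex-program route has a genuine gap precisely here: it needs $\|\hat D_{ij}-D_{ij}\|_K \le \epsilon' = \Theta(\epsilon/(nm))$ for all $nm$ pairs so that the two $O(\epsilon' nmH)$ additive losses from the analogues of Lemmas~\ref{lem:UD compare the two CP} and~\ref{lem:convert approx CP to SPM} sum to $O(\epsilon H)$. But obtaining this Kolmogorov accuracy via the DKW inequality requires $\Theta\big((1/\epsilon')^2\log(nm/\delta)\big) = \Theta\big((nm/\epsilon)^2\log(nm/\delta)\big)$ samples, which exceeds the stated bound $O\big((1/\epsilon)^2(m^2 n\log n\log(1/\epsilon)+\log(1/\delta))\big)$ by roughly a factor of $n/(\log n\log(1/\epsilon))$. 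So your plan proves a weaker statement with a worse sample complexity, and the claim that the sample size is ``inside the stated bound'' is incorrect. A secondary point: the prophet-inequality route produces a deterministic SPM and avoids any reliance on the convex program's structure, whereas your route produces a randomized SPM; that is not wrong, but it is unnecessary extra machinery. To match the paper's sample complexity you should replace the DKW/CP machinery with: (i) the existence bound $\text{best SPM}\ge\srev/2$ via per-item prophet inequality, (ii) empirical prophet-inequality pricing achieving at least half the best empirical SPM revenue, and (iii) the uniform-convergence bound you already established in the first half.
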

\begin{proof}

The first half of the Lemma was proved by Morgenstern and Roughgarden~\cite{MorgensternR16}.
 We show how to prove the second half of the claim.
Let $\opt_j$ be the optimal revenue for selling item $j$.
 By the prophet inequality~\cite{Samuel-cahn84}, there exists an SPM for selling item $j$ with a collection of prices $\{p_{ij}\}_{i\in[n]}$ that achieves revenue at least $\opt_j/2$.
  As the bidders are additive, if we run the SPMs for selling each item simultaneously, the expected revenue is exactly the sum of the revenue of the SPM mechanisms for auctioning a single item.
   Note that the simultaneous SPM is indeed a SPM for selling all items. 
   Hence, there exists an SPM that achieves revenue at least $\opt/2$.
    Since the sample complexity for $(\epsilon,\delta)$-uniformly learning the revenue of SPMs is $O\left(\left(\frac{m}{\epsilon}\right)^2 \left(n\log n\log \frac{1}{\epsilon} + \log \frac{1}{\delta}\right)\right)$, the empirical revenue induced by the samples is within $\epsilon\cdot H$ of the true expected revenue with probability $1-\delta$ for any SPM. 

We use $ER_{opt}$ to denote the optimal empirical revenue obtained by any SPM.
 If we apply the prophet inequality to the empirical distribution, we can construct an SPM whose empirical revenue $ER$ is at least $ER_{opt}/2$. Notice that $ER_{opt}$ is at most $\epsilon\cdot H$ less than the optimal true expected revenue obtained by any SPM, which is at least $\opt/2$. Combining the two inequalities above, we have $ER\geq \opt/4-\epsilon/2\cdot H$ with probability $1-\delta$. Also, the true expected revenue of our SPM is at least $ER-\epsilon\cdot H$, so our SPM achieves expected revenue at least $\frac{\opt}{4}-\frac{3\epsilon}{2}\cdot H$ with probability  $1-\delta$. \end{proof}

Now we are ready to prove our Theorem for additive bidders when their valuations are bounded.
\begin{theorem}\label{thm:additive bounded}
	When the bidders have additive valuations and $D_{ij}$ is supported on $[0,H]$ for all bidder $i$ and item $j$, we can learn in polynomial time a mechanism whose expected revenue is at least $\frac{\opt}{32}-{\epsilon}\cdot H$ with probability $1-\delta$ given $$O\left(\left(\frac{m}{\epsilon}\right)^2 \cdot\left(n\log n\log \frac{1}{\epsilon}+\log\frac{1}{\delta} \right)\right)$$ samples from $D$. 
\end{theorem}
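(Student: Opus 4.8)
The plan is to combine three ingredients: the structural upper bound $\opt \le 6\srev + 2\brev$ (Theorem~\ref{thm:UB additive}), a learned sequential posted price mechanism that captures a constant fraction of $\srev$ (Lemma~\ref{lem:additive srev}), and a learned VCG-with-entry-fee mechanism that captures a constant fraction of $\brev$ via the one-sample construction of Goldner and Karlin~\cite{GoldnerK16} described above. Since $\opt \le 6\srev + 2\brev \le 8\max\{\srev,\brev\}$, it suffices to produce a mechanism whose expected revenue is at least a quarter of $\max\{\srev,\brev\}$, up to additive $O(\epsilon)\cdot H$ error; taking the better of the two learned mechanisms does this, because $M_1$ approximates $\srev$ and the VCG-EF mechanism approximates $\brev$, each to within a factor of $4$ (plus additive slack).

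Concretely, I would split the samples into three disjoint batches. From the first batch of $O\!\left(\left(\tfrac{1}{\epsilon}\right)^2\!\left(m^2 n\log n\log\tfrac{1}{\epsilon}+\log\tfrac{1}{\delta}\right)\right)$ samples, invoke Lemma~\ref{lem:additive srev} to compute in polynomial time an SPM $M_1$ with ${\rm Rev}_D(M_1)\ge \tfrac{\srev}{4}-\tfrac{3\epsilon}{2}H$ with probability $\ge 1-\delta/3$. Using the second batch, instantiate $r=O\!\left(\tfrac{m^2}{\epsilon^2}\log\tfrac{1}{\delta}\right)$ independent copies $M_2^{(1)},\dots,M_2^{(r)}$ of the one-sample VCG-with-entry-fee mechanism, where $M_2^{(\ell)}$ uses a fresh draw $s^{(\ell)}\sim D$ to set entry fees $e_i(b_{-i},s^{(\ell)}_i)=\sum_{j\in[m]}\big(s^{(\ell)}_{ij}-\max_{k\neq i}b_{kj}\big)^{+}$. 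Goldner and Karlin's argument gives $\E_{s}\!\left[{\rm Rev}_D\!\big(M_2^{(\ell)}\big)\right]\ge \brev/4$; moreover $M_2^{(\ell)}$ is individually rational and never over-allocates, so its realized revenue never exceeds the welfare, which is at most $mH$. Hence Hoeffding gives, with probability $\ge 1-\delta/3$, that $\tfrac{1}{r}\sum_{\ell}{\rm Rev}_D\!\big(M_2^{(\ell)}\big)\ge \brev/4-\epsilon H$, and in particular $\max_\ell {\rm Rev}_D\!\big(M_2^{(\ell)}\big)\ge \brev/4-\epsilon H$. Finally, from the third batch of $O\!\left(\tfrac{m^2}{\epsilon^2}\log\tfrac{r}{\delta}\right)$ fresh samples, estimate the expected revenue of $M_1$ (via the uniform-convergence bound for SPMs behind Lemma~\ref{lem:additive srev}) and of every $M_2^{(\ell)}$ (via Hoeffding, using the pointwise bound $mH$) to within $\pm\epsilon H$ simultaneously with probability $\ge 1-\delta/3$, and output the candidate $\mathcal M$ with the largest empirical revenue. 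Conditioning on the three good events (total failure probability $\le\delta$),
\begin{align*}
{\rm Rev}_D(\mathcal M) &\ \ge\ \max\Big\{{\rm Rev}_D(M_1),\ \max_\ell {\rm Rev}_D\big(M_2^{(\ell)}\big)\Big\}-2\epsilon H\\
&\ \ge\ \tfrac14\max\{\srev,\brev\}-O(\epsilon)H\ \ge\ \tfrac{\opt}{32}-O(\epsilon)H ,
\end{align*}
and rescaling $\epsilon$ by an absolute constant turns the $O(\epsilon)H$ into $\epsilon H$ while changing the sample complexity only by a constant factor, which then matches the stated bound. All steps (solving for $M_1$, instantiating the VCG-EF copies, running empirical-revenue estimates) are polynomial time.

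The main obstacle is the VCG-with-entry-fee mechanism, which is \emph{not} a posted-price mechanism, so the pseudo-dimension / uniform-convergence machinery used for SPMs does not apply to it directly. I sidestep this in two ways: (i) to \emph{estimate} its revenue I use only the crude deterministic bound ${\rm Rev}\le mH$ together with Hoeffding, which is affordable since $m^2/\epsilon^2$ samples are within budget; and (ii) since Goldner and Karlin's guarantee is only in expectation over the drawn sample, there is no single deterministic VCG-EF mechanism with a high-probability revenue guarantee, so I draw several independent sample-based copies and let the empirical-revenue selection step pick a good one, relying on the average of the copies' revenues concentrating around $\brev/4$. The rest — the decomposition $\opt\le 6\srev+2\brev$ and the learnability of an SPM approximating $\srev$ — is a direct appeal to Theorem~\ref{thm:UB additive} and Lemma~\ref{lem:additive srev}, so the only remaining care is the error bookkeeping and the final rescaling of $\epsilon$.
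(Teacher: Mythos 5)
Your proof takes essentially the same route as the paper's: the decomposition $\opt\le 6\srev+2\brev$ from Theorem~\ref{thm:UB additive}, an SPM from Lemma~\ref{lem:additive srev} to capture $\srev$, and Goldner--Karlin's one-sample VCG-with-entry-fee mechanism to capture $\brev$. The only place you go beyond the paper's terse proof is in spelling out how to actually select the better of the two candidate mechanisms --- by instantiating $r$ independent copies of the one-sample mechanism and picking the empirically best on held-out samples --- which is a sensible (and, strictly speaking, needed) completion of the same argument, since the paper stops at ``the better of our two mechanisms'' without saying how the algorithm identifies it.
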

\begin{proof}
	According to Lemma~\ref{lem:additive srev}, we can learn a mechanism whose revenue is at least $\frac{\srev}{4}-\frac{\epsilon}{24}\cdot H$ with probability $1-\delta$ given $O\left(\left(\frac{m}{\epsilon}\right)^2\cdot \left(n\log n\log \frac{1}{\epsilon}+\log\frac{1}{\delta} \right)\right)$ samples. As we explained in the beginning of this section, with one sample from the distribution we can construct a randomized mechanism whose expected revenue is at least $\frac{\brev}{4}$. Therefore, the better of our two mechanisms has expected revenue at least $\frac{\opt}{32}-{\epsilon}\cdot H$ with probability $1-\delta$.
	\end{proof}
	
\subsection{Additive Valuations: direct access to approximate distributions}\label{sec:additive Kolmogorov}
	
In this section, we discuss how to learn an approximately optimal mechanism for additive bidders when we are given direct access to approximate value distributions. Again, we first show how to learn a mechanism whose revenue approximates $\srev$ then we provide another mechanism whose revenue approximates $\brev$.

\begin{lemma}\label{lem:SREV kolmogorov}
	For additive bidders, given distributions $\hat{D}_{ij}$ where $\left|\left|\hat{D}_{ij}-D_{ij}\right|\right|_K\leq \epsilon$ for all $i\in[n]$ and $j\in[m]$, there is a polynomial time algorithm that constructs a randomized SPM whose revenue under $D$ is at least $\left(\frac{1}{4}- \epsilon\cdot n \right)\cdot\left(\frac{\srev}{8}-2\epsilon\cdot mnH\right)$.
\end{lemma}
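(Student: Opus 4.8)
The plan is to reduce to the single‑item case and invoke the unit‑demand machinery one item at a time. Since the bidders are additive, $\srev=\sum_{j\in[m]}\opt_j$, where $\opt_j$ is the optimal (Myerson) revenue from selling item $j$ alone to the $n$ bidders whose values for it are $t_{1j},\dots,t_{nj}$; moreover we know $\hat D_{ij}$ with $||D_{ij}-\hat D_{ij}||_K\le\epsilon$ for every $i,j$. So it suffices to construct, for each item $j$ separately and using only $\hat D_{1j},\dots,\hat D_{nj}$, a polynomial‑time randomized posted‑price mechanism for that single item whose revenue under $D_{1j},\dots,D_{nj}$ is at least a constant fraction of $\opt_j$ up to an additive $O(\epsilon nH)$ error, and then to compose these $m$ mechanisms.

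For the per‑item step, observe that selling a single item to $n$ bidders is exactly a unit‑demand instance with $n$ bidders and ``$m=1$''. Hence Theorem~\ref{thm:unit-demand} (its third, direct‑access bullet), specialized to one item, already gives a polynomial‑time algorithm producing a randomized SPM for item $j$ with revenue under $D_{\cdot j}$ at least $\left(\frac14-(n+1)\epsilon\right)\left(\frac{\opt_j}{8}-2\epsilon nH\right)$; redoing the underlying convex‑program argument directly for a single item — where the per‑bidder unit‑demand constraint, hence the $+m\epsilon$ slack in the program of Figure~\ref{fig:CP unit demand approximate dist}, is vacuous — sharpens the prefactor to $\frac14-n\epsilon$. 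Concretely: solve the one‑item analogue of the convex program of Figure~\ref{fig:CP unit demand approximate dist} built from $\hat D_{\cdot j}$; its optimum is within $\epsilon nH$ of the exact one by (the single‑item case of) Lemma~\ref{lem:UD compare the two CP}, which in turn is $\ge \opt_j/8$ by Lemma~\ref{lem:compare exact CP with opt}; then convert the optimal $\{q_{ij}\}_i$ into random prices via Lemma~\ref{lem:prices to SPM}/Lemma~\ref{lem:convert approx CP to SPM}, losing only the $\frac14-n\epsilon$ factor coming from the feasibility slack.

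Finally, compose the $m$ single‑item mechanisms ``in parallel'': the combined mechanism visits bidders in a fixed order and offers bidder $i$ each still‑available item $j$ at its (random) price $p_{ij}$, and she takes every available item she can afford. Because valuations are additive and the $t_{ij}$ are independent across $j$, (i) the combined mechanism is itself a randomized SPM, and (ii) its revenue under $D$ equals the sum over $j$ of the revenue of the item‑$j$ mechanism under $D_{\cdot j}$: the event ``item $j$ survives to bidder $i$'' depends only on $\{t_{\ell j}\}_{\ell<i}$, which is independent of $t_{ij}$, so the per‑item analysis is untouched by the presence of other items. Summing the per‑item guarantees over $j\in[m]$ gives revenue under $D$ at least $\left(\frac14-n\epsilon\right)\sum_j\left(\frac{\opt_j}{8}-2\epsilon nH\right)=\left(\frac14-n\epsilon\right)\left(\frac{\srev}{8}-2\epsilon mnH\right)$, and running $m$ polynomial‑time constructions is still polynomial time. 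The only point requiring care is the parallel‑composition step — one must check that interleaving $m$ independent single‑item posted‑price auctions genuinely yields a single SPM for additive bidders and that no cross‑item coupling (a bidder buying several items) degrades the per‑item revenue bound — which holds precisely because additivity makes each bidder's decision on item $j$ depend only on $t_{ij}$ and the availability of $j$; the rest is bookkeeping of constants inherited from Section~\ref{sec:unit-demand}.
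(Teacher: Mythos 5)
Your proposal matches the paper's proof: both write $\srev=\sum_j\opt_j$, apply Theorem~\ref{thm:UD Kolmogorov} to each single-item sub-instance built from $\hat D_{1j},\dots,\hat D_{nj}$, and run the $m$ resulting randomized SPMs in parallel, using additivity and item-independence to sum the per-item guarantees. Your extra care in sharpening the prefactor from $\frac14-(n+1)\epsilon$ to $\frac14-n\epsilon$ (observing that the per-bidder column constraint, and hence the $\eta_2$ factor in Lemma~\ref{lem:prices to SPM}, is vacuous when a sub-instance has one item) is actually needed, since the paper just cites Theorem~\ref{thm:UD Kolmogorov}, which literally yields the weaker $(n+1)$ constant at $m=1$.
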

\begin{proof}
	Let $\opt_j$ be the optimal revenue for selling item $j$.
	 As the bidders are additive, if we can construct a randomized SPM $M_j$ for every item $j$ such that its expected revenue under $D$ is at least $\left(\frac{1}{4}- \epsilon\cdot n \right)\cdot\left(\frac{\opt_j}{8}-2\epsilon\cdot nH\right)$, running these $m$ randomized SPMs in parallel generates expected revenue at least $$\sum_{j\in[m]} \left(\frac{1}{4}- \epsilon\cdot n \right)\cdot\left(\frac{\opt_j}{8}-2\epsilon\cdot nH\right) = \left(\frac{1}{4}- \epsilon\cdot n \right)\cdot\left(\frac{\srev}{8}-2\epsilon\cdot mnH\right)$$ under $D$.
	  Due to Theorem~\ref{thm:UD Kolmogorov}, we can construct in polynomial time such a randomized SPM $M_j$ for each item $j$ based on $\times_{i\in[n]} \hat{D}_{ij}$.
\end{proof}

Next, we show how to choose the entry fee based on $\hat{D} = \times_{i, j} \hat{D}_{ij}$, so that the VCG with entry fee mechanism has revenue that approximates $\brev$ under the true distribution $D$. More specifically, we use the median of $i$'s utility under $\hat{D}_i = \times_{j\in[m]} \hat{D}_{ij}$ as bidder $i$'s entry fee. We prove the result in two steps. We first show that if we can use an entry fee function such that every bidder $i$ accepts her entry fee with probability between $[1/2-\eta,1/2]$ for any possible bid profiles $b_{-i}$ of the other bidders, the expected revenue is at least $(1/2-\eta)\cdot \brev$. Second, we show how to compute in polynomial time such entry fee functions with $\eta=O(m\epsilon)$ based on $\hat{D}$. 

\begin{lemma}\label{lem:approximate entry fee for BREV}
	Suppose for every bidder $i$, $d_i(\cdot): T_{-i}\mapsto R$ is a randomized entry fee function such that for any bid profile $b_{-i}\in T_{-i}$ of the other bidders $$\Pr_{t_i\sim D_i}\left[\sum_{j\in[m]} \left(t_{ij}-\max_{k\neq i} b_{kj}\right)^+\geq d_i(b_{-i})\right]\in\left[\frac{1}{2}-\eta,\frac{1}{2}\right]$$ with probability at least $1-\delta$. 
	Then if we use $d_i(\cdot)$ as the entry fee function in the VCG with entry fee mechanism, the expected revenue is at least 
	$\left(1-\delta-2\eta\right)\cdot\brev$.
\end{lemma}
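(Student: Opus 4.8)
The plan is to fix a bidder $i$ and a bid profile $b_{-i}$ of the others, and compare the expected entry-fee revenue extracted from $i$ under the randomized function $d_i(\cdot)$ with the corresponding quantity implicit in the definition of $\brev$, namely $\frac{1}{2}\cdot e_i(b_{-i},D_i)$, where $e_i(b_{-i},D_i)$ is the median of $U_i:=\sum_{j\in[m]}\left(t_{ij}-\max_{k\neq i}b_{kj}\right)^+$ over $t_i\sim D_i$. First I would observe that, exactly as for the original deterministic entry fee, $d_i(\cdot)$ depends only on the reports $b_{-i}$ of the other bidders, so reporting truthfully remains a (weakly) dominant strategy for every bidder; hence it suffices to analyze the mechanism when $b_{-i}=t_{-i}\sim D_{-i}$. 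Also recall that, by the choice of the median, the deterministic mechanism collects entry fee $e_i(b_{-i},D_i)$ from $i$ with probability exactly $\frac{1}{2}$, so $\brev=\frac{1}{2}\sum_{i\in[n]}\E_{t}\!\left[e_i(t_{-i},D_i)\right]$.

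Next, fix $i$ and $b_{-i}$ and condition on the ``good event'' $G$ — over the internal randomness of $d_i(\cdot)$ — that $\Pr_{t_i\sim D_i}\!\left[U_i\ge d_i(b_{-i})\right]\in\left[\frac{1}{2}-\eta,\frac{1}{2}\right]$; by hypothesis $\Pr[G]\ge 1-\delta$. On $G$, since $x\mapsto\Pr_{t_i}[U_i\ge x]$ is non-increasing and $\Pr_{t_i}[U_i\ge d_i(b_{-i})]\le\frac{1}{2}$, we get $d_i(b_{-i})\ge e_i(b_{-i},D_i)$ (here I use, as in the paper's definition of the entry fee, that the median is taken with $\Pr_{t_i}[U_i\ge e_i]=\frac{1}{2}$; if the law of $U_i$ has an atom at its median, one uses the standard perturbation/randomized tie-breaking, which only helps). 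Hence the expected entry-fee revenue collected from $i$ on $G$ equals $d_i(b_{-i})\cdot\Pr_{t_i}[U_i\ge d_i(b_{-i})]\ge\left(\frac{1}{2}-\eta\right)e_i(b_{-i},D_i)$, while off $G$ it is still nonnegative (as $U_i\ge 0$). Taking expectation over the randomness of $d_i(\cdot)$,
\[
\E\!\left[\text{entry fee from }i\mid b_{-i}\right]\ \ge\ (1-\delta)\!\left(\frac{1}{2}-\eta\right)e_i(b_{-i},D_i)\ \ge\ \frac{1}{2}(1-\delta-2\eta)\,e_i(b_{-i},D_i),
\]
using $(1-\delta)\!\left(\frac{1}{2}-\eta\right)\ge\frac{1}{2}-\eta-\frac{\delta}{2}$.

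Finally I would take expectations over $b_{-i}=t_{-i}\sim D_{-i}$ and sum over $i\in[n]$; by linearity and the identity $\brev=\frac{1}{2}\sum_i\E_t[e_i(t_{-i},D_i)]$ recorded above, the total expected revenue of the VCG-with-entry-fee mechanism run with the functions $d_i(\cdot)$ is at least $(1-\delta-2\eta)\cdot\brev$, as claimed. The only genuinely delicate point is the step $d_i(b_{-i})\ge e_i(b_{-i},D_i)$, which rests on the precise convention for the median of $U_i$ together with the nonnegativity of $U_i$; the rest is a conditioning-on-the-good-event computation plus the already-established characterization of $\brev$.
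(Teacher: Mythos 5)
Your proposal is correct and follows essentially the same route as the paper's proof: condition on the event that the acceptance probability lies in $[\tfrac{1}{2}-\eta,\tfrac{1}{2}]$, deduce $d_i(b_{-i})\geq e_i(b_{-i},D_i)$ from monotonicity of the survival function and the median convention, lower bound the collected fee by $(1-\delta)(\tfrac{1}{2}-\eta)\,\E_{b_{-i}}[e_i(b_{-i},D_i)]$, and compare with $\brev=\tfrac{1}{2}\sum_i\E_t[e_i(t_{-i},D_i)]$ via $(1-\delta)(1-2\eta)\geq 1-\delta-2\eta$. Your extra care about atoms at the median and about truthfulness being preserved only makes explicit what the paper leaves implicit.
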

\begin{proof}
	When $\Pr_{t_i\sim D_i}\left[\sum_{j\in[m]} \left(t_{ij}-\max_{k\neq i} b_{kj}\right)^+\geq d_i(b_{-i})\right]\in\left[\frac{1}{2}-\eta,\frac{1}{2}\right]$, $d_i(b_{-i})$ is no less than the original entry fee $e_i(b_{-i}, D_i)$ for any bid profile $b_{-i}$ of the other bidders. The expected revenue under the new entry fee functions is at least $(\left(1-\delta\right)\cdot\left(\frac{1}{2}-\eta\right)\cdot \sum_{i\in[n]} \E_{b_{-i}\sim D_{-i}}\left[e_i(b_{-i},D_i)\right]
 	\geq \left(1-\delta-2\eta\right)\cdot\brev$.
\end{proof}

\begin{lemma}\label{lem:learn approx entry fee from approx dist}
	For any bidder $i$ and any bid profile $b_{-i}$ from the other bidders, let $\FF_{i,b_{-i}}$ and $\hat{\FF}_{i,b_{-i}}$ be the distributions for the random variable $\sum_{j\in[m]} \left(t_{ij}-\max_{k\neq i} b_{kj}\right)^+$ when $t_i$ is drawn from $D_i$ and $\hat{D}_i$ respectively. If $\left|\left|D_{ij}-\hat{D}_{ij}\right|\right|_K\leq \epsilon$ for all bidder $i$ and item $j$, $\left|\left|\FF_{i,b_{-i}}-\hat{\FF}_{i,b_{-i}}\right|\right|_K\leq 2m\epsilon$ for all $i$ and $b_{-i}$. Moreover, when $m\epsilon\leq 1/16$, we can compute a randomized mechanism whose expected revenue is at least $\frac{\brev}{5}$.
\end{lemma}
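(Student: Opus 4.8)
The statement has two parts — a Kolmogorov-stability bound for the pushforward distribution of bidder $i$'s utility proxy, and a revenue guarantee for a suitably modified VCG-with-entry-fee mechanism — and the second part is derived from the first together with Lemma~\ref{lem:approximate entry fee for BREV}. For the stability bound, fix a bidder $i$ and a bid profile $b_{-i}$, write $c_j := \max_{k\neq i}b_{kj}$ (constants once $b_{-i}$ is fixed), and set $W(t_i):=\sum_{j\in[m]}(t_{ij}-c_j)^+$, so that $\FF_{i,b_{-i}}$ and $\hat{\FF}_{i,b_{-i}}$ are the laws of $W$ under $D_i=\times_j D_{ij}$ and $\hat{D}_i=\times_j \hat{D}_{ij}$. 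The key observation is that for every threshold $x$ the event $\EE_x:=\{t_i\in\mathbb{R}^m:W(t_i)\le x\}$ is single-intersecting in the sense of Definition~\ref{def:single-intersecting}: along any axis-parallel line $\{(t_{ij},a_{-j}):t_{ij}\in\mathbb{R}\}$ the value $W(t_{ij},a_{-j})$ equals a nonnegative constant plus $(t_{ij}-c_j)^+$, hence is continuous and nondecreasing in $t_{ij}$, so $\EE_x$ meets that line in either the empty set or a half-line — an interval. (Equivalently, $W$ is convex, so $\EE_x$ is convex, hence single-intersecting.) Applying Lemma~\ref{lem:Kolmogorov stable for sc} with $\ell=m$ and $\xi=\epsilon$ gives $\big|\Pr_{D_i}[\EE_x]-\Pr_{\hat{D}_i}[\EE_x]\big|\le 2m\epsilon$ for every $x$, and taking the supremum over $x$ yields $\|\FF_{i,b_{-i}}-\hat{\FF}_{i,b_{-i}}\|_K\le 2m\epsilon$, as claimed.

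For the revenue guarantee, run the VCG-with-entry-fee mechanism but replace the true entry fee by one computed from $\hat{D}$: for each bidder $i$ and realized profile $b_{-i}$, let $d_i(b_{-i})$ be the quantile of $\hat{\FF}_{i,b_{-i}}$ whose upper-tail probability is $1/2-2m\epsilon$ (randomizing the fee on an atom, if one lies there, so this probability is attained exactly). By the Kolmogorov bound just proved, $\Pr_{t_i\sim D_i}[W(t_i)\ge d_i(b_{-i})]\in[\,1/2-4m\epsilon,\ 1/2\,]$, so the hypothesis of Lemma~\ref{lem:approximate entry fee for BREV} holds with $\eta=4m\epsilon\le 1/4$ under the assumption $m\epsilon\le 1/16$. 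The fee $d_i(b_{-i})$ can be produced with access to $\hat{D}$ alone: if the $\hat{D}_{ij}$ have small support one computes the law of $W$ under $\hat{D}_i$ by convolution and reads off the quantile; in general one draws polynomially many samples from $\hat{D}_i$ and estimates the quantile, which by the DKW inequality~\cite{DvoretzkyKW56} realizes the same hypothesis with an extra failure probability $\delta$ that we can drive below $1/10$. Lemma~\ref{lem:approximate entry fee for BREV} then gives expected revenue at least $(1-\delta-2\eta)\cdot\brev\ge(1-1/10-1/2)\cdot\brev\ge\brev/5$. Running this in parallel with the $\srev$-approximating auctions of Lemma~\ref{lem:SREV kolmogorov} and taking the better of the two, together with $\opt\le 6\srev+2\brev$ (Theorem~\ref{thm:UB additive}), completes the picture for additive bidders.

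The conceptual heart — and the only real obstacle — is the first part: noticing that ``bidder $i$'s utility above the externality prices exceeds a threshold'' is a single-intersecting event, so that the product-measure machinery of Lemma~\ref{lem:Kolmogorov stable for sc} upgrades a per-coordinate Kolmogorov gap of $\epsilon$ to an $O(m\epsilon)$ gap for $W$, instead of the exponential blow-up a naive union over which items bidder $i$ can afford would suffer. The remainder is bookkeeping, the only subtle point being to pick the entry-fee quantile strictly above the median (and randomize over atoms) so that the acceptance probability under the true distribution lands inside the one-sided window $[1/2-\eta,1/2]$ that Lemma~\ref{lem:approximate entry fee for BREV} requires, and to keep the sampling failure probability small enough that the final constant stays above $1/5$.
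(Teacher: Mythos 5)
Your first part is exactly the paper's argument: you show that the threshold events for $W(t_i)=\sum_j (t_{ij}-\max_{k\neq i}b_{kj})^+$ are single-intersecting (the paper phrases it with the upper-tail event $\{t_i : W(t_i)\ge x\}$, you with the sublevel set, which is equivalent) and then invoke Lemma~\ref{lem:Kolmogorov stable for sc} coordinate-wise to get the $2m\epsilon$ Kolmogorov bound. For the second part your skeleton is also the paper's (set the entry fee to an empirical quantile of $W$ under $\hat{D}_i$, verify the acceptance window under $D$, apply Lemma~\ref{lem:approximate entry fee for BREV}), but the quantile you target differs in a way that matters. The paper sets $d_i(b_{-i})$ to the $\lceil 5k/16\rceil$-th largest of $k$ samples, so that with probability $1-\exp(-\Omega(k))$ the acceptance probability under $\hat{D}_i$ lands in $[1/4,3/8]$ and hence under $D_i$ in $[1/8,1/2]$ once $m\epsilon\le 1/16$; the deliberate gap between $3/8$ and $1/2$ absorbs both the sampling error and the $2m\epsilon$ shift, and the constant $(1/4-\delta)\ge 1/5$ comes out with $\eta=3/8$. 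You instead aim for upper-tail exactly $1/2-2m\epsilon$ under $\hat{\FF}_{i,b_{-i}}$, which leaves zero slack on the upper side: when the quantile is estimated from samples (the only polynomial-time route --- your ``exact convolution'' aside is not polynomial in $m$, since the support of the sum can blow up exponentially and exact quantiles of sums are hard), the realized $\hat{\FF}$-tail can exceed $1/2-2m\epsilon$ by the estimation error, so the acceptance probability under $D_i$ can exceed $1/2$, and then $d_i(b_{-i})$ may fall below the true median $e_i(b_{-i},D_i)$, which is precisely the premise Lemma~\ref{lem:approximate entry fee for BREV}'s proof needs. You flag this subtlety yourself, and the repair is trivial (target $1/2-2m\epsilon-c$ for a small constant $c$ and take $\eta=4m\epsilon+2c$, or simply adopt the paper's $5/16$ order statistic), but as written the sampling step does not ``realize the same hypothesis,'' so state the buffer explicitly. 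With that one-line fix your argument is complete and gives the same (indeed a slightly better) constant.
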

\begin{proof}
	For any real number $x$, consider event $\EE_{i,b_{-i},x}=\left\{t_i\ \Big{|}\ \sum_{j\in[m]} \left(t_{ij}-\max_{k\neq i} b_{kj}\right)^+\geq x\right\}$. 
	It is easy to see that $\EE_{i,b_{-i},x}$ is single-intersecting for any any $i$, $b_{-i}$ and $x$.
	 According to Lemma~\ref{lem:Kolmogorov stable for sc}, $$\left|\Pr_{t_i\sim D_i}\left[\EE_{i,b_{-i},x}\right]-\Pr_{t_i\sim \hat{D}_i}\left[\EE_{i,b_{-i},x}\right]\right|\leq 2m\epsilon$$ for any $i$, $b_{-i}$ and $x$. 
	 Hence, $\left|\left|\FF_{i,b_{-i}}-\hat{\FF}_{i,b_{-i}}\right|\right|_K\leq 2m\epsilon$. 
	
	Next, we argue how to construct a randomized entry fee $d_i(b_{-i})$ in polynomial time with only sample access of $\hat{\FF}_{i,b_{-i}}$. 
	Suppose we take $k$ samples from $\hat{\FF}_{i,b_{-i}}$ and sort them in descending order $s_1\geq s_2\geq \cdots\geq s_k$.
	 Let the entry fee $d_i(b_{-i})$ to be $s_{\left\lceil\frac{5k}{16}\right\rceil}$.
	 By the Chernoff bound, with probability at least $1-\exp(-k/128)$ (over the randomness of the samples) $\Pr_{t_i\sim \hat{D_i}}\left[\sum_{j\in[m]} \left(t_{ij}-\max_{k\neq i} b_{kj}\right)^+\geq d_i(b_{-i}) \right]=\Pr_{t_i\sim \hat{D_i}}\left[\EE_{i,b_{-i},d_i(b_{-i})}\right]$ lies in $\left[\frac{1}{4},\frac{3}{8}\right]$.
	  Since $\Pr_{t_i\sim D_i}\left[\EE_{i,b_{-i},d_i(b_{-i})}\right]=\Pr_{t_i\sim \hat{D_i}}\left[\EE_{i,b_{-i},d_i(b_{-i})}\right]\pm 2m\epsilon$, $$\Pr_{t_i\sim D_i}\left[\EE_{i,b_{-i},d_i(b_{-i})}\right]\in [\frac{1}{8},\frac{1}{2}],$$ if $m\epsilon\leq 1/16$.
	   According to Lemma~\ref{lem:approximate entry fee for BREV}, the expected revenue under our entry fee $d_i(b_{-i})$ is at least $\left(\frac{1}{4}-\exp(-k/128)\right)\cdot\brev\geq \frac{\brev}{5}$ if we choose $k$ to be larger than some absolute constant. Clearly, the procedure above can be completed in polynomial time with access to $\hat{D}$.
\end{proof}

Combining Lemma~\ref{lem:SREV kolmogorov} and~\ref{lem:learn approx entry fee from approx dist}, we are ready to prove our main result of this section.
\begin{theorem}\label{thm:additive Kolmogorov}
If all bidders have additive valuations, given distributions $\hat{D}_{ij}$ where $\left|\left|\hat{D}_{ij}-D_{ij}\right|\right|_K\leq \epsilon$ for all $i\in[n]$ and $j\in[m]$, there is a polynomial time algorithm that constructs a mechanism whose expected revenue under $D$ is at least $\frac{\opt}{266}-96\epsilon\cdot mnH$ when $\epsilon\leq \frac{1}{16\max\{m,n\}}$.\end{theorem}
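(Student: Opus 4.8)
The plan is to build two polynomial-time sub-mechanisms from $\hat D$ alone — one whose revenue tracks $\srev$, the optimal revenue from selling the items separately, and one whose revenue tracks $\brev$, the entry-fee revenue of VCG with entry fees — and then combine them using the structural inequality $\opt \le 6\cdot\srev + 2\cdot\brev$ of Theorem~\ref{thm:UB additive}. The conceptual point to keep in mind is that we have direct access only to $\hat D$ and \emph{no} samples from the true $D$, so we cannot tell which of the two sub-mechanisms collects more revenue under $D$. We sidestep this by outputting a \emph{fixed} convex combination of the two: the randomized mechanism $M$ that, before seeing any bids, tosses a biased coin and runs one of the two sub-mechanisms. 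Since both sub-mechanisms are dominant-strategy truthful and the coin is independent of the reports, $M$ is a valid truthful mechanism, and ${\rm Rev}_M(D)$ is exactly the corresponding convex combination of the two sub-mechanisms' revenues under $D$, which in particular is no smaller than what the right weighting forces it to be.

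Concretely, first I would invoke Lemma~\ref{lem:SREV kolmogorov} to obtain, in polynomial time from $\hat D$, a randomized SPM $M_1$ with ${\rm Rev}_{M_1}(D)\ge \bigl(\tfrac14-\epsilon n\bigr)\bigl(\tfrac{\srev}{8}-2\epsilon mnH\bigr)$. Using the hypothesis $\epsilon\le\tfrac{1}{16\max\{m,n\}}$ we get $\epsilon n\le\tfrac1{16}$, hence $\tfrac14-\epsilon n\ge\tfrac3{16}$ and therefore ${\rm Rev}_{M_1}(D)\ge \tfrac{3}{128}\srev-\tfrac{3}{8}\epsilon mnH$. Second, since $m\epsilon\le\tfrac1{16}$, Lemma~\ref{lem:learn approx entry fee from approx dist} yields, again in polynomial time from $\hat D$, a randomized mechanism $M_2$ (VCG with entry fees, the entry fees computed from the medians of utilities under $\hat D_i$) with ${\rm Rev}_{M_2}(D)\ge\tfrac{\brev}{5}$.

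Now let $M$ run $M_1$ with probability $\lambda=\tfrac{256}{266}$ and $M_2$ with probability $1-\lambda=\tfrac{10}{266}$; these weights are chosen so that $\lambda\cdot\tfrac{3}{128}=\tfrac{6}{266}$ and $(1-\lambda)\cdot\tfrac15=\tfrac{2}{266}$. Then
\[
{\rm Rev}_M(D)=\lambda\,{\rm Rev}_{M_1}(D)+(1-\lambda)\,{\rm Rev}_{M_2}(D)\ \ge\ \frac{6}{266}\srev+\frac{2}{266}\brev-\lambda\cdot\frac{3}{8}\,\epsilon mnH\ \ge\ \frac{\opt}{266}-96\,\epsilon mnH,
\]
where the last step uses $6\srev+2\brev\ge\opt$ from Theorem~\ref{thm:UB additive} together with the crude bound $\lambda\cdot\tfrac38<1<96$ on the additive error. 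The overall running time is polynomial since $M_1$ and $M_2$ are each constructed in polynomial time and sampling the coin is trivial.

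I expect the routine bookkeeping to dominate the work: the only thing needing a bit of care is picking the mixing weight $\lambda$ so the two revenue terms line up with the $6{:}2$ ratio in $\opt\le 6\srev+2\brev$, and then checking that the resulting multiplicative constant is $\tfrac1{266}$ and that the additive error stays comfortably below $96\epsilon mnH$. If there is any genuine obstacle it is the one already flagged — that without sample access to $D$ we must commit to a blind mixture rather than selecting the better of the two mechanisms — but this causes no loss, since we only need a lower bound on revenue and a convex combination of lower bounds with the right weights already achieves the target.
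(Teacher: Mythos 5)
Your proof is correct and reaches exactly the paper's constants, but it resolves a genuine imprecision that the paper glosses over. The paper's own proof invokes the same two building blocks (Lemma~\ref{lem:SREV kolmogorov} for the \srev\ side, Lemma~\ref{lem:learn approx entry fee from approx dist} for the \brev\ side, combined via Theorem~\ref{thm:UB additive}), but then concludes by saying ``the better of the two mechanisms \ldots has revenue at least $\frac{\opt}{266}-96\epsilon mnH$.'' That phrasing is fine as an existential claim, but as an algorithmic statement it is incomplete: without sample access to $D$, the algorithm cannot determine which of $M_1$, $M_2$ collects more revenue under the true distribution. Your fix --- output the \emph{fixed} convex combination $M$ that runs $M_1$ with probability $\lambda = \frac{256}{266}$ and $M_2$ with probability $\frac{10}{266}$ --- is a concrete, polynomial-time, dominant-strategy-truthful mechanism whose revenue under $D$ is a linear combination of the two lower bounds, and you correctly tune $\lambda$ so that the resulting multiplicative coefficients line up with the $6{:}2$ ratio in $\opt \le 6\srev + 2\brev$. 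The arithmetic checks out: $\lambda\cdot\frac{3}{128} = \frac{6}{266}$, $(1-\lambda)\cdot\frac{1}{5} = \frac{2}{266}$, and the additive loss is $\lambda\cdot\frac{3}{8}\epsilon mnH = \frac{96}{266}\epsilon mnH < 96\epsilon mnH$. So you prove the same bound with the same ingredients but make the selection step rigorous; the paper's version works under the ``better of the two'' reading only if one additionally argues (as the paper does elsewhere, e.g.\ Theorem~\ref{thm:revenue stability under K-distance}, but not here) that the revenues under $D$ and $\hat D$ are close enough to allow picking the winner using $\hat D$. Your mixture sidesteps that entirely.
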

\begin{proof}
	Since $\epsilon\leq \frac{1}{16\max\{m,n\}}$, we can learn in polynomial time a randomized SPM whose revenue is at least $\frac{3}{16}\cdot\left(\frac{\srev}{8}-2\epsilon\cdot mnH\right)$ and a VCG with entry fee mechanism whose revenue is at least $\brev/5$. As $\opt\leq 6\cdot \srev+2\brev$ (Theorem~\ref{thm:UB additive}), the better of the two mechanisms we can learn in polynomial time has revenue at least $\frac{\opt}{266}-96\epsilon\cdot mnH$.
\end{proof}

\section{Missing Details from Section~\ref{sec:constrained additive}}\label{sec:appx XOS}

\begin{prevproof}{Lemma}{lem:approx ASPE}
We only sketch the proof here. Let $\prev$ denote the highest revenue obtainable by any RSPM. In~\cite{CaiZ17}, Cai and Zhao constructed an upper bound of the optimal revenue using duality and separated the upper bound into three components: $\single$, $\tail$ and $\core$. Both $\single$ and $\tail$ are within constant times the $\prev$, and the ASPE$(p^*,\delta^*)$ is used to bound the $\core$. It turns out one can use essentially the same proof as in~\cite{CaiZ17} to prove that the mechanism ASPE$(p',\delta')$ has revenue at least $a_1(\mu)\cdot \core-a_2(\mu)\cdot\prev-a_3(\mu)\cdot (n+m)\cdot \epsilon$ where  $a_1(\mu)$, $a_2(\mu)$ and $a_3(\mu)$ are functions that map $\mu$ to positive numbers. In other words, we can replace ASPE$(p^*,\delta^*)$ with ASPE$(p',\delta')$ and still obtain a constant factor approximation.
\end{prevproof}

\notshow{
\subsection{Missing Proofs from Section~\ref{sec:constrained additive kolomogorov}}\label{sec:appx constrained additive}
\begin{prevproof}{Lemma}{lem:Kolmogorov learn entry fee}
	To prove our claim, it suffices to prove that for any collection of prices $\{p_j\}_{j\in[m]}$, any bidder $i$ and any set of items $S$ that $\left|\Pr_{D_i}\left[u^{(p)}_i(t_i,S)\geq \delta_i^{(p)}(S)\right]-\Pr_{\hat{D}_i}\left[u^{(p)}_i(t_i,S)\geq \delta_i^{(p)}(S)\right]\right|\leq 2m\xi$. Let $\EE$ be the event that $u^{(p)}_i(t_i,S)\geq \delta_i^{(p)}(S)$. Since this is an event over bidder $i$'s type set, the dimension is $m$. If we can argue that $\EE$ is single-intersecting, our claim follows from Lemma~\ref{lem:Kolmogorov stable for sc}. 
	
	For any $j\in[m]$ and $a_{i,-j}\in [0,H]^{{m}-1}$, let $L_j(a_{i,-j})=\left\{ (t_{ij},a_{i,-j}) \ | t_{ij}\in [0,H] \right\}$, then $L_j(a_{i,-j})\cap \EE = \left\{ (t_{ij},a_{i,-j}) \ | t_{ij}\in [0,H] \text{ and }  u^{(p)}_i((t_{ij},a_{i,-j}),S)\geq \delta_i^{(p)} (S)\right\}$. If $(x_{ij}, a_{i,-j})\in L_j(a_{i,-j})\cap \EE$, clearly for any $x'_{ij}\geq x_{ij}$,  $u^{(p)}_i((x'_{ij},a_{i,-j}),S)\geq u^{(p)}_i((x_{ij},a_{i,-j}),S)$. Hence, $(x'_{ij}, a_{i,-j})\in L_j(a_{i,-j})\cap \EE$. Therefore, $\EE$ is single-intersecting.
\end{prevproof}

\notshow{
\begin{prevproof}{Lemma}{lem:difference in revenue Kolmogorov}
	We use a hybrid argument. Consider a sequence of distributions $\{D^{(i)}\}_{i\leq n}$, where $D^{(i)}=\hat{D}_1\times\cdots\times\hat{D}_i\times D_{i+1}\times\cdots\times D_{n},$ and $D^{(0)}=D$, $D^{(n)}=\hat{D}$.
	 We use $\rev^{(i)}(p,\delta)$ to denote the expected revenue of ASPE$(p,\delta)$ under $D^{(i)}$. To prove our claim, it suffices to argue that $\left|\rev^{(i-1)}(p,\delta) -\rev^{(i)}(p,\delta)\right|\leq 2\xi m\cdot \left(m\cdot H+\opt\right).$
	  We denote by $\SS_k$ and $\SS'_k$ the random set of items that remain available after visiting the first $k$ bidders under $D^{(i-1)}$ and $D^{(i)}$. Clearly, for $k\leq i-1$, $||\SS_k-\SS'_k||_{TV}=0$, so the expected revenue collected from the first $i-1$ bidders under $D^{(i-1)}$ and $D^{(i)}$ is the same. According to Lemma~\ref{lem:stable favorite set}, $||\SS_i-\SS'_i||_{TV}\leq 2m\cdot \xi$. The total amount of money bidder $i$ spends can never be higher than her value for receiving all the items which is at most $m\cdot H$. So the difference in the expected revenue collected from bidder $i$ under  $D^{(i-1)}$ and $D^{(i)}$ is at most $2\xi\cdot m^2H$. Suppose $R$ is the set of remaining items after visiting the first $i$ bidders, then the expected revenue collected from the last $n-i$ bidders is the same under  $D^{(i-1)}$ and $D^{(i)}$, as these bidders have the same distributions. Moreover, this expected revenue is no more than $\opt_{ASPE}$, since the optimal ASPE can simply just sell $R$ to the last $n-i$ bidders using the same prices and entry fee as in ASPE$(p,\delta)$. Of course, for any fixed $R$, the probabilities that $\SS_i=R$ and $\SS'_i=R$ are different, but since for any $R$ the expected revenue from the last $n-i$ bidders is at most $\opt_{ASPE}$, the difference in the expected revenue from the last $n-i$ bidders under  $D^{(i-1)}$ and $D^{(i)}$ is at most $||\SS_i-\SS'_i||_{TV} \cdot \opt \leq 2\xi\cdot m\opt_{ASPE}$. Hence, the total difference between $\rev^{(i-1)}(p,\delta)$ and  $\rev^{(i)}(p,\delta)$ is at most $2\xi m\cdot \left(m H+\opt_{ASPE}\right)$. 
	  Furthermore, $\left|\rev(p,\delta)-\widehat{\rev}(p,\delta)\right|\leq \sum_{i=1}^{n}\left|\rev^{(i-1)}(p,\delta) -\rev^{(i)}(p,\delta)\right|\leq 2 nm\xi \left(mH+\opt_{ASPE}\right).$
\end{prevproof}}

\begin{prevproof}{Theorem}{thm:constrained additive Kolmogorov}
 Let $\prev$ be the revenue of the optimal RSPM. According to Theorem~\ref{thm:UD Kolmogorov}, given $\hat{D}$, we can learn an RSPM that has revenue $\left(\frac{1}{4}-(n+m)\cdot \xi \right)\cdot \left(\frac{\prev}{8}-2\xi\cdot mnH\right)$. On the other hand, with access to $\hat{D}$, we can learn an ASPE such that either this ASPE or the best RSPM achieves a constant fraction of $\opt$ minus $\xi\cdot O(m^2nH)$. This can be proved by setting $\epsilon$ to be $O(\frac{\opt}{m+n})$ in Corollary~\ref{cor:discretization of prices}, then combine it with Lemma~\ref{lem:Kolmogorov learn entry fee} and~\ref{lem:difference in revenue Kolmogorov}. To sum up, we can learn an SPM and an ASPE with only access to $\hat{D}$ such that the better among the two achieves revenue at least $\frac{\opt}{c}-\xi\cdot O(m^2nH)$ under the actual distribution $D$ for some constant $c>1$.
\end{prevproof}
}

\subsection{Missing Proofs from Section~\ref{sec:XOS sample}}\label{sec:XOS bounded}

 We formalize the first step of our algorithm in the following lemma.
\begin{lemma}\label{lem:learn median}
For any $B>0$, $\epsilon>0$, $\eta\in [0,1]$ and $\mu\in[0,\frac{1}{4}]$, suppose we take 
$K=O\left(\frac{\log \frac{1}{\eta}+ \log n +m\log \frac{B}{\epsilon}}{\mu^2}\right)$ samples $t^{(1)},\cdots, t^{(K)}$ from $D$. For any collection of prices $\{p_j\}_{j\in[m]}$ in the $B$-bounded $\epsilon$-net, define the entry fee $\delta_i^{(p)}(S)$
 of bidder $i$ for set $S$ under $\{p_j\}_{j\in[m]}$ to be the median of $u_i(t^{(1)}_i,S),\cdots, u_i(t^{(K)}_i,S)$, where $u_i(t_i,S)=\max_{S*\subseteq S} v_i(t_i,S^*)-\sum_{j\in S^*} p_j$. Then with probability $1-\eta$, for any collection of prices $\{p_j\}_{j\in[m]}$ in the $B$-bounded $\epsilon$-net, $\left\{\delta_i^{(p)}(\cdot)\right\}_{i\in[n]}$ is a collection of $\mu$-balanced entry fee functions.\end{lemma}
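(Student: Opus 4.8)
\textbf{Proof proposal.} The plan is to fix a single triple (a price vector $\{p_j\}_{j\in[m]}$ in the $B$-bounded $\epsilon$-net, a bidder $i$, and a set $S\subseteq[m]$), reduce the desired $\mu$-balancedness of that triple to a one-dimensional uniform-convergence statement handled by the DKW inequality, and then take a union bound over all triples. Fix such a triple and consider the real-valued random variable $X=u_i(t_i,S)$ for $t_i\sim D_i$; write $G(x)=\Pr_{t_i\sim D_i}[X\le x]$ for its cdf and $\hat G$ for the empirical cdf of the values $u_i(t_i^{(1)},S),\dots,u_i(t_i^{(K)},S)$. By construction $\delta_i^{(p)}(S)$ is an order statistic realizing the empirical median, so it satisfies the sandwich $\hat G\big((\delta_i^{(p)}(S))^{-}\big)\le \tfrac12\le \hat G\big(\delta_i^{(p)}(S)\big)\le \tfrac12+\tfrac1K$ (the $\tfrac1K$ slack accounting for the particular order statistic, when the sample values are distinct). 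Since $\Pr_{t_i\sim D_i}[u_i(t_i,S)\ge \delta_i^{(p)}(S)]=1-G\big((\delta_i^{(p)}(S))^{-}\big)$, the $\mu$-balanced condition for this triple is exactly the statement that $G$ and $\hat G$ agree to within $\mu$ (up to the $\tfrac1K$ slack) at the data-dependent point $\delta_i^{(p)}(S)$ and its left limit.

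First I would invoke the DKW inequality~\cite{DvoretzkyKW56}: with probability at least $1-2e^{-K\mu^2/2}$ one has $\sup_{x\in\mathbb{R}}|\hat G(x)-G(x)|\le \mu/2$, and crucially this bound is uniform over all thresholds, so it applies at the sample-dependent threshold $\delta_i^{(p)}(S)$ and at its left limit. On this event, combining with the sandwich above gives $G\big((\delta_i^{(p)}(S))^{-}\big)\in[\tfrac12-\mu,\tfrac12+\mu]$, where I use $\tfrac1K\le \mu/2$ (valid since $K=\Omega(1/\mu^2)$ and $\mu\le\tfrac14$). Equivalently $\Pr_{t_i\sim D_i}[u_i(t_i,S)\ge \delta_i^{(p)}(S)]\in[\tfrac12-\mu,\tfrac12+\mu]$, which is precisely $\mu$-balancedness of the triple.

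Then I would union bound over all triples. The $B$-bounded $\epsilon$-net contains at most $(\lfloor B/\epsilon\rfloor+1)^m$ price vectors, there are $n$ bidders, and $2^m$ subsets $S$, so the number of triples is $N\le (B/\epsilon+1)^m\cdot n\cdot 2^m$. Requiring $2Ne^{-K\mu^2/2}\le \eta$ yields $K=O\!\left(\frac{\log(1/\eta)+\log n+m\log(B/\epsilon)}{\mu^2}\right)$, matching the claimed bound; note that the factor $m$ in the numerator is exactly the cost of union-bounding over all $2^m$ subsets $S$, and is what keeps the sample complexity polynomial in $m$. On the complementary event, of probability at least $1-\eta$, every triple is $\mu$-balanced, i.e., for every price vector in the net the functions $\{\delta_i^{(p)}(\cdot)\}_{i\in[n]}$ are $\mu$-balanced, as desired.

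The main obstacle is the translation between the empirical median and the $\mu$-balanced condition when the distribution of $u_i(t_i,S)$ has atoms: if many samples tie at the empirical median, then $\hat G\big((\delta_i^{(p)}(S))^{-}\big)$ can be far below $\tfrac12$, so $\Pr[u_i(t_i,S)\ge\delta_i^{(p)}(S)]$ need not be close to $\tfrac12$ even though $\hat G$ is close to $G$. I would handle this either by restricting to the generic, atom-free case, or by using a randomized tie-breaking rule for the empirical median so that the realized threshold lands, in cdf value, within $O(1/K)$ of $\tfrac12$; in both cases the DKW bound still controls $|\hat G-G|$ uniformly and the remaining steps are unchanged. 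Everything else — the single application of DKW and the counting in the union bound — is routine.
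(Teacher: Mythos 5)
Your approach is essentially the same as the paper's: the paper also fixes a single triple (price vector in the net, bidder, subset $S$), bounds the failure probability for that triple by $2\exp(-2K\mu^2)$ using a Chernoff bound on the binomial count of samples landing on the wrong side of a quantile, and then union-bounds over the $(B/\epsilon)^m\cdot n\cdot 2^m$ triples. The only difference is that you invoke DKW to control $|\hat G-G|$ uniformly in the threshold, whereas the paper applies Chernoff directly to the event that the empirical median falls outside the $[\frac12-\mu,\frac12+\mu]$-quantile band; these are interchangeable here and give the same sample complexity up to constants. The atom/tie subtlety you flag is real, but it is not specific to your proof: the paper's Chernoff argument likewise requires that the $(\frac12\pm\mu)$-quantiles of $u_i(t_i,S)$ exist, and the underlying Theorem~\ref{thm:simple XOS} already implicitly assumes the median has acceptance probability exactly $\frac12$, so the no-large-atom assumption is baked into the setting. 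Given that, your writeup would close cleanly if you replaced the one-sided inequality $\hat G\big((\delta_i^{(p)}(S))^-\big)\le\frac12$ with the two-sided empirical fact that, absent ties, the order statistic used gives $\hat G\big((\delta_i^{(p)}(S))^-\big)=\frac12-\Theta(1/K)$; this is what makes the lower bound $G\big((\delta_i^{(p)}(S))^-\big)\ge\frac12-\mu$ go through once $K=\Omega(1/\mu)$, which your $K=\Omega(1/\mu^2)$ comfortably ensures.
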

 \begin{proof}
 	For any fixed $\{p_j\}_{j\in[m]}$, fixed bidder $i$ and fixed set $S$, it is easy to argue that the probability for $\Pr_{t_i\sim D_i}[u_i(t_i,S)\geq \delta_i^{(p)}(S)]$ to be larger than  $\frac{1}{2}+\mu$ or smaller than $\frac{1}{2}-\mu$ is at most 2$\exp(-2K\mu^2)$ due to the Chernoff bound. 
 	Next, we take a union bound over all $\{p_j\}_{j\in[m]}$ in the $\epsilon$-net, all bidders and all possible subsets of $[m]$, so the probability that for any collection of prices $\{p_j\}_{j\in[m]}$ in the $\epsilon$-net $\{\delta_i^{(p)}(\cdot)\}_{i\in[n]}$ is a collection of $\mu$-balanced entry fee functions is at least $1- 2\exp(-2K\mu^2)\cdot \left(\frac{B}{\epsilon}\right)^m\cdot 2^m\cdot n$. If we take $K$ to be at least $\frac{\log \frac{1}{\eta}+ \log n +m\log \frac{B}{\epsilon}}{\mu^2}$, the success probability is at least $1-\eta$.
 \end{proof}

Next, we formalize the second step of our learning algorithm.
\begin{lemma}\label{lem:learn best ASPE}
	For any $B\geq 2G$, $\epsilon, \epsilon'>0$,  $\eta\in [0,1]$ and $\mu\in[0,\frac{1}{4}]$, suppose for every collection of prices $\{p_j\}_{j\in[m]}$ in the $B$-bounded $\epsilon$-net, $\{\delta^{(p)}_i(\cdot)\}_{i\in[n]}$ is a collection of $\mu$-balanced entry fee functions. We use $\mathcal{S}$ to denote the set that contains ASPE$(p,\delta^{(p)})$ for every $p$ in the $B$-bounded $\epsilon$-net. If we take 
	$K=O\left(\frac{\log \frac{1}{\eta}+m\log \frac{B}{\epsilon}}{\epsilon'^2}\right)$ samples $t^{(1)},\cdots, t^{(K)}$ from $D$ and let ASPE$(p',\delta^{(p')})$ be the mechanism that has the highest revenue in $\mathcal{S}$. Then with probability at least $1-\eta$, the better of ASPE$(p',\delta^{(p')})$ and the best RSPM achieves revenue at least $\frac{\opt}{\CC_1(\mu)}-\CC_2(\mu)\cdot (m+n)\cdot \epsilon-2mnB\cdot \epsilon'$. 
\end{lemma}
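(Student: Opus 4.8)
The plan is to combine a finite–class uniform convergence bound for the empirical revenues of the mechanisms in $\mathcal{S}$ with the robust approximation guarantee of Corollary~\ref{cor:discretization of prices}, which we invoke as a black box. The hard part will be the uniform convergence step; once it is set up correctly, the rest is bookkeeping, since all of the ``robustness'' (the dependence on $\mu$, the $(m+n)\epsilon$ discretization loss) is already packaged inside Corollary~\ref{cor:discretization of prices}.

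\emph{Step 1: uniform convergence over $\mathcal{S}$.} A $B$-bounded $\epsilon$-net contains at most $(B/\epsilon+1)^m$ price vectors, so $|\mathcal{S}|\le (B/\epsilon+1)^m$. Fix $p$ in the net and break ties in each bidder's favorite–bundle choice by a fixed rule, so the revenue $\rev(p)$ of ASPE$(p,\delta^{(p)})$ is a deterministic function of the realized profile $t\sim D$. The money collected from any one bidder equals her entry fee plus the posted prices of the items she buys; when she accepts the entry fee this sum is at most her value for her favorite bundle, hence (by subadditivity) at most the sum of her single–item values, which together with the price cap $B$ bounds this function in $[0,2mnB]$ (the exact constant does not matter). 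Drawing $K$ fresh independent profiles from $D$ and letting $\widehat{\rev}(p)$ be the empirical average, Hoeffding's inequality gives $\Pr\left[\,|\widehat{\rev}(p)-\rev(p)|>mnB\epsilon'\,\right]\le 2\exp(-\Omega(K\epsilon'^2))$. A union bound over the $\le (B/\epsilon+1)^m$ mechanisms in $\mathcal{S}$ shows that for $K=O\left(\frac{\log(1/\eta)+m\log(B/\epsilon)}{\epsilon'^2}\right)$ we have, with probability at least $1-\eta$, $|\widehat{\rev}(p)-\rev(p)|\le mnB\epsilon'$ simultaneously for all $p$ in the net; condition on this event for the remainder.

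\emph{Step 2: a near–optimal ASPE lies in $\mathcal{S}$, and we find it.} Because $B\ge 2G$, the $2G$-bounded $\epsilon$-net is contained in the $B$-bounded $\epsilon$-net, so $\mathcal{S}$ contains ASPE$(\bar p,\delta^{(\bar p)})$ for the price vector $\bar p$ furnished by Corollary~\ref{cor:discretization of prices}; that corollary applies because, by hypothesis, $\{\delta^{(\bar p)}_i(\cdot)\}_{i\in[n]}$ is a collection of $\mu$-balanced entry fee functions with $\mu\in[0,\frac{1}{4}]$. Hence either the best RSPM has revenue at least $\frac{\opt}{\CC_1(\mu)}-\CC_2(\mu)\cdot(m+n)\cdot\epsilon$ — in which case the better of ASPE$(p',\delta^{(p')})$ and the best RSPM already meets the claimed bound — or $\rev(\bar p)\ge \frac{\opt}{\CC_1(\mu)}-\CC_2(\mu)\cdot(m+n)\cdot\epsilon$. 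In the latter case the conditioning from Step~1 gives $\widehat{\rev}(\bar p)\ge \frac{\opt}{\CC_1(\mu)}-\CC_2(\mu)\cdot(m+n)\cdot\epsilon-mnB\epsilon'$; since $p'$ maximizes the empirical revenue over $\mathcal{S}$, $\widehat{\rev}(p')\ge\widehat{\rev}(\bar p)$, and applying Step~1 once more yields $\rev(p')\ge \widehat{\rev}(p')-mnB\epsilon'\ge \frac{\opt}{\CC_1(\mu)}-\CC_2(\mu)\cdot(m+n)\cdot\epsilon-2mnB\epsilon'$. Either way the conclusion follows.

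The genuinely delicate point is Step~1, and within it the two structural facts that make the elementary Hoeffding–plus–union–bound argument go through: the revenue of each ASPE we consider is a \emph{bounded} random variable, and $\mathcal{S}$ is a \emph{finite} family of size $(B/\epsilon+1)^m$, so $\log|\mathcal{S}|=O(m\log(B/\epsilon))$ — which is exactly what enters the stated sample bound $K$. No new analysis of the mechanism's revenue guarantee is needed here; that is entirely deferred to Corollary~\ref{cor:discretization of prices} (equivalently, to Lemma~\ref{lem:approx ASPE}).
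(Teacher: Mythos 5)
The overall architecture of your argument — Hoeffding plus a union bound over the $(B/\epsilon)^m$-sized net, then invoking Corollary~\ref{cor:discretization of prices} — matches the paper's proof. But there is a genuine gap in Step~1, precisely at the point you flag as ``the genuinely delicate point'': your derivation that each per-sample revenue lies in $[0,2mnB]$ does not go through. You bound the money from bidder $i$ by her value for her favorite bundle and then by $\sum_j V_i(t_{ij})$, and then assert that ``together with the price cap $B$'' this gives a bound of order $mnB$. But the single-item values $V_i(t_{ij})$ have no relation to $B$: the hypothesis is only $B\geq 2G$, and $G$ is a quantile of the value distribution, not an upper bound on its support. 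In the bounded model $\sum_j V_i(t_{ij})$ is only controlled by $mH$, and $H$ may far exceed $B$; in the regular model (used in Theorem~\ref{thm:XOS regular}, where $B=2\max_{i,j}W_{ij}$) the single-item values are unbounded, so your chain of inequalities gives no bound at all. Without a $B$-dependent bound, the Hoeffding step and the claimed sample size $K=O\left(\frac{\log(1/\eta)+m\log(B/\epsilon)}{\epsilon'^{2}}\right)$ with error $mnB\epsilon'$ cannot be obtained.

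The ingredient you are missing is exactly the reason the lemma insists the entry fee functions be $\mu$-balanced with $\mu\le\frac14$. The paper first observes, from subadditivity and the definition of $G$, that $\Pr_{t_i\sim D_i}\bigl[v_i(t_i,[m])\ge mG\bigr]\le\frac{1}{5}$. If some entry fee $\delta_i^{(p)}(S)$ exceeded $mG$, then (since $u_i(t_i,S)\le v_i(t_i,[m])$) the acceptance probability would be at most $\frac15<\frac12-\mu$, contradicting $\mu$-balance. Hence every entry fee is at most $mG\le mB/2$, and total per-sample revenue is at most $nmG+mB\le mnB$ — a bound that depends only on $B$ and holds for unbounded value distributions. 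You should replace your value-based bound with this balance argument; the rest of your proof (union bound, selecting the empirical maximizer, and appealing to Corollary~\ref{cor:discretization of prices}) is correct and mirrors the paper.
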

\begin{proof}
	For any $\{p_j\}_{j\in[m]}$ in the $\epsilon$-net, define $\rev(p)$ to be the expected revenue of ASPE$(p,\delta^{(p)})$ and  $\widehat{\rev}(p)$ be the average revenue of ASPE$(p,\delta^{(p)})$ among the $K$ samples. First, we argue that $\widehat{\rev}(p)$ is a random variable that lies between $[0,mnB]$. The revenue from selling the items can be at most $mB$ as there are only $m$ items and $p_j\leq B$ for all $j\in[m]$. How about the entry fee? 
	For any bidder $i$, $$\Pr_{t_i\sim D_i}\left[v_i(t_i,[m])\geq mG\right]\leq \sum_{j\in[m]} \Pr_{t_{ij}\sim D_{ij}}\left[V_i(t_{ij})\geq G\right]\leq \frac{m}{5\max\{m,n\}}\leq\frac{1}{5}.$$ The first inequality is because $v_i(t_i,\cdot)$ is a subadditive function for every type $t_i\in T_i$, so for $v_i(t_i,[m])$ to be greater than $mG$, there must exist a item $j$ such that $V_i(t_{ij})\geq G$. The second inequality follows from the definition of $G$ in Theorem~\ref{thm:simple XOS}.
	
	 If there exists a set $S\subseteq[m]$ such that $\delta_i^{(p)}(S)> mG$, we have $$\Pr_{t_i\sim D_i}\left[v_i(t_i,[m])\geq mG\right]\geq \Pr_{t_i\sim D_i}\left[v_i(t_i,[m])\geq \delta_i^{(p)}(S)\right]\geq 
	  \frac{1}{2}-\mu\geq \frac{1}{4}.$$ Contradiction. Note that the second inequality is because $\delta_i^{(p)}(\cdot)$ is $\mu$-balanced. Hence, the entry fee is always upper bounded by $mG$ and $\widehat{\rev}(p)$ is at most $mnG+mB\leq mnB$. Also, notice that the expectation of $\widehat{\rev}(p)$ is exactly $\rev(p)$. 
	  By the Chernoff bound, $$\Pr\left[\left|\rev(p)-\widehat{\rev}(p)\right|\leq mnB\cdot \epsilon'\right]\geq 1-2\exp(-2K\cdot \epsilon'^2)$$ for any fixed $\{p_j\}_{j\in[m]}$. By the union bound, the probability that for all $\{p_j\}_{j\in[m]}$ in the $\epsilon$-net $$\left|\rev(p)-\widehat{\rev}(p)\right| \leq mnB\cdot \epsilon'$$ is at least $1-2\exp(-2K\cdot \epsilon'^2)\cdot \left(\frac{B}{\epsilon}\right)^m$, which is lower bounded by $1-\eta$ due to our choice of $K$. 
When this happens, the expected revenue of ASPE$(p',\delta^{(p')})$ is at most $2mnB\cdot \epsilon'$ less than the highest expected revenue achievable by any of these mechanisms, because 
$$\rev(p')\geq \widehat{\rev}(p')-mnB\cdot \epsilon' \geq \widehat{\rev}(p)-mnB\cdot \epsilon'\geq \rev(p)-2mnB\cdot \epsilon'$$ for any $p$ in the $\epsilon$-net. Combining this inequality with Corollary~\ref{cor:discretization of prices} completes our proof.
\end{proof}

Note that Lemma~\ref{lem:learn median} and \ref{lem:learn best ASPE} hold for all distributions $D$. The reason we require $D$ to be bounded or regular is because without these restrictions, we do not know how to approximate the best RSPM. In the following Theorem, we combine Lemma~\ref{lem:learn median}, ~\ref{lem:learn best ASPE} and Theorem~\ref{thm:UD bounded} to obtain the sample complexity of our learning algorithm for bounded distributions. 
\begin{theorem}\label{thm:XOS bounded}
	When all bidders' valuations are XOS over independent items and the random variable $V_i(t_{ij})$ is supported on $[0,H]$ for any bidder $i$ and any item $j$, with $O\left(\left(\frac{mn}{\xi}\right)^2 \cdot \left(m\cdot\log \frac{m+n}{\xi} + \log \frac{1}{\delta}\right)\right)$ samples from $D$, we can learn an RSPM and an ASPE such that with probability at least $1-\delta$ the better of the two mechanisms has revenue at least $\frac{\opt}{c}-\xi\cdot H$ for some absolute constant $c>1$. 
\end{theorem}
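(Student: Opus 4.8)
The plan is to assemble Theorem~\ref{thm:XOS bounded} from the pieces already in place: the robustness statement Corollary~\ref{cor:discretization of prices} for the ASPE, the two-step sampling procedure of Lemmas~\ref{lem:learn median} and~\ref{lem:learn best ASPE} that turns samples into a near-optimal ASPE, and Theorem~\ref{thm:unit-demand} (bounded case, which rests on Theorem~\ref{thm:UD bounded}) that turns samples into a near-optimal RSPM. The actual work is just tuning the discretization width $\epsilon$, the balancedness slack $\mu$, the revenue-estimation accuracy $\epsilon'$, and the confidence $\eta$ so that the additive losses sum to at most $\xi\cdot H$ and the sample sizes add up to the claimed bound.

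First I would pin down the range of prices. Since $V_i(t_{ij})\in[0,H]$ almost surely, $G_{ij}=\sup_x\{\Pr_{t_{ij}\sim D_{ij}}[V_i(t_{ij})\ge x]\ge \tfrac1{5\max\{m,n\}}\}\le H$, hence $G\le H$ and the prices $\{p^*_j\}$ of Theorem~\ref{thm:simple XOS} satisfy $p^*_j\le 2G\le 2H$; so I set $B=2H$. By Corollary~\ref{cor:discretization of prices} there is a price vector in the $B$-bounded $\epsilon$-net such that, under \emph{any} $\mu$-balanced entry fees with $\mu\in[0,\tfrac14]$, the better of the corresponding ASPE and the best RSPM has revenue at least $\frac{\opt}{\CC_1(\mu)}-\CC_2(\mu)(m+n)\epsilon$. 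I would fix $\mu=\tfrac14$ (so $\CC_1(\mu),\CC_2(\mu)$ are absolute constants), and take $\epsilon=\Theta\!\big(\tfrac{\xi H}{m+n}\big)$, $\epsilon'=\Theta\!\big(\tfrac{\xi}{mn}\big)$ with small enough hidden constants that $\CC_2(\mu)(m+n)\epsilon$ and $2mnB\epsilon'=4mnH\epsilon'$ are each at most $\tfrac{\xi H}{3}$ even after multiplication by the RSPM conversion factor below. Note $\log(B/\epsilon)=\Theta\!\big(\log\tfrac{m+n}{\xi}\big)$, so $H$ disappears from the sample bounds.

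Now I would run the procedure. Draw $K_1=O\!\big(\mu^{-2}(\log\tfrac1\eta+\log n+m\log\tfrac{B}{\epsilon})\big)=O\!\big(m\log\tfrac{m+n}{\xi}+\log\tfrac1\eta\big)$ samples and use Lemma~\ref{lem:learn median} to define empirical-median entry fees $\delta^{(p)}$ for every $p$ in the $B$-bounded $\epsilon$-net; with probability $1-\eta$ all of these are $\mu$-balanced. Conditioned on that, draw $K_2=O\!\big(\epsilon'^{-2}(\log\tfrac1\eta+m\log\tfrac{B}{\epsilon})\big)=O\!\big((\tfrac{mn}{\xi})^2(m\log\tfrac{m+n}{\xi}+\log\tfrac1\eta)\big)$ fresh samples; by Lemma~\ref{lem:learn best ASPE} the empirically best ASPE $M_2$ among $\{\text{ASPE}(p,\delta^{(p)})\}$ satisfies, with probability $1-\eta$, that the better of $M_2$ and the best RSPM has revenue at least $\frac{\opt}{\CC_1(\mu)}-\CC_2(\mu)(m+n)\epsilon-2mnB\epsilon'\ge \frac{\opt}{\CC_1(\mu)}-\tfrac{2\xi H}{3}$. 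Independently, apply Theorem~\ref{thm:unit-demand} (bounded case) with accuracy $\Theta(\xi)$ and confidence $1-\tfrac\delta3$ to learn, from $O\!\big(\tfrac1{\xi^2}(m^2n\log n\log\tfrac1\xi+\log\tfrac1\delta)\big)$ samples, an RSPM $M_1$ with $\rev(M_1)\ge \tfrac1{144}\prev-\tfrac{\xi H}{3}$, where $\prev$ is the optimal RSPM revenue (the RSPM-learning algorithms of Section~\ref{sec:unit-demand} apply here since treating every bidder as unit-demand and forcing the at-most-one-item constraint turns an SPM into an RSPM).

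To finish, set $\eta=\delta/3$ and union-bound the three failure events, so everything holds with probability $\ge 1-\delta$. On that event, using $\max\{a,\tfrac{b}{144}-c\}\ge \tfrac1{144}\max\{a,b\}-c$ with $a=\rev(M_2)$, $b=\prev$, $c=\tfrac{\xi H}{3}$, we get $\max\{\rev(M_1),\rev(M_2)\}\ge \tfrac1{144}\big(\frac{\opt}{\CC_1(\mu)}-\tfrac{2\xi H}{3}\big)-\tfrac{\xi H}{3}\ge \frac{\opt}{c}-\xi H$ for the absolute constant $c=144\,\CC_1(\tfrac14)$, provided the hidden constants in $\epsilon,\epsilon'$ and the RSPM accuracy are chosen so the additive terms total at most $\xi H$. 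The total sample count is $K_1+K_2+O\!\big(\tfrac1{\xi^2}(m^2n\log n\log\tfrac1\xi+\log\tfrac1\delta)\big)$, all dominated by $K_2=O\!\big((\tfrac{mn}{\xi})^2(m\log\tfrac{m+n}{\xi}+\log\tfrac1\delta)\big)$, which is exactly the claimed bound. The one genuinely fiddly step is this additive-error bookkeeping: the RSPM conversion both multiplies the Lemma~\ref{lem:learn best ASPE} error terms by a constant (which must be absorbed into $c$) and contributes its own $\Theta(\xi H)$ loss, so $\epsilon,\epsilon'$ must be taken small enough to survive that multiplication while keeping $\log(B/\epsilon)$ only logarithmic in $m,n,1/\xi$; everything else is a direct invocation of the cited lemmas and a union bound.
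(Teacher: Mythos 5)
Your proposal is correct and follows essentially the same route as the paper's proof: fix $B=2H$, pick $\mu$, $\epsilon=\Theta(\xi H/(m+n))$, $\epsilon'=\Theta(\xi/(mn))$, invoke Lemma~\ref{lem:learn median} and Lemma~\ref{lem:learn best ASPE} for the ASPE side and Theorem~\ref{thm:UD bounded} for the RSPM side, then union-bound and absorb the constant-factor loss from the RSPM learner into $c$. Your explicit $\max\{a,\tfrac{b}{144}-c\}\ge\tfrac1{144}\max\{a,b\}-c$ bookkeeping is a slightly more careful writeup of the same accounting the paper does implicitly.
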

\begin{proof}
With $O\left(\left(\frac{1}{\xi}\right)^2 \left(m^2 n\log n\log \frac{1}{\xi} + \log \frac{1}{\delta}\right)\right)$ samples, we can obtain an RSPM whose revenue is at least $\frac{1}{24}$ of the revenue of the best RSPM minus $\frac{\xi}{2}\cdot H$ with probability $1-\delta/2$ according to Theorem~\ref{thm:UD bounded}. 
Let $\mu$ be some fixed constant in $[0,\frac{1}{4}]$, $B=2H$, $\epsilon = \frac{\xi\cdot H}{6\CC_2(\mu) (m+n)}$ and $\epsilon' = \frac{\xi}{12mn}$. 
According to Lemma~\ref{lem:learn median}, given $O\left(\log \frac{1}{\delta}+ \log n +m\log \frac{m+n}{\xi}\right)$ samples, we can construct an entry fee function for each price vector in the $B$-bounded $\epsilon$-net, such that all these entry fee functions are $\mu$-balanced with probability at least $1-\delta/4$.
 According to Lemma~\ref{lem:learn best ASPE}, we can learn an ASPE with $O\left(\left(\frac{mn}{\xi}\right)^2 \cdot \left(m\cdot\log \frac{m+n}{\xi} + \log \frac{1}{\delta}\right)\right)$ fresh samples from $D$, such that the better of the ASPE we learned and the best RSPM has revenue of at least $\frac{\opt}{\CC_1(\mu)}-\frac{\xi}{2}\cdot H$ with probability $1-\delta/4$. Combining the statements above, we can learn with probability $1-\delta$ a mechanism whose revenue is at least $\frac{\opt}{c}-\xi\cdot H$ with $O\left(\left(\frac{mn}{\xi}\right)^2 \cdot \left(m\cdot\log \frac{m+n}{\xi} + \log \frac{1}{\delta}\right)\right)$ samples.
\end{proof}

In the next Theorem, we combine Lemma~\ref{lem:learn median}, ~\ref{lem:learn best ASPE} and Theorem~\ref{thm:UD regular} to obtain the sample complexity of our learning algorithm for regular distributions.

\begin{theorem}\label{thm:XOS regular}
	When all bidders' valuations are XOS over independent items and the random variable $V_i(t_{ij})$ is regular for each item $j\in[m]$ and bidder $i\in[n]$, with $O\left(Z^2m^2n^2 \cdot \left(m\cdot\log ({m+n}) + \log \frac{1}{\delta}\right)\right)$ ($Z=\max\{m,n\}$) samples from $D$, we can learn an RSPM and an ASPE such that with probability at least $1-\delta$ the better of the two mechanisms has revenue at least $\frac{\opt}{c}$ for some absolute constant $c>1$. 
\end{theorem}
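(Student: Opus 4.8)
The plan is to reduce Theorem~\ref{thm:XOS regular} to the machinery already assembled: use the learner of Theorem~\ref{thm:UD regular} to approximate the best RSPM, use the net-search of Lemmas~\ref{lem:learn median} and~\ref{lem:learn best ASPE} to approximate the ASPE$(p^*,\delta^*)$ of Theorem~\ref{thm:simple XOS}, and output the better of the two, exactly as in the bounded case (Theorem~\ref{thm:XOS bounded}). The one genuinely new difficulty relative to that case is that regular distributions are unbounded, so the first thing I would do is carve out from samples a bounded price window that provably contains everything relevant and whose width is $O(Z\cdot\opt)$, where $Z=\max\{m,n\}$.

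Concretely, for each bidder $i$ and item $j$ I would use $O(Z^2\log\frac{nm}{\delta})$ samples of $V_i(t_{ij})$ and take an empirical quantile $\hat G_{ij}$ tuned so that, with probability $1-\delta/4$ (Chernoff plus a union bound), $\Pr_{t_{ij}\sim D_{ij}}[V_i(t_{ij})\ge \hat G_{ij}]\in[\tfrac{1}{20Z},\tfrac{1}{6Z}]$ for all $i,j$. The upper tail bound forces $\hat G_{ij}\ge G_{ij}$ by monotonicity of the tail, so $B:=2\max_{i,j}\hat G_{ij}\ge 2G$ and Corollary~\ref{cor:discretization of prices} applies to the $B$-bounded $\epsilon$-net (which contains the $2G$-bounded one); the lower tail bound gives $\opt\ge \tfrac{\hat G_{ij}}{20Z}$ (post price $\hat G_{ij}$ on item $j$ to bidder $i$ alone), hence $B\le 40Z\cdot\opt$, which is exactly what turns every additive $\poly(m,n)\cdot\epsilon\cdot B$ loss below into a multiplicative one. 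For the RSPM side I would note that an RSPM on XOS-over-independent-items bidders behaves identically to the same posted prices run on unit-demand bidders with single-item values $V_i(t_{ij})$, so the best-RSPM revenue is at most the optimal unit-demand revenue for those (regular) values; Theorem~\ref{thm:UD regular} then learns from $O(Z^2m^2n^2\log\frac{nm}{\delta})$ samples, with probability $1-\delta/4$, an RSPM $M_R$ within a factor $33$ of the best RSPM. For the ASPE side I would fix $\mu=\tfrac18$, take $\epsilon=\Theta\!\big(B/(Z(m+n))\big)$ (so $\CC_2(\mu)(m+n)\epsilon\le \tfrac{\opt}{2\CC_1(\mu)}$ by the previous paragraph, and $\log\tfrac{B}{\epsilon}=O(\log(m+n))$) and $\epsilon'=\Theta\!\big(1/(mnZ)\big)$ (so $2mnB\epsilon'\le \tfrac{\opt}{4\CC_1(\mu)}$), then apply Lemma~\ref{lem:learn median} with $O(m\log(m+n)+\log\tfrac1\delta)$ samples to make all net entry-fee functions $\mu$-balanced simultaneously (probability $1-\delta/4$; here $B\ge 2G$ also keeps those entry fees bounded by $mG\le mB$, as in the proof of Lemma~\ref{lem:learn best ASPE}), and apply Lemma~\ref{lem:learn best ASPE} with $O(m^2n^2Z^2(m\log(m+n)+\log\tfrac1\delta))$ fresh samples to select an ASPE $M_A$ with $\max\{\mathrm{Rev}(M_A),\text{best RSPM}\}\ge \tfrac{\opt}{\CC_1(\mu)}-\tfrac{\opt}{2\CC_1(\mu)}-\tfrac{\opt}{4\CC_1(\mu)}=\tfrac{\opt}{4\CC_1(\mu)}$ (probability $1-\delta/4$). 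Union-bounding the four failures and splitting on which of $\mathrm{Rev}(M_A)$ and the best-RSPM revenue is larger: in the first case $M_A$ already has revenue $\ge\tfrac{\opt}{4\CC_1(\mu)}$, and in the second $M_R$ has revenue $\ge \tfrac1{33}\cdot\tfrac{\opt}{4\CC_1(\mu)}$; either way the better of $M_R,M_A$ has revenue at least $\opt/c$ for the absolute constant $c=132\,\CC_1(\tfrac18)$. Summing the sample sizes gives the stated $O\!\left(Z^2m^2n^2\big(m\log(m+n)+\log\tfrac1\delta\big)\right)$.

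The hard part will be the windowing step together with the bookkeeping that threads it through Lemma~\ref{lem:learn best ASPE}: I must choose the empirical quantile level and the granularities $\epsilon,\epsilon'$ so that simultaneously (i) $B\ge 2G$, so the net contains the prices of Theorem~\ref{thm:simple XOS} and all $\mu$-balanced entry fees stay bounded, (ii) $B=O(Z\cdot\opt)$, so all additive slack is absorbed multiplicatively, and (iii) the net has only $\big(O(m+n)\big)^m$ points, keeping $m\log\tfrac{B}{\epsilon}$ at $O(m\log(m+n))$ and hence the sample complexity at the target bound. Once these constants are lined up, the remainder is a direct transcription of the bounded-distribution argument of Theorem~\ref{thm:XOS bounded}.
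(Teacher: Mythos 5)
Your proposal is correct and matches the paper's proof of Theorem~\ref{thm:XOS regular} essentially step for step: learn empirical quantile thresholds $W_{ij}$ (your $\hat G_{ij}$) to get a bounded price window $B$ with $B\ge 2G$ and $B=O(Z\cdot \opt)$, invoke Theorem~\ref{thm:UD regular} for the RSPM side, and apply Lemmas~\ref{lem:learn median} and~\ref{lem:learn best ASPE} with $\epsilon=\Theta(B/(Z(m+n)))$ and $\epsilon'=\Theta(1/(mnZ))$ to make the additive slack a constant fraction of $\opt$. Your quantile band $[\tfrac{1}{20Z},\tfrac{1}{6Z}]$ differs from the paper's $[\tfrac{1}{6Z},\tfrac{1}{5Z}]$ only in constants; the structure, the use of $\opt\ge \Omega(B/Z)$, the union bound over failures, and the resulting sample bound are identical.
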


\begin{proof}
According to Theorem~\ref{thm:UD regular}, we can learn with probability $1-\delta/2$ a randomized RSPM whose revenue is at least $\frac{1}{33}$ of the optimal RSPM with $O\left(Z^2m^2 n^2\cdot  \log \frac{nm}{\delta}\right)$ samples. Next, we learn an ASPE with high revenue.
	With $O\left( Z^2\cdot \log \frac{nm}{\delta} \right)$ samples from each $D_{ij}$, we can estimate $W_{ij}$ such that $$\Pr_{t_{ij}\sim D_{ij}}\left[V_i(t_{ij}) \geq W_{ij}\right]\in\left[\frac{1}{6Z}, \frac{1}{5Z}\right]$$ with probability $1-\frac{\delta}{4nm}$. By the union bound, the probability that all $W_{ij}$ satisfy the requirement is at least $1-\frac{\delta}{4}$. So with probability at least $1-\frac{\delta}{4}$, $W_{ij}\geq G_{ij}$ for all $i\in[n]$ and $j\in[m]$. 
	
	Let $B=2\cdot \max_{i,j} W_{ij}$, $\mu$ be some fixed constant in $[0,\frac{1}{4}]$, $\epsilon = \frac{\xi\cdot B}{\CC_2(\mu) Z(m+n)}$ and $\epsilon' = \frac{\xi}{2mnZ}$ for some small constant $\xi$, which will be specified later. We know that given $O\left(\log \frac{1}{\delta}+ \log n +m\log ({m+n})\right)$ samples, we can construct $\mu$-balanced entry fee functions for all price vectors in the $B$-bounded $\epsilon$-net with probability $1-\delta/8$ due to Lemma~\ref{lem:learn median}. According to Lemma~\ref{lem:learn best ASPE}, we can learn an ASPE with $$O\left(Z^2m^2n^2 \cdot \left(m\cdot\log (m+n) + \log \frac{1}{\delta}\right)\right)$$ fresh samples from $D$, such that the better of the ASPE we learned and the best RSPM has revenue of at least $\frac{\opt}{\CC_1(\mu)}-\frac{2\xi\cdot B}{Z}$ with probability $1-\delta/8$. Note that there exists a bidder $i$ and an item $j$ such that $W_{ij}=B/2$, so $\opt\geq \frac{B}{2}\cdot \frac{1}{6Z}$ and for sufficiently small $\xi$, $\frac{\opt}{\CC_1(\mu)}-\frac{2\xi\cdot B}{Z}\geq \frac{\opt}{2\CC_1(\mu)}$. Combining the statements above, we can learn with probability $1-\delta$ a mechanism whose revenue is at least $\frac{\opt}{c}$ for some absolute constant $c$ with $O\left(Z^2m^2n^2 \cdot \left(m\cdot\log (m+n) + \log \frac{1}{\delta}\right)\right)$ samples.

\end{proof}

\section{Learning Algorithms for Symmetric Bidders}\label{sec:symmetric appx}
\subsection{An Upper Bound of the Optimal Revenue for Symmetric Bidders}\label{sec:UB symmetric}
In this section, we introduce an upper bound to $\opt$ based on duality~\cite{CaiZ17}, which is crucial for us to prove the approximation ratios of our learning algorithms. 
We first fix some notation. Let $\prev$ be the highest revenue obtainable by any RSPM. As the bidders are symmetric, we drop the subscript $i$ when there is no confusion. In particular, we use $V(t_{ij})$ to denote bidder $i$'s value for winning item $j$ if her private information for item $j$ is $t_{ij}$, and $v(t_i,S)$ to denote bidder $i$'s value for set $S$ when her type is $t_i$. We use $D_j$ to denote the distribution of the private information about item $j$. Let $\sigma_{iS}(t)$ be the interim probability for bidder $i$ to receive exactly set $S\subseteq [m]$ when her type is $t$. 

In~\cite{CaiZ17}, an upper bound of the optimal revenue is derived using duality theory. Their upper bound applies to asymmetric bidders with valuations that are subadditive over independent items. When the bidders are symmetric, we can simplify their upper bound. First, we need the definition of $b$-balanced thresholds.

\begin{definition}[$b$-balanced Thresholds]\label{def:b-balanced}
	For any constant $b\in (0,1)$, a collection of positive real numbers $\{\beta_j\}_{j\in[m]}$ is $b$-balanced if for all $i\in[n]$ and $j\in[m]$, $\Pr_{t_{ij}\sim D_j}\left [V(t_{ij})\geq \beta_j\right]\in[\frac{b}{n},\frac{b}{n-1}]$.
\end{definition}

Note that when bidders are asymmetric, $b$-balanced thresholds are not guaranteed to exist, as there may not exist any $\beta_j$ that satisfies $\Pr_{t_{ij}\sim D_j}\left [V(t_{ij})\geq \beta_j\right]\in[\frac{b}{n},\frac{b}{n-1}]$ for all bidder $i$ simultaneously. Next, we define the $\core_\eta(\bbeta)$ which will be crucial for upper bounding the optimal revenue.\footnote{For readers that are familiar with the definition of the $\core$ in~\cite{CaiZ17}, $\core_\eta(\bbeta)$ is essentially the same term but adapted for symmetric bidders.}

\begin{definition}[\core]\label{def:core and tail}
Given any collection of 
 thresholds $\{\beta_j\}_{j\in[n]}$ and a nonnegative constant $\eta\leq \frac{1}{4}$, \begin{itemize}
	\item if $\sum_{j\in[m]} \Pr_{t_{j}\sim D_{j}}\left[V(t_{j})\geq \beta_{j}\right]\leq \frac{1}{2}-\eta$, let $c_\eta(\boldsymbol{\beta})$ be $0$;
	\item otherwise, let $c_\eta(\boldsymbol{\beta})$ be a nonnegative number such that $\sum_{j\in[m]} \Pr_{t_{j}\sim D_{j}}\left[V(t_{j})\geq \beta_{j}+c_\eta(\boldsymbol{\beta})\right]\in  \left[\frac{1}{2}-\eta, \frac{1}{2}\right]$.
\end{itemize}
For every type $t$, let $\mathcal{C}_{\eta}(t)=\{j\ |\ V(t_{ j})< \beta_{j}+c_\eta(\bbeta)\}$. Then, 
$$\core_\eta(\bbeta)= \max_{\sigma \in P(D)} \sum_{i\in[n]}\sum_{t_i\in T_i}f(t_i)\cdot \sum_{S\subseteq[m]}\sigma_{iS}(t_i)\cdot v\left(t_i,S\cap \mathcal{C}_{\eta}(t_i)\right),$$ where $P(D)$ is the set of all feasible interim allocation rules. That is, {$\core_\eta(\bbeta)$ is the maximum welfare a mechanism can extract out of the allocation of items whose individual value for the bidder they are allocated to is lower than the adjusted thresholds.}
\end{definition}


It was shown in~\cite{CaiZ17} that every collection of thresholds induces an upper bound to the optimal revenue. In particular, for any choice of thresholds $\{\beta_j\}_{j\in[m]}$ and $\eta$\footnote{In~\cite{CaiZ17}, the thresholds are allowed to depend on the identity of the bidder. More specifically, for any $i\in[n]$ and $j\in[m]$, there is an associated threshold $\beta_{ij}$. Their upper bound applies to asymmetric thresholds as well. Indeed, when the bidders are asymmetric, their upper bound is induced by a set of asymmetric thresholds. As we only discuss symmetric bidders in this section, we focus on symmetric thresholds for simplicity. Regarding $\eta$, Cai and Zhao only considered the case when $\eta=0$, but their analysis can be easily modified to accommodate any $\eta\leq 1/4$. See Theorem~\ref{thm:UB subadditive} for the modified upper bound.}, the revenue $\rev(M)$ of any BIC mechanism $M$ is upper bounded by  $$2\cdot \single(M,\bbeta)+4\cdot \tail_\eta(M,\bbeta)+4\cdot\core_\eta(M,\bbeta)\quad\text{(Adapted from Theorem 2 in~\cite{CaiZ17})}.$$ These terms depend on the choice of $\{\beta_j\}_{j\in[m]}$, $\eta$ as well as the mechanism $M$. We refer interested readers to~\cite{CaiZ17} for the definitions of these terms. To obtain a benchmark/upper bound of the optimal revenue, one can simply replace the above expression with $$2\cdot \max_M \single(M,\bbeta)+4\cdot \max_M \tail_\eta(M,\bbeta)+4\cdot\max_M \core_\eta(M,\bbeta).$$ {It is not hard to see that this benchmark may be impossible to approximate for certain choices of the thresholds. Just imagine the case when the thresholds are extremely high, then $\max_M \core_\eta(M,\bbeta)$ becomes the optimal social welfare which can be arbitrarily large comparing to the optimal revenue. What Cai and Zhao~\cite{CaiZ17} showed was that when the thresholds are $b$-balanced, this upper bound can indeed be approximated by the revenue of an RSPM and an ASPE. From now on, we only consider $b$-balanced thresholds.}


Using results in~\cite{CaiZ17}, we can further simplify the benchmark. In particular, $\max_{M}\single(M,\bbeta)$ is less than $6\cdot\prev$ for all choices of $\{\beta_j\}_{j\in[n]}$ and $\max_{M}\tail_\eta(M,\bbeta)$ is less than $\frac{2}{1-b}\cdot\prev$ for any choice of $\eta$ and $b$-balanced thresholds $\{\beta_j\}_{j\in[n]}$. Moreover, $\max_{M}\core_\eta(M,\bbeta)\leq \core_\eta(\bbeta)$. Combining the inequalities above, we obtain the following Theorem.
\begin{theorem}[Adapted from~\cite{CaiZ17}]\label{thm:UB subadditive}
	When the bidders are symmetric and have valuations that are subadditive over independent items, for any constant $b\in (0,1)$, $\eta\leq \frac{1}{4}$ and a collection of $b$-balanced thresholds $\{\beta_j\}_{j\in[m]}$, $$\opt \leq \left(12+\frac{8}{1-b}\right)\cdot \prev + 4\cdot\core_\eta(\bbeta).$$
\end{theorem}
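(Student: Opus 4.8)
The plan is to instantiate the duality-based revenue upper bound of Cai and Zhao~\cite{CaiZ17} at the given collection of $b$-balanced thresholds, and then to control each of its three constituent terms separately. First I would invoke the (mildly generalized, to accommodate $\eta \le \tfrac14$) version of Theorem~2 of~\cite{CaiZ17}: for any BIC mechanism $M$ and any collection of thresholds $\{\beta_j\}_{j\in[m]}$,
\[
\rev(M) \le 2\cdot \single(M,\bbeta) + 4\cdot \tail_\eta(M,\bbeta) + 4\cdot \core_\eta(M,\bbeta).
\]
Applying this to a revenue-optimal BIC mechanism and upper bounding each term by its maximum over all mechanisms yields the benchmark
\[
\opt \le 2\cdot \max_M \single(M,\bbeta) + 4\cdot \max_M \tail_\eta(M,\bbeta) + 4\cdot \max_M \core_\eta(M,\bbeta).
\]

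Next I would import the three structural estimates of~\cite{CaiZ17}, specialized to symmetric bidders and symmetric thresholds. First, $\max_M \single(M,\bbeta) \le 6\cdot \prev$ for every choice of $\bbeta$: the $\single$ term is, up to a constant, the revenue of a posted-price mechanism that in effect offers each bidder her single most valuable item, and this is dominated by the revenue of the best RSPM. Second, for $b$-balanced $\bbeta$, $\max_M \tail_\eta(M,\bbeta) \le \tfrac{2}{1-b}\cdot \prev$; here $b$-balancedness is exactly the property that makes the contribution of ``tail'' types controllable by an RSPM, with the $\tfrac{1}{1-b}$ factor tracking the probability that no bidder's value for a given item exceeds its threshold. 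Third, $\max_M \core_\eta(M,\bbeta) \le \core_\eta(\bbeta)$ holds directly from the definition of $\core_\eta(\bbeta)$ as the maximum, over all feasible interim allocation rules $\sigma \in P(D)$, of the welfare extractable from items whose individual value is below the adjusted thresholds $\beta_j + c_\eta(\bbeta)$. Substituting these three bounds gives $\opt \le 12\cdot \prev + \tfrac{8}{1-b}\cdot \prev + 4\cdot \core_\eta(\bbeta) = \bigl(12 + \tfrac{8}{1-b}\bigr)\cdot \prev + 4\cdot \core_\eta(\bbeta)$, as claimed.

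The only genuine work lies in the two places where this deviates from the verbatim statement of~\cite{CaiZ17}. First, their duality decomposition is proved for $\eta = 0$; I would argue it remains valid for any $\eta \le \tfrac14$ by repeating their argument with the adjusted thresholds $\beta_j + c_\eta(\bbeta)$ in the role of $\beta_j$, since the only property of $c_0(\bbeta)$ that the argument uses is that $\sum_{j\in[m]} \Pr_{t_j \sim D_j}[V(t_j) \ge \beta_j + c_\eta(\bbeta)] \in [\tfrac12 - \eta, \tfrac12]$, which holds by construction of $c_\eta(\bbeta)$. Second, \cite{CaiZ17} allows bidder-dependent thresholds $\{\beta_{ij}\}$, whereas here we restrict to symmetric $\{\beta_j\}$; I would note that bidder symmetry ($D_{ij} = D_j$ for all $i$) guarantees that symmetric $b$-balanced thresholds exist and that every per-bidder estimate underlying the $\single$ and $\tail$ bounds holds uniformly across bidders, so the restriction is harmless. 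I expect the $\eta$-robustness check to be the most delicate of these steps, although it amounts to bookkeeping rather than a new idea.
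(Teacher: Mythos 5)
Your proposal matches the paper's own argument essentially line for line: the paper also instantiates the $\single/\tail/\core$ decomposition from Cai--Zhao at the given $b$-balanced symmetric thresholds, imports the same three per-term bounds ($\single \le 6\prev$, $\tail_\eta \le \tfrac{2}{1-b}\prev$, $\core_\eta(M,\bbeta) \le \core_\eta(\bbeta)$), and handles the $\eta \le \tfrac14$ and symmetric-threshold adaptations with exactly the remarks you make. The arithmetic $2\cdot 6 + 4\cdot\tfrac{2}{1-b} = 12 + \tfrac{8}{1-b}$ is also the paper's.
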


\subsection{Symmetric Bidders with XOS Valuations}\label{sec:symmetric XOS}
In this section, we show how to learn in polynomial time an approximately optimal mechanism for symmetric bidders with XOS valuations given sample access to the distributions. According to Theorem~\ref{thm:UB subadditive}, we only need to learn a mechanism that approximates $\prev$ and $\core_\eta(\bbeta)$. From Section~\ref{sec:unit-demand}, we know how to approximated $\prev$ in polynomial time, so we focus on learning a mechanism whose revenue  approximates $\core_\eta(\bbeta)$. 

First, we need a crucial property about XOS valuations.

\begin{lemma}[Supporting Prices~\cite{DobzinskiNS05}]\label{lem:supporting price}
If $v(t,\cdot)$ is an XOS function, for any subset $S\subseteq[m]$ there exists a collection of supporting prices $\left\{\theta^{S}_j (t)\right\}_{j\in S}$ for $v(t,S)$ such that\begin{enumerate}
	\item $v(t,S') \geq \sum_{j\in S'} \theta^{S}_j (t)$ for all $S'\subseteq S$ and
	\item $\sum_{j\in S}\theta^{S}_j (t) = {v(t,S)}$.
\end{enumerate}
\end{lemma}

Let $v'(t_i,S)=v\left(t_i,S\cap \CC_{\eta}(t_i)\right)$ and $\FF_i$ be the distribution of the valuation $v'(t_i,S)$. As the bidders are symmetric, $\FF_i=\FF_{i'}$ for any $i$ and $i'$. The $\core_\eta(\bbeta)$ is exactly the maximum expected social welfare if every bidder $i$'s valuation is drawn independently from $\FF_i$. Cai and Zhao~\cite{CaiZ17} showed how to use an ASPE to approximate this term. In the next Lemma, we construct the prices used in their ASPE and show its relation to $\core_\eta(\bbeta)$.

\begin{lemma}\label{lem:Q_j}(Adapted from~\cite{CaiZ17})
	Let every bidder $i$'s valuation be  $v'(t_i,S)=v\left(t_i,S\cap \CC_{\eta}(t_i)\right)$ when her type is $t_{i}$ and $\sigma^*$ be a symmetric allocation that achieves $\alpha$-fraction of the optimal social welfare with respect to $v'(\cdot,\cdot)$. 
	For every item $j\in[m]$, let $$Q_{\eta,j}=\frac{1}{2}\cdot \sum_{i\in[n]} \sum_{t_i\in T_i}f(t_i)\cdot\sum_{S:j\in S}\sigma_{iS}^{*}(t_i)\cdot \theta_j^{S\cap  \CC_{\eta}(t_i)}(t_i),$$ where $\left\{\theta_j^{S\cap  \CC_{\eta}(t_i)}(t_i)\right\}_{j\in S\cap  \CC_{\eta}(t_i)}$ is the supporting prices for $v\left(t_i,S\cap  \CC_{\eta}(t_i)\right)$.
	 Let $$u^*(t, S)=\max_{S^*\subseteq S} v(t, S^*)-\sum_{j\in S^*} Q_{\eta,j}$$ be a bidder's utility for the set of items $S$ when her type is $t$. 
	 We define $\delta^*(S)$ to be the median of the random variable $u^{*}(t,S)$ (with $t\sim \times_{j\in[m]}D_j$) for any set $S\subseteq [m]$.
	  The revenue of ASPE$\left(\{Q_{\eta,j}\}_{j\in[m]}, \delta^*\right)$ is at least $$\frac{\alpha\cdot \core_{\eta}(\bbeta)}{2}-\CC(b,\eta)\cdot \prev,$$ where $\CC(b,\eta)$ is a function that only depends on $b$ and $\eta$. 
	\end{lemma}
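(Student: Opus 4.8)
\textbf{Proof plan for Lemma~\ref{lem:Q_j}.}

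The plan is to mimic the duality-based analysis of Cai and Zhao~\cite{CaiZ17} for approximating the \core, specialized to symmetric bidders, and to verify that each step goes through when we use the ``symmetric'' prices $\{Q_{\eta,j}\}_{j\in[m]}$ defined above. First I would recall why the prices are well-defined and bounded: since $\sigma^*$ is a feasible interim rule, $\sum_{i,t_i}f(t_i)\sum_{S\ni j}\sigma^*_{iS}(t_i)\le 1$, and the supporting price $\theta_j^{S\cap\CC_\eta(t_i)}(t_i)$ is at most $v(t_i,\{j\})=V(t_{ij})<\beta_j+c_\eta(\bbeta)$ whenever $j\in\CC_\eta(t_i)$, so each $Q_{\eta,j}$ is finite and, in fact, no larger than $\frac12(\beta_j+c_\eta(\bbeta))$; this will be what lets us relate the item prices to the $b$-balanced thresholds and hence absorb certain terms into $\prev$.

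Next I would set up the welfare-to-revenue conversion that underlies the ASPE guarantee. Running ASPE$(\{Q_{\eta,j}\},\delta^*)$, when a bidder with (truncated) valuation $v'(t_i,\cdot)$ faces the available set $S$ and pays the entry fee, the supporting-prices lemma (Lemma~\ref{lem:supporting price}) guarantees that her purchased bundle recovers, up to the item payments $\sum_j Q_{\eta,j}$, the value $v'(t_i,S)$ she would contribute; a now-standard argument (as in~\cite{CaiZ17}) shows the sum of expected item revenue plus expected entry fees is at least a constant fraction of $\sum_j Q_{\eta,j}$ plus a $\core$-like welfare term, while the $\mu=0$ balancedness of $\delta^*$ (it is the exact median) means each bidder accepts the entry fee with probability $1/2$ and the entry-fee revenue is half the expected residual utility. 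Chaining these, the revenue of the ASPE is at least $\tfrac{1}{2}\bigl(\alpha\cdot\core_\eta(\bbeta) - (\text{loss terms})\bigr)$, where the loss terms come from (i) items that get ``used up'' by earlier bidders, which is controlled because $\sum_j \Pr[V(t_j)\ge \beta_j+c_\eta(\bbeta)]\le 1/2$ bounds the probability that any particular item is contested, and (ii) the gap between $\sum_j Q_{\eta,j}$ and the entry-fee contribution, which by the bound $Q_{\eta,j}\le\frac12(\beta_j+c_\eta(\bbeta))$ and $b$-balancedness is at most a constant (depending on $b$ and $\eta$) times $\sum_j \beta_j\cdot\Pr[V(t_j)\ge\beta_j] \le \mathcal{C}(b,\eta)\cdot\prev$, using that selling item $j$ alone at price $\beta_j$ to the symmetric bidders is an RSPM and hence a lower bound on $\prev$ up to constants (exactly as in the $\single$/$\tail$ bounds of Theorem~\ref{thm:UB subadditive}).

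The final step is bookkeeping: collect the constant factors, observe that $\alpha$ enters linearly so the main term is $\tfrac{\alpha\,\core_\eta(\bbeta)}{2}$, and verify that all the subtracted quantities are of the form $(\text{const depending only on }b,\eta)\cdot\prev$, yielding the stated bound with $\mathcal{C}(b,\eta)$ absorbing everything. I expect the main obstacle to be carefully re-deriving the ASPE analysis of~\cite{CaiZ17} in a way that is robust to the symmetric specialization — in particular, tracking exactly how the items consumed by earlier bidders degrade the per-bidder welfare guarantee, and showing that this degradation is bounded because the prices $Q_{\eta,j}$ are tied to thresholds at which each item is demanded with probability $O(1/n)$ (for the $\beta_j$ part) or total probability $\le 1/2$ across items (for the $c_\eta(\bbeta)$ part). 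Everything else is either a direct invocation of Lemma~\ref{lem:supporting price}, the median/entry-fee accounting, or the reduction of price-times-demand sums to $\prev$ via the prophet-inequality-style RSPM lower bound already used in Theorem~\ref{thm:UB subadditive}.
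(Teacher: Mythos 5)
Your plan follows the same route as the paper: invoke the Cai--Zhao duality/ASPE analysis to show the revenue is at least $\sum_j Q_{\eta,j}$ minus loss terms bounded by $\mathcal{C}(b,\eta)\cdot\prev$, then use that $\sigma^*$ is an $\alpha$-approximate allocation for $v'$ to lower-bound $\sum_j Q_{\eta,j}$ by $\frac{\alpha}{2}\core_\eta(\bbeta)$. The paper's write-up is slightly cleaner in that it isolates these as two separate steps --- first noting that the only thing requiring verification is that the symmetric $\sigma^*$ and the $b$-balanced thresholds satisfy the hypotheses of Lemma 5 in~\cite{CaiZ17}, and second observing the direct identity $\sum_j Q_{\eta,j}=\frac12\sum_i\sum_{t_i}f(t_i)\sum_S\sigma^*_{iS}(t_i)\,v'(t_i,S)\ge\frac{\alpha}{2}\core_\eta(\bbeta)$ --- but your sketch covers essentially the same content, including the bound $Q_{\eta,j}\le\frac12(\beta_j+c_\eta(\bbeta))$ and the reduction of the loss terms to $\prev$ via pricing at the thresholds.
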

\begin{proof}
We can essentially use the same proof in~\cite{CaiZ17} to prove that the expected revenue of the ASPE is at least $$\sum_{j\in[m]} Q_{\eta,j}-\CC(b,\eta)\cdot \prev.$$ For readers that are familiar with that proof, the only thing we need to make sure is that our choice of $\sigma^*$ and $\{\beta_j\}_{j\in[m]}$ satisfy Lemma 5 in~\cite{CaiZ17}. Since $\sigma^*$ is symmetric and $\{\beta_j\}_{j\in[m]}$ is $b$-balanced, for all bidder $i$ and item $j$ $$\sum_{k\neq i}\Pr_{t_{kj}\sim D_{j}}\left[V(t_{kj})\geq \beta_{j}\right]\leq \frac{b}{n-1}\cdot (n-1)=b,$$ and $$\Pr_{t_{ij}\sim D_j}\left[V(t_{ij}) \geq \beta_j\right]/b\geq 1/n\geq  \cdot \sum_{t_i\in T_i} f_i(t_i)\cdot \sum_{S: j\in S} \sigma^*_{iS}(t_i).$$

Next, we argue $\sum_{j\in[m]} Q_{\eta,j}\geq \frac{\alpha\cdot \core_{\eta}(\bbeta)}{2}$. Observe that $$\sum_{j\in[m]} Q_{\eta,j}=\frac{1}{2}\cdot \sum_{i\in[n]} \sum_{t_i\in T_i}f(t_i)\cdot\sum_{S}\sigma_{iS}^{*}(t_i)\cdot v'(t_i,S)\geq \alpha\cdot \core_{\eta}(\bbeta).$$ The last inequality is because $\core_{\eta}(\bbeta)$ is the maximum social welfare under $v'(\cdot,\cdot)$ and $\sigma^*$ achieves $\alpha$ fraction of that.
\end{proof}	
	
\begin{lemma}\label{lem:symmetric approx ASPE}
	For any $\epsilon>0$ and $\mu\in[0,\frac{1}{4}]$, let $\{Q_j\}_{j\in[m]}$ be a collection of prices such that $\left|Q_j-Q_{\eta,j}\right|\leq \epsilon$ for all $j\in[m]$. Let $\delta(S)$ be the entry fee function such that $\Pr_{t\sim \times_{j\in[m]} D_j}\left [u(t,S)\geq \delta(S)\right]\in [1/2-\mu,1/2+\mu]$ for any set $S\subseteq [m]$, where $u(t,S) = \max_{S*\subseteq S} v(t,S^*)-\sum_{j\in S^*} Q_j$. Then, the ASPE$(Q,\delta)$ achieves at least $\frac{\alpha\cdot\core_{\eta}(\bbeta)}{\BB_{1}(\mu)}-\BB_{2}(b,\eta,\mu)\cdot \prev-\BB_3(\mu)\cdot (m+n)\cdot \epsilon$ revenue when bidders' valuations are XOS over independent item. Both $\BB_1(\mu)$ and $\BB_{3}(\mu)$ are functions that only depend on $\mu$ and $\BB_{2}(b,\eta,\mu)$ is a function that only depends on $\mu$, $b$ and $\eta$. 
\end{lemma}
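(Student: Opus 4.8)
The plan is to reduce Lemma~\ref{lem:symmetric approx ASPE} to Lemma~\ref{lem:approx ASPE} (the ``robust ASPE'' lemma already proved for general constrained-additive/XOS bidders), applied in the transformed world where every bidder's valuation is $v'(t_i,S) = v(t_i, S\cap \CC_\eta(t_i))$ rather than $v(t_i,\cdot)$. The point of Lemma~\ref{lem:Q_j} is precisely that the prices $\{Q_{\eta,j}\}_{j\in[m]}$ play, for this truncated valuation world, the role that $\{p^*_j\}_{j\in[m]}$ plays in Theorem~\ref{thm:simple XOS}: the ASPE with these prices and the median-utility entry fees recovers an $\Omega(\alpha)$-fraction of $\core_\eta(\bbeta)$ minus a $\CC(b,\eta)\cdot\prev$ slack. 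So the strategy is: (i) invoke the robustness argument of Lemma~\ref{lem:approx ASPE}, but carried out around the benchmark prices $\{Q_{\eta,j}\}$ instead of $\{p^*_j\}$, to show that perturbing each price by at most $\epsilon$ and using $\mu$-balanced entry fees (instead of exact medians) degrades the revenue guarantee of Lemma~\ref{lem:Q_j} by only a $\BB_1(\mu)$-type multiplicative factor on the core term and an additive $\BB_3(\mu)\cdot(m+n)\cdot\epsilon$; (ii) fold the $\CC(b,\eta)\cdot\prev$ slack of Lemma~\ref{lem:Q_j} and the analogous slack introduced by the perturbation into a single $\BB_2(b,\eta,\mu)\cdot\prev$ term.

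Concretely, I would proceed as follows. First I would recall, from the proof sketch of Lemma~\ref{lem:approx ASPE}, that the duality-based argument of Cai–Zhao~\cite{CaiZ17} bounds the revenue of an ASPE whose prices are close to the ``target'' prices and whose entry fees are near-median: the two quantitative levers are (a) the per-item price error $\epsilon$, which enters linearly with an $(m+n)$ factor because there are $m$ items and $n$ bidders each of whom might over- or under-buy by a bounded amount, and (b) the entry-fee imbalance $\mu$, which enters the constants $\CC_1(\mu),\CC_2(\mu)$ monotonically. Next I would observe that nothing in that argument used the specific structure of $\{p^*_j\}$ beyond the facts that (1) it is the collection of prices relative to which the supporting-price/core decomposition is taken, and (2) the entry fee is (approximately) the median of the bidder's utility at those prices. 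Both facts hold verbatim here with $\{Q_{\eta,j}\}$ in place of $\{p^*_j\}$ and with $v'$ in place of $v$, by Lemma~\ref{lem:Q_j} and its proof. Therefore the same chain of inequalities yields, for ASPE$(Q,\delta)$ with $|Q_j - Q_{\eta,j}|\le\epsilon$ and $\mu$-balanced $\delta$, a revenue of at least $\tfrac{a_1(\mu)}{2}\cdot\alpha\cdot\core_\eta(\bbeta) - a_2'(\mu,b,\eta)\cdot\prev - a_3(\mu)\cdot(m+n)\cdot\epsilon$ for suitable monotone functions; renaming $\BB_1(\mu)=2/a_1(\mu)$, $\BB_2(b,\eta,\mu)=a_2'(\mu,b,\eta)$, $\BB_3(\mu)=a_3(\mu)$ gives the claim. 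I would also note the one technical subtlety that must be checked: the entry fee is $\mu$-balanced with respect to the utility $u(t,S)=\max_{S^*\subseteq S}v(t,S^*)-\sum_{j\in S^*}Q_j$ defined using the \emph{true} valuation $v$, not the truncated $v'$; but in the Cai–Zhao core analysis the bidder's actual behavior when facing the ASPE is governed by her true valuation, and the truncation only enters the \emph{benchmark} $\core_\eta(\bbeta)$, so this is exactly the right notion and causes no mismatch — I would spell out this one paragraph carefully since it is the easiest place to slip.

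The main obstacle I expect is bookkeeping rather than conceptual: verifying that the price perturbation $\epsilon$ propagates through the supporting-price argument with only an $O(m+n)$ multiplier and does not interact multiplicatively with $\mu$ or with the number of subsets $2^m$. The dangerous step is the one where the mechanism's revenue is lower-bounded by $\sum_j Q_j$ minus corrections: replacing $Q_{\eta,j}$ by $Q_j$ changes every bidder's utility for every set by at most $m\epsilon$ (since a set has at most $m$ items), and changes the per-item ``supporting price revenue'' collected by at most $\epsilon$ per item; one must argue the aggregate effect across $n$ bidders and $m$ items is $O((m+n)\epsilon)$ and not $O(mn\epsilon)$, which follows because the ex-ante service probabilities sum to at most $1$ per item and per bidder (the same $b$-balanced / feasibility conditions verified in the proof of Lemma~\ref{lem:Q_j}). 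I would therefore lean on exactly those two feasibility inequalities — $\sum_{k\neq i}\Pr[V(t_{kj})\ge\beta_j]\le b$ and $\Pr[V(t_{ij})\ge\beta_j]/b \ge \sum_{t_i}f_i(t_i)\sum_{S:j\in S}\sigma^*_{iS}(t_i)$ — to keep the error linear. Once that is in hand, folding $\CC(b,\eta)\cdot\prev$ from Lemma~\ref{lem:Q_j} together with the $\mu$-dependent slack into $\BB_2(b,\eta,\mu)\cdot\prev$ is immediate, and the lemma follows; I would defer the full details, noting they mirror Appendix~\ref{sec:appx XOS} and~\cite{CaiZ17} almost line for line.
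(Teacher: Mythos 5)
Your proposal is correct and takes essentially the same approach as the paper, which also just observes that the Cai--Zhao duality analysis (and the robustification thereof in Lemma~\ref{lem:approx ASPE}) is stable under perturbing the prices by $\epsilon$ and the entry-fee quantile by $\mu$, and defers the line-by-line verification to~\cite{CaiZ17}. One small imprecision in your opening framing: you cannot literally ``apply Lemma~\ref{lem:approx ASPE} in the $v'$-world,'' since that lemma's conclusion is a guarantee against the optimal \emph{revenue} of the transformed instance rather than against $\core_\eta(\bbeta)$ (the optimal \emph{welfare} under $v'$), which is what Lemma~\ref{lem:Q_j} benchmarks against; but you immediately correct course by saying you would re-run the robustness argument around $\{Q_{\eta,j}\}$ rather than invoke the lemma as a black box, and your remark about why $\mu$-balancedness is taken with respect to the true $v$ (not the truncated $v'$) is a genuinely useful observation the paper's terse proof leaves implicit.
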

\begin{proof}
It turns out the proof in~\cite{CaiZ17} is robust enough to accommodate the error $\epsilon$ and $\mu$. We can prove the claim by following essentially the same analysis as in~\cite{CaiZ17}. We do not include the details here.\end{proof}
	
\subsubsection{Leaning the ASPE in Polynomial Time}

We first show how to learn a collection of $b$-balanced thresholds and the corresponding $c_\eta(\bbeta)$.
\begin{lemma}\label{lem:symmetric XOS learn beta and c}
	For any positive constant $b<1$ and $\eta\leq \frac{1}{4}$, there is a polynomial time algorithm that computes a collection of $b$-balanced thresholds $\{\beta_j\}_{j\in[m]}$ and $c_\eta(\bbeta)$ with probability $1-\delta$ using $O\left(m^2 n^4 \log \frac{m}{\delta}\right)$ samples from distribution $\times_{j\in[m]} D_j$.
\end{lemma}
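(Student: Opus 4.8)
The plan is to learn each threshold $\beta_j$ independently, one item at a time, then learn the single scalar $c_\eta(\bbeta)$. For a fixed item $j$, I want a value $\beta_j$ with $\Pr_{t_j \sim D_j}[V(t_j) \ge \beta_j] \in [\frac{b}{n}, \frac{b}{n-1}]$. The width of this target interval is $\frac{b}{n-1} - \frac{b}{n} = \frac{b}{n(n-1)} = \Theta(1/n^2)$. The natural approach: draw $K$ samples of $t_j$ from $D_j$, form the empirical distribution $\hat D_j$ of the scalar $V(t_j) \in [0,H]$, and pick $\beta_j$ so that the \emph{empirical} upper-tail probability $\widehat{\Pr}[V(t_j) \ge \beta_j]$ equals the midpoint $\frac{1}{2}(\frac{b}{n} + \frac{b}{n-1}) = \frac{b(2n-1)}{2n(n-1)}$. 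First I would invoke the DKW inequality~\cite{DvoretzkyKW56}: with $K = O\!\left(\frac{1}{\gamma^2}\log\frac{1}{\delta'}\right)$ samples, $\|\hat D_j - D_j\|_K \le \gamma$ with probability $1-\delta'$. Taking $\gamma$ to be a constant fraction of the half-width $\frac{b}{2n(n-1)} = \Theta(1/n^2)$ forces $K = O(n^4 \log\frac{1}{\delta'})$, and then the empirical tail and true tail of $V$ at the chosen $\beta_j$ differ by at most $\gamma$, so the true tail lands inside $[\frac{b}{n},\frac{b}{n-1}]$. (One subtlety: the empirical tail function is a step function, so it may not hit the midpoint exactly; I would instead just take $\beta_j$ to be the appropriate order statistic of the samples and check that the \emph{jump} at that point has size at most $\gamma$, which it does after a further DKW-type argument, or simply note that the true tail at the chosen order statistic is within $\gamma$ of a value inside the strictly-interior sub-interval.)

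Next I would union-bound over the $m$ items, setting $\delta' = \delta/(2m)$, which costs a factor $\log\frac{m}{\delta}$ and gives $K = O\!\left(n^4 \log\frac{m}{\delta}\right)$ samples of the full type vector (each sample gives us the coordinate we need for every item simultaneously, so no extra sample factor of $m$ is incurred — only the union bound). At this point all $\{\beta_j\}_{j\in[m]}$ are $b$-balanced simultaneously with probability $\ge 1-\delta/2$.

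For the second part, $c_\eta(\bbeta)$, recall from Definition~\ref{def:core and tail} that $c_\eta(\bbeta)$ is either $0$ (when $\sum_j \Pr[V(t_j) \ge \beta_j] \le \frac12 - \eta$) or the unique-enough nonnegative number with $\sum_{j\in[m]}\Pr_{t_j\sim D_j}[V(t_j)\ge \beta_j + c_\eta(\bbeta)] \in [\frac12 - \eta, \frac12]$. The target interval here has constant width $\eta = \Theta(1)$, but it is a \emph{sum of $m$ tail probabilities}, so I would estimate each of the $m$ tail functions $x \mapsto \Pr_{t_j}[V(t_j) \ge x]$ to within $\gamma' = \Theta(1/m)$ in Kolmogorov distance — by DKW this needs $O(m^2 \log\frac{m}{\delta})$ samples per item, and again the same $O(m^2 n^4 \log \frac{m}{\delta})$ pool of samples (the $n^4$ from the first part dominating or combining) covers everything via a union bound over items. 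Having these estimates, I binary-search over $c \ge 0$ on the monotone function $c \mapsto \sum_j \widehat{\Pr}[V(t_j) \ge \beta_j + c]$ to find $\hat c$ making the empirical sum land near $\frac12 - \eta/2$; the $m$-fold accumulation of $\gamma'$-errors is $\le m\gamma' = \Theta(1)$, which I would make a small enough constant (by choosing the constant in $\gamma'$) so that the true sum at $\hat c$ lies in $[\frac12 - \eta, \frac12]$. The $\beta_j + c$ shift is handled by the same empirical cdf, so no new samples are needed. A final union bound combines the two failure events for total probability $\le \delta$, and all computations (sorting samples, picking order statistics, binary search) are clearly polynomial in $n$, $m$, and the sample size.

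The main obstacle I expect is getting the sample-complexity bookkeeping tight — in particular arguing that we pay only $n^4$ (not $n^4 m$) for the thresholds, because the $1/n^2$ resolution requirement is the genuinely delicate point, and ensuring the step-function nature of empirical tails does not break the ``land strictly inside the interval'' argument. The accompanying worry is whether $c_\eta(\bbeta)$ is well-defined enough to be ``learned'': if the tail-sum function is flat over a range of $c$ values, then any $c$ in that range with empirical sum near $\frac12 - \eta/2$ works, and if it jumps past the whole interval $[\frac12-\eta,\frac12]$ then $\eta$ being a constant (not shrinking) saves us — this is exactly why the lemma assumes $\eta \le \frac14$ rather than allowing $\eta \to 0$, and I would make that dependence explicit in the write-up.
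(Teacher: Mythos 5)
Your proposal is correct and matches the paper's own proof essentially line for line: DKW plus a union bound over items to get uniform accuracy on the empirical cdf of each $V(t_j)$, picking $\beta_j$ as an empirical order statistic landing in a margin-shrunken sub-interval of $[\tfrac{b}{n},\tfrac{b}{n-1}]$, and then searching over shifted empirical tails so the sum falls in a shrunken sub-interval of $[\tfrac{1}{2}-\eta,\tfrac{1}{2}]$, with exactly the step-function and flatness caveats you raise. The only deviation is cosmetic and favorable: you run DKW at two separate resolutions ($\Theta(1/n^2)$ for the $\beta_j$'s, $\Theta(\eta/m)$ for $c_\eta(\bbeta)$), which would yield $O\bigl(\max\{n^4,m^2\}\log\tfrac{m}{\delta}\bigr)$ samples, whereas the paper uses the single conservative resolution $\Theta(1/(mn^2))$ everywhere and so lands exactly on the claimed $O\bigl(m^2 n^4\log\tfrac{m}{\delta}\bigr)$.
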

 	\begin{proof}
 		Given $K=O\left(m^2n^4\left(\log m+\log \frac{1}{\delta}\right)\right)$ samples $t_j^{(1)},\ldots, t_j^{(K)}$ from distribution $D_j$, we construct $\FF_j$ as the uniform distribution over $V\left(t_j^{(1)}\right),\ldots,V\left(t_j^{(K)}\right)$. According to the DKW Theorem~\cite{DvoretzkyKW56}, with probability at least $1-\delta/m$, 
 		\begin{equation}\label{eq:symmetric XOS empirical}
 			\quad \left|\Pr_{t_{j}\sim D_j}\left[V(t_j)\geq x\right]-\Pr_{v_j\sim \FF_j}\left[v_j\geq x\right]\right|\leq \frac{1}{c\cdot mn^2}~~\text{for all $x$} 	
 		\end{equation}
 where $c$ is a constant that will be specified later. From now on, we assume that Inequality~(\ref{eq:symmetric XOS empirical}) holds for every $j$, which happens with probability $1-\delta$. 
 
As 
$1/K\leq \frac{b}{3n^2}\leq \frac{b}{n-1}-\frac{b}{3n^2}- \frac{b}{n}-\frac{b}{3n^2}$, there must exist a sample $t_j^{(\ell)}$ such that $\Pr_{v_j\sim \FF_j}\left[v_j\geq V\left(t_j^{(\ell)}\right)\right]\in \left[\frac{b}{n}+\frac{b}{3n^2}, \frac{b}{n-1}-\frac{b}{3n^2}\right]$. Let $\beta_j=V\left(t_j^{(\ell)}\right)$. 
Note that $$\Pr_{t_{j}\sim D_j}\left[V(t_j)\geq \beta_j\right]\in \left[\Pr_{v_j\sim \FF_j}\left[v_j\geq \beta_j \right]-\frac{1}{c\cdot mn^2}, \Pr_{v_j\sim \FF_j}\left[v_j\geq \beta_j\right]+\frac{1}{c\cdot mn^2}\right].$$ 
If $c$ is less than $\frac{b}{3}$, $\Pr_{t_{j}\sim D_j}\left[V(t_j)\geq \beta_j\right]\in\left[\frac{b}{n}, \frac{b}{n-1}\right]$. Thus, $\beta_j$ is $b$-balanced  for all item $j$.
 		
 		Next, we argue how to learn $c_\eta(\bbeta)$. If $\sum_{j\in[m]} \Pr_{v_j\sim\FF_j}\left[v_j\geq \beta_j \right] \leq \frac{1}{2}-\frac{\eta}{2}$, let $c_{\eta}(\bbeta)=0$. 
		This is a valid choice, as $\sum_{j} \Pr_{t_{j}\sim D_j}\left[V(t_j)\geq \beta_j\right]$ is at most $\frac{1}{2}-\frac{\eta}{2}+\frac{1}{cn^2}\leq \frac{1}{2}$ 
		 according to inequality (\ref{eq:symmetric XOS empirical}). 
		 Suppose $\sum_{j\in[m]} \Pr_{v_j\sim\FF_j}\left[v_j\geq \beta_j \right] > \frac{1}{2}-\frac{\eta}{2}$, as $m/K< \frac{\eta}{4}$, there must exist some item $k\in[m]$ and a sample $V\left(t_k^{(\ell)}\right)\geq \beta_k$ such that $\sum_{j\in[m]} \Pr_{v_j\sim\FF_j}\left[v_j\geq \beta_j+V\left(t_k^{(\ell)}\right)- \beta_k \right]\in \left[\frac{1}{2}-\frac{\eta}{4},\frac{1}{2}-\frac{\eta}{2}\right]$.
		  Let $c_\eta(\bbeta)= V\left(t_k^{(\ell)}\right)- \beta_k$.
		   According to inequality~(\ref{eq:symmetric XOS empirical}),  $$\sum_{j\in[m]} \Pr_{t_{j}\sim D_j}\left[V(t_j)\geq \beta_j+c_\eta(\bbeta) \right]\in  \left[\frac{1}{2}-\frac{\eta}{4}-\frac{1}{cn^2},\frac{1}{2}-\frac{\eta}{2}+\frac{1}{cn^2}\right].$$
		    For sufficiently large $c$, $\sum_{j} \Pr_{t_{j}\sim D_j}\left[V(t_j)\geq \beta_j+c_\eta(\bbeta) \right]\in  \left[\frac{1}{2}-\eta,\frac{1}{2}\right]$. 
 		
 		Finding each $\beta_j$ takes $O(K\log K)$ time and finding the $c_\eta(\bbeta)$ takes $O(mK)$ time. So we can learn in polynomial time a collection of $b$-balanced thresholds $\{\beta_j\}_{j\in[m]}$ and $c_\eta(\bbeta)$ with probability $1-\delta$ using $O\left(m^2n^4\log \frac{m}{\delta}\right)$ samples.
 			\end{proof}
 			
 		Next, we show how to learn the prices of the ASPE. As showed by Feige~\cite{Feige09}, there exists a polynomial time algorithm that achieves $1-\frac{1}{e}$ fraction of the optimal social welfare when bidders have XOS valuations. We let $\sigma^*$ be the interim allocation rule induced by Feige's algorithm and estimate the prices by running Feige's algorithm on sampled valuation profiles.	To run Feige's algorithm, we need a demand oracle for bidder's valuations. In the following Lemma, we argue that $v'(t,\cdot)$ is an XOS function for any type $t$, and given a value (or demand, XOS) oracle for $v(t,\cdot)$, we can construct in polynomial time the corresponding oracle for $v'(t,\cdot)$. First, we define these oracles formally.
 
 \begin{definition}
 We consider the following three oracles for a bidder's valuation function $v(t,\cdot)$:
 \begin{itemize}
 	\item \textbf{Value oracle}: takes a set $S\subseteq [m]$ as the input and returns $v(t,S)$.
 	\item \textbf{Demand oracle}: takes a collection of prices $\{p_j\}_{j\in[m]}$ as an input and returns the favorite set under these prices, that is, $S^*\in \argmax_{S\in[m]} v(t,S)-\sum_{j\in S} p_j$.
 	\item \textbf{XOS oracle} (only when $v(t,\cdot)$ is XOS):  takes a set $S\subseteq [m]$ as the input and returns the supporting prices $\{\theta_j^{S}(t)\}_{j\in S}$ for $v(t,S)$.
 \end{itemize}
 \end{definition}

 \begin{lemma}\label{lem:truncated still XOS and oracle}
 	Given a collection of thresholds $\{\beta_j\}_{j\in[m]}$ and $c_{\eta}(\bbeta)$. 
 	For any set $S\subseteq[m]$, let $v'(t,S) = v(t,S\cap \CC_{\eta}(t))$. If $v(t,\cdot)$ is an XOS function, $v'(t,\cdot)$ is also an XOS function. Given a value (or demand, XOS) oracle for $v(t,\cdot)$, we can construct in polynomial time a value (or demand, XOS) oracle for $v'(t,\cdot)$.
 \end{lemma}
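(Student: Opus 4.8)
The plan is to exploit the fact that, once the type $t$ and the numbers $\{\beta_j\}_{j\in[m]}$ and $c_\eta(\bbeta)$ are fixed, the set $\CC_{\eta}(t)=\{j\in[m]:V(t_j)<\beta_j+c_\eta(\bbeta)\}$ is a \emph{deterministic} subset of $[m]$, so that $v'(t,\cdot)$ is nothing but the valuation obtained from $v(t,\cdot)$ by restricting attention to the items in $\CC_{\eta}(t)$ and ignoring the rest: $v'(t,S)=v(t,S\cap\CC_{\eta}(t))$ for every $S\subseteq[m]$. The lemma then splits into three routine pieces: (i) a restriction of an XOS function of this form is XOS; (ii) $\CC_{\eta}(t)$ can be computed with $O(m)$ oracle calls; and (iii) each of the three oracles for $v(t,\cdot)$ can be converted into the corresponding oracle for $v'(t,\cdot)$.

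For (i), I would write $v(t,\cdot)$ in its XOS (clause) form $v(t,S)=\max_{k}a^{(k)}(S)$, with each $a^{(k)}$ additive and nonnegative (cf.\ Example~\ref{eg:valuation}). Then $v'(t,S)=\max_k a^{(k)}(S\cap\CC_{\eta}(t))=\max_k\tilde a^{(k)}(S)$, where $\tilde a^{(k)}_j:=a^{(k)}_j\cdot\ind[j\in\CC_{\eta}(t)]$ is again additive and nonnegative; hence $v'(t,\cdot)$ is XOS. For (ii), the point is that each oracle recovers the single-item values $V(t_j)=v(t,\{j\})$ — or at least the comparisons $V(t_j)<\beta_j+c_\eta(\bbeta)$ that define $\CC_{\eta}(t)$ — in $O(m)$ calls: a value query on $\{j\}$ returns $V(t_j)$; an XOS query on $\{j\}$ returns the unique supporting price $\theta^{\{j\}}_j(t)=v(t,\{j\})=V(t_j)$ (Lemma~\ref{lem:supporting price}); and a demand query that charges item $j$ a price just below $\beta_j+c_\eta(\bbeta)$ and every other item an arbitrarily large price returns $\{j\}$ precisely when $V(t_j)\ge\beta_j+c_\eta(\bbeta)$, i.e.\ when $j\notin\CC_{\eta}(t)$. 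For (iii): a value query for $v'$ on $S$ is answered by the $v$-value query on $S\cap\CC_{\eta}(t)$; a demand query for $v'$ under nonnegative prices $\{p_j\}$ is answered by calling the $v$-demand oracle with the prices unchanged on $\CC_{\eta}(t)$ and set to $+\infty$ off $\CC_{\eta}(t)$ — the returned set lies in $\CC_{\eta}(t)$ (off $\CC_{\eta}(t)$ an item only adds cost without value), and on $\CC_{\eta}(t)$ the functions $v$ and $v'$ agree, so it is a demand set for $v'$; and an XOS query for $v'$ on $S$ is answered by calling the $v$-XOS oracle on $S\cap\CC_{\eta}(t)$, obtaining $\{\theta^{S\cap\CC_{\eta}(t)}_j(t)\}$, and outputting $\theta'^{S}_j(t)=\theta^{S\cap\CC_{\eta}(t)}_j(t)$ for $j\in S\cap\CC_{\eta}(t)$ and $\theta'^{S}_j(t)=0$ otherwise. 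I would then verify the two defining conditions for supporting prices of $v'(t,S)$: $\sum_{j\in S}\theta'^{S}_j(t)=v(t,S\cap\CC_{\eta}(t))=v'(t,S)$, and for any $S'\subseteq S$, since $S'\cap\CC_{\eta}(t)\subseteq S\cap\CC_{\eta}(t)$ we get $v'(t,S')=v(t,S'\cap\CC_{\eta}(t))\ge\sum_{j\in S'\cap\CC_{\eta}(t)}\theta^{S\cap\CC_{\eta}(t)}_j(t)=\sum_{j\in S'}\theta'^{S}_j(t)$.

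None of these steps is deep; the only mildly delicate point is the demand-oracle case, where a demand oracle does not directly reveal cardinal values, so a little care is needed with the "infinite (equivalently, sufficiently large) prices" trick and with the strict-versus-weak inequality in the definition of $\CC_{\eta}(t)$, both to read off membership in $\CC_{\eta}(t)$ and to force the $v$-demand oracle to stay inside $\CC_{\eta}(t)$. Everything else is bookkeeping with the definitions of XOS valuations, supporting prices, and demand sets, and all conversions clearly run in time polynomial in $n$ and $m$ and in the cost of a single oracle call.
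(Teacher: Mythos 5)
Your proof follows essentially the same route as the paper's: XOS-ness is shown via the clause (additive-function) representation by zeroing out items outside $\CC_\eta(t)$, and the three oracles are converted identically (restrict to $S\cap\CC_\eta(t)$ for value queries, price items off $\CC_\eta(t)$ prohibitively high for demand queries, and pad supporting prices with zeros off $\CC_\eta(t)$ for XOS queries). You add two useful clarifications the paper glosses over — how $\CC_\eta(t)$ is itself recovered from each oracle, and an explicit check of the two supporting-price conditions for $v'$ — and you state the zeroing direction correctly (items \emph{outside} $\CC_\eta(t)$), whereas the paper's prose has an apparent slip saying ``items in $\CC_\eta(t)$''; but the structure of the argument is the same.
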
			 
 	\begin{proof}
 		If $v(t,\cdot)$ is an XOS function, $v(t,\cdot)$ can be represented as the max of a collection of additive functions. Observe that if we change the values for items in $\CC_{\eta}(t)$ to $0$ in each of these additive functions, $v'(t,\cdot)$ equals to the max of this new collection of additive functions. Hence, $v'(t,\cdot)$ is also an XOS function.
 		
 		If we are given a value oracle for $v(t,\cdot)$, it is straightforward to construct a value oracle for $v'(t,\cdot)$.
 		 If we are given a demand oracle for $v(t,\cdot)$, here is how to construct a demand oracle for $v'(t,\cdot)$. 
 		 For every queried price vector $\{p_j\}_{j\in[m]}$, we change the price for each item outside $\CC_{\eta}(t)$ to $2v(t,[m])$ and keep the prices for the items in $\CC_{\eta}(t)$. 
 		 Let this new price vector be $p'$. We query the demand oracle of $v(t,\cdot)$ on $p'$. 
 		 The output set should also be the demand set for $v'(t,\cdot)$ under prices $p$, as the bidder can only afford items in $\CC_{\eta}(t)$ and $v'(t,S)=v(t,S)$ for any set $S\subseteq \CC_{\eta}(t)$. 
 		 Finally, we consider the XOS oracle. 
 		 For any set $S$, let $\left\{\theta^{S\cap \CC_\eta(t)}_j(t)\right\}_{j\in {S\cap \CC_\eta(t)}}$ be the supporting prices for 
 		   $v(t,{S\cap \CC_\eta(t)})$. 
 		   Let $\gamma^{S}_j(t)=\theta^{S\cap \CC_\eta(t)}_j$ for all item $j$ in $\CC_\eta(t)\cap S$ and  $\gamma^{S}_j(t)=0$ for all item $j$ in $S-\CC_\eta(t)$. According to the definition of $v'(t,\cdot)$, $\{\gamma_j^S(t)\}_{j\in S}$ is the supporting price for $v'(t,S)$. So given an XOS oracle for $v(t,\cdot)$, we can compute the supporting price of any set $S$ for $v'(t,\cdot)$ in polynomial time.
 	\end{proof}	
 	
 	Lemma~\ref{lem:truncated still XOS and oracle} shows that $v'(t,\cdot)$ is also an XOS function for any type $t$ and with access to a demand oracle for $v(t,\cdot)$  we can construct a demand oracle for $v'(t,\cdot)$ in polynomial time.
 	 So we can indeed run Feige's algorithm on $v'$. In the next Lemma, we show how to learn a collection of prices $\{Q_j\}_{j\in[m]}$ and entry fee function $\delta(\cdot,\cdot)$ such that the corresponding ASPE has high revenue.
 	
\begin{lemma}\label{lem:symmetric XOS learning prices}
Given a collection of $b$-balanced thresholds $\{\beta_j\}_{j\in[m]}$ and $c_\eta(\bbeta)$, and access to value, demand and XOS oracles for valuation $v(t,\cdot)$ for every type $t$, there is a polynomial time algorithm that learns an ASPE$(\{Q_j\}_{j\in[m]},\delta)$ whose revenue is at least  $\frac{\core_{\eta}(\bbeta)}{\KK_1}-g(b,\eta)\cdot \prev-\KK_2\cdot \xi\cdot \opt$ with probability at least $1-\zeta$ using $O\left(n^3(m+n)^2\log \frac{m}{\zeta}\right)$ samples from $\times_{j\in[m]} D_j$, where $\KK_1$ and $\KK_2$ are positive absolute constants, and $g(b,\eta)$ is a function that only depends on $b$ and $\eta$.
\end{lemma}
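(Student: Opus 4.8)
The plan is to realize, empirically, the ASPE whose existence is promised by Lemma~\ref{lem:symmetric approx ASPE}. That lemma reduces the task to producing two objects: item prices $\{Q_j\}_{j\in[m]}$ within $\epsilon$ of the ideal prices $\{Q_{\eta,j}\}_{j\in[m]}$ of Lemma~\ref{lem:Q_j}, and a $\mu$-balanced entry fee function $\delta(\cdot)$. Since the thresholds $\{\beta_j\}_{j\in[m]}$ and $c_\eta(\bbeta)$ are already in hand, the truncated valuation $v'(t,S)=v(t,S\cap\CC_\eta(t))$ is well defined, and by Lemma~\ref{lem:truncated still XOS and oracle} it is XOS with value, demand, and XOS oracles simulable from those of $v$; hence Feige's $(1-1/e)$-approximation algorithm for welfare maximization under XOS valuations~\cite{Feige09} can be run on $v'$, and running it after a uniformly random permutation of the bidders induces a symmetric interim rule $\sigma^*$ achieving $\alpha=1-1/e$ of $\core_\eta(\bbeta)$.

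First I would estimate the prices. One never needs to store $\sigma^*$: each $Q_{\eta,j}$ is an expectation over a profile $t\sim\times_i D_i$ (and over the algorithm's internal randomness) of $\tfrac12\sum_i\ind[j\in S_i]\,\theta_j^{S_i\cap\CC_\eta(t_i)}(t_i)$, where $S_i$ is the bundle assigned to bidder $i$. So I would draw $K$ i.i.d.\ profiles, each assembled from $n$ samples of $\times_j D_j$, run the randomized algorithm and the XOS oracle on each, and set $Q_j$ to the empirical average. The key point is that in any profile item $j$ goes to at most one bidder, so the per-profile summand is a single term, and by the supporting-price inequality (Lemma~\ref{lem:supporting price} with $S'=\{j\}$) together with the definition of $\CC_\eta$ it is bounded by $V(t_{ij})<\beta_j+c_\eta(\bbeta)$. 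A Hoeffding bound and a union bound over $j\in[m]$ then give $|Q_j-Q_{\eta,j}|\le\epsilon$ for all $j$ with probability $1-\zeta/3$, once $K$ is polynomial in $(\beta_j+c_\eta(\bbeta))/\epsilon$ and $\log\frac m\zeta$.

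Next I would estimate the entry fees for the computed price vector: for $S\subseteq[m]$ set $\delta(S)$ to the empirical median of $u(t^{(1)},S),\dots,u(t^{(K')},S)$ over $K'$ fresh samples from $\times_j D_j$, where $u(t,S)=\max_{S^*\subseteq S}v(t,S^*)-\sum_{j\in S^*}Q_j$ is a single demand query. A Chernoff bound and a union bound over the at most $2^m$ sets $S$ show that $K'=O\!\big((m+\log\tfrac1\zeta)/\mu^2\big)$ samples make $\{\delta(\cdot)\}$ $\mu$-balanced with probability $1-\zeta/3$; no $\epsilon$-net union bound is needed because the price vector is fixed. Feeding $\{Q_j\}$ and $\{\delta(S)\}$ into Lemma~\ref{lem:symmetric approx ASPE} with a fixed constant $\mu$ (say $\mu=1/8$) and $\alpha=1-1/e$ yields an ASPE$(Q,\delta)$ of revenue at least $\frac{\core_\eta(\bbeta)}{\KK_1}-g(b,\eta)\cdot\prev-\BB_3\cdot(m+n)\cdot\epsilon$, and it remains only to absorb the last term into $\KK_2\cdot\xi\cdot\opt$ and to read off the total sample count.

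The main obstacle is this last calibration. To turn $\BB_3(m+n)\epsilon$ into $O(\xi\cdot\opt)$ without knowing $\opt$, I would use the $b$-balancedness of the thresholds: posting price $\beta_j$ on item $j$ alone is a degenerate RSPM of revenue $\Omega(\beta_j)$, and (when $c_\eta(\bbeta)>0$) the relation $\sum_j\Pr_{t_j\sim D_j}[V(t_j)\ge\beta_j+c_\eta(\bbeta)]\ge\tfrac12-\eta$ yields an RSPM of revenue $\Omega\!\big(\tfrac{n}{\max\{m,n\}}(\beta_j+c_\eta(\bbeta))\big)$; hence $\opt\ge\prev$ dominates $\max_j(\beta_j+c_\eta(\bbeta))$ up to factors of $\max\{m,n\}/n$. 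Choosing $\epsilon$ proportional to the known quantity $\xi\cdot\frac{n}{(m+n)\max\{m,n\}}\cdot\max_j(\beta_j+c_\eta(\bbeta))$ then makes the price error cost only $O(\xi\cdot\opt)$ in revenue while keeping $K$ polynomial in $m$, $n$, $1/\xi$, $\log\frac1\zeta$; multiplying $K$ by the $n$ samples per profile gives the claimed bound. The delicate step is precisely bounding the per-sample price contributions by $\beta_j+c_\eta(\bbeta)$ via supporting prices and then relating that scale back to $\prev$ (and thus $\opt$) through the balanced thresholds, so that a few samples' worth of estimation noise in the prices translates into an $O(\xi\cdot\opt)$ additive loss rather than the uncontrolled $O(\epsilon\cdot mnH)$ that a naive bound would give.
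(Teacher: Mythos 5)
Your proposal follows essentially the same route as the paper: estimate $Q_{\eta,j}$ by an empirical average over sampled type profiles run through Feige's algorithm on the truncated XOS valuations (via Lemma~\ref{lem:truncated still XOS and oracle}), bound the per-profile contribution by $\beta_j+c_\eta(\bbeta)$ using the supporting prices, estimate entry fees by empirical medians with a union bound over subsets, and feed everything into the robust Lemma~\ref{lem:symmetric approx ASPE}. One detail to flag: in your final calibration you assert $\opt \ge \prev = \Omega\bigl(\tfrac{n}{\max\{m,n\}}\bigr)\max_j(\beta_j+c_\eta(\bbeta))$ (and say a single-item posting yields $\Omega(\beta_j)$), but for an arbitrary $b$-balanced threshold collection the correct bounds are $\opt \ge \beta_j\cdot b/n$ (posting item $j$ alone) and $\opt \ge c_\eta(\bbeta)\cdot(\tfrac12-\eta)$ when $c_\eta(\bbeta)>0$, i.e.\ $\beta_j+c_\eta(\bbeta)\le(n/b+4)\,\opt$; the $\max\{m,n\}/n$ factor comes from the subadditive section's specialized choice $b=\tfrac{n}{3\max\{m,n\}}$ and does not apply here. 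This is a local calibration slip that changes the exponents but not the structure: substituting the $n/b$ bound recovers exactly the paper's choice $\epsilon=\xi/((m+n)(n/b+4))$ and the claimed $O(n^3(m+n)^2\log\tfrac{m}{\zeta})$ sample count.
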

\begin{proof}
	According to Lemma~\ref{lem:truncated still XOS and oracle}, we can construct value, demand and XOS oracles for valuation $v'(t,\cdot)$ given access to the corresponding oracles for $v(t,\cdot)$. We use $\{\gamma_j^S(t)\}_{j\in S}$ to denote the output of the XOS oracle for $v'(t,\cdot)$ on set $S$. In particular, $\gamma_j^S(t)=0$ for all $j\in S-\CC_\eta(t)$ and $\gamma_j^S(t)=\theta_j^{S\cap\CC_\eta(t)}(t)$ for all $j\in S\cap\CC_\eta(t)$, where $\{\theta_j^{S\cap\CC_\eta(t)}(t)\}_{j\in {S\cap\CC_\eta(t)}}$ is the supporting prices for $v(t,{S\cap\CC_\eta(t)})$. 
	Let $\AA(\boldsymbol{t})$ be the allocation computed by Feige's algorithm on the valuation profile $\left(v'(t_1,\cdot),\ldots,v'(t_n,\cdot)\right)$, where $\AA_i(\boldsymbol{t})$ denotes the set of items that bidder $i$ receives. 
	Let $\sigma^*$ be the interim allocation rule induced by $\AA(\cdot)$ when bidders types are all drawn from $\times_{j\in[m]} D_j$ independently. 
	That is, $\sigma^*_{iS}(t_i)=\Pr_{t_{-i}}\left[\AA_i(\boldsymbol{t})=S\right]$. 
	We use the same definition for $Q_{\eta,j}$ as in Lemma~\ref{lem:Q_j}. 
	In other words, $Q_{\eta,j}$ is the contribution of item $j$ to the social welfare under allocation rule $\sigma^*$, so we can rewrite it as $$\frac{1}{2}\cdot\E_{\boldsymbol{t}}\left[\sum_{i\in[n]} \ind\left[j\in \AA_i(\boldsymbol{t})\right]\cdot \gamma_j^{\AA_i(\boldsymbol{t})}(t_i)\right].$$ 
	
	Let $\boldsymbol{t^{(1)}},\ldots, \boldsymbol{t^{(K)}}$ be $K$ sampled type profiles, and $q^{(\ell)}=\frac{1}{2}\sum_{i\in[n]} \ind\left[j\in \AA_i(\boldsymbol{t^{(\ell)}})\right]\cdot \gamma_j^{\AA_i(\boldsymbol{t^{(\ell)}})}(t^{(\ell)}_i)$. 
	We set $Q_j$ to be $\frac{1}{K}\cdot\sum_{\ell\in[K]} q^{(\ell)}$. Since $\gamma^S_j(t)\leq \beta_j+c_\eta(\bbeta)$ for any $j$, $S$ and $t$, $Q_{j}\leq \beta_j+c_\eta(\bbeta)$. By the Chernoff bound, 
	
	$$\Pr\left[\left|Q_j-Q_{\eta,j}\right|\leq \epsilon\cdot \left(\beta_j+c_\eta(\bbeta)\right) \right]\geq 1-2\exp(-2K\cdot \epsilon^2).$$
	
	As $\{\beta_j\}_{j\in[m]}$ is a collection of $b$-balanced thresholds, we can obtain revenue $\beta_j\cdot \frac{b}{n}$ by only selling item $j$ to one bidder at price $\beta_j$. 
	Hence,  $\beta_j\leq \frac{n\cdot\opt}{b}$. 
	Now, consider a posted price mechanism that sells item $j$ at price $\beta_j+c_\eta(\bbeta)$. 
	A single bidder will purchase at least one item with probability at least $\sum_j \Pr_{t_{j}\sim D_{j}}\left[V(t_{j})\geq \beta_{j}+c_\eta(\boldsymbol{\beta})\right]$ which is no less than  $\frac{1}{2}-\eta$ if $c_\eta(\bbeta)>0$. 
	Hence, the revenue of this mechanism is at least $c_\eta(\bbeta)\cdot \left(\frac{1}{2}-\eta\right)$. 
	As $\eta\leq \frac{1}{4}$, $c_\eta(\bbeta)\leq 4\opt$. 
	If we let $\epsilon = \frac{\xi}{(m+n)\cdot\left(n/b+4\right)}$ for some small constant $\xi$ which will be specified later and $K =\frac{\log \frac{4m}{\zeta}}{2\epsilon^2}$, we have $\Pr\left[\left|Q_j-Q_{\eta,j}\right|\leq \frac{\xi}{m+n}\cdot\opt \right]\geq 1-\frac{\zeta}{2m}$. 
	In other words, with $O\left(n^3(m+n)^2\log \frac{m}{\zeta}\right)$ samples from $\times_{j\in[m]} D_j$ (as each $\boldsymbol{t}^{\ell}$ costs $n$ samples), we can learn in polynomial time a collection of prices $\{Q_j\}_{j\in [m]}$ such that $\left|Q_j-Q_{\eta,j}\right|\leq \frac{\xi}{m+n}\cdot\opt $ for all item $j$ with probability $1-\zeta/2$.
	
	Next, we consider the entry fee function. We use essentially the same argument as in Lemma~\ref{lem:learn median}. Suppose we take
$L$ samples $t^{(1)},\cdots, t^{(L)}$ from $\times_{j\in[m]} D_j$. Define the entry fee $\delta(S)$
 for set $S$ under $\{Q_j\}_{j\in[m]}$ to be the median of $u(t^{(1)},S),\cdots, u(t^{(L)},S)$, where $u(t,S)=\max_{S*\subseteq S} v(t,S^*)-\sum_{j\in S^*} p_j$. Given any constant $\mu\in[0,1/4]$, for any fixed set $S$, it is easy to argue that the probability for $\Pr_{t\sim \times_{j\in[m]} D_j}[u(t,S)\geq \delta(S)]$ to be larger than  $\frac{1}{2}+\mu$ or less than  $\frac{1}{2}-\mu$ is at most $2\exp(-2L\mu^2)$ due to the Chernoff bound. If we let $L$ to be $a\cdot \frac{m+\log 1/\zeta}{\mu^2}$ for a sufficiently large constant $a$, the probability that $\delta(\cdot)$ is a $\mu$-balanced entry fee function is at least $1-\zeta/2$ by the union bound.
 
 Hence, with $O\left(n^3(m+n)^2\log \frac{m}{\zeta}\right)$ samples from  $\times_{j\in[m]} D_j$, we can compute in polynomial time a collection of prices $\{Q_j\}_{j\in[m]}$ and a entry fee function $\delta(\cdot)$ such that the revenue of the ASPE$\left(\{Q_j\}_{j\in[m]},\delta(\cdot)\right)$ is at least $\frac{(1-1/e)\cdot\core_{\eta}(\bbeta)}{\BB_{1}(\mu)}-\BB_{2}(b,\eta,\mu)\cdot \prev-\xi\cdot\BB_3(\mu)\cdot \opt$ with probability $1-\zeta$ due to Lemma~\ref{lem:symmetric approx ASPE}. Our claim follows by fixing the value of $\mu$ to be some constant.
	\end{proof}

\begin{theorem}\label{thm:symmetric XOS}
	For symmetric bidders with valuations that are XOS over independent items, \begin{enumerate}
		\item when $V(t_j)$ is upper bounded by $H$ for any $j\in[m]$ and any $t_j$, with $$O\left(\left(n^5+m^2n^4\right)\cdot\log\frac{m}{\delta}+\left(\frac{1}{\epsilon}\right)^2 \left(m^2 n\log n\log \frac{1}{\epsilon} + \log \frac{1}{\delta}\right)\right)$$ samples from $\times_{j\in[m]} D_j$, we can learn in polynomial time with probability $1-\delta$ a mechanism whose revenue is at least $c_1\cdot \opt-\epsilon\cdot H$ for some absolute constant $c_1$;
		\item when the distribution of random variable $V(t_j)$ with $t_j\sim D_j$ is regular for all item $j\in[m]$, with 
		$$O\left(n^5\cdot\log\frac{m}{\delta}+\max\{m,n\}^2m^2 n^2\cdot  \log \frac{nm}{\delta}\right)$$ samples from $\times_{j\in[m]} D_j$, we can learn in polynomial time with probability $1-\delta$ a mechanism whose revenue is at least $c_2\cdot \opt$ for some absolute constant $c_2$.
	\end{enumerate}
\end{theorem}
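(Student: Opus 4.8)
The plan is to feed the benchmark of Theorem~\ref{thm:UB subadditive} into two learning primitives we have already built. Fix once and for all a constant $b\in(0,1)$, a constant $\eta\le\tfrac14$, and a constant $\mu\in[0,\tfrac14]$, so that the factor $12+\tfrac{8}{1-b}$ in Theorem~\ref{thm:UB subadditive} and the functions $\CC_1(\mu),\CC_2(\mu),g(b,\eta),\KK_1,\KK_2$ appearing in Lemmas~\ref{lem:symmetric approx ASPE} and~\ref{lem:symmetric XOS learning prices} all collapse to absolute constants. Theorem~\ref{thm:UB subadditive} then says $\opt\le O(\prev)+4\cdot\core_\eta(\bbeta)$ for any collection of $b$-balanced thresholds $\bbeta$, so it suffices to (i) learn, in polynomial time, a mechanism with revenue $\Omega(\prev)$ (allowing an additive $\epsilon H$ slack in the bounded case), and (ii) learn a mechanism with revenue at least $\core_\eta(\bbeta)/O(1)$ minus a small multiple of $\prev$ and a tiny multiple of $\opt$. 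Taking the better of the two mechanisms and chaining the resulting inequalities then finishes the proof.

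For (i): an RSPM is exactly an SPM in the induced single-item-value setting in which every bidder is unit-demand with value $V_i(t_{ij})$ for item $j$, and $\prev$ is at most the optimal BIC revenue of that unit-demand setting. Hence I run the polynomial-time learner of Theorem~\ref{thm:unit-demand} on the random variables $\{V_i(t_{ij})\}$: in the bounded case this produces an (R)SPM $M_1$ with $\mathrm{Rev}(M_1)\ge\prev/144-\epsilon H$ from $O\!\left((1/\epsilon)^2(m^2n\log n\log(1/\epsilon)+\log(1/\delta))\right)$ samples, and in the regular case (Theorem~\ref{thm:UD regular}) an $M_1$ with $\mathrm{Rev}(M_1)\ge\prev/33$ from $O(\max\{m,n\}^2m^2n^2\log(nm/\delta))$ samples. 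The revenue of $M_1$ is unchanged when it is deployed as an RSPM against the XOS bidders, since it depends only on the $V_i(t_{ij})$'s.

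For (ii): first apply Lemma~\ref{lem:symmetric XOS learn beta and c} to learn $b$-balanced thresholds $\{\beta_j\}_{j\in[m]}$ together with $c_\eta(\bbeta)$ from $O(m^2n^4\log(m/\delta))$ samples; by Lemma~\ref{lem:truncated still XOS and oracle} the truncated valuation $v'(t,\cdot)=v(t,\cdot\cap\CC_\eta(t))$ is again XOS and we can simulate its value/demand/XOS oracles from those of $v(t,\cdot)$. Then, as in Lemma~\ref{lem:symmetric XOS learning prices}, run Feige's algorithm on sampled truncated valuation profiles, set each $Q_j$ to the empirical average of the induced supporting prices for item $j$, and set the entry fee $\delta(S)$ to an empirical median of $u(t,S)=\max_{S^*\subseteq S}v(t,S^*)-\sum_{j\in S^*}Q_j$; this uses $O(n^3(m+n)^2\log(m/\delta))$ samples and, via the robust duality estimate of Lemma~\ref{lem:symmetric approx ASPE}, yields $M_2=\mathrm{ASPE}(Q,\delta)$ with $\mathrm{Rev}(M_2)\ge\core_\eta(\bbeta)/\KK_1-g(b,\eta)\,\prev-\KK_2\,\xi\,\opt$ for a free constant $\xi>0$ of our choosing. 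Crucially the error is purely multiplicative (no $H$), because $\beta_j\le O(n\opt/b)$ and $c_\eta(\bbeta)\le O(\opt)$, so every quantity entering $M_2$ is controlled by $\opt$ rather than $H$; this is what makes the argument survive unbounded regular marginals.

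Finally, combine: rearranging gives $\prev\le 144(\mathrm{Rev}(M_1)+\epsilon H)$ (resp.\ $\prev\le 33\,\mathrm{Rev}(M_1)$ in the regular case) and $\core_\eta(\bbeta)\le\KK_1\bigl(\mathrm{Rev}(M_2)+g(b,\eta)\,\prev+\KK_2\,\xi\,\opt\bigr)$; substituting into $\opt\le O(\prev)+4\core_\eta(\bbeta)$ and picking $\xi$ a small enough absolute constant so that $4\KK_1\KK_2\,\xi\le\tfrac12$ yields $\opt\le O(1)\bigl(\mathrm{Rev}(M_1)+\mathrm{Rev}(M_2)\bigr)+O(\epsilon H)\le O(1)\max\{\mathrm{Rev}(M_1),\mathrm{Rev}(M_2)\}+O(\epsilon H)$, with the $\epsilon H$ term absent in the regular case; rescaling $\epsilon$ gives the stated $c_1\opt-\epsilon H$ and $c_2\opt$ guarantees. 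A union bound over the three failure events, each driven to probability $\delta/3$, gives the $1-\delta$ confidence, and adding the three sample sizes and simplifying via $n^3(m+n)^2=O(n^5+m^2n^4)$ and $\max\{m,n\}^2m^2n^2\ge m^2n^4$ produces exactly the two sample-complexity expressions in the statement; all three learners (convex-program-based SPM, Feige's algorithm, quantile/median estimation) run in polynomial time. I expect the only delicate point to be the bookkeeping of the negative $g(b,\eta)\,\prev$ term in $\mathrm{Rev}(M_2)$ against the positively-learned proxy $\mathrm{Rev}(M_1)$ for $\prev$ — this is precisely why one must output the better of $M_1$ and $M_2$ rather than either alone, and one must double-check that after $b,\eta,\mu,\xi$ are fixed all the composed constants remain absolute.
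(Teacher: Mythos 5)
Your proposal is correct and follows essentially the same route as the paper: invoke Theorem~\ref{thm:UB subadditive} to reduce to bounding $\prev$ and $\core_\eta(\bbeta)$, learn the thresholds and the ASPE via Lemma~\ref{lem:symmetric XOS learn beta and c} and Lemma~\ref{lem:symmetric XOS learning prices}, learn an RSPM via the unit-demand learners (Theorems~\ref{thm:UD bounded}, \ref{thm:UD regular}), and output the better of the two mechanisms. Your version is just a more explicit account of the bookkeeping (the role of $\xi$, the absorption of $n^3(m+n)^2$ into $n^5+m^2n^4$) that the paper's proof states more tersely.
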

\begin{prevproof}{Theorem}{thm:symmetric XOS}
Combining Lemma~\ref{lem:symmetric XOS learn beta and c}, Lemma~\ref{lem:symmetric XOS learning prices} and Theorem~\ref{thm:UB subadditive}, we know how to compute in polynomial time an ASPE whose revenue is at least $a_1\cdot \opt-a_2\cdot\prev$ with probability $1-\delta/2$ for some absolute constant $a_1$, $a_2$, and we only need $O\left(\left(n^5+m^2n^4\right)\cdot\log\frac{m}{\delta}\right)$ samples from $\times_{j\in[m]} D_j$. When the distributions are bounded, we can learn in polynomial time  an RSPM whose revenue is at least $\frac{\prev}{144}-\xi H$ with probability $1-\delta/2$ using $O\left(\left(\frac{1}{\xi}\right)^2 \left(m^2 n\log n\log \frac{1}{\epsilon} + \log \frac{1}{\delta}\right)\right)$ samples (Theorem~\ref{thm:UD bounded}). By choosing the ratio between $\xi$ and $\epsilon$ to be the right constant, we can show the first part of our claim. When $V(t_j)$ is a regular random variable for every item $j$, we can learn in polynomial time  an RSPM whose revenue is at least $\frac{\prev}{33}$ with probability $1-\delta/2$ using  $O\left(\max\{m,n\}^2m^2 n^2\cdot \log \frac{nm}{\delta}\right)$ samples (Theorem~\ref{thm:UD regular}). Therefore, we can learn a mechanism in polynomial time such that with probability $1-\delta$ whose revenue is at least a constant fraction of the $\opt$. This proves the second part of our claim.\end{prevproof}

\subsection{Symmetric Bidders with Subadditive Valuations}\label{sec:symmetric subadditive}
In this section, we argue that if the bidders are symmetric and $m=O(n)$, there exists a collection of $b$-balanced thresholds $\{\beta_j\}_{j\in[m]}$ for a fixed constant $b$, such that $\prev$ is within a constant fraction of the benchmark. Note that this argument only applies to symmetric bidders, as $b$-balanced thresholds may not even exist for asymmetric bidders.  

We set $\eta=0$ for this section and drop the subscript $\eta$ when there is no confusion. 
 We show how to upper bound $\core(\bbeta)$ with $\prev$ by choosing a particular collection of $b$-balanced thresholds. Let $Z=\max\{m,n\}$ and $b = \frac{n}{3Z}$. It is not hard to see that $\frac{n}{3Z}$-balanced thresholds exist, as we can choose $\beta_j$ such that $\Pr_{t_j\sim D_j}[V(t_j)\geq \beta_j]=\frac{1}{3Z}$. 

\begin{lemma}\label{lem:bounding core with beta}
	Let $\{\beta_j\}_{j\in[m]}$ be a collection of $\frac{n}{3Z}$-balanced thresholds, then $\core(\bbeta)\leq \sum_{j\in[m]} \beta_j$.
	\end{lemma}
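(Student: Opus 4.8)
The plan is to relax the welfare-maximization defining $\core(\bbeta)$ one item at a time. First I would invoke subadditivity over independent items (Definition~\ref{def:subadditive independent}): iterating $v(t_i,U\cup V)\le v(t_i,U)+v(t_i,V)$ yields $v(t_i,R)\le\sum_{j\in R}V(t_{ij})$ for every bidder $i$, type $t_i$, and set $R\subseteq[m]$. Applying this with $R=S\cap\CC(t_i)$ inside the definition of $\core(\bbeta)$, and writing $\pi_{ij}(t_i)=\sum_{S\ni j}\sigma_{iS}(t_i)\in[0,1]$ for the interim probability that bidder $i$ receives item $j$, I would swap the order of summation over $S$ and over $j$ to get, for every feasible $\sigma\in P(D)$,
\[
\sum_{i}\sum_{t_i}f(t_i)\sum_{S}\sigma_{iS}(t_i)\,v\bigl(t_i,S\cap\CC(t_i)\bigr)\;\le\;\sum_i\sum_j \E_{t_i}\!\Bigl[\ind\bigl[j\in\CC(t_i)\bigr]\,V(t_{ij})\,\pi_{ij}(t_i)\Bigr].
\]

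Next I would observe that for $\tfrac{n}{3Z}$-balanced thresholds we have $c(\bbeta)=0$ in Definition~\ref{def:core and tail}: indeed $\sum_{j}\Pr_{t_j\sim D_j}[V(t_j)\ge\beta_j]\le m\cdot\tfrac{b}{n-1}\le\tfrac12$ using $b=\tfrac{n}{3Z}$ and $m\le Z$, so no positive shift is required. Consequently $j\in\CC(t_i)$ iff $V(t_{ij})<\beta_j$, hence $\ind[j\in\CC(t_i)]\,V(t_{ij})\le\beta_j$ pointwise, and the right-hand side above is at most $\sum_j\beta_j\sum_i\E_{t_i}[\pi_{ij}(t_i)]=\sum_j\beta_j\sum_i q_{ij}$, where $q_{ij}:=\E_{t_i}[\pi_{ij}(t_i)]$ is the ex-ante probability that bidder $i$ obtains item $j$ under $\sigma$. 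Feasibility of $\sigma$ forces $\sum_i q_{ij}\le1$ for every item $j$ (item $j$ is handed to at most one bidder in any realized allocation), so the bound reduces to $\sum_{j\in[m]}\beta_j$; taking the maximum over $\sigma\in P(D)$ yields $\core(\bbeta)\le\sum_{j\in[m]}\beta_j$.

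The routine ingredients are the subadditivity inequality, the re-indexing of the double sum via $\pi_{ij}$, and the elementary fact that ex-ante allocation probabilities of a feasible mechanism sum to at most one per item. I expect the only point needing care to be the verification that $c(\bbeta)=0$ for the chosen thresholds — this is precisely where $\tfrac{n}{3Z}$-balancedness (as opposed to an arbitrary threshold vector) is used; in the borderline few-bidders regime where the crude estimate $\tfrac{n}{3(n-1)}$ is not below $\tfrac12$ one can either sharpen it using $m\le Z$ directly, or simply carry the harmless additive term $\sum_j c(\bbeta)$ through the identical computation.
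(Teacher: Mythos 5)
Your proof is correct and follows essentially the same route as the paper's: observe that $\frac{n}{3Z}$-balancedness forces $c(\bbeta)=0$, bound $v(t_i,S\cap\CC(t_i))$ via subadditivity by $\sum_{j\in S\cap\CC(t_i)}V(t_{ij})\le\sum_{j\in S}\beta_j$, swap the order of summation, and use that the ex-ante probability of allocating each item is at most one. Your caveat about the borderline $n=2$ case is fair but applies equally to the paper's own estimate $\frac{n}{3(n-1)Z}\le\frac{1}{2Z}$, which implicitly assumes $n\ge 3$.
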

\begin{proof}

As $\{\beta_j\}_{j\in[m]}$ are $\frac{n}{3Z}$-balanced, $\Pr_{t_{ij}\sim D_j}[V(t_{ij})\geq \beta_j]\leq \frac{n}{(n-1)\cdot 3Z}\leq \frac{1}{2Z}$. Therefore, $$\sum_{j\in[m]} \Pr_{t_{j}\sim D_{j}}\left[V(t_{j})\geq \beta_{j}\right]\leq \frac{1}{2},$$ so $c(\bbeta) = 0$. Next, we upper bound $\core(\bbeta)$ by $\sum_{j\in[m]} \beta_j$.

	\begin{align*}
		\core(\bbeta) = &\max_{\sigma \in P(D)} \sum_{i\in[n]}\sum_{t_i\in T_i}f(t_i)\cdot \sum_{S\subseteq[m]}\sigma_{iS}(t_i)\cdot v(t_i,S\cap \CC(t_i))\\
		\leq &\max_{\sigma \in P(D)} \sum_{i\in[n]}\sum_{t_i\in T_i}f(t_i)\cdot \left(\sum_{S\subseteq[m]}\sigma_{iS}(t_i)\cdot \sum_{j\in S} \beta_j\right)\\
		= &\max_{\sigma \in P(D)} \sum_{i\in[n]} \sum_{j\in[m]} \beta_j\cdot \left(\sum_{t_i\in T_i}f(t_i)\cdot \sum_{S: j\in S}\sigma_{iS}(t_i)\right)\\
		= &\max_{\sigma \in P(D)} \sum_j \beta_j\cdot  \left(\sum_{i\in[n]} \sum_{t_i\in T_i}f_i(t_i)\cdot \sum_{S: j\in S}\sigma_{iS}(t_i)\right)\\
		\leq & \sum_j \beta_j
	\end{align*}
	
	The first inequality is because $v(t_i,\cdot)$ is a subadditive function, so $$v(t_i,S\cap \mathcal{C}_i(t_i))\leq \sum_{j\in S\cap \mathcal{C}_i(t_i)} V(t_{ij})\leq \sum_{j\in S\cap \mathcal{C}_i(t_i)} \beta_j \leq \sum_{j\in S} \beta_j.$$ The last inequality is because $\sum_{i\in[n]}\sum_{t_i\in T_i}f_i(t_i)\cdot \sum_{S: j\in S}\sigma_{iS}(t_i)\leq 1$ is the ex-ante probability for bidder $i$ to receive item $j$, and for any feasible interim allocation $\sigma$, the sum of all bidders' ex-ante probabilities for receiving item $j$ should not exceed $1$.
\end{proof}

In the following Lemma, we demonstrate that $\sum_{j\in [m]} \beta_j$ is upper bounded by $\frac{9Z}{n}\cdot \prev$.

\begin{lemma}\label{lem:bounding beta with prev}
		Let $\{\beta_j\}_{j\in[m]}$ be a collection of $\frac{n}{3Z}$-balanced thresholds, $\prev\geq \frac{n}{9Z}\cdot \sum_{j\in [m]} \beta_j$.
\end{lemma}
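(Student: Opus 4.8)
The plan is to construct an explicit RSPM that collects, in expectation, a constant fraction of $\sum_{j\in[m]}\beta_j$, and thereby lower-bound $\prev$. The natural candidate is the mechanism that offers item $j$ to the bidders sequentially at the fixed price $\beta_j$; we need to show that for each item $j$ a constant fraction of the $\beta_j$ mass is actually extracted, despite the RSPM constraint that each bidder buys at most one item. First I would set up notation: let $q_j=\Pr_{t_j\sim D_j}[V(t_j)\ge \beta_j]$, so that by $\frac{n}{3Z}$-balancedness we have $q_j\in[\frac{1}{3Z},\frac{1}{3(Z-1)}\cdot\frac{n}{n}]\subseteq[\frac{1}{3Z},\frac{1}{2Z}]$ for $n\ge 2$ (and the $n=1$ case is trivially handled separately, since then $\max\{m,n\}=m$ and $\prev$ already dominates selling each item separately). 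Summing, $\sum_j q_j\le \frac{m}{2Z}\le\frac12$, which is the same bound that made $c(\bbeta)=0$ in Lemma~\ref{lem:bounding core with beta}; this will be what guarantees that items remain available with good probability.

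The key step is a "prophet-inequality / it-stays-available" argument. Consider running, item by item, the posted-price sale at price $\beta_j$ over the bidders in some order. When the mechanism reaches bidder $i$, the probability that bidder $i$ has already purchased some earlier item is at most the sum, over items $j'$ and bidders $i'$ visited so far, of the events that $i'$ wanted $j'$; but more simply, the expected number of items sold in total is at most $\sum_{j}\Pr[\text{item }j\text{ sold}]\le\sum_j n q_j$-type quantity — I would instead argue per item. For a fixed item $j$: the expected revenue from item $j$ alone in the sequential sale is $\beta_j\cdot\Pr[\text{some bidder buys }j]$. A bidder $i$ buys $j$ iff $V_i(t_{ij})\ge\beta_j$, iff $j$ is still available when $i$ is visited, and iff $i$ has not already bought another item. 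The probability that $j$ is still available and $i$ is still "free" is at least $1-\sum_{k\ne i}q_j-\sum_{\ell\ne j}q_\ell\ge 1-\sum_{\ell}q_\ell-\sum_k q_j$, but we must be careful with constants. Using $\sum_\ell q_\ell\le\frac12$ and $nq_j\le\frac{n}{2Z}\le\frac12$, a union bound over "$j$ already taken" ($\le (i-1)q_j\le\frac12$) and "$i$ already bought something" ($\le\sum_{\ell\ne j}q_\ell\le\frac12$) is too weak. The fix is the standard trick: sum over all $i$ the probability that bidder $i$ is the first to buy $j$, conditioning on $j$ available. Concretely, $\Pr[j\text{ sold}]=\sum_i \Pr[j\text{ available at step }i,\ i\text{ free}]\cdot q_j\ge q_j\sum_i(1-\Pr[j\text{ not available at }i]-\Pr[i\text{ not free}])$, and then bound $\sum_i\Pr[i\text{ not free}]\le\sum_i\sum_{\ell\ne j}q_\ell\le n\cdot\frac12$ — still problematic. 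The cleanest route, which I expect to be the intended one, is: run the RSPM where we only attempt to sell each item to a fresh random bidder (or equivalently lower the "effective" participation), show each item individually achieves revenue $\ge \beta_j\cdot q_j\cdot(\text{const})$ using that $\sum q_\ell\le 1/2$ exactly as in a prophet inequality; then $\prev\ge\sum_j\beta_j q_j\cdot\Omega(1)\ge\sum_j\beta_j\cdot\frac{1}{3Z}\cdot\Omega(1)$. Since we want $\frac{n}{9Z}\sum_j\beta_j$, the book-keeping is to verify the constant comes out to at least $\frac{n}{9Z}$; note $q_j\ge\frac{1}{3Z}$ gives a factor $\frac{1}{3Z}$ and we need an extra $\frac{n}{3}$, which must come from the fact that the "balanced" lower bound is really $\frac{b}{n}=\frac{1}{3Z}$ so $nq_j\ge\frac n{3Z}$ and summing revenue over the $n$ slots — i.e., each of the $n$ bidders contributes, and cross-bidder availability is controlled by $\sum_\ell q_\ell\le 1/2$.

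Concretely the order of steps I would carry out: (1) dispose of $n=1$ (then $Z=m$, and selling item $j$ at $\beta_j$ to the single bidder gives revenue $\beta_j\Pr[V\ge\beta_j]\ge\beta_j\cdot\frac{1}{3m}$, summing gives $\prev\ge\frac{1}{3m}\sum_j\beta_j=\frac{n}{3Z}\sum_j\beta_j\ge\frac{n}{9Z}\sum_j\beta_j$); (2) for $n\ge2$, define the RSPM that visits bidders $1,\dots,n$ and offers each the still-available items at prices $\{\beta_j\}$ with the unit-demand restriction; (3) for each item $j$, lower bound the probability that $j$ is sold by a prophet-inequality computation using the two facts $\sum_\ell q_\ell\le\frac12$ and $q_j\le\frac{1}{2Z}$, obtaining $\Pr[j\text{ sold}]\ge \frac12\min\{1,\sum_i\Pr[i\text{ wants }j\text{ and free and }j\text{ available}]/\dots\}$ — I will phrase it as: the expected revenue of this RSPM is at least $\frac12\sum_j\beta_j\cdot\big(1-e^{-\sum_i q_j}\big)$ or, more elementarily, at least $\frac12\sum_j\beta_j\cdot\min\{\frac12,\sum_{i}q_j\}\ge\frac12\sum_j\beta_j\cdot\frac{nq_j}{2}\ge\frac{1}{4}\sum_j\beta_j\cdot\frac{n}{3Z}$ when $nq_j\le 1$, using $nq_j\ge\frac n{3Z}$; (4) check the constant $\frac14\cdot\frac13=\frac1{12}$ vs the claimed $\frac19$ — if the elementary bound is slightly off I would tighten the availability argument (the "$\frac12$" can be improved since $\sum_\ell q_\ell\le\frac12$ means each item is available with probability $\ge\frac12$ at every step and each bidder is free with probability $\ge\frac12$, so actually $\Pr[j\text{ sold}]\ge 1-\prod_i(1-q_j/4)\ge \frac{nq_j}{4}\cdot e^{-nq_j/4}\ge\frac{nq_j}{4}\cdot\frac{3}{4}$ for $nq_j\le1$, giving $\ge\frac{3nq_j}{16}$, hence $\prev\ge\frac{3n}{16\cdot 3Z}\sum\beta_j=\frac{n}{16Z}\sum\beta_j$; to land exactly $\frac n{9Z}$ one optimizes the split of the $\frac12$ budget between "item available" and "bidder free", or simply notes $m\le 3n$ so $Z\le 3n$ improves constants — I would present whichever split cleanly yields $\ge\frac19$). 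The main obstacle is precisely this constant-chasing in the availability argument: making the probability that item $j$ survives to be bought, combined with a random bidder who still wants it and has not yet bought, large enough that after multiplying by $q_j\ge\frac1{3Z}$ and summing over the $n$ effective slots we clear the $\frac{n}{9Z}$ threshold; everything else (subadditivity is not even needed here, only $V(t_{ij})\ge$ value of a single item) is routine.

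\medskip
\noindent\textbf{Combining Lemmas~\ref{lem:bounding core with beta} and~\ref{lem:bounding beta with prev}:} once both are in hand, $\core(\bbeta)\le\sum_j\beta_j\le\frac{9Z}{n}\prev$, and plugging $b=\frac n{3Z}$, $\eta=0$ into Theorem~\ref{thm:UB subadditive} gives $\opt\le\big(12+\frac{8}{1-n/(3Z)}\big)\prev+4\cdot\frac{9Z}{n}\prev=O\big(\frac{Z}{n}\big)\prev=O\big(\frac{\max\{m,n\}}{n}\big)\prev$, i.e. $\prev\ge\Omega\big(\frac{n}{\max\{m,n\}}\big)\opt$, which is the claimed guarantee; the unit-demand learning results of Section~\ref{sec:unit-demand} then let us learn an RSPM achieving a constant fraction of $\prev$, and hence $\Omega\big(\frac{n}{\max\{m,n\}}\big)\opt$.
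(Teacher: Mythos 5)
Your high-level plan matches the paper's --- construct the RSPM that offers every item $j$ at price $\beta_j$ and lower-bound the probability mass collected --- but your execution never lands, and you explicitly concede this at the end ("I would present whichever split cleanly yields $\ge\frac19$"). The difficulty you keep fighting (sequential availability, "which bidder is free at step $i$," splitting a $\frac12$ budget) is self-inflicted: the paper sidesteps it entirely with a clean independence decomposition that you are missing.

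The paper's argument is this. Fix a bidder $i$ and an item $j$ and consider the three events
(a) no bidder $k\neq i$ can afford item $j$, i.e.\ $V(t_{kj})<\beta_j$ for all $k\neq i$;
(b) bidder $i$ cannot afford any item $\ell\neq j$, i.e.\ $V(t_{i\ell})<\beta_\ell$ for all $\ell\neq j$;
(c) bidder $i$ can afford item $j$, i.e.\ $V(t_{ij})\ge\beta_j$.
These depend on disjoint blocks of the type coordinates $\{t_{kj}\}_{k\neq i}$, $\{t_{i\ell}\}_{\ell\neq j}$, $\{t_{ij}\}$, hence are mutually independent under the product prior; and when all three hold, bidder $i$ deterministically buys item $j$ in the RSPM, no matter the visiting order, because $j$ is the only item $i$ can afford and $j$ cannot have been taken by anyone earlier. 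Using the $\frac{n}{3Z}$-balanced property, $\Pr[\text{(a)}]\ge 1-(n-1)\cdot\frac{b}{n-1}=1-\frac{n}{3Z}\ge\frac23$, $\Pr[\text{(b)}]\ge 1-(m-1)\cdot\frac{b}{n-1}\ge\frac12$, and $\Pr[\text{(c)}]\ge\frac{b}{n}=\frac{1}{3Z}$; multiplying gives $\Pr[i\text{ buys }j]\ge\frac{1}{9Z}$. Since the events $\{i$ buys $j\}$ are pairwise disjoint across both $i$ (an item is sold once) and $j$ (a bidder buys once), the revenue is at least $\sum_{i,j}\beta_j\cdot\frac{1}{9Z}=\frac{n}{9Z}\sum_j\beta_j$. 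No prophet inequality, no step-by-step availability accounting.

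So the gap in your proposal is concrete: you lack the observation that "(a) and (b) and (c)" is a sufficient condition for the sale that is order-independent and is a product of three independent events with separately good probabilities. Your union bounds couple the "item available" and "bidder free" events through the sequential process and lose a factor (you compute $\frac{1}{12}$, $\frac{1}{16}$, etc.), whereas the paper's two non-affordability events are bounded at $\frac23$ and $\frac12$ separately rather than charged against a shared $\frac12$ budget. Incidentally, your stated bound $q_j\le\frac{1}{2Z}$ is also wrong; the balancedness definition gives $q_j\le\frac{b}{n-1}=\frac{n}{3Z(n-1)}$, which can be as large as $\frac{2}{3Z}$ at $n=2$. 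That slip doesn't break your plan, but the missing independence decomposition does.
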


\begin{proof}
	Let us consider an RSPM where the price for selling item $j$ to bidder $i$ is $\beta_j$. Bidder $i$ purchases item $j$ if that is the only item she can afford and no one else can afford item $j$. As $\{\beta_j\}_{j\in[m]}$ are $\frac{n}{3Z}$-balanced, the probability that no one else can afford item $j$ is at least $$\left(1-\sum_{k\neq i}\Pr_{t_{kj}\sim D_j}[V(t_{kj})\geq \beta_j]\right)\geq (1-\frac{n}{3Z})\geq \frac{2}{3}.$$ Also, the probability that $i$ cannot afford any item other than $j$ is at least $$\left(1-\sum_{\ell\neq j}\Pr_{t_{i\ell}\sim D_\ell}[V(t_{i\ell})\geq \beta_\ell]\right)\geq 1-\frac{n(m-1)}{3Z(n-1)}\geq\frac{1}{2}.$$ Therefore, bidder $i$ purchases item $j$ with probability at least $\frac{1}{3}\Pr_{t_{ij}\sim D_j}[V(t_{ij}\geq \beta_j)]\geq \frac{1}{9Z}$. Whenever this event happens, it contributes $\beta_j$ to the revenue. So the total revenue is at least $\sum_j\sum_i\frac{\beta_j}{9Z}= \frac{n}{9Z}\cdot \sum_j \beta_j$.
\end{proof}

Combining Theorem~\ref{thm:UB subadditive}, Lemma~\ref{lem:bounding core with beta} and~\ref{lem:bounding beta with prev}, we obtain the following Theorem.

\begin{theorem}\label{thm:subbadditive UB RSPM}
	For symmetric bidders with valuations that are subadditive over independent items, $$\opt\leq \left(24+\frac{36 \max\{n,m\}}{n}\right)\cdot \prev.$$ \end{theorem}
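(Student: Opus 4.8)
The plan is to instantiate the duality-based upper bound of Theorem~\ref{thm:UB subadditive} with a single, carefully chosen family of balanced thresholds, and then dispose of the resulting $\core$ term via Lemmas~\ref{lem:bounding core with beta} and~\ref{lem:bounding beta with prev}. Write $Z=\max\{m,n\}$. First I would take $\eta=0$ (allowed since $0\le\frac14$) and $b=\frac{n}{3Z}$; because $Z\ge n$ we have $b\le\frac13\in(0,1)$, so Theorem~\ref{thm:UB subadditive} is applicable once a $\frac{n}{3Z}$-balanced threshold vector is exhibited. Such a vector exists: for each item $j$ pick $\beta_j$ with $\Pr_{t_j\sim D_j}[V(t_j)\ge\beta_j]=\frac{1}{3Z}$, which lies in $[\frac{b}{n},\frac{b}{n-1}]=[\frac{1}{3Z},\frac{n}{3Z(n-1)}]$. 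It is exactly here that symmetry is used: the same $\beta_j$ is balanced for every bidder simultaneously, whereas for asymmetric bidders no common balanced threshold need exist.

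With these choices, Theorem~\ref{thm:UB subadditive} yields
$$\opt\ \le\ \Bigl(12+\tfrac{8}{1-b}\Bigr)\cdot\prev\ +\ 4\cdot\core_0(\bbeta).$$
For the first summand, $1-b\ge\frac23$ gives $\frac{8}{1-b}\le 12$, so the coefficient of $\prev$ is at most $24$. For the second summand, I would invoke Lemma~\ref{lem:bounding core with beta}, which applies precisely because $\{\beta_j\}$ is $\frac{n}{3Z}$-balanced: this forces $\sum_{j}\Pr_{t_j\sim D_j}[V(t_j)\ge\beta_j]\le\frac12$, hence $c_0(\bbeta)=0$, and then the subadditivity-plus-feasibility argument of that lemma gives $\core_0(\bbeta)\le\sum_{j\in[m]}\beta_j$. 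Finally Lemma~\ref{lem:bounding beta with prev} bounds $\sum_{j\in[m]}\beta_j\le\frac{9Z}{n}\cdot\prev$ (via the RSPM that posts price $\beta_j$ on item $j$). Chaining these,
$$\opt\ \le\ 24\cdot\prev\ +\ 4\cdot\tfrac{9Z}{n}\cdot\prev\ =\ \Bigl(24+\tfrac{36\max\{n,m\}}{n}\Bigr)\cdot\prev,$$
as claimed.

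There is essentially no genuine obstacle: all the substantive work is packaged in the three cited results, and what remains is bookkeeping. The one point deserving care is the calibration of the constant in $b=\frac{n}{3Z}$, which must be small enough that simultaneously (i) the prescribed tail probability $\frac{1}{3Z}$ sits inside the $b$-balanced window $[\frac{b}{n},\frac{b}{n-1}]$, (ii) $c_0(\bbeta)=0$ so that Lemma~\ref{lem:bounding core with beta} is in force, and (iii) $1-b$ stays bounded away from $0$ so that $12+\frac{8}{1-b}$ remains an absolute constant; the value $\frac13$ comfortably meets all three. I would also double-check that the modified version of the Cai--Zhao bound with general $\eta\le\frac14$ specializes correctly at $\eta=0$, which is what Theorem~\ref{thm:UB subadditive} asserts.
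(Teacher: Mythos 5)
Your proposal is correct and follows essentially the same route as the paper: choose $\eta=0$ and $b=\frac{n}{3\max\{m,n\}}$, exhibit the balanced thresholds via symmetry, then combine Theorem~\ref{thm:UB subadditive} with Lemma~\ref{lem:bounding core with beta} and Lemma~\ref{lem:bounding beta with prev}, using $b\le\frac13$ to collapse $12+\frac{8}{1-b}$ to $24$. The bookkeeping matches the paper's computation exactly.
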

\begin{proof}
	Combining Lemma~\ref{lem:bounding core with beta} and~\ref{lem:bounding beta with prev}, we have $\prev\geq \frac{n}{9\max\{n,m\}}\cdot \core(\bbeta)$ if $\{\beta_j\}_{j\in[m]}$ is a collection of $\frac{n}{3\max\{n,m\}}$-balanced thresholds. By setting $b$ to be $\frac{n}{3\max\{n,m\}}$ and replacing $\core(\bbeta)$ with $\frac{9\max\{n,m\}}{n}\cdot \prev$ in Theorem~\ref{thm:UB subadditive}, we have $$\opt\leq \left(12+\frac{8}{1-\frac{n}{3\max\{n,m\}}}+\frac{36 \max\{n,m\}}{n}\right)\cdot \prev.$$ As $\frac{n}{3\max\{n,m\}}\leq 1/3$, $$\opt\leq \left(24+\frac{36 \max\{n,m\}}{n}\right)\cdot \prev.$$
\end{proof}


\subsubsection{Learning an Approximately Optimal Mechanism for Symmetric Subadditive Bidders}

With Theorem~\ref{thm:subbadditive UB RSPM}, we only need to learn a mechanism that approximates the optimal revenue obtainable by any RSPM. The next Lemma connects RSPMs with SPMs in an induced unit-demand setting.
\begin{lemma}\label{lem: RSPM to UD}
	Consider $n$ symmetric bidders whose types are drawn independently from $\times_{j=1}^m D_j$. Let $\FF_j$ be the distribution for random variable $V(t_j)$ where $t_j\sim D_j$. We define an \emph{induced unit-demand} setting with  $n$ symmetric unit-demand bidders whose values for item $j$ are drawn independently from $\FF_j$. For any collection of prices $\{p_{ij}\}_{i\in[n],j\in[m]}$, the revenue of the RSPM with these prices in the original setting is exactly the same as the revenue of the SPM with these prices in the induced unit-demand setting.
\end{lemma}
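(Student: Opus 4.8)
The plan is to exhibit a coupling between the execution of the RSPM in the original subadditive setting and the execution of the SPM in the induced unit-demand setting, under which the two mechanisms make the same allocation and charge the same prices bidder-by-bidder, so that their revenues agree not merely in expectation but pointwise along the coupling.

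First I would unpack what the rationing constraint does to a bidder's behavior. When bidder $i$ with type $t_i$ is visited in the RSPM and shown the available set $S$ under prices $\{p_{ij}\}_j$, the extra constraint forces her ``favorite bundle'' to be either empty or a single item, so she picks $j^\ast\in\arg\max_{j\in S}\big(V(t_{ij})-p_{ij}\big)$ and buys it iff $V(t_{ij^\ast})-p_{ij^\ast}$ is nonnegative (fix an arbitrary deterministic tie-breaking rule, say lowest index). Crucially, by the no-externalities property in Definition~\ref{def:subadditive independent}, $v_i(t_i,\{j\})=V(t_{ij})$ depends on $t_i$ only through $t_{ij}$, so this decision --- and the price $p_{ij^\ast}$ she pays --- is a function of the vector $\langle V(t_{ij})\rangle_{j\in[m]}$ alone. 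In the induced unit-demand setting, the SPM rule for a unit-demand bidder with value vector $\langle V_{ij}\rangle_j$ facing available set $S$ is exactly the same: pick $j^\ast\in\arg\max_{j\in S}\big(V_{ij}-p_{ij}\big)$, buy iff nonnegative, with the same tie-breaking.

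Next I would set up the coupling: draw $t=(t_{ij})_{i\in[n],j\in[m]}$ with each $t_{ij}\sim D_j$ independently, and define the value profile of the induced instance by $V_{ij}:=V(t_{ij})$. Since the $t_{ij}$ are mutually independent and $V(\cdot)$ is a fixed measurable map depending only on its single argument, $(V_{ij})_{i,j}$ is distributed as $\times_{i\in[n],j\in[m]}\FF_j$, which is precisely the distribution of the induced unit-demand instance. I would then prove, by induction on the position in the visiting order, that the set of available items after visiting the first $k$ bidders is identical in the two executions: the base case $S=[m]$ is immediate, and the inductive step follows because, given identical available sets, the two copies of bidder $k$ face the same parameters $\langle V(t_{kj})\rangle_j=\langle V_{kj}\rangle_j$ and hence, by the previous paragraph, purchase the same item (if any) at the same price. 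Summing payments over all $n$ bidders shows the realized revenue is identical under the coupling, and taking expectations gives the claimed equality.

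The argument is essentially bookkeeping rather than a genuine difficulty; the one place to be careful is to make the tie-breaking in the favorite-bundle selection the same in both mechanisms (and consistent across the coupling), and to note explicitly that the rationing constraint reduces the favorite-bundle optimization to the single-item $\arg\max$ --- which is exactly where the no-externalities hypothesis of subadditivity over independent items is used. No machinery from the earlier sections is needed.
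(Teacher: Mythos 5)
Your proof is correct and is essentially a careful, fleshed-out version of the paper's one-line argument: the paper likewise observes that, because the RSPM restricts each bidder to at most one item, her decision depends only on $\langle V(t_{ij})\rangle_j$ and hence coincides with the induced unit-demand bidder's decision under the same available set and prices. The added detail (the explicit coupling via $V_{ij}:=V(t_{ij})$, the induction over the visiting order, and fixing a common tie-breaking rule) is sound and makes the argument fully rigorous.
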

\begin{proof}
As in an RSPM bidders can purchase at most one item, bidders behave exactly the same as in the induced unit-demand setting. Since the prices in the SPM and RSPM are the same, bidders purchase exactly the same items. Hence, the revenue is the same.\end{proof}

\begin{corollary}\label{cor:subadditive to UD}
For symmetric bidders with valuations that are subadditive over independent items, $$\opt^{UD}\geq \Omega\left(\frac{n}{Z}\right)\cdot\opt, $$ where $Z=\max\{m,n\}$ and $\opt^{UD}$ is the optimal revenue for the induced unit-demand setting.\end{corollary}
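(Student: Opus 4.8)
The plan is to chain together the two main results already established in this subsection: Theorem~\ref{thm:subbadditive UB RSPM}, which states $\opt \le \left(24 + \frac{36\max\{n,m\}}{n}\right)\cdot \prev$, and Lemma~\ref{lem: RSPM to UD}, which identifies the revenue of any RSPM in the original subadditive setting with the revenue of the corresponding SPM in the induced unit-demand setting. The first step is to observe that $\prev$, the highest revenue obtainable by any RSPM, equals the highest revenue obtainable by any SPM in the induced unit-demand setting: this follows from Lemma~\ref{lem: RSPM to UD} applied to every price vector $\{p_{ij}\}$, since the map between RSPMs and SPMs is a revenue-preserving bijection on the common parameter space of per-bidder-per-item prices. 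Hence $\prev \le \opt^{UD}$, because $\opt^{UD}$ is the optimal revenue over \emph{all} BIC mechanisms in the induced unit-demand setting, which in particular dominates the best SPM there.

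Combining these gives $\opt \le \left(24 + \frac{36\max\{n,m\}}{n}\right)\cdot \prev \le \left(24 + \frac{36 Z}{n}\right)\cdot \opt^{UD}$, where $Z = \max\{m,n\}$. Since $n \le Z$, the coefficient $24 + \frac{36 Z}{n}$ is at most $24\cdot\frac{Z}{n} + \frac{36 Z}{n} = \frac{60 Z}{n} = O(Z/n)$. Rearranging, $\opt^{UD} \ge \frac{n}{60 Z}\cdot \opt = \Omega\!\left(\frac{n}{Z}\right)\cdot \opt$, which is exactly the claimed bound.

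I do not anticipate a genuine obstacle here; the corollary is a direct composition of a structural upper bound (Theorem~\ref{thm:subbadditive UB RSPM}) with the revenue-equivalence observation (Lemma~\ref{lem: RSPM to UD}). The only point requiring a sentence of care is why $\prev \le \opt^{UD}$ rather than equality: the best RSPM corresponds to the best SPM in the induced setting, but SPMs need not be optimal among all BIC mechanisms for unit-demand bidders, so we only get one inequality — which is all we need. A second minor check is that the induced unit-demand setting is well-defined, i.e.\ that the item-value random variables $V(t_j)$ are independent across items; this is immediate from the item-independence assumption on the original $D = \times_j D_j$ and the no-externalities property of subadditive-over-independent-items valuations, which makes $V(t_j) = v(t,\{j\})$ depend only on $t_j$.
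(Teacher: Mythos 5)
Your proof is correct and follows the paper's own approach exactly: the paper also obtains the corollary by combining Theorem~\ref{thm:subbadditive UB RSPM} with Lemma~\ref{lem: RSPM to UD}, and your chain $\opt \le O(Z/n)\cdot\prev \le O(Z/n)\cdot\opt^{UD}$ is precisely the intended argument, with the algebra and the one-sided inequality $\prev\le\opt^{UD}$ spelled out.
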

\begin{proof}
	Combine Theorem~\ref{thm:subbadditive UB RSPM} and Lemma~\ref{lem: RSPM to UD}.
\end{proof}

Lemma~\ref{lem: RSPM to UD} implies that learning an approximately optimal RSPM is equivalent as learning an approximately optimal SPM in the induced unit-demand setting. Next, we apply our results in Section~\ref{sec:unit-demand} to the induced unit-demand setting to learn an RSMP that approximates the optimal revenue in the original setting.


In the next Theorem, we show that even though the bidders' valuations could be complex set functions, e.g., submodular, XOS and subadditive, as long as $m=O(n)$, the approximate distributions for the bidders' values for winning any \emph{single item} provides sufficient information to learn an approximately optimal mechanism. 

\begin{theorem} \label{thm:subadditive Kolmogorov}
	For symmetric bidders with valuations that are subadditive over independent items, let $\FF_j$ be the distribution of $V(t_j)$ where $t_j\sim D_j$. If $\FF_j$ is supported on $[0,H]$ for all $j\in[m]$, given distributions $\hat{\FF}_{j}$ where $\left|\left|\hat{\FF}_{j}-\FF_{j}\right|\right|_K\leq \epsilon$ for all $j\in[m]$, there is a polynomial time algorithm that constructs a randomized RSPM whose revenue under the true distribution $D$ is at least  $$\left(\frac{1}{4}-(n+m)\cdot \epsilon\right)\cdot\left(\Omega\left(\frac{n}{\max\{m,n\}}\right)\cdot \opt-2\epsilon\cdot mnH\right).$$
\end{theorem}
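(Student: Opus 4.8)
The plan is to combine the structural upper bound for symmetric subadditive bidders (Theorem~\ref{thm:subbadditive UB RSPM}), which says $\opt \leq O\!\left(\frac{\max\{m,n\}}{n}\right)\cdot \prev$, with the reduction to an induced unit-demand setting (Lemma~\ref{lem: RSPM to UD} and Corollary~\ref{cor:subadditive to UD}), and then invoke the Kolmogorov-access result for unit-demand bidders (Theorem~\ref{thm:UD Kolmogorov}). Concretely, first I would observe that learning an RSPM in the original subadditive setting is, by Lemma~\ref{lem: RSPM to UD}, exactly the same as learning an SPM in the induced unit-demand setting whose single-item value distributions are $\FF_1,\ldots,\FF_m$. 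The inputs $\hat{\FF}_j$ with $\|\hat{\FF}_j - \FF_j\|_K \leq \epsilon$ are precisely the ``approximate distributions'' that Theorem~\ref{thm:UD Kolmogorov} requires, so we can directly apply that theorem to the induced unit-demand instance.

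Next, I would run the algorithm from Theorem~\ref{thm:UD Kolmogorov} on $\hat{\FF} = \times_j \hat{\FF}_j$ (treating each of the $n$ symmetric bidders as identical): it produces in polynomial time a randomized SPM whose revenue under the true induced distribution $\FF = \times_j \FF_j$ is at least $\left(\frac14 - (n+m)\epsilon\right)\cdot\left(\frac{\opt^{UD}}{8} - 2\epsilon\, mnH\right)$, where $\opt^{UD}$ is the optimal revenue in the induced unit-demand setting. By Lemma~\ref{lem: RSPM to UD}, the corresponding RSPM in the original setting collects exactly the same revenue under $D$. Finally, I would substitute the bound $\opt^{UD} \geq \Omega\!\left(\frac{n}{\max\{m,n\}}\right)\cdot \opt$ from Corollary~\ref{cor:subadditive to UD} into this expression; since $\frac{\opt^{UD}}{8} - 2\epsilon\,mnH \geq \Omega\!\left(\frac{n}{\max\{m,n\}}\right)\cdot \opt - 2\epsilon\,mnH$ (the $\Omega(\cdot)$ absorbs the constant $1/8$), we obtain a revenue of at least $\left(\frac14 - (n+m)\epsilon\right)\cdot\left(\Omega\!\left(\frac{n}{\max\{m,n\}}\right)\cdot \opt - 2\epsilon\,mnH\right)$, which is precisely the claimed bound.

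There is essentially no serious obstacle here: the theorem is a clean corollary of three already-established results, and the only thing to check carefully is that the distribution-access model in Theorem~\ref{thm:UD Kolmogorov} lines up with what we have — namely that knowing the single-item value cdfs $\FF_j$ to within $\epsilon$ in Kolmogorov distance is exactly what Theorem~\ref{thm:UD Kolmogorov} consumes, and that the ex-ante relaxation / convex-program machinery underlying Theorem~\ref{thm:UD Kolmogorov} only ever looks at the marginals $\hat{\FF}_j$ and never needs the (high-dimensional) joint type distribution $\hat{D}$. This is automatic because the induced unit-demand bidders' types are, by construction, exactly the single numbers $V(t_j)$. The mildest subtlety is bookkeeping the constants: one should state the final multiplicative loss as an unspecified $\Omega(n/\max\{m,n\})$ rather than tracking it exactly, since Theorem~\ref{thm:subbadditive UB RSPM} already gives a $\left(24 + \frac{36\max\{n,m\}}{n}\right)$ factor and composing it with the $1/8$ and $1/24$-type losses in Theorem~\ref{thm:UD Kolmogorov} would only clutter the statement. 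I would therefore present the proof as a two-line chaining of inequalities, exactly parallel to the proof of Corollary~\ref{cor:subadditive to UD}, and remark that when $\epsilon = O\!\left(\frac{1}{m+n}\right)$ the multiplicative prefactor $\frac14 - (n+m)\epsilon$ is a positive constant, so the guarantee becomes the desired max-min bound $\Omega\!\left(\frac{n}{\max\{m,n\}}\right)\cdot\opt - O(\epsilon\cdot nmH)$.
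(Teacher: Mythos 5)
Your proof is correct and follows essentially the same route as the paper: reduce the RSPM problem to an SPM problem in the induced unit-demand setting via Lemma~\ref{lem: RSPM to UD}, invoke Theorem~\ref{thm:UD Kolmogorov} on the approximate marginals $\hat{\FF}_j$, and substitute the lower bound $\opt^{UD}=\Omega(n/\max\{m,n\})\cdot\opt$ from Corollary~\ref{cor:subadditive to UD}. The paper's own proof is exactly this chain of three results.
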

\begin{proof}
 Let $Z=\max\{m,n\}$. According to Corollary~\ref{cor:subadditive to UD}, $\opt^{UD}=\Omega\left(\frac{n}{Z}\right)\cdot \opt$. Since $||\hat{\FF}_{j}-\FF_{j}||_K\leq \epsilon$ for all $j\in[m]$, we can learn a randomized SPM in the induced unit-demand setting whose revenue under the true distribution is at least $\left(\frac{1}{4}-(n+m)\cdot \epsilon\right)\cdot\left(\frac{\opt^{UD}}{8}-2\epsilon\cdot mnH\right)$ based on Theorem~\ref{thm:UD Kolmogorov}.  By Lemma~\ref{lem: RSPM to UD},  we can construct an RSPM with the same collection of (randomized) prices and achieve revenue $$\left(\frac{1}{4}-(n+m)\cdot \epsilon\right)\cdot\left(\Omega\left(\frac{n}{Z}\right)\cdot \opt-2\epsilon\cdot mnH\right)$$ in the original setting.\end{proof}

If we are given sample access to bounded distributions, we show in the following Theorem that a polynomial number of samples suffices to learn an approximately optimal mechanism, when $m=O(n)$.

\begin{theorem}\label{thm:subadditive bounded}
	For symmetric bidders with valuations that are subadditive over independent items, let $\FF_j$ be the distribution of ~$V(t_j)$ where $t_j\sim D_j$. If $\FF_j$ is supported on $[0,H]$ for all $j\in[m]$, there is a polynomial time algorithm that learns  an RSPM whose revenue is $\Omega\left(\frac{n}{{\max\{m,n\}}}\right)\cdot \opt -\epsilon H$ with probability $1-\delta$ using $$O\left(\left(\frac{1}{\epsilon}\right)^2\cdot \left(m^2 n\log n\log \frac{1}{\epsilon} + \log \frac{1}{\delta}\right)\right)$$ samples.
\end{theorem}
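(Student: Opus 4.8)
The plan is to reduce the problem to learning an approximately optimal sequential posted price mechanism in the induced unit-demand setting of Lemma~\ref{lem: RSPM to UD}, and then apply the sample-complexity bound of Morgenstern and Roughgarden (Theorem~\ref{thm:UD bounded}). First I would observe that a single sample of a bidder's type $t\sim \times_{j\in[m]} D_j$ — and since the bidders are symmetric, one sample from $D$ yields $n$ such samples — can be converted into a sample $(V(t_1),\ldots,V(t_m)) \sim \times_{j\in[m]}\FF_j$ from the induced unit-demand instance just by evaluating the valuation on singletons. Hence $O\!\left((1/\epsilon)^2\cdot(m^2 n\log n\log(1/\epsilon)+\log(1/\delta))\right)$ samples from $D$ produce at least that many samples from the induced unit-demand instance.

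Next, since each $\FF_j$ is supported on $[0,H]$, I would invoke Theorem~\ref{thm:UD bounded} on the induced instance: with the stated number of samples, there is a polynomial-time algorithm outputting an SPM whose revenue in the induced instance is at least $\opt^{UD}/144-\epsilon H$ with probability $1-\delta$, where $\opt^{UD}$ is the optimal BIC revenue for the induced unit-demand bidders. Then, by Lemma~\ref{lem: RSPM to UD}, running the same collection of (possibly randomized) posted prices as an RSPM in the original subadditive setting yields exactly the same revenue: the ``purchase at most one item'' constraint of the RSPM forces the bidders to behave identically to the unit-demand bidders of the induced instance. Finally, Corollary~\ref{cor:subadditive to UD} gives $\opt^{UD}\ge \Omega\!\left(n/\max\{m,n\}\right)\cdot\opt$, so the learned RSPM achieves revenue at least $\Omega\!\left(n/\max\{m,n\}\right)\cdot\opt-\epsilon H$ in the original setting, with probability $1-\delta$, in polynomial time.

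Most of the work is bookkeeping on top of results already established in the paper; the only points requiring care are the reduction itself — verifying that samples of the multidimensional type translate to samples of the induced one-dimensional instance, and that the revenue of a posted-price scheme is preserved when an SPM for unit-demand bidders is reinterpreted as an RSPM for subadditive bidders — and checking that the approximation factor and the failure probability compose correctly across Theorem~\ref{thm:UD bounded}, Lemma~\ref{lem: RSPM to UD}, and Corollary~\ref{cor:subadditive to UD}. I do not anticipate any substantive analytical obstacle beyond this composition.
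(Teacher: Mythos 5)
Your proposal is correct and follows essentially the same route as the paper's own proof: appeal to Corollary~\ref{cor:subadditive to UD} to relate $\opt^{UD}$ to $\opt$, invoke Theorem~\ref{thm:UD bounded} in the induced unit-demand setting, and then use Lemma~\ref{lem: RSPM to UD} to translate the learned SPM prices into an RSPM for the original subadditive setting. The additional observation you make about how each raw type sample transforms into an induced unit-demand sample by evaluating singleton values is a harmless clarification the paper leaves implicit.
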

\begin{proof}
	According to Corollary~\ref{cor:subadditive to UD}, $\opt^{UD}=\Omega\left(\frac{n}{{\max\{m,n\}}}\right)\cdot \opt$. Due to Theorem~\ref{thm:UD bounded}, $$O\left(\left(\frac{1}{\epsilon}\right)^2\cdot \left(m^2 n\log n\log \frac{1}{\epsilon} + \log \frac{1}{\delta}\right)\right)$$ samples suffices to learn in polynomial time with probability $1-\delta$ an SPM with revenue at least $\Omega(\opt^{UD})-\epsilon\cdot H$ for the induced unit-demand setting. By Lemma~\ref{lem: RSPM to UD},  we can construct an RSPM with the same collection of prices and achieve revenue $\Omega\left(\frac{n}{{\max\{m,n\}}}\right)\cdot \opt -\epsilon H$ in the original setting. \end{proof}
	
Finally, if the distribution of the random variable $V(t_j)$ with $t_j\sim D_j$ is regular for all item $j\in[m]$, we prove in the next theorem that there exists a prior-independent mechanism that achieves a constant fraction of the optimal revenue if $m=O(n)$. Note that approximately optimal prior-independent mechanisms for symmetric unit-demand bidders are known due to the work by Devanur et al.~\cite{DevanurHKN11} and Roughgarden et al.~\cite{RoughgardenTY12}. Our result is obtained by combining Theorem~\ref{thm:subbadditive UB RSPM} and the afore-mentioned prior independent mechanisms.

\begin{theorem}\label{thm:subadditive regular}
	For symmetric bidders with valuations that are subadditive over independent items, let $\FF_j$ be the distribution of ~$V(t_j)$ where $t_j\sim D_j$. If $\FF_j$ is regular for all $j\in[m]$, there is a prior-independent mechanism with revenue at least $\Omega\left(\frac{n}{{\max\{m,n\}}}\right)\cdot \opt$. Moreover, the mechanism can be implemented efficiently.
\end{theorem}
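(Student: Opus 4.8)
The plan is to reduce the problem to the \emph{induced unit-demand setting} and then invoke known prior-independent mechanisms for symmetric unit-demand bidders with regular value distributions. First, recall from Theorem~\ref{thm:subbadditive UB RSPM} that $\opt \le \left(24 + \frac{36\max\{n,m\}}{n}\right)\cdot \prev$, so $\prev = \Omega\!\left(\frac{n}{\max\{n,m\}}\right)\cdot\opt$. Combining this with Lemma~\ref{lem: RSPM to UD} (packaged as Corollary~\ref{cor:subadditive to UD}), the optimal revenue $\opt^{UD}$ achievable in the induced unit-demand setting --- $n$ symmetric unit-demand bidders whose value for item $j$ is drawn independently from $\FF_j$ --- satisfies $\opt^{UD} \ge \prev = \Omega\!\left(\frac{n}{\max\{n,m\}}\right)\cdot \opt$. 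Hence it suffices to exhibit an efficiently implementable, prior-independent mechanism whose revenue is a constant fraction of $\opt^{UD}$.

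Second, I would argue that any incentive-compatible mechanism for the induced unit-demand setting can be run \emph{verbatim} in the original subadditive setting with the same revenue. Restrict the mechanism to allocate only matchings (each bidder receives at most one item); then, by the no-externalities property of Definition~\ref{def:subadditive independent}, the value bidder $i$ derives from the outcome ``receive item $j$'' is exactly $V_i(t_{ij}) = v_i(t_i,\{j\})$, which depends only on $t_{ij}$. Thus, when a bidder is constrained to win at most one item, her preferences over (item, payment) pairs coincide with those of a unit-demand bidder with value vector $\langle V_i(t_{ij})\rangle_{j}$, whose coordinates are independently distributed according to $\FF_1,\dots,\FF_m$ (the same marginals for every bidder, by symmetry). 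A DSIC unit-demand mechanism remains DSIC in this restricted environment, truthful reporting of $t_i$ (from which the mechanism extracts the $V_i(t_{ij})$) stays dominant, and its expected revenue is unchanged. In particular, a mechanism that is prior-independent with respect to $\times_j \FF_j$ --- one whose description never references the distributions --- can be simulated in the original setting with no knowledge of $D$, hence is prior-independent with respect to $D$.

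Third, I would invoke the prior-independent mechanisms for symmetric unit-demand bidders with regular (and possibly item-heterogeneous) value distributions due to Devanur, Hartline, Karlin and Nguyen~\cite{DevanurHKN11} and Roughgarden, Talgam-Cohen and Yan~\cite{RoughgardenTY12}: these give efficiently implementable mechanisms whose revenue is a constant fraction of $\opt^{UD}$ with no knowledge of the $\FF_j$'s. Feeding this into the reduction of the previous paragraph yields an efficiently implementable prior-independent mechanism for the original setting with revenue $\Omega(\opt^{UD}) = \Omega\!\left(\frac{n}{\max\{n,m\}}\right)\cdot \opt$, as claimed. The step requiring the most care --- and the main obstacle --- is verifying that the cited prior-independent results apply to the \emph{multi-item} unit-demand environment with item marginals that are identical across bidders but may differ across items, and that their guarantee is against $\opt^{UD}$ rather than a weaker single-item or welfare benchmark; I expect this to follow from the robust/supply-limiting-mechanism framework of~\cite{RoughgardenTY12}, but pinning down the exact benchmark and regularity hypotheses there is the crux.
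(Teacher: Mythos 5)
Your proposal is correct and matches the paper's own argument: the paper likewise combines Corollary~\ref{cor:subadditive to UD} (i.e.\ Theorem~\ref{thm:subbadditive UB RSPM} plus Lemma~\ref{lem: RSPM to UD}) with the prior-independent unit-demand mechanisms of~\cite{DevanurHKN11, RoughgardenTY12}, running them after restricting each bidder to win at most one item, so that the revenue in the induced setting transfers unchanged. The only step you flag as delicate --- that the cited mechanisms handle symmetric unit-demand bidders with item-heterogeneous regular marginals and compete with $\opt^{UD}$ --- is exactly what the paper also relies on (and asserts without further elaboration), so your write-up, which makes the no-externalities justification of the reduction explicit, is if anything slightly more detailed.
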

\begin{proof}
	The mechanism in~\cite{DevanurHKN11} or~\cite{RoughgardenTY12} provides an approximately optimal prior-independent mechanism in the induced unit-demand setting. Let us use $M$ to denote this mechanism. Suppose we restrict every bidder to purchase at most one item in the original setting and then run mechanism $M$. The expected revenue is the same as $M$'s expected revenue in the induced setting. Since $M$'s expected revenue is  $\Omega(\opt^{UD})$ and $\opt^{UD}=\Omega\left(\frac{n}{{\max\{m,n\}}}\right)\cdot \opt$, the mechanism we constructed has revenue $\Omega\left(\frac{n}{{\max\{m,n\}}}\right)\cdot \opt$. Since $M$ can be implemented efficiently for unit-demand bidders, our mechanism can also be implemented efficiently.\end{proof}

\newpage
\bibliographystyle{plain}	
\bibliography{Yang}

\begin{thebibliography}{10}

\bibitem{Alaei11}
Saeed Alaei.
\newblock {Bayesian Combinatorial Auctions: Expanding Single Buyer Mechanisms
  to Many Buyers}.
\newblock In {\em the 52nd Annual IEEE Symposium on Foundations of Computer
  Science (FOCS)}, 2011.

\bibitem{AlaeiFHHM13}
Saeed Alaei, Hu~Fu, Nima Haghpanah, and Jason Hartline.
\newblock {The Simple Economics of Approximately Optimal Auctions}.
\newblock In {\em the 54th Annual IEEE Symposium on Foundations of Computer
  Science (FOCS)}, 2013.

\bibitem{AlaeiFHHM12}
Saeed Alaei, Hu~Fu, Nima Haghpanah, Jason Hartline, and Azarakhsh Malekian.
\newblock {Bayesian Optimal Auctions via Multi- to Single-agent Reduction}.
\newblock In {\em the 13th ACM Conference on Electronic Commerce (EC)}, 2012.

\bibitem{AtheyH2007nonparametric}
Susan Athey and Philip Haile.
\newblock Nonparametric approaches to auctions.
\newblock {\em Handbook of econometrics}, 6(Part A):3847--3965, 2007.

\bibitem{BabaioffILW14}
Moshe Babaioff, Nicole Immorlica, Brendan Lucier, and S.~Matthew Weinberg.
\newblock {A Simple and Approximately Optimal Mechanism for an Additive Buyer}.
\newblock In {\em the 55th Annual IEEE Symposium on Foundations of Computer
  Science (FOCS)}, 2014.

\bibitem{BhalgatGM13}
Anand Bhalgat, Sreenivas Gollapudi, and Kamesh Munagala.
\newblock Optimal auctions via the multiplicative weight method.
\newblock In {\em {ACM} Conference on Electronic Commerce, {EC} '13,
  Philadelphia, PA, USA, June 16-20, 2013}, pages 73--90, 2013.

\bibitem{CaiD11b}
Yang Cai and Constantinos Daskalakis.
\newblock {Extreme-Value Theorems for Optimal Multidimensional Pricing}.
\newblock In {\em the 52nd Annual IEEE Symposium on Foundations of Computer
  Science (FOCS)}, 2011.

\bibitem{CaiDW12a}
Yang Cai, Constantinos Daskalakis, and S.~Matthew Weinberg.
\newblock {An Algorithmic Characterization of Multi-Dimensional Mechanisms}.
\newblock In {\em the 44th Annual ACM Symposium on Theory of Computing (STOC)},
  2012.

\bibitem{CaiDW12b}
Yang Cai, Constantinos Daskalakis, and S.~Matthew Weinberg.
\newblock {Optimal Multi-Dimensional Mechanism Design: Reducing Revenue to
  Welfare Maximization}.
\newblock In {\em the 53rd Annual IEEE Symposium on Foundations of Computer
  Science (FOCS)}, 2012.

\bibitem{CaiDW13b}
Yang Cai, Constantinos Daskalakis, and S.~Matthew Weinberg.
\newblock {Understanding Incentives: Mechanism Design becomes Algorithm
  Design}.
\newblock In {\em the 54th Annual IEEE Symposium on Foundations of Computer
  Science (FOCS)}, 2013.

\bibitem{CaiDW16}
Yang Cai, Nikhil~R. Devanur, and S.~Matthew Weinberg.
\newblock A duality based unified approach to bayesian mechanism design.
\newblock In {\em the 48th Annual ACM Symposium on Theory of Computing (STOC)},
  2016.

\bibitem{CaiH13}
Yang Cai and Zhiyi Huang.
\newblock {Simple and Nearly Optimal Multi-Item Auctions}.
\newblock In {\em the 24th Annual ACM-SIAM Symposium on Discrete Algorithms
  (SODA)}, 2013.

\bibitem{CaiZ17}
Yang Cai and Mingfei Zhao.
\newblock Simple mechanisms for subadditive buyers via duality.
\newblock In {\em the 49th Annual ACM Symposium on Theory of Computing (STOC)},
  2017.

\bibitem{ChawlaHK07}
Shuchi Chawla, Jason~D. Hartline, and Robert~D. Kleinberg.
\newblock {Algorithmic Pricing via Virtual Valuations}.
\newblock In {\em the 8th ACM Conference on Electronic Commerce (EC)}, 2007.

\bibitem{ChawlaHMS10}
Shuchi Chawla, Jason~D. Hartline, David~L. Malec, and Balasubramanian Sivan.
\newblock {Multi-Parameter Mechanism Design and Sequential Posted Pricing}.
\newblock In {\em the 42nd ACM Symposium on Theory of Computing (STOC)}, 2010.

\bibitem{ChawlaM16}
Shuchi Chawla and J.~Benjamin Miller.
\newblock Mechanism design for subadditive agents via an ex-ante relaxation.
\newblock In {\em Proceedings of the 2016 {ACM} Conference on Economics and
  Computation, {EC} '16, Maastricht, The Netherlands, July 24-28, 2016}, pages
  579--596, 2016.

\bibitem{ColeR14}
Richard Cole and Tim Roughgarden.
\newblock The sample complexity of revenue maximization.
\newblock In {\em Proceedings of the 46th Annual ACM Symposium on Theory of
  Computing}, 2014.

\bibitem{Daskalakis15}
Constantinos Daskalakis.
\newblock Multi-item auctions defying intuition?
\newblock {\em ACM SIGecom Exchanges}, 14(1):41--75, 2015.

\bibitem{DaskalakisDT17}
Constantinos Daskalakis, Alan Deckelbaum, and Christos Tzamos.
\newblock Strong duality for a multiple-good monopolist.
\newblock {\em Econometrica}, 2017.

\bibitem{DaskalakisDW15}
Constantinos Daskalakis, Nikhil~R. Devanur, and S.~Matthew Weinberg.
\newblock Revenue maximization and ex-post budget constraints.
\newblock In {\em Proceedings of the Sixteenth {ACM} Conference on Economics
  and Computation, {EC} '15, Portland, OR, USA, June 15-19, 2015}, pages
  433--447, 2015.

\bibitem{DevanurHKN11}
Nikhil~R. Devanur, Jason~D. Hartline, Anna~R. Karlin, and C.~Thach Nguyen.
\newblock Prior-independent multi-parameter mechanism design.
\newblock In {\em Internet and Network Economics - 7th International Workshop,
  {WINE} 2011, Singapore, December 11-14, 2011. Proceedings}, pages 122--133,
  2011.

\bibitem{DevanurHP16}
NR~Devanur, Z~Huang, and CA~Psomas.
\newblock The sample complexity of auctions with side information.
\newblock In {\em 48th Annual ACM SIGACT Symposium on Theory of Computing},
  2016.

\bibitem{DobzinskiNS05}
Shahar Dobzinski, Noam Nisan, and Michael Schapira.
\newblock Approximation algorithms for combinatorial auctions with
  complement-free bidders.
\newblock In {\em STOC}, pages 610--618, 2005.

\bibitem{DughmiHN14}
Shaddin Dughmi, Li~Han, and Noam Nisan.
\newblock Sampling and representation complexity of revenue maximization.
\newblock In {\em International Conference on Web and Internet Economics
  (WINE)}, 2014.

\bibitem{DvoretzkyKW56}
Aryeh Dvoretzky, Jack Kiefer, and Jacob Wolfowitz.
\newblock Asymptotic minimax character of the sample distribution function and
  of the classical multinomial estimator.
\newblock {\em The Annals of Mathematical Statistics}, pages 642--669, 1956.

\bibitem{Elkind07}
Edith Elkind.
\newblock Designing and learning optimal finite support auctions.
\newblock In {\em Proceedings of the eighteenth annual ACM-SIAM symposium on
  Discrete algorithms}, pages 736--745. Society for Industrial and Applied
  Mathematics, 2007.

\bibitem{Feige09}
Uriel Feige.
\newblock On maximizing welfare when utility functions are subadditive.
\newblock {\em SIAM Journal on Computing}, 39(1):122--142, 2009.

\bibitem{GoldnerK16}
Kira Goldner and Anna~R. Karlin.
\newblock A prior-independent revenue-maximizing auction for multiple additive
  bidders.
\newblock In {\em Web and Internet Economics - 12th International Conference,
  {WINE} 2016, Montreal, Canada, December 11-14, 2016, Proceedings}, pages
  160--173, 2016.

\bibitem{GonczarowskiN16}
Yannai~A. Gonczarowski and Noam Nisan.
\newblock Efficient empirical revenue maximization in single-parameter auction
  environments.
\newblock {\em CoRR}, abs/1610.09976, 2016.

\bibitem{GuerrePV00}
Emmanuel Guerre, Isabelle Perrigne, and Quang Vuong.
\newblock Optimal nonparametric estimation of first-price auctions.
\newblock {\em Econometrica}, 68(3):525--574, 2000.

\bibitem{HuangMR15}
Zhiyi Huang, Yishay Mansour, and Tim Roughgarden.
\newblock Making the most of your samples.
\newblock In {\em Proceedings of the Sixteenth ACM Conference on Economics and
  Computation}, 2015.

\bibitem{KleinbergW12}
Robert Kleinberg and S.~Matthew Weinberg.
\newblock {Matroid Prophet Inequalities}.
\newblock In {\em the 44th Annual ACM Symposium on Theory of Computing (STOC)},
  2012.

\bibitem{MohriM14}
Mehryar Mohri and Andres~Munoz Medina.
\newblock Learning theory and algorithms for revenue optimization in second
  price auctions with reserve.
\newblock In {\em ICML}, pages 262--270, 2014.

\bibitem{MorgensternR15}
Jamie Morgenstern and Tim Roughgarden.
\newblock The pseudo-dimension of near-optimal auctions.
\newblock {\em arXiv preprint arXiv:1506.03684}, 2015.

\bibitem{MorgensternR16}
Jamie Morgenstern and Tim Roughgarden.
\newblock Learning simple auctions.
\newblock In {\em Proceedings of the 29th Conference on Learning Theory, {COLT}
  2016, New York, USA, June 23-26, 2016}, pages 1298--1318, 2016.

\bibitem{Myerson81}
Roger~B. Myerson.
\newblock {Optimal Auction Design}.
\newblock {\em Mathematics of Operations Research}, 6(1):58--73, 1981.

\bibitem{PaarschH06}
Harry~J Paarsch and Han Hong.
\newblock An introduction to the structural econometrics of auction data.
\newblock {\em MIT Press Books}, 1, 2006.

\bibitem{RoughgardenS16}
Tim Roughgarden and Okke Schrijvers.
\newblock Ironing in the dark.
\newblock In {\em Proceedings of the 2016 ACM Conference on Economics and
  Computation}, pages 1--18. ACM, 2016.

\bibitem{RoughgardenTY12}
Tim Roughgarden, Inbal Talgam-Cohen, and Qiqi Yan.
\newblock Supply-limiting mechanisms.
\newblock In {\em 13th ACM Conference on Electronic Commerce (EC)}, 2012.

\bibitem{RubinsteinW15}
Aviad Rubinstein and S.~Matthew Weinberg.
\newblock Simple mechanisms for a subadditive buyer and applications to revenue
  monotonicity.
\newblock In {\em Proceedings of the Sixteenth {ACM} Conference on Economics
  and Computation, {EC} '15, Portland, OR, USA, June 15-19, 2015}, pages
  377--394, 2015.

\bibitem{Samuel-cahn84}
Ester Samuel-Cahn.
\newblock Comparison of threshold stop rules and maximum for independent
  nonnegative random variables.
\newblock {\em Ann. Probab.}, 12(4):1213--1216, 11 1984.

\bibitem{Yao15}
Andrew~Chi{-}Chih Yao.
\newblock An n-to-1 bidder reduction for multi-item auctions and its
  applications.
\newblock In {\em SODA}, 2015.

\end{thebibliography}
\end{document}